\newcommand{\yadi}{\nomenclature}
\newenvironment{proof}{\noindent{\sc Proof.}}{\qed}
\newtheorem{theorem}{Theorem}[section]
\newtheorem{lemma}{Lemma}[section]
\newtheorem{remark}{Remark}[section]
\newtheorem{rem}{Remark}[section]
\newtheorem{definition}{Definition}[section]
\newtheorem{prop}{Proposition}[section]
\newcommand{\qed}{$\blacksquare$}
\def\bhag#1{\noindent
\setcounter{equation}{0}
\section{#1}
}
\def\RR{{\mathbb R}}
\def\CC{{\mathbb C}}
\def\ZZ{{\mathbb Z}}
\def\TT{\mathbb T}
\def\bs#1{{\boldsymbol{#1}}}
\def\x{\mathbf{x}}
\def\w{\mathbf{w}}
\def\O{{\cal O}}
\definecolor{dkgreen}{rgb}{0,0.6,0}
\definecolor{gray}{rgb}{0.5,0.5,0.5}
\definecolor{mauve}{rgb}{0.58,0,0.82}
\tiny\color{gray},
\def\be{\begin{equation}}
\def\ee{\end{equation}}
\def\bea{\begin{eqnarray}}
\def\eea{\end{eqnarray}}
\def\disp{\displaystyle}
\def\donchitre#1#2{\vskip 6.5cm\noindent
\parbox[t]{1in}{\special{eps:#1.eps x=6.5cm y=5.5cm}}
\hbox to 7cm{}\parbox[t]{0.0cm}{\special{eps:#2.eps x=6.5cm y=5.5cm}}}
\def\bs#1{{\boldsymbol{#1}}}
\def\gs{\gtrsim}
\def\ls{\lesssim}
\def\hati{i}
\begin{document}
\thispagestyle{empty}
\vspace*{2in}

\begin{center}
\LARGE
Localized Kernel Methods for Signal Processing

\vspace{1.5in}
\large
By\\[1em]
Sippanon Kitimoon

\vfill
Claremont Graduate University\\
2025

\end{center}

\newpage

\thispagestyle{empty}

\vspace*{\fill}

\begin{center}
\copyright\ Copyright Sippanon Kitimoon, 2025.\\
All rights reserved
\end{center}

\newpage

\thispagestyle{empty}
\begin{center}
\textbf{Approval of the Dissertation Committee}
\end{center}

\vspace{1em}

This dissertation has been duly read, reviewed, and critiqued by the Committee listed below, which hereby approves the manuscript of Sippanon Kitimoon as fulfilling the scope and quality requirements for meriting the degree of Doctor of Philosophy in Mathematics.

\vspace{1em}

\begin{center}
Hrushikesh Mhaskar, Chair \\
Claremont Graduate University \\
Distinguished Research Professor of Mathematics

\vspace{1em}

Qidi Peng \\
Claremont Graduate University \\
Research Associate Professor of Mathematics

\vspace{1em}

Allon Percus \\
Claremont Graduate University \\
Joseph H. Pengilly Professor of Mathematics

\end{center}

\vfill

\newpage

\thispagestyle{empty}
\begin{center}
\textbf{Abstract}

\vspace{1em}

Localized Kernel Methods for Signal Processing \\
By \\
Sippanon Kitimoon \\

\vspace{1em}

Claremont Graduate University: 2025
\end{center}

\doublespacing

This dissertation presents two signal processing methods using specially designed localized kernels for parameter recovery under noisy condition. The first method addresses the estimation of frequencies and amplitudes in multidimensional exponential models. It utilizes localized trigonometric polynomial kernels to detect the multivariate frequencies, followed by a more detailed parameter estimation. We compare our method with MUSIC and ESPRIT, which are classical subspace-based algorithms widely used for estimating the parameters of exponential signals. In the univariate case, the method outperforms MUSIC and ESPRIT under low signal-to-noise ratios. For the multivariate case, we develop a coordinate-wise projection and registration approach that achieves high recovery accuracy using significantly fewer samples than other methods.

The second method focuses on separating linear chirp components from time-localized signal segments. A variant of the Signal Separation Operator (SSO) is constructed using a localized kernel. Instantaneous frequency estimates are obtained via FFT-based filtering, then clustered and fitted with piecewise linear regression. The method operates without prior knowledge of the number of components and is shown to recover intersecting and discontinuous chirps at SNR levels as low as $-30$ dB.

Both methods share an idea based on localized kernels and efficient FFT-based implementation, and neither requires subspace decomposition or sparsity regularization. Experimental results confirm the robustness and tractability of the proposed approaches across a range of simulated data conditions. Potential extensions include application to nonlinear chirps, adaptive kernel design, and signal classification using extracted features.

\thispagestyle{empty}

\newpage

\pagenumbering{roman}
\setcounter{page}{5}

\begin{center}
\textbf{Acknowledgements}
\end{center}

\vspace{1em}
\doublespacing

I would like to express my deepest gratitude to my Ph.D. advisor, Prof. Hrushikesh Mhaskar, for his unwavering support, insightful guidance, and patient mentorship throughout every stage of this work. His mathematical rigor, clarity of thought, and encouragement have been essential to my development as a researcher. Beyond his academic mentorship, I am especially thankful for the parental care and support he provided during my Ph.D. journey, for which I am sincerely grateful.

I am also grateful to the other members of my dissertation committee, Prof. Allon Percus and Prof. Qidi Peng, for agreeing to serve on my dissertation committee. I greatly appreciate their time, dedication, and encouragement during my doctoral journey.

I would like to thank Dr. Raghu G. Raj and Dr. Eric Mason for their essential contributions to this work. Dr. Raj provided helpful direction and recommendations during the development of the methods presented in multivariate the exponential analysis portion of the dissertation. Dr. Mason generated simulation data and offered key guidance and suggestions in the formulation and implementation of the chirp signal separation approach. Their involvement improved both the clarity and practical relevance of the results.

Finally, I am thankful to all those who have supported and encouraged me in both work and recreational environment. I wish them all the best.

\newpage
\tableofcontents
\newpage

\begin{thenomenclature} 
\nomgroup{A}
  \item [$\zeta(S)$] Riemann zeta function
  \item [{$\delta(t)$}]\begingroup Dirac delta function\nomeqref {2.3}\nompageref{3}
  \item [{$\delta_y$}]\begingroup Dirac delta supported at $y$\nomeqref {3.0}\nompageref{3}
  \item [{$\Delta_j$}]\begingroup Independent vectors in $\RR^q$\nomeqref {5.0}\nompageref{10}
  \item [{$P(z)$}]\begingroup Prony polynomial\nomeqref {2.3}\nompageref{3}
  \item [{$c_j$}]\begingroup Coefficients of the Prony polynomial\nomeqref {2.3}\nompageref{3}
  \item [$\|\cdot\|_2$] L2 norm
  \item [{$\mathbf{H}_{L, N-L+1}$}]\begingroup Rectangular Hankel matrix\nomeqref {2.3}\nompageref{3}
  \item [{$e_L(\varphi)$}]\begingroup The column space of $\mathbf{H}_{L, N-L+1}$ \nomeqref {2.3}\nompageref{3}
  \item [{$\mathcal{S}_L$}]\begingroup The signal space\nomeqref {2.3}\nompageref{3}
  \item [{$\mathcal{N}_L$}]\begingroup The noise space\nomeqref {2.3}\nompageref{3}
  \item [{$\widetilde{W}_{N-L+1,K}$}]\begingroup The first $K$ right singular vectors of $\mathbf{H}_{L, N-L+1}$ \nomeqref {2.3}\nompageref{3}
  \item [{$V_{N,K}(z)$}]\begingroup The Vandermonde matrix\nomeqref {2.3}\nompageref{3}
  \item [{$\epsilon $}]\begingroup Random variable (noise)\nomeqref {3.2}\nompageref{4}
  \item [{$\eta$}]\begingroup Minimal separation among unitless frequencies\nomeqref {4.1}\nompageref{5}
  \item [{$\hat{\mu}$}]\begingroup Fourier coefficient of $\mu$\nomeqref {3.0}\nompageref{3}
  \item [{$\hbar_n$}]\begingroup Normalizing constant, Eqn. \eqref{eq:lockerndef}\nomeqref {3.2}\nompageref{4}
  \item [{$\lambda_k$}]\begingroup Univariate unitless frequencies\nomeqref {3.1}\nompageref{3}
  \item [{$\Lambda_{j,k}$}]\begingroup Estimated IF for chirp $j$ at $t_k$\nomeqref {4.2}\nompageref{8}
  \item [{$\mathbf{M}$}]\begingroup Sum of absolute values of amplitudes\nomeqref {3.12}\nompageref{5}
  \item [{$\mathcal{T}_{n,R}$}]\begingroup Signal separation operator\nomeqref {3.25}\nompageref{6}
  \item [{$\mathbb{G}$}]\begingroup Support of the thresholded power spectrum\nomeqref {4.7}\nompageref{6}
  \item [{$\mathbb{G}_\ell$}]\begingroup Cluster definded in Theorem~\ref{theo:main}\nomeqref {4.7}\nompageref{6}
  \item [{$\mathfrak{m}$}]\begingroup Minimum absolute value of coefficients\nomeqref {4.1}\nompageref{5}
  \item [{$\omega_{j,k}$, $\gamma_{j,k}$, $a_{j,k}$}]\begingroup Chirp parameters for chirp $j$ in $I_k$\nomeqref {4.2}\nompageref{8}
  \item [{$\mu$}]\begingroup Discretely supported measure/distribution on $\TT$\nomeqref {3.0}\nompageref{3}
  \item [{$\Phi_n$}]\begingroup Localized kernel of degree $n$\nomeqref {3.2}\nompageref{4}
  \item [{$\RR$}]\begingroup Set of real numbers\nomeqref {3.0}\nompageref{3}
  \item [{$\sigma_n$}]\begingroup Reconstruction operator, Section~\ref{section:theosect}\nomeqref {3.2}\nompageref{4}
  \item [{$\tilde {\mu }$}]\begingroup Fourier coefficients of $\mu $ plus an additive noise\nomeqref {3.2}\nompageref{4}
  \item [{$\TT$}]\begingroup Quotient space of real line modulo $2\pi$\nomeqref {3.0}\nompageref{3}
  \item [{$\w_k$}]\begingroup Multivariate frequencies/points\nomeqref {1.1}\nompageref{1}
  \item [{$\ZZ$}]\begingroup Set of integers\nomeqref {3.0}\nompageref{3}
  \item [{$A_k$}]\begingroup Coefficients (complex amplitudes)\nomeqref {3.1}\nompageref{3}
  \item [{$C$}]\begingroup Eqn. \eqref{eq:thresholdCdef}\nomeqref {4.7}\nompageref{6}
  \item [{$E_n$}]\begingroup Reconstruction operator used with noise alone Eqn. \eqref{eq:noisespectrum}\nomeqref {4.1}\nompageref{5}
  \item [{$H$}]\begingroup Smooth low pass filter\nomeqref {3.2}\nompageref{4}
  \item [{$i$}]\begingroup $\sqrt{-1}$\nomeqref {3.1}\nompageref{3}
  \item [{$K$}]\begingroup Number of exponential signals\nomeqref {3.1}\nompageref{3}
  \item [{$L$}]\begingroup Constant in Eqn. \eqref{eq:locest}\nomeqref {3.8}\nompageref{5}
  \item [{$M$}]\begingroup Sum of absolute values of coefficients\nomeqref {4.1}\nompageref{5}
  \item [{$q$}]\begingroup Dimension of the observations\nomeqref {1.1}\nompageref{1}
  \item [{$S$}]\begingroup Localization power Eqn. \eqref{eq:locest}\nomeqref {3.8}\nompageref{5}
  \item [{$\tau$}]\begingroup Threshold for SSO\nomeqref {4.2}\nompageref{8}
  \item [{$\theta$, $\gamma$, $B$, $d$}]\begingroup Chirp parameters\nomeqref {1.1}\nompageref{1}
  \item [{$A_j(t)$, $\phi_j(t)$}]\begingroup Instantaneous complex amplitudes and phases\nomeqref {2.2}\nompageref{3}
  \item [{$B_{\mbox{\scriptsize{rec}}}$}]\begingroup Receiver bandwidth\nomeqref {4.2}\nompageref{8}
  \item [{$I_k$}]\begingroup Time interval for snippet $[t_k-\Delta,t_k+\Delta]$\nomeqref {3.18}\nompageref{5}
  \item [{$R$}]\begingroup Sampling frequency\nomeqref {3.18}\nompageref{5}
  \item [{$t_k$}]\begingroup Center of the time interval for snippet\nomeqref {3.18}\nompageref{5}
  \item [{$\nu$}]\begingroup Parameter (variance) of sub-Gaussian variable\nomeqref {3.2}\nompageref{4}
  \item [{PRI}]\begingroup Pulse repetition interval\nomeqref {1.2}\nompageref{2}

\end{thenomenclature}

\newpage
\listoffigures
\newpage
\listoftables
\newpage

\pagenumbering{arabic}


\chapter{Introduction}\label{background}

This dissertation focuses on two core applications in signal processing:
\begin{itemize}
    \item Robust and Tractable Multidimensional Exponential Analysis
    \item Localized Kernel Method for Separation of Linear Chirps
\end{itemize}
The foundation of both applications is built upon the constant parameter problem formally defined in Section~\ref{section:problem}, where the goal is to recover unknown parameters of signals composed of complex exponentials or chirp components from noisy observations. By designing and utilizing localized kernel methods, we develop robust algorithms that perform well in the presence of sub-Gaussian noise. Since the two applications addressed in this dissertation focus on solving distinct signal processing problems, the literature reviews for each application are presented separately in their respective chapters. 

In this chapter, we will focus on defining the constant parameter problem that underlies both applications and introducing the state-of-the-art methods used to solve the problem.

\newpage

\bhag{Background and problem statement}\label{section:theosect}

We begin with a brief review of the sub-Gaussian noise model in Section~\ref{section:subgaussian}, followed by a definition of the signal-to-noise ratio used throughout this work in Section~\ref{bhag:datagen}. Section~\ref{section:problem} concludes with a formal statement of the constant parameter problem, which serves as the foundational model for the applications developed in later chapters.

\subsection{Sub-Gaussian noise}\label{section:subgaussian}

In this section, we review certain properties of sub-Gaussian random variables. Intuitively, a sub-Gaussian random variable is one whose probability tails decay at least as fast as those of a Gaussian random variable; i.e. extreme deviations are exponentially unlikely and no heavier than in the normal distribution. The material in this section is based on \cite[Section~2.3]{boucheron2013concentration}, with a slight change of notation. 
A mean zero real valued random variable $X$ is called sub-Gaussian with parameter $\nu$ ($X\in\mathcal{G}(\nu)$) if \be\label{subgaussian} \log\mathbb{E}(\exp(tX))\le (t\nu)^2/2.\ee \yadi{$\nu$}{(Variance) parameter of sub-Gaussian variable}
Examples include Gaussian variables and all bounded random variables.
For a sub-Gaussian variable, it is shown in  \cite[Section~2.3]{boucheron2013concentration} that condition \eqref{subgaussian} implies the tail bound
\be \label{tail_bound}
\mathsf{Prob}(|X|>t)\le 2\exp(-t^2/(2\nu^2)).
\ee
Conversely, by \cite[Section~2.3, Theorem 2.1]{boucheron2013concentration}, the tail bound in \eqref{tail_bound} implies
$X\in\mathcal{G}(4\nu)$.

We will say that a complex valued random variable $X$ is in $\mathcal{G}(\nu)$ if both the real and imaginary parts of $X$ are in $\mathcal{G}(\nu)$. 
We observe that if $z\in\CC$ and $|z|>t$ then $\max(|\Re e (z)|, |\Im m (z)|)>t/\sqrt{2}$. 
So, for such variables, we have
\begin{equation}\label{eq:subgaussiantail}
\mathsf{Prob}(|X|>t)\le 4\exp(-t^2/(4\nu^2)).
\end{equation}
It is not difficult to see that if $X_1,\cdots,X_n$ are i.i.d., complex valued variables all in $\mathcal{G}(\nu)$, $\mathbf{a}=(a_1,\cdots,a_n)\in\RR^n$, $|\mathbf{a}|_n^2=\sum_{\ell=1}^n a_\ell^2$, then $\sum_{\ell=1}^n a_\ell X_\ell \in \mathcal{G}(|\mathbf{a}|_n\nu)$. 
Therefore, \eqref{eq:subgaussiantail} implies
\begin{equation}\label{eq:subgaussian_sum_tail}
\mathsf{Prob}\left(\left|\sum_{\ell=1}^n a_\ell X_\ell\right| >t\right)\le 4\exp\left(-\frac{t^2}{4|\mathbf{a}|_n^2\nu^2}\right), \qquad t>0.
\end{equation}

\subsection{Signal to noise ratio}\label{bhag:datagen}

The signal-to-noise ratio (SNR) measures how strong the signal is compared to the noise; i.e., higher SNR values indicate clearer signals, while lower values indicate signals that are more difficult to distinguish from noise. In this dissertation, we used a synthetically generated dataset to evaluate our methods under controlled conditions.  The original signal $f(t)$ was constructed, and Gaussian noise was introduced to simulate noisy environments.

The SNR in dB is defined as
\be
\text{SNR} = 20 \log_{10} \left( \frac{\|\mathbf{f}\|_2}{\|\boldsymbol{\epsilon}\|_2} \right),
\ee
where $\|\cdot\|_2$ is an L2 norm, $\mathbf{f}$ is the vector of the original clean signal $f(t)$, and $\boldsymbol{\epsilon}$ is the vector of noise $\epsilon(t)$.

To control the SNR, we generate a vector of Gaussian noise $\mathbf{n}$ and compute
\begin{equation}
    \boldsymbol{\epsilon} = \frac{\|\mathbf{f}\|_2}{10^{\text{SNR}/20} \cdot \|\mathbf{n}\|_2} \cdot \mathbf{n},
    \label{eq:noise_scaling}
\end{equation}
where SNR is the desired signal-to-noise ratio in decibels (dB).

We then obtain the noisy signal using the formula
\begin{equation}
    \mathbf{F} = \mathbf{f} + \boldsymbol{\epsilon},
    \label{eq:noised_signal}
\end{equation}
where $\mathbf{F}$ is the vector of the noised signal $F(t)$.

By scaling the Gaussian noise $\mathbf{n}$ using Equation~\ref{eq:noise_scaling}, we ensure that the resulting noisy signal $\mathbf{F}$ achieves the desired SNR. This allows us to evaluate the performance of our experiments under varying noise levels.

\subsection{Problem statement}\label{section:problem}

In this section, we consider the following univariate set up.
Let $\TT=\RR/(2\pi\ZZ)$ (i.e., $x, y$ are considered equal if $x=y \mbox{ mod } 2\pi$),\yadi{$\TT$}{Quotient space of real line modulo $2\pi$} \yadi{$\RR$}{Set of real numbers} \yadi{$\ZZ$}{set of integers} $K\ge 1$ be an integer, $\lambda_k\in \TT$, $k=1,\cdots, K$, $A_k\in\CC$ for $k=1,\cdots, K$. 
We define \yadi{$\delta_y$}{Dirac delta supported at $y$} \yadi{$\mu$}{discretely supported measure/distribution on $\TT$} \yadi{$\hat{\mu}$}{Fourier coefficient of $\mu$}
\begin{equation}\label{eq:fourmoments}
\mu=\sum_{k=1}^K A_k\delta_{\lambda_k}, \quad \hat{\mu}(\ell) =\sum_{k=1}^K A_k\exp(-i\ell\lambda_k), \qquad \ell\in\ZZ,
\end{equation}
where $i=\sqrt{-1}$, \yadi{$i$}{$\sqrt{-1}$} $\delta_\lambda$ denotes the Dirac delta supported at $\lambda$. \yadi{$A_k$}{coefficients (complex amplitudes)} \yadi{$\lambda_k$}{Univariate unitless frequencies} \yadi{$K$}{number of exponential signals} We then formulate our problem statement as follow:

\vspace{2ex}

\noindent\textbf{Point source separation problem.}

\vspace{2ex}

\textit{Given finitely many noisy samples
\begin{equation}\label{eq:noisysamples}
\tilde{\mu}(\ell)=\hat{\mu}(\ell)+\epsilon_\ell, \qquad |\ell|<n,
\end{equation}
where $n\ge 1$ is an integer, and $\epsilon_\ell$ are realizations of a sub-Gaussian random variable, determine $K$, $A_k$, and $\lambda_k$, $k=1,\cdots, K$. \yadi{$\tilde{\mu}$}{Fourier coefficients of $\mu$ plus an additive noise} \yadi{$\epsilon$}{Random variable (noise)}
}

\vspace{2ex}

The harder part of the problem is to determine $K$ and the $\lambda_k$'s. 
The coefficients $A_k$ can then be determined by solving a linear system of equations. 
Many algorithms to do this are known, e.g., \cite{pottstaschesingdet}. 
Therefore, we will focus on the task of finding $K$ and the $\lambda_k$'s.

\bhag{Well-known solutions}

In this section, we will briefly review classical and modern methods developed to solve the point source separation problem introduced in Section \ref{section:problem}. These include Prony's \cite{prony_original}, MUSIC \cite[Section 10.2.1]{plonka2018numerical}, and ESPRIT \cite[Section 10.2.2]{plonka2018numerical} methods, which have served as state of the art tools for parameter recovery in exponential models.

\subsection{Prony's method}

Prony's method, originally proposed in \cite{prony_original}, is a classical approach for estimating the parameters of exponential sums from uniformly spaced data. 

Let $\mu(t)=\sum_{k=1}^K A_k\delta_{\lambda_k}(t)$, where $A_k$'s are complex numbers,  $\lambda_k$'s are complex numbers with $-\pi < \Im m (\lambda_k)\le \pi$, and $\delta_\lambda$ is the Dirac delta supported at $\lambda$.
The Prony method seeks to find $K$ and $\lambda_k$'s given data of the form
\be\label{eq:pronydata}
\hat{\mu}(\ell)=\sum_{k=1}^K A_k\exp(-i\lambda_k\ell), \qquad \ell=0,\cdots, n-1,
\ee
where $n\ge 2K$ is an integer.

Writing $z_k=\exp(-i\lambda_k)$, we introduce the Prony polynomial
\be\label{eq:pronypoly}
P(z)=\prod_{k=1}^K (z-z_k)=\sum_{j=0}^K c_jz^j, 
\ee
where $c_K=1$.
Obviously, for every integer $\ell$, $0\le \ell\le K-1$,
\be\label{eq:pronyortho}
\int z^\ell P(z)d\mu(z)=\sum_{k=1}^K A_kz_k^\ell P(z_k)=0.
\ee
Since
\be
\sum_{k=1}^K A_kz_k^\ell P(z_k)=\sum_{k=1}^K A_kz_k^\ell\sum_{j=1}^K c_jz_k^j=\sum_{j=1}^K c_j\sum_{k=1}^K A_kz_k^{\ell+j}=\sum_{j=1}^K c_j\hat{\mu}(j+\ell),
\ee
we see that the coefficients of the Prony polynomial are solutions of the equations
\be\label{eq:pronyeqn}
\sum_{j=0}^K c_j\hat{\mu}(j+\ell)=0, \qquad \ell=0,1,\cdots, n-K-1.
\ee
Prony's method consists of solving this system for the coefficients $c_j$, $j=1,\cdots, K-1$, and then find the zeros $z_k=\exp(-i\lambda_k)$ of the Prony polynomial.

The points $z_k$ can be determined directly as joint eigenvalues.
Let $z\in\CC$. 
We consider the matrix $\mathbf{M}(z)$ with entries given by
\be
M_{k,\ell}(z)=\hat{\mu}(\ell+k+1)-z\hat{\mu}(\ell+k), \qquad k=1,\cdots, K, \ \ell=0,\cdots, K-1,
\ee
and observe that the determinant
$Q(z)=\det(\mathbf{M}(z))$ is a polynomial in $z$ of degree $\le K$.
We consider the lower triangular matrix $\mathbf{L}(z)$ given by
\be
L_{j,k}=\begin{cases}
z^{j-k}, &\mbox{ if $0 \leq k \leq j \leq K-1$},\\
0, &\mbox{otherwise}.
\end{cases}
\ee
Since $\det(\mathbf{L}(z))=1$ for all $z$, we have
\be
\det(\mathbf{L}(z)\mathbf{M}(z))=\det(\mathbf{M}(z))=Q(z).
\ee
We note that
\be
\begin{aligned}
(\mathbf{L}(z)\mathbf{M}(z))_{j,\ell}&=\sum_{k=0}^{K-1}L_{j,k}M_{k,\ell}
=\sum_{k=0}^j z^{j-k}\hat{\mu}(k+\ell+1)-\sum_{k=0}^j z^{j-k+1}\hat{\mu}(k+\ell)\\
&=\sum_{k=1}^{j+1}z^{j-k+1}\hat{\mu}(k+\ell)-\sum_{k=0}^j z^{j-k+1}\hat{\mu}(k+\ell)=\hat{\mu}(\ell+j+1)-z^{j+1}\hat{\mu}(\ell).
\end{aligned}
\ee
Consequently,
\be
\int z^\ell (\mathbf{L}(z)\mathbf{M}(z))_{j,\ell}d\mu(z)=0, \qquad j=1,\cdots, K, \ \ell=0, \cdots,K-1,
\ee
and hence, also that
\be
\int z^\ell Q(z)d\mu(z)=0, \qquad \ell=0,\cdots, K-1.
\ee
This is the same system of equations as \eqref{eq:pronyortho}.
So, $Q$ is a constant multiple of $P$.

Thus, the $z_k$'s are joint eigenvalues as indicated by the matrix $\mathbf{M}(z)$; i.e. the zeros of $\det(\mathbf{M}(z))$.


\subsection{MUSIC method}

The Multiple Signal Classification (MUSIC) algorithm \cite[Section 10.2.1]{plonka2018numerical} is a technique for recovering the parameters of an exponential sum, particularly when the data is affected by noise. It operates on the principle of decomposing a trajectory matrix (a Hankel matrix formed from the data) into signal and noise subspaces using singular value decomposition (SVD).

To proceed, we assume that a parameter \( L \in \mathbb{N} \) is selected as an upper bound on \( K \), satisfying the condition
\be
K \leq L \leq n - K + 1,
\ee

We define the \( L \)-trajectory matrix \( \mathbf{H}_{L, n-L+1} \in \mathbb{C}^{L \times (n-L+1)} \) constructed from this time series by
\begin{equation}
    \mathbf{H}_{L, n-L+1} := \left(\hat{\mu}(\ell + m)\right)_{\ell=0,\ldots,L-1;\, m=0,\ldots,n-L}, \label{eq:trajectory_matrix}
\end{equation}
where \( L \in \{K, \ldots, n - K + 1\} \) is called the \emph{window length}. Note that \( \mathbf{H}_{L, n-L+1} \) is a rectangular Hankel matrix, with constant entries along each anti-diagonal.

Given a sequence of noisy samples \( \tilde{\mu}(\ell)=\hat{\mu}(\ell)+\epsilon_\ell \), one constructs a Hankel matrix \( \mathbf{H}_{L,n-L+1} \in \mathbb{C}^{L \times (n-L+1)} \). In the ideal (noiseless) case, the column space of this matrix coincides with the span of \( K \) exponential vectors \( \{ e_L(\varphi_j) \}_{j=1}^K \), where
\be
e_L(\varphi) = \left( e^{i\varphi \ell} \right)_{\ell = 0}^{L-1} \in \mathbb{C}^L.
\ee
This span defines the signal space \( \mathcal{S}_L \), while its orthogonal complement in \( \mathbb{C}^L \) is the noise space \( \mathcal{N}_L \).

Let \( Q_L \) be the orthogonal projector onto \( \mathcal{N}_L \). Since the true signal vectors \( e_L(\varphi_j) \in \mathcal{S}_L \), they are orthogonal to the noise space, and thus satisfy
\be
Q_L e_L(\varphi_j) = 0, \quad j = 1, \dots, K.
\ee
Hence, the parameters \( \varphi_j \) can be recovered as the zeros of the so-called noise-space correlation function
\be
\mathcal{N}_L(\varphi) := \frac{1}{\sqrt{L}} \| Q_L e_L(\varphi) \|_2
\ee
(see Figure \ref{music}(b)) or, equivalently, as the peaks of the MUSIC imaging function
\be
\mathcal{J}_L(\varphi) := \sqrt{L} \cdot \| Q_L e_L(\varphi) \|_2^{-1}.
\ee

In practice, \( Q_L \) is computed via the SVD of \( \mathbf{H}_{L,n-L+1} \):
\be
\mathbf{H}_{L,n-L+1} = U_L D_L W^H,
\ee
where \( U_L = [u_1, \dots, u_L] \in \mathbb{C}^{L \times L} \) contains the left singular vectors. 

Let \( U^{(2)}_{L,L-M} = [u_{M+1}, \dots, u_L] \in \mathbb{C}^{L \times (L-M)} \) be the orthonormal basis of the noise space. Then,
\be
Q_L = U^{(2)}_{L,L-M} (U^{(2)}_{L,L-M})^H,
\ee
and
\be
\mathcal{N}_L(\varphi) = \frac{1}{\sqrt{L}} \left\| (U^{(2)}_{L,L-M})^H e_L(\varphi) \right\|_2.
\ee


\subsection{ESPRIT method}

The Estimation of Signal Parameters via Rotational Invariance Techniques (ESPRIT) is a classical subspace-based method for estimating exponential parameters from sampled data. It builds on the structure of the Hankel matrix associated with the signal and leverages a matrix pencil framework to compute the exponential nodes \cite[Section 10.2.2]{plonka2018numerical}.

Assume that the observed data \( \tilde{\mu}(\ell)=\hat{\mu}(\ell)+\epsilon_\ell \) follow an exponential model of the form \eqref{eq:noisysamples}. As in MUSIC, we construct a rectangular Hankel matrix \( \mathbf{H}_{L,n-L+1} \in \mathbb{C}^{L \times (n-L+1)} \) and perform singular value decomposition (SVD):
\be
\mathbf{H}_{L,n-L+1} = \widetilde{U}_L \widetilde{D}_L \widetilde{W}^H_{n-L+1}.
\ee
The numerical rank \( K \) is determined by examining the decay of the singular values, e.g., by selecting the largest \( K \) such that \( \sigma_K \geq \tau \sigma_1 \), where \( \tau > 0 \) is a chosen tolerance (see Figure \ref{music}(a)).

Let \( \widetilde{W}_{n-L+1,K} \in \mathbb{C}^{(n-L+1) \times M} \) contain the first \( K \) right singular vectors. Removing the last and first rows respectively gives matrices \( \widetilde{W}_{n-L,K}(0) \) and \( \widetilde{W}_{n-L,K}(1) \). The key insight of ESPRIT is that the signal model implies a shift-invariance property between these two matrices, leading to the generalized eigenvalue problem:
\be
z \widetilde{W}_{n-L,K}(0)^H - \widetilde{W}_{n-L,K}(1)^H.
\ee

The ESPRIT nodes \( \{ z_j \} \) are computed as the eigenvalues of the matrix
\be
\widetilde{F}^{\text{SVD}}_K := \widetilde{W}_{n-L,K}(1)^H \left( \widetilde{W}_{n-L,K}(0)^H \right)^{\dagger},
\ee
where \( (\cdot)^{\dagger} \) denotes the Moore–Penrose pseudoinverse. The estimated frequencies \( \varphi_j \) are then recovered via
\be
\varphi_j := \log \left( \frac{z_j}{|z_j|} \right), \quad j = 1, \ldots, K,
\ee
taking the principal branch of the complex logarithm.

Finally, the amplitudes \( \{ c_j \} \) are determined by solving the least squares problem
\be
\min_{c \in \mathbb{C}^K} \left\| V_{n,K}(z) c - \begin{bmatrix} \hat{\mu}(0) & \cdots & \hat{\mu}(n-1) \end{bmatrix}^T \right\|_2,
\ee
where \( V_{n,K}(z) \) is the Vandermonde matrix associated with the recovered nodes \( z = \{ z_j \}_{j=1}^K \).


\subsection{Examples of Prony's, MUSIC, and ESPRIT methods}

Before illustrating the methods with an example, we briefly summarize their key differences. Prony's method is a classical algebraic approach that reconstructs the signal parameters by solving a system of linear equations derived from the data, and it performs best in low-noise settings. MUSIC is a subspace-based method that estimates the signal frequencies by analyzing the orthogonality between signal and noise subspaces via eigen-decomposition of a Hankel matrix. ESPRIT also uses subspace information, but exploits a shift-invariance structure in the signal model to compute frequencies. While MUSIC and ESPRIT generally perform better in noisy environments, they require singular value decompositions and are computationally more demanding.

To illustrate Prony's, MUSIC, and ESPRIT methods, we consider a simple example below:
\begin{equation}\label{eq:exp_3_points_0}
\hat{\mu}(\ell)=5\exp(i\ell)+30\exp(-2i\ell)+20\exp(-2.005i\ell) +\mbox{noise}, \qquad |\ell|<n;
\end{equation}
i.e., $K=3, A_1 = 5, A_2 = 30, A_3 = 20, \lambda_1=-1, \lambda_2=2, \lambda_3=2.005$.

\begin{table}[H]
\begin{center}
\resizebox{0.8\textwidth}{!}{%
\begin{tabular}{|c|c|c|c|c|c|c|c|c|}
\hline
SNR (dB) & Method & $n$ & $\lambda_1$ & $\lambda_2$ & $\lambda_3$ & $A_1$ & $A_2$ & $A_3$ \\
\hline
100 & Prony & 1024 & -1.000 & 2.000 & 2.005 & 5.000 & 29.999 & 20.000\\
100 & MUSIC & 1024 & -0.999 & 2.000 & 2.006 & 4.335 & 26.734 & 16.090\\
100 & ESPRIT & 1024 & -1.000 & 2.000 & 2.005 & 5.000 & 30.000 & 20.000\\
-5 & Prony & 1024 & -0.147 & 2.001 & -2.081 & 17.212 & 0.379 & 0.773\\
-5 & MUSIC & 1024 & -0.999 & 2.000 & 2.006 & 4.631 & 26.215 & 16.081\\
-5 & ESPRIT & 1024 & -1.000 & 2.000 & 2.005 & 4.909 & 29.389 & 19.830\\
\hline
\end{tabular}
}
\end{center}
 	\caption{The table above compares results between Prony, MUSIC, and ESPRIT methods on $n=1024$ (2048 number of samples) on 1 trial.} \label{tab:compare_table_prony}
\end{table}

From the Table \ref{tab:compare_table_prony}, we can see that Prony method performs well in 100 dB SNR case, while MUSIC and ESPRIT methods performs well in both 100 and -5 dB SNR cases. The more extensive results for MUSIC and ESPRIT methods will be discussed later in Table \ref{tab:compare_table_1d}.

\begin{figure}[H]
\begin{center}
\begin{minipage}{0.55\textwidth}
\subfloat[]{
\includegraphics[scale=.135]{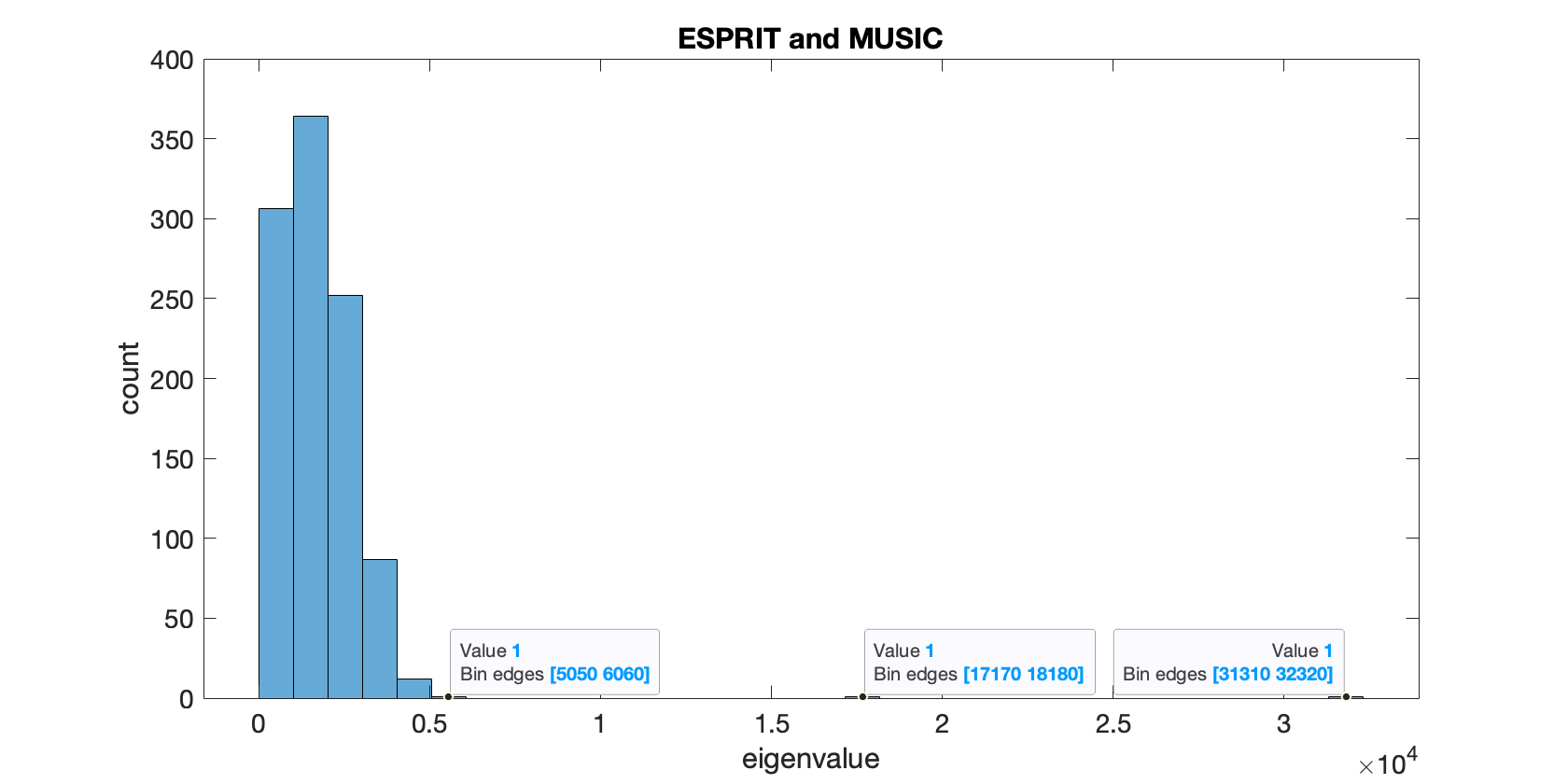}
}
\end{minipage}
\begin{minipage}{0.35\textwidth}
\subfloat[]{
\includegraphics[scale=.135]{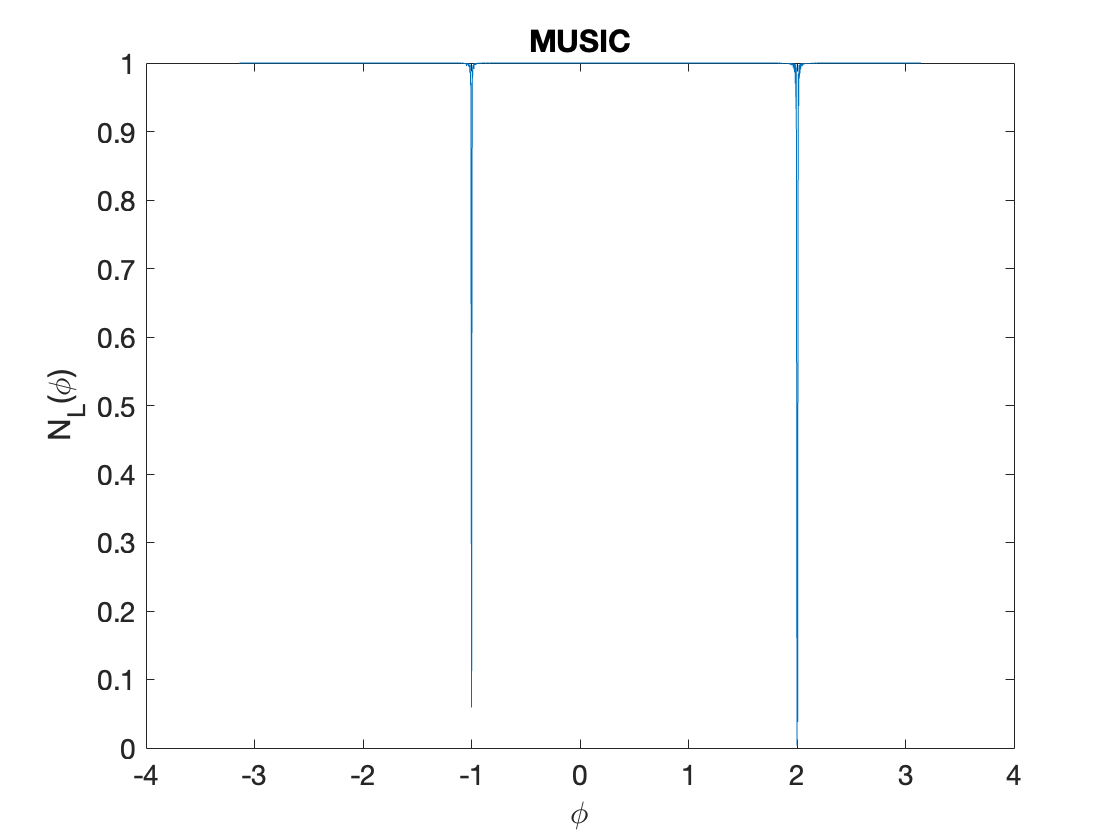}
}
\end{minipage}
\end{center}
\caption{(a) The histogram of eigenvalues of $\mathbf{H}_{L,n-L+1}$. (b) The plot of $\mathcal{N}_L(\varphi)$ on the y-axis. The value of $\mathcal{N}_L(\varphi) \approx 0$ indicates the approximation of $\lambda \approx \varphi$.}
\label{music}
\end{figure}

While Prony's, MUSIC, and ESPRIT methods form the foundation of parameter estimation in univariate exponential sums, their extensions to higher dimensions are not straightforward and their performance degrades in the presence of low SNR.

\vspace{2ex} \newpage

The rest of this dissertation is organized as follow:
\begin{itemize}
\item Chapters \ref{section:mainresult} introduces the use of localized trigonometric kernels for exponential parameter estimation in univariate settings. We establish theoretical guarantees for accuracy and robustness under sub-Gaussian noise.
\item Chapters \ref{1paper} extends the localized kernel method to the multidimensional exponential analysis problem. We use coordinate-wise projections, apply our algorithm, and demonstrate its effectiveness in synthetic imaging applications.
\item Chapters \ref{2paper} applies the localized kernel method to the separation of linear chirp components in noisy signals. We construct a signal separation operator (SSO) and design algorithms to estimate instantaneous frequencies without prior knowledge of the number of components.
\item Chapters \ref{summary} concludes with a summary of contributions and outlines several directions for future work, including nonlinear chirps, adaptive kernel design, and applications to classification tasks.
\end{itemize}


\chapter{Exponential Parameter Estimation}\label{section:mainresult}

The goal of this chapter is to develop a robust and tractable method for estimating the parameters of a finite sum of exponential signals from noisy observations. Specifically, we aim to recover the number of components, their frequencies, and amplitudes. Traditional methods, such as Prony's method or subspace-based techniques like MUSIC and ESPRIT, face limitations in the presence of noise. In contrast, our approach builds on the use of \emph{localized trigonometric kernels} to isolate and estimate signal parameters without requiring matrix decompositions, enabling accurate parameter recovery under sub-Gaussian noise.

This dissertation is based on localized trigonometric polynomial kernels developed in \cite{loctrigwave}. We introduce our localized trigonometric kernels in Section~\ref{bhag:localized_kernels} and reconstruction operator in Section~\ref{bhag:sso_0}. Our main theorem is stated in Section~\ref{bhag:theorem}, and proved in Section~\ref{bhag:proof}.
Section~\ref{bhag:1d} discusses the numerical implementation of this theorem.

\newpage

\bhag{Localized trigonometric kernels}\label{bhag:localized_kernels}

The main idea is to use a low pass filter and the corresponding localized kernel $\Phi_n$. 
A \textit{low pass filter} is an infinitely differentiable function $H:\RR\to [0,1]$ such that $H(t)=H(-t)$ for all $t\in\RR$, $H(t)=1$ if $|t|\le 1/2$, and $H(t)=0$ if $|t|\ge 1$.

With a low pass filter $H$, we now define
 \yadi{$H$}{smooth low pass filter Section~\ref{section:theosect}B} \yadi{$\Phi_n$}{Localized kernel of degree $n$} \yadi{$\sigma_n$}{reconstruction operator, Section~\ref{section:theosect}} \yadi{$\hbar_n$}{Normalizing constant, Eqn. \eqref{eq:lockerndef}} 
\begin{equation}\label{eq:lockerndef}
\hbar_n=\left\{\sum_{|\ell|<n}H\left(\frac{|\ell|}{n}\right)\right\}^{-1}, \quad \Phi_n(x)=\hbar_n\sum_{\ell\in\mathbb{Z}}H\left(\frac{|\ell|}{n}\right)e^{i\ell x}, \qquad
x\in\TT, \ n>0.
\end{equation}
We note that the normalizing factor $\hbar_n$ is chosen so that
\begin{equation}\label{eq:phimax}
\max_{x\in \TT}|\Phi_n(x)|=\Phi_n(0)=1.
\end{equation}
An important property of $\Phi_n$ is the following localization estimate given in \cite{singdet}. 

Let
\be
H_1 := \|H\|_1 = \int_{-1}^1 H(t) \, dt > 0.
\ee
Since \( H(t) \in [0, 1] \) for all \( t \in [-1, 1] \), it follows that \( H_1 \leq 2 \).

The following Theorem \ref{theo:fundatheorem} gives the upper bound of $|\Phi_n(x)|$.

\begin{theorem}\label{theo:fundatheorem}
Let $H$ be a smooth low pass filter, $p\ge 1$ and $S\ge 2$ be integers. 
We have\\
{\rm (a)} 
\begin{equation}
\label{eq:hsums}
\left|\sum_{k=-n}^{n} H\left( \frac{k}{n} \right)^p - n\int_{-1}^{1} H(x)^p ~dx\right|\le \frac{3p}{12} \cdot \frac{\| H'' \|_1}{n}.
\end{equation}
In particular, if 
\begin{equation}\label{eq:ncond}
n\ge\sqrt{\frac{p\|H''\|_1}{\|H^p\|_1}},
\end{equation}
then
\begin{equation}
\label{eq:hsums_asymp}
\left|\sum_{k=-n}^{n} H\left( \frac{k}{n} \right)^p - n\int_{-1}^{1} H(x)^p~dx\right|\le (n/4)\int_{-1}^{1} H(x)^p~dx.
\end{equation}
In particular, if $p=1$,
\begin{equation} \label{eq:order_of_mag}
\left|\frac{1}{n\hbar_n\|H\|_1} - 1\right|\le \frac{1}{4}.
\end{equation}
Hence, the lower and upper bounds of $n\hbar_n$ are
\begin{equation} \label{eq:upper_bound}
\frac{4}{5\|H\|_1} \leq n\hbar_n \le \frac{4}{3\|H\|_1}.
\end{equation}
{\rm (b)} For $x\in [-\pi,\pi]$, $x\not=0$, $n\ge 1$, we have
\begin{equation}
\label{eq:phi_awaybd}
|\hbar_n^{-1}\Phi_n(x)| \le \frac{\|H^{(S)}\|_1\sqrt{2\pi}n}{(n|x|)^S} \cdot \frac{7}{2}.
\end{equation}
In particular, if $n$ satisfies \eqref{eq:ncond} with $p=1$, then
\begin{equation}
\label{eq:locest}
|\Phi_n(x)|\le \frac{14\sqrt{2\pi}}{3} \cdot \frac{\|H^{(S)}\|_1}{\|H\|_1} \cdot \frac{1}{\max(1,(n|x|)^S)}.
\end{equation}
\end{theorem}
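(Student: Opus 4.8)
The plan is to treat both parts as instances of comparing a sum over the integers to an integral, exploiting the smoothness and even symmetry of $H$ through the Poisson summation formula; this is the natural tool here because $\hbar_n^{-1}\Phi_n$ is itself a Fourier series in the samples $H(|\ell|/n)$. Once the two basic bounds \eqref{eq:hsums} and \eqref{eq:phi_awaybd} are in hand, the corollaries \eqref{eq:hsums_asymp}--\eqref{eq:upper_bound} and the final localization estimate \eqref{eq:locest} drop out by elementary algebra using the hypothesis \eqref{eq:ncond}.

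For part (a), set $g=H^p$ and note that, since $H$ is even, smooth, and supported in $[-1,1]$, so is $g$, whence $\sum_{k=-n}^{n}H(k/n)^p=\sum_{k\in\ZZ}g(k/n)$. Applying Poisson summation to $t\mapsto g(t/n)$ isolates the mean term $n\int_{-1}^1 g$ and leaves the error $\sum_{m\neq 0}n\,\widehat{g}(2\pi mn)$. Integrating by parts twice gives $|\widehat{g}(\xi)|\le\|g''\|_1/\xi^2$, so after summing $\sum_{m\neq0}m^{-2}=\pi^2/3$ the error is at most $\|g''\|_1/(12n)$. The crux is to bound $\|g''\|_1=\|(H^p)''\|_1\lesssim p\,\|H''\|_1$: expanding $(H^p)''=pH^{p-1}H''+p(p-1)H^{p-2}(H')^2$, a direct estimate of the second term costs a factor $p(p-1)$, but the integration-by-parts identity $p(p-1)\int H^{p-2}(H')^2=-p\int H^{p-1}H''$ (boundary terms vanish since $H^{(j)}(\pm1)=0$, and $0\le H\le1$) collapses it to a linear-in-$p$ bound, yielding $\|(H^p)''\|_1\le 3p\|H''\|_1$ and hence \eqref{eq:hsums}. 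Then \eqref{eq:hsums_asymp} follows because \eqref{eq:ncond} is exactly the condition $p\|H''\|_1\le n^2\int_{-1}^1 H^p$; specializing $p=1$ and dividing by $n\|H\|_1$ gives \eqref{eq:order_of_mag}, and inverting the resulting two-sided bound $3/4\le(n\hbar_n\|H\|_1)^{-1}\le 5/4$ gives \eqref{eq:upper_bound}.

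For part (b), use $H(|\ell|/n)=H(\ell/n)$ to write $\hbar_n^{-1}\Phi_n(x)=\sum_{\ell\in\ZZ}H(\ell/n)e^{i\ell x}$ and apply Poisson summation to $t\mapsto H(t/n)e^{itx}$, which produces $n\sum_{m\in\ZZ}\widehat{H}\!\big(n(2\pi m-x)\big)$ up to the chosen Fourier normalization. The decay $|\widehat{H}(\xi)|\le\|H^{(S)}\|_1/|\xi|^S$, again from $S$-fold integration by parts, controls every term. For $x\in[-\pi,\pi]$ the frequency $nx$ of the $m=0$ term is closest to the origin, so that term furnishes the main order $n\|H^{(S)}\|_1/(n|x|)^S$; the remaining terms obey $|n(2\pi m-x)|\ge n\pi(2|m|-1)$, so the tail is at most $\sum_{m\neq0}(2|m|-1)^{-S}\le\pi^2/4$ (for $S\ge2$) times the main term. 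Collecting the main term with this tail (and the normalization constant) gives the factor $\tfrac72\sqrt{2\pi}$ of \eqref{eq:phi_awaybd}; this reproduces the localization estimate of \cite{singdet}. Finally \eqref{eq:locest} follows by multiplying \eqref{eq:phi_awaybd} by $\hbar_n$ and inserting $n\hbar_n\le 4/(3\|H\|_1)$ from \eqref{eq:upper_bound}, since $\tfrac43\cdot\tfrac72=\tfrac{14}{3}$; the range $n|x|\le1$ is covered by the normalization $|\Phi_n|\le1$ from \eqref{eq:phimax} together with the fact that the constant in \eqref{eq:locest} exceeds $1$ (indeed $H^{(j)}(\pm1/2)=0$ forces $\|H^{(S)}\|_1\ge\|H\|_1$).

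I expect the main obstacle to be the estimate $\|(H^p)''\|_1\lesssim p\,\|H''\|_1$ in part (a): a naive bound produces the wrong quadratic dependence $p^2$, and recovering the correct linear dependence hinges on the integration-by-parts cancellation of the $\int H^{p-2}(H')^2$ term together with $0\le H\le1$. A lesser, purely bookkeeping difficulty is pinning down the explicit constants $\sqrt{2\pi}$ and $\tfrac72$ in \eqref{eq:phi_awaybd}, which depend on the Fourier-transform convention and on the precise tail estimate, and which are chosen to match the bound in \cite{singdet}.
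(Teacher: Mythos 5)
Your proposal is correct and follows essentially the same route as the paper: the key step in part (a) --- the integration-by-parts identity $p(p-1)\int H^{p-2}(H')^2=-p\int H^{p-1}H''$ that collapses $\|(H^p)''\|_1$ to a linear-in-$p$ multiple of $\|H''\|_1$ --- and the entire Poisson-summation argument of part (b), including the tail estimate $1+2(1-2^{-S})\zeta(S)<\tfrac72$ and the reduction of \eqref{eq:locest} to \eqref{eq:upper_bound} together with $|\Phi_n|\le 1$ on $n|x|\le 1$, coincide with the paper's proof. The only cosmetic difference is in the sum-versus-integral comparison of part (a): you invoke Poisson summation with the decay $|\widehat{g}(\xi)|\le\|g''\|_1/\xi^2$, whereas the paper uses the Euler--Maclaurin bound $\bigl|\sum_{|k|\le n}f(k)-\int_{-n}^{n}f\bigr|\le\frac{3}{24}\int|f''|$ from Proposition~\ref{prop:basicprop}(a); both are second-order comparisons controlled by $\|(H^p)''\|_1$ and yield the same constant $\frac{3p}{12}\cdot\frac{\|H''\|_1}{n}$.
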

The following proposition summarizes some facts which we will need in the proof.
We recall that the Fourier transform (respectively, inverse Fourier transform) of $f$ is defined by
\begin{equation}
\label{eq:fourtrans}
\hat{f}(\omega)=\frac{1}{\sqrt{2\pi}}\int_{\mathbb{R}}f(t)\exp(it\omega)dt,\qquad \check{f}(t)= \frac{1}{\sqrt{2\pi}}\int_{\mathbb{R}}f(\omega)\exp(it\omega)d\omega.
\end{equation}
 \begin{prop}\label{prop:basicprop}
 {\rm (a)} If $f :\mathbb{R}\to \mathbb{R}$ is twice continuously differentiable and $n\ge 1$ is an integer, then 
\begin{equation}
\label{eq:euler}
\left|\sum_{|k|\le n}f(k)-\int_{-n}^n f(t)dt\right|\le \frac{3}{24}\int_{-n}^n |f''(t)|dt.
\end{equation}
{\rm (b)} If $\check{f}$ is integrable and has bounded total variation then
\begin{equation}
\sum_{k\in\mathbb{Z}}f(k)\exp(ikx)=\sqrt{2\pi}\sum_{\ell\in\mathbb{Z}}\check{f}(x+2\ell\pi).
\end{equation}
 \end{prop}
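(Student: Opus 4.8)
The plan is to prove (a) as a cellwise quadrature estimate and (b) as the Poisson summation formula in the normalization fixed by \eqref{eq:fourtrans}. For (a), I would associate to each integer $k$ with $|k|\le n$ the unit cell $[k-\tfrac12,k+\tfrac12]$ centered at $k$ and compare $f(k)$ with $\int_{k-1/2}^{k+1/2}f$, i.e. use the midpoint rule cell by cell. Expanding $f$ about $k$ with integral remainder, the linear term integrates to zero by symmetry of the cell, leaving the exact Peano representation
\begin{equation}
\int_{k-1/2}^{k+1/2}f(t)\,dt-f(k)=\int_{k-1/2}^{k+1/2}f''(t)\,M(t-k)\,dt,\qquad M(u)=\tfrac12\left(\tfrac12-|u|\right)^2 .
\end{equation}
The kernel $M$ is nonnegative, supported on $[-\tfrac12,\tfrac12]$, with $\int M=\tfrac1{24}$ and $\max M=M(0)=\tfrac18=\tfrac{3}{24}$. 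Summing over $|k|\le n$, the cells tile the interval $[-n-\tfrac12,n+\tfrac12]$, so the sum telescopes against the integral and $|M(t-k)|\le\tfrac{3}{24}$ gives the desired $L^1$-type bound on the remainder.

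The main obstacle in (a) is reconciling the tiling interval $[-n-\tfrac12,n+\tfrac12]$ produced by this argument with the interval $[-n,n]$ appearing in the statement: the two end half-cells $[-n-\tfrac12,-n]$ and $[n,n+\tfrac12]$ contribute, and I would flag explicitly that the stated bound over $[-n,n]$ holds exactly when $f''$ vanishes outside $[-n,n]$ (equivalently, $f$ is affine there). This is precisely the regime in which the proposition is applied, since for $f(t)=H(t/n)^p$ the filter $H$ and all its derivatives vanish for $|t|\ge n$, so $\int_{-n-1/2}^{n+1/2}|f''|=\int_{-n}^{n}|f''|$ and the two formulations coincide. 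Absent that support condition one should keep the larger interval; I would note this caveat rather than claim the identity for arbitrary $C^2$ functions (for $f\equiv 1$ the left side is $1$ while the right side is $0$).

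For (b), I would recognize the claim as Poisson summation and prove it by periodizing. Fix $x$, set $P(x)=\sum_{\ell\in\ZZ}\check f(x+2\ell\pi)$, which is $2\pi$-periodic; integrability of $\check f$ makes this series converge in $L^1(\TT)$ and legitimizes computing its Fourier coefficients by unfolding the periodization over all of $\RR$, namely $\frac{1}{2\pi}\int_{-\pi}^{\pi}P(x)e^{-i\ell x}\,dx=\frac{1}{2\pi}\int_{\RR}\check f(x)e^{-i\ell x}\,dx=\frac{1}{\sqrt{2\pi}}f(\ell)$, the last step being Fourier inversion with $f$ and $\check f$ an inverse transform pair. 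Thus the Fourier series of $P$ is $\frac{1}{\sqrt{2\pi}}\sum_{\ell}f(\ell)e^{i\ell x}$, and multiplying through by $\sqrt{2\pi}$ yields the stated identity. The hard part here is not the formal computation but justifying that the Fourier series of $P$ converges pointwise back to $P$: this is exactly why the hypothesis of bounded total variation is imposed, so that the Dirichlet–Jordan criterion applies and the equality is genuine (holding at every point of continuity of $P$, as a symmetric partial-sum limit), while integrability of $\check f$ secures the unfolding and the interchange of summation and integration.
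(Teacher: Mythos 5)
The paper does not actually prove this proposition; it only cites the Euler--Maclaurin formula of \cite{olverbook} for (a) and the Poisson summation formula of \cite{butzer1971fourier} for (b). Your proposal supplies genuinely self-contained arguments for both, and they are essentially correct. For (a), your midpoint-rule/Peano-kernel decomposition over the cells $[k-\tfrac12,k+\tfrac12]$ is a different (and more transparent) route than the trapezoidal Euler--Maclaurin remainder with the periodized Bernoulli kernel $B_2(\{t\})/2$: your kernel $M(u)=\tfrac12(\tfrac12-|u|)^2$ has sup norm exactly $\tfrac18=\tfrac{3}{24}$, so it explains the paper's constant directly, whereas the Bernoulli-kernel route gives $\tfrac1{12}$ plus boundary terms. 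Your observation that the inequality as literally stated is false for general $C^2$ functions (e.g.\ $f\equiv 1$) is correct and worth flagging: some vanishing at the endpoints is indispensable, and it does hold in the only place the proposition is used, namely $f(t)=H(t/n)^p$ with $H$ supported in $[-1,1]$. One small slip: the condition you state, ``$f''$ vanishes outside $[-n,n]$,'' is not the right sufficient condition --- your own counterexample $f\equiv 1$ satisfies it --- since passing from the tiled interval $[-n-\tfrac12,n+\tfrac12]$ to $[-n,n]$ also requires the two end integrals $\int_{-n-1/2}^{-n}f$ and $\int_{n}^{n+1/2}f$ to vanish (or cancel); what is really used is that $f$ itself vanishes for $|t|\ge n$. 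For (b), your periodization argument, with the Fourier coefficient of $\sum_\ell\check f(\cdot+2\ell\pi)$ unfolded to $\tfrac{1}{\sqrt{2\pi}}f(\ell)$ via inversion and pointwise recovery secured by Dirichlet--Jordan under the bounded-variation hypothesis, is exactly the standard proof of the cited Poisson summation formula and is consistent with the normalization \eqref{eq:fourtrans}; in the paper's application $\check H$ is Schwartz-class, so the convergence caveats you mention are vacuous there.
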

 \begin{proof}
 Part (a) is a consequence of the Euler-Maclaurin formula (cf. \cite[Formulas[Chapter 8, (2.01), (2.04)]{olverbook}.
 Part (b) is the Poisson summation formula \cite[Proposition~5.1.30]{butzer1971fourier}.
 \end{proof}

\noindent\textsc{Proof of Theorem~\ref{theo:fundatheorem}. }

In order to prove part (a), we use Proposition~\ref{prop:basicprop}(a) with $H(\cdot/n)$ in place of $f$ and recall that $H(1)=H(-1)=0$ to conclude that
\begin{equation}
\label{eq:pf0eqn1}
\left| \frac{1}{n} \sum_{k=-n}^{n} H\left( \frac{k}{n} \right)^p - \int_{-1}^1 H(x)^p \, dx \right|\le \frac{3}{24n^2}\int_{-1}^1 |(H^p)''(t)|dt.
\end{equation}
If $p=1$, this proves \eqref{eq:hsums}. 
Let $p\ge 2$.
 Using integration by parts, we see that
\be
(p-1)\int_{-1}^1 H^{p-2}(t)H'(t)^2dt=-\int_{-1}^1 H^{p-1}(t)H''(t)dt.
\ee
Since $(H^p)''(t)=pH^{p-1}(t)H''(t)=p(p-1)H^{p-2}(t)H'(t)^2$, we conclude that
\be
\int_{-1}^1 |(H^p)''(t)|dt\le 2p\|H''\|_1.
\ee
Together with \eqref{eq:pf0eqn1}, this leads to \eqref{eq:hsums} also in the case when $p\ge 2$.

In order to prove part (b), we use Proposition~\ref{prop:basicprop}(b) with $f(t)=H(t/n)$, so that $\check{f}(x)=n\check{H}(nx)$.
Hence,
\begin{equation}
\label{eq:pf0eqn2}
\hbar_n^{-1}\Phi_n(x)=\sum_{|k|<n}H(k/n)\exp(ikx)=\sqrt{2\pi}n\sum_{\ell\in\mathbb{Z}}\check{H}(n(x+2\ell\pi)).
\end{equation}
Since $h$ is $S$ times continuously differentiable, then an integration by parts shows that
\be
\check{H}(nx)=\frac{(-1)^S}{(nix)^S}\check{H^{(S)}}(nx),
\ee
and hence
\begin{equation}
\label{eq:pf0eqn3}
|\check{H}(n(x+2\ell\pi))|\le \frac{\|H^{(S)}\|_1}{n^S|x+2\ell\pi|^S}
\end{equation}
If $|x|<\pi$, and $\ell\not=0$, then $|x+2\ell\pi|\ge 2|\ell|\pi-|x|\ge (2|\ell|-1)\pi\ge (2|\ell|-1)|x|$.
So, 
\begin{equation}
\label{eq:pf0eqn4}
|\hbar_n^{-1}\Phi_n(x)| \le \frac{\sqrt{2\pi}n\|H^{(S)}\|_1}{(n|x|)^S}\left\{1+2\sum_{\ell=1}^\infty (2\ell-1)^{-S}\right\}.
\end{equation}
We have
\be
\sum_{\ell=1}^\infty (2\ell-1)^{-S}=\sum_{\ell=1}^\infty \ell^{-S}-(1/2^S)\sum_{\ell=1}^\infty \ell^{-S}=(1-2^{-S})\zeta(S),
\ee
where $\zeta(S)$ is the Riemann zeta function. Note that when \( S = 2 \), 
\be
\left( 1 + 2(1 - 2^{-S}) \zeta(S) \right) = \left( 1 + 2(1 - 1/4) \frac{\pi^2}{6} \right) \approx 3.4674 < \frac{7}{2}.
\ee
As $S \rightarrow \infty$, we have $\zeta(S) \rightarrow 1$ and $2^{-S} \rightarrow 0$. Thus,
\be
\lim_{S\rightarrow \infty}\left( 1 + 2(1 - 2^{-S}) \zeta(S) \right) = \left( 1 + 2(1 - 0) (1) \right) = 3.
\ee
Hence, it holds that
\be \label{eq72}
\left( 1 + 2(1 - 2^{-S}) \zeta(S) \right) < \frac{7}{2}.
\ee
Together with \eqref{eq:pf0eqn4}, this leads to
\be
|\hbar_n^{-1}\Phi_n(x)| \le \frac{\sqrt{2\pi}\|H^{(S)}\|_1n}{(n|x|)^S}(1+2(1-2^{-S}))\zeta(S) < \frac{\sqrt{2\pi}\|H^{(S)}\|_1n}{(n|x|)^S} \cdot \frac{7}{2}.
\ee
This proves \eqref{eq:phi_awaybd}.
Additionally,
\bea
H(t) &=& \int_{-1}^{t} H'(u) \, du \nonumber \\
&=& \int_{-1}^{t} \left( \int_{-1}^{u} H''(v) \, dv \right) du \qquad -1 \leq v \leq u \leq t \nonumber \\
&=& \int_{-1}^{t} (t-v)H''(v) \, dv \nonumber \\
&=& \cdots ~=~ \frac{1}{(S-1)!} \int_{-1}^{1} (t-v)_+^{S-1}H^{(S)}(v) \, dv. 
\eea
Thus,
\be
\int_{-1}^1 H(t) ~dt = \frac{1}{(S-1)!} \int_{-1}^{1} \left( \int_{-1}^1 (t-v)_+^{S-1} ~dt \right) H^{(S)}(v) \, dv.
\ee
Since \( (t - v)_+^{S-1} = 0 \) for \( t < v \), we simplify the inner integral:
\be
\int_v^1 (t - v)^{S-1} \, dt = \int_0^{1 - v} u^{S-1} \, du = \frac{(1 - v)^S}{S}
\ee
Hence,
\be
\int_{-1}^1 H(t) \, dt = \frac{1}{S!} \int_{-1}^1 (1 - v)^S H^{(S)}(v) \, dv
\ee
To estimate this, note that \( (1 - v)^S \le 2^S \) for \( v \in [-1, 1] \). Therefore,
\be
\int_{-1}^1 H(t) \, dt \le \frac{2^S}{S!} \int_{-1}^1 H^{(S)}(v) \, dv.
\ee
It follows that
\be \frac{\|H^{(S)}\|_1}{\|H\|_1 } \geq \frac{2^S}{S!} \geq 1. \ee
This to together with \eqref{eq:upper_bound} and \eqref{eq:phi_awaybd} prove \eqref{eq:locest}, and completes the proof. \qed

Clearly, the parameter $S$ controls the localization of the kernel, with higher smoothness of $H$ implying better localization. 
We refer to \cite{eugenenevai} for a more detailed discussion of the effect of $S$.
The estimate \eqref{eq:locest} implies that $\Phi_n(x)$ is an approximation to the Dirac delta $\delta_0$.
Figure~\ref{fig:kernloc} illustrates the localization property of the kernel.

There are, of course, many possible choices for a low pass filter. 
The actual choice is not so critical for our theory.
In the rest of this dissertation, we fix a low pass filter. 
All constants may depend upon this filter, and the filter will be omitted from the notation.

\begin{figure}[H]
\begin{center}
\begin{minipage}{0.49\textwidth}
\includegraphics[width=\textwidth]{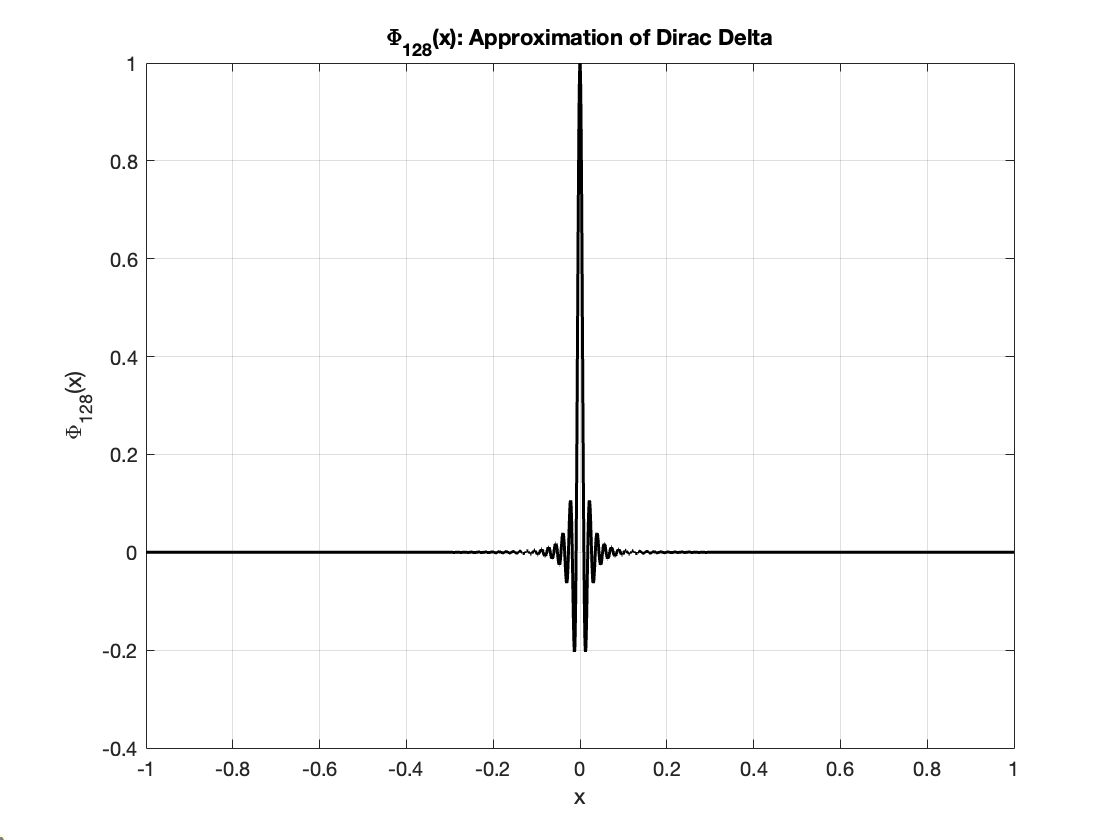} 
\end{minipage}
\begin{minipage}{0.49\textwidth}
\includegraphics[width=\textwidth]{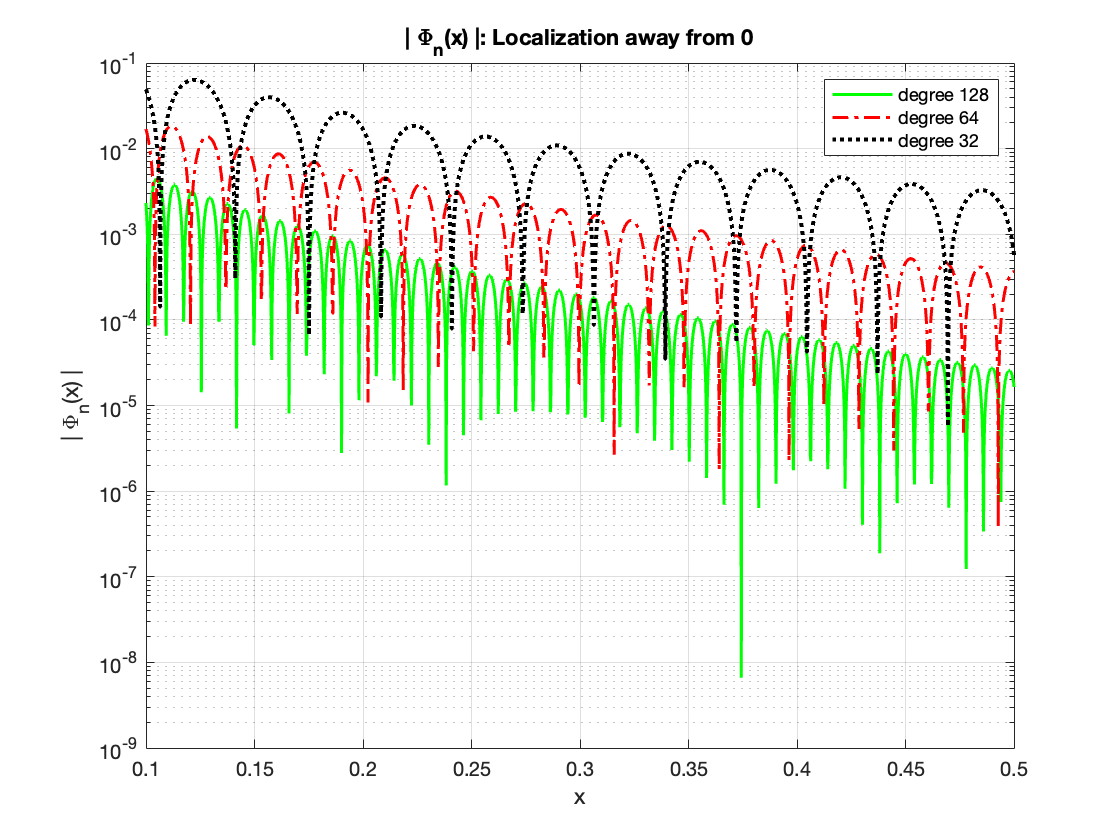} 
\end{minipage}
\end{center}
\caption{Illustration for the localization of the kernels $\Phi_n$. 
In both figures $x$-axis is normalized by a factor of $\pi$. 
Left: Graph of $\Phi_{128}(x)$, clearly demonstrating the approximation of Dirac delta at $0$, 
Right: Illustration of how rapidly $|\Phi_n(x)|$ decreases away from $0$ as $n$ increases.}
\label{fig:kernloc}
\end{figure}

\bhag{Reconstruction operator}\label{bhag:sso_0}

\noindent\textbf{Constant convention}

\vspace{2ex}

\textit{
The letters $c, c_1,\cdots$ will denote generic positive constants depending on $H$ and $S$ alone. Their values might be different at different occurrences within  single formula. 
The notation $A\lesssim B$ means $A\le c B$, $A\gtrsim B$ means $B\lesssim A$, and $A\sim B$ means $A\lesssim B\lesssim A$. 
Constants denoted by capital letters, such as $L$, $C$, etc. will retain their values.
}

The localization property \eqref{eq:locest} implies that 
\begin{equation}\label{eq:purespectrum}
\sigma_n(\hat{\mu})(x)=\hbar_n\sum_{|\ell|<n}H\left(\frac{|\ell|}{n}\right)\hat{\mu}(\ell)\exp(i\ell x) 
=\sum_{k=1}^K A_k\Phi_n(x-\lambda_k)\approx \mu(x).
\end{equation}

Thus, the ``prominent'' peaks of $|\sigma_n(\hat{\mu})(x)|$ will occur at (or close to) the points $\lambda_k$ and the value of $\sigma_n$ at these points lead to the corresponding complex amplitudes $A_k$. 
It turns out that the values of $A_k$ are actually determined fairly accurately simply by evaluating $\sigma_n(\mu)(x)$ at the estimated value of $\lambda_k$.

The main difficulty is to make this more precise, and quantify the approximation error in terms of $n$ and the properties of the noise.

\bhag{Main theorem}\label{bhag:theorem}
The purpose of this section is to make precise the idea as to how the use of the localized kernel facilitates the solution of the point source separation problem, giving precise error bounds.
We assume the notation in \eqref{eq:noisysamples}, and make further notation as follows.
Let
\begin{equation}\label{eq:notation}
M=\sum_{k=1}^K |A_k|, \ \ \mathfrak{m}=\min_{1\le k\le K}|A_k|, \ \ \eta=\min_{\ell\not=k} |\lambda_k-\lambda_\ell|,
\end{equation}
We further assume that each $\epsilon_\ell$ is a realization of a sub-Gaussian random variable in $\mathcal{G}(V)$.
\yadi{$M$}{sum of absolute values of coefficients} \yadi{$\mathfrak{m}$}{minimum absolute value of coefficients}
\yadi{$\eta$}{minimal separation among unitless frequencies} \yadi{$E_n$}{reconstruction operator used with noise alone Eqn. \eqref{eq:noisespectrum}}
We will write
\begin{equation}\label{eq:powerspectrum}
\sigma_n(x)=\sigma_n(\hat{\mu})(x) =\hbar_n\sum_{|\ell|<n}H\left(\frac{|\ell|}{n}\right)\tilde{\mu}(\ell)\exp(i\ell x), \quad x\in\TT.
\end{equation}
Writing
\begin{equation}\label{eq:noisespectrum}
E_n(x)=\sigma_n(\{\epsilon_\ell\})(x) = \hbar_n\sum_{|\ell|<n}H\left(\frac{|\ell|}{n}\right)\epsilon_\ell \exp(i\ell x), \quad x\in\TT,
\end{equation}
we observe that
\begin{equation}\label{eq:powerspectrum_bis}
\sigma_n(\tilde{\mu})(x)=\sigma_n(\hat{\mu})(x)+E_n(x) =\sum_{k=1}^K A_k\Phi_n(x-\lambda_k)+E_n(x).
\end{equation}
With this set up, our main theorem concerning the recuperation of point sources can be stated as follows. \yadi{$\mathbb{G}$}{Support of the thresholded power spectrum} \yadi{$\mathbb{G}_\ell$}{Cluster definded in Theorem~\ref{theo:main}} \yadi{$C$}{Eqn. \eqref{eq:thresholdCdef}}

\newpage
\begin{theorem}\label{theo:main}
Let $0<\delta<1$,
\begin{equation}\label{eq:levelsetdef}
\mathbb{G}=\{x\in [-\pi,\pi] : |\sigma_n(x)|\ge \mathfrak{m}/2\}.
\end{equation}
and (cf. \eqref{eq:locest}, \eqref{eq:notation})
\begin{equation}\label{eq:thresholdCdef}
C = \max\left(1,\left( \frac{16ML}{\mathfrak{m}} \right)^{1/S}\right).
\end{equation}
For sufficiently large $n$ (cf. \eqref{eq:pf2eqn3}), each of the following statements holds with probability exceeding $1-\delta$.
\begin{itemize}
\item (\textbf{Disjoint union condition}) \\
the set $\mathbb{G}$ is a disjoint union of exactly $K$ subsets $\mathbb{G}_\ell$,
\item (\textbf{Diameter condition}) \\
for each $\ell=1,\cdots, K$,   $\mathsf{diam}(\mathbb{G}_\ell) \le 2C/n$,
\item (\textbf{Separation}) \\
$\mathsf{dist}(\mathbb{G}_\ell, \mathbb{G}_k) \ge \eta/2$ for $\ell \neq k$,
\item (\textbf{Interval inclusion}) \\
For each $\ell=1,\cdots, K$,  $I_\ell=\{x\in\TT: |x-\lambda_\ell|\le 1/(4n)\}\subseteq \mathbb{G}_\ell$.
 
\end{itemize}
Moreover, if
\begin{equation}\label{eq:lambda_estimator}
\hat{\lambda}_\ell =\arg\max_{x\in \mathbb{G}_\ell}|\sigma_n(x)|,
\end{equation}

then

\begin{equation}\label{eq:lambdaerr}
|\hat{\lambda}_\ell-\lambda_\ell| \le 2C/n.
\end{equation}

\end{theorem}
\begin{rem} \label{rem:eta_and_m}
{\rm Equation~\eqref{eq:pf2eqn3} shows that the value of $n$ should be at least proportional to the reciprocal of the minimal separation, $\eta$, among the $\lambda_\ell$'s. There is no direct connection with the number $K$ of signals, except the fact that there can be at most $\O(\eta^{-1})$ values of $\lambda_\ell$ in $[-\pi,\pi]$. 
For example, our method would require a much larger value of $n$ if there are two signals which are very close by compared to the case when they are far apart. 
Equation~\eqref{eq:pf2eqn3} also demonstrates the connection between $n$, the noise variance $V$, the minimal amplitude $\mathfrak{m}$, and the desired level of certainty $1-\delta$. Since the value of $\mathfrak{m}$ is unknown, we estimate the threshold $\tau$ ($3\mathfrak{m}$ in \eqref{eq:levelsetdef}) by taking a percentile  of the histogram of $|\sigma_n(x)|$. 
As a result, any signal with amplitude less than $\tau$ will not be detected by our algorithm. \qed}
\end{rem}

\bhag{Illustration in a univariate case}\label{bhag:1d}

We implement Theorem~\ref{theo:main} in the univariate case using Algorithm~\ref{alg:univariate}.
 \begin{algorithm}[ht]
 \begin{algorithmic}[1]
 \item[{\rm a)}] \textbf{Input:} The signal $\{\tilde{\mu}(\ell)\}_{|\ell|<n}$, threshold $\tau$, and a guess $\eta$ for the minimal separation. 
 \item[{\rm b)}] \textbf{Output:} Estimation of $\hat{A}_k, \hat{\lambda}_k$, and $\hat{\theta}_k$ for $k = 1, \ldots, K$.
 \STATE $\hbar_n \gets \left\{\sum_{|\ell|<n}H\left(\frac{|\ell|}{n}\right)\right\}^{-1}$
 \STATE $\sigma_n(x) \gets \hbar_n \sum_{|\ell|<n} H\left(\frac{|\ell|}{n}\right)\tilde{\mu}(\ell)e^{i\ell x}$
 \STATE $\mathcal{G} \gets \{x\in [-\pi,\pi] : |\sigma_n(x)|\ge \tau\}$
 \STATE $\mathcal{G}_1,\ldots,\mathcal{G}_K \gets Partition(\mathcal{G}) \mbox{ with minimal separation } \eta/4$
  \FOR{ $k=1$ to $K$}
 \STATE $\hat{\lambda}_k \gets \arg\max_{x\in \mathcal{G}_k} (|\sigma_n(x)|)$ 
 \STATE $\hat{\theta}_k \gets Phase(\sigma_n(x))$ 
 \STATE $\hat{A}_k \gets |\sigma_n(\hat{\lambda}_k)|$ 
 \ENDFOR
 \item[] \textbf{Note:} step 3 - 8 can be computed by using \texttt{findpeaks} in MATLAB with parameters \texttt{MinPeakDistance} $\eta/4$ and \texttt{MinPeakHeight} $\tau$.
 \STATE \textbf{Return: } $\hat{A}_k, \hat{\theta}_k, \hat{\lambda}_k$
 \caption{Given a univariate signal \\$\hat{\mu}(\ell) =\sum_{k=1}^K A_k\exp(-i\ell\lambda_k),$ find $K$, $A_k$'s and $\lambda_k$'s.}
 \label{alg:univariate}
 \end{algorithmic}
 \end{algorithm}
 
To illustrate this algorithm and Theorem~\ref{theo:main}, we consider a simple example:
\begin{equation}\label{eq:exp_3_points}
\hat{\mu}(\ell)=5\exp(i\ell)+30\exp(-2i\ell)+20\exp(-2.005i\ell) +\mbox{noise}, \qquad |\ell|<n;
\end{equation}
i.e., $K=3, A_1 = 5, A_2 = 30, A_3 = 20, \lambda_1=-1, \lambda_2=2, \lambda_3=2.005$.
We note that the amplitude $A_1$ is substantially smaller than the other two amplitudes, and $\lambda_2$ is very close to $\lambda_3$. 
Figure~\ref{fig:signal_and_noise} shows the original signal, the signal corrupted with a 0dB noise, and the detection of the $\lambda$'s with $n=1024$. 
The estimated frequencies are  $-0.9999,
   1.9997, 
   2.0058$.
 
\vspace{2ex}

\noindent\textbf{Performance assessment:}

\vspace{2ex}

After finding the estimate $\hat{\lambda}_k$ for $\lambda_k$ for each $k$, the next challenge is to determine how many of these estimates represent the actual points. Throughout the experiments in Table \ref{tab:compare_table_1d}, we evaluated our approach using 16 trials. For each trial, the RMSE was calculated by the formula
\be\mbox{RMSE} = \sqrt{\frac{1}{K} \sum_{k=1}^K |\lambda_k-\hat{\lambda}_k|^2}.\ee
We then reported the number of points reconstructed, run-time, memory, accuracy, and RMSE by taking average over 16 trials. Finally, we computed the standard deviation of the RMSE.

\vspace{2ex}

\begin{figure}[H]
\begin{center}
\begin{minipage}{0.32\textwidth}
\begin{center}
\subfloat[]{
\includegraphics[width=0.95\textwidth]{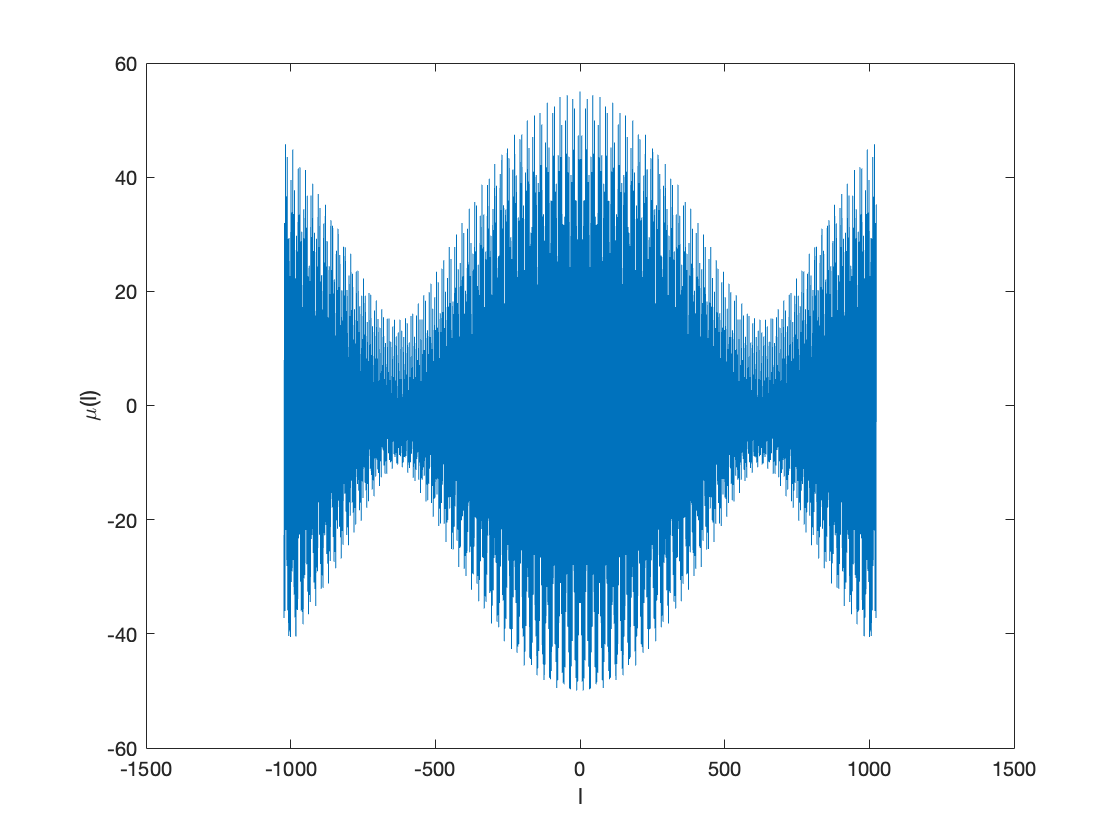}
}
\end{center}
\end{minipage}
\begin{minipage}{0.32\textwidth}
\begin{center}
\subfloat[]{
\includegraphics[width=0.95\textwidth]{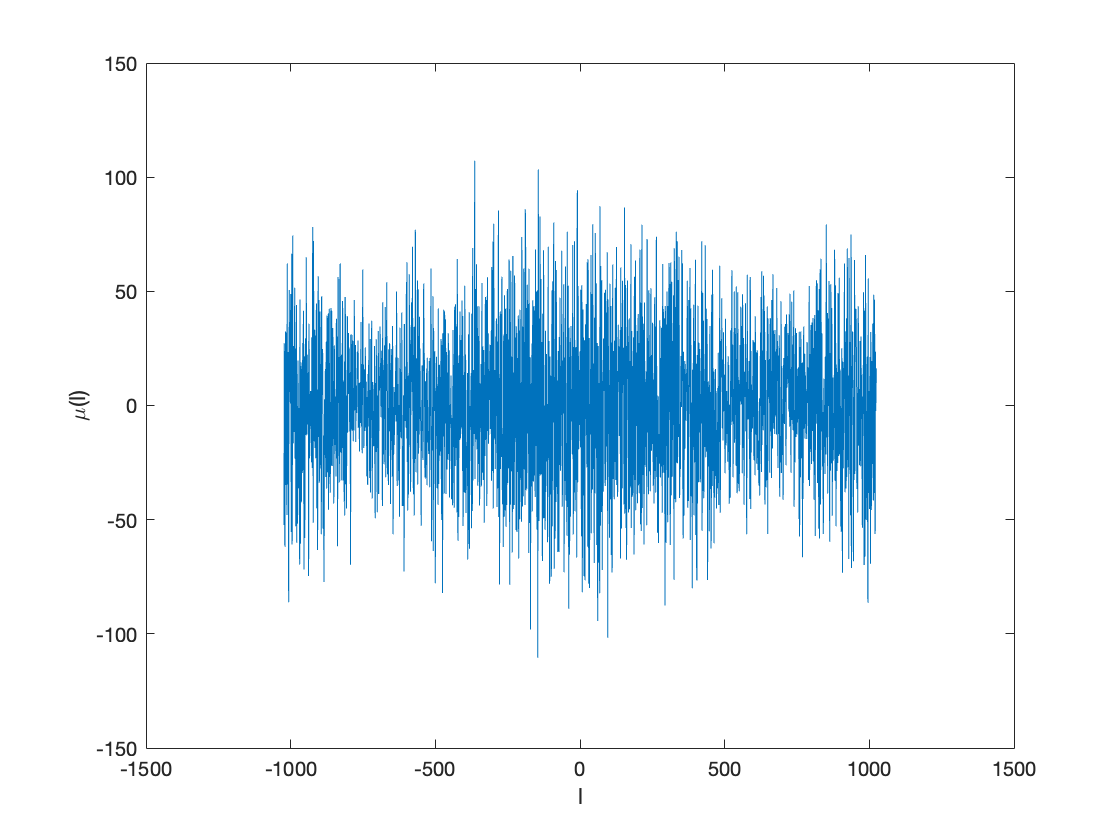}
}
\end{center}
\end{minipage}
\begin{minipage}{0.32\textwidth}
\begin{center}
\subfloat[]{
\includegraphics[width=0.95\textwidth]{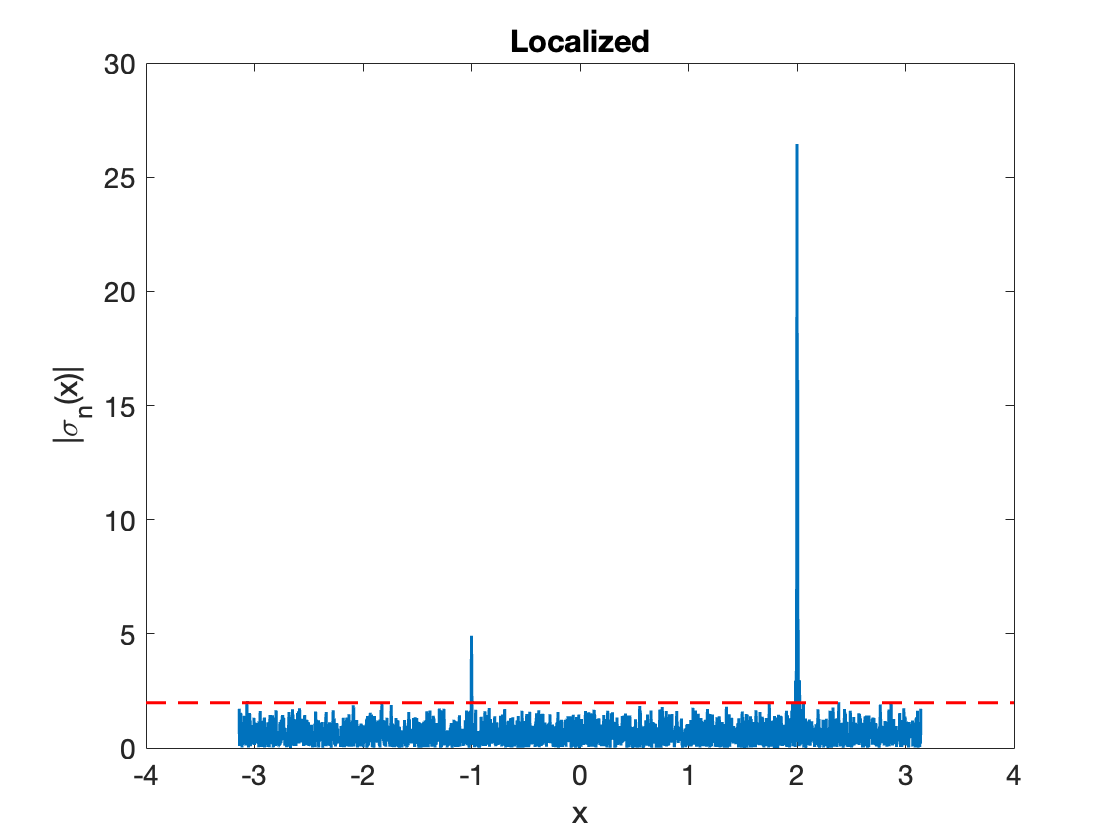}
}
\end{center}
\end{minipage}
\caption{(a) Signal $|\hat{\mu}(\ell)|$ as in \eqref{eq:exp_3_points} without noise $\hat{\mu}(\ell)$, $|\ell|<1024$. (b) Signal $|\tilde{\mu}(\ell)|$with 0 dB noise. (c) $|\sigma_{1024}(x)|$ with threshold $\tau=2.5$ at the red line. The detected values of $\lambda$ are  $-0.9999,
   1.9997, 
   2.0058$.}
\label{fig:signal_and_noise}
\end{center}
\end{figure}

Next, 
we compare the results of our localized kernel algorithm with ESPRIT and MUSIC algorithms (see \cite{plonka2018numerical}, pp. 576–586) on noisy variants of the one dimensional signal defined in \eqref{eq:exp_3_points} with different levels of noise and values of $n$.
We see that with $n=1024$, ESPRIT gives the best RMSE up to a noise level of -5dB, but our method gives better results when the SNR falls below -10dB. 
The runtime for both ESPRIT and MUSIC is much more than ours. 
In fact, they both require more than 5 minutes if $n\ge 16384$.
On the other hand, our method gives a better performance even at -15dB SNR if only we can increase $n$.

\begin{table*}[ht]
\begin{center}
\resizebox{\textwidth}{!}{%
\begin{tabular}{|c|c|c|c|c|c|c|c|c|}
\hline
SNR & Method & $n$ & Total & Recuperated & Run-time & Memory & RMSE & Standard \\
(dB) & &  & points & points & (seconds) & (MB) & & Deviation\\
\hline
-15 & Localized & 32768 & 3 & 3 & 1.52e-02 & 6.5043 & 2.28e-05 & 0\\
-15 & Localized & 16384 & 3 & 3 & 9.12e-03 & 3.2543 & 2.50e-01 & 2.50e-01\\
-10 & Localized & 16384 & 3 & 3 & 1.01e-02 & 3.2543 & 6.53e-05 & 1.40e-20\\
-5 & Localized & 16384 & 3 & 3 & 9.36e-03 & 3.2543 & 6.53e-05 & 1.40e-20\\
-10 & ESPRIT & 1024 & 3 & 3 & 5.13e-01 & 0.2075 & 8.10e-01 & 5.42e-01\\
-10 & MUSIC & 1024 & 3 & 3 & 6.23e-01 & 0.2075 & 8.12e-01 & 5.42e-01\\
-10 & Localized & 1024 & 3 & 3 & 1.29e-02 & 0.2075 & \textbf{5.91e-01} & 4.05e-01\\
-5 & ESPRIT & 1024 & 3 & 3 & 5.36e-01 & 0.2075 & 6.95e-02 & 6.89e-02\\
-5 & MUSIC & 1024 & 3 & 3 & 6.41e-01 & 0.2075 & \textbf{6.94e-02} & 6.86e-02\\
-5 & Localized & 1024 & 3 & 3 & 1.19e-02 & 0.2075 & 1.77e-01 & 1.77e-01\\
0 & ESPRIT & 1024 & 3 & 3 & 5.34e-01 & 0.2075 & \textbf{9.43e-05} & 4.27e-05\\
0 & MUSIC & 1024 & 3 & 3 & 6.31e-01 & 0.2075 & 6.77e-04 & 0\\
0 & Localized & 1024 & 3 & 3 & 1.10e-02 & 0.2075 & 4.09e-04 & 3.12e-05\\
5 & ESPRIT & 1024 & 3 & 3 & 5.43e-01 & 0.2075 & \textbf{5.10e-05} & 2.29e-05\\
5 & MUSIC & 1024 & 3 & 3 & 6.39e-01 & 0.2075 & 6.77e-04 & 0\\
5 & Localized & 1024 & 3 & 3 & 1.19e-02 & 0.2075 & 3.86e-04 & 1.93e-05\\
\hline
\end{tabular}
}
\end{center}
 	\caption{The table above compares results between our algorithm, MUSIC, and ESPRIT on $n=1024$ (2048 number of samples). Note that for $n=16384, 32768$, ESPRIT and MUSIC algorithms don't finish within 5 minutes since the complexity of the algorithms rely on eigenvalues finding algorithm.} \label{tab:compare_table_1d}
\end{table*}

In Figure~\ref{fig:fail_case}, we examine the dependence of our algorithm on the choice of the threshold $\tau$. 
It is clear from Figure~\ref{fig:fail_case}(a) that if $\tau$ is too high (e.g., $\tau=5$), we will not detect the signal with the smallest amplitude even in the absence of noise.
When noise is present, our method results in a reduction of noise as indicated by Lemma~\ref{lemma:noiselemma} below.
We may then set the threshold $\tau$ by examining the power spectrum $|E_n(x)|$ (cf. \eqref{eq:noisespectrum}).
In our example, we chose this to be $1.01\times\max_x|E_n(x)|$.
Figure~\ref{fig:fail_case}(b) shows that for a small value of $n$, this threshold is too high to detect the smallest amplitude signal.
The noise is suppressed even more for a larger value of $n$, which results in the detection of all the frequencies again, as shown in Figure~\ref{fig:fail_case}(c).

Finally, Figure~\ref{fig:fail_close_case} illustrates the dependence of $n$ on the minimal separation $\eta$, as well as the effect of choosing the threshold too small. 
For this purpose, we focus on the power spectrum $|\sigma_n(x)|$ in the noiseless case. 
Figure~\ref{fig:fail_close_case}(a) shows that the frequencies $2$ and 2.005 are not resolved correctly with $n=512$.
Figure~\ref{fig:fail_close_case}(b) shows that they are resolved if $n=1024$ if the threshold is set at the right level at 2.5, and $\eta=0.004$.
If the threshold is too low, then the minimal separation $\eta$ controls the number of signals detected.
If $\eta=0.004$, we detect three frequencies as reported earlier;  if $\eta=0.001$, we detect correctly the  frequency $-1$, but the sidelobes near $2$ are detected as 11 frequencies near $2$.
Figure~\ref{fig:fail_close_case}(c) illustrates the suppression of sidelobes to allow a greater leeway in setting these parameters.

\begin{figure}[H]
\begin{center}
\begin{minipage}{0.32\textwidth}
\begin{center}
\subfloat[]{
\includegraphics[width=0.95\textwidth]{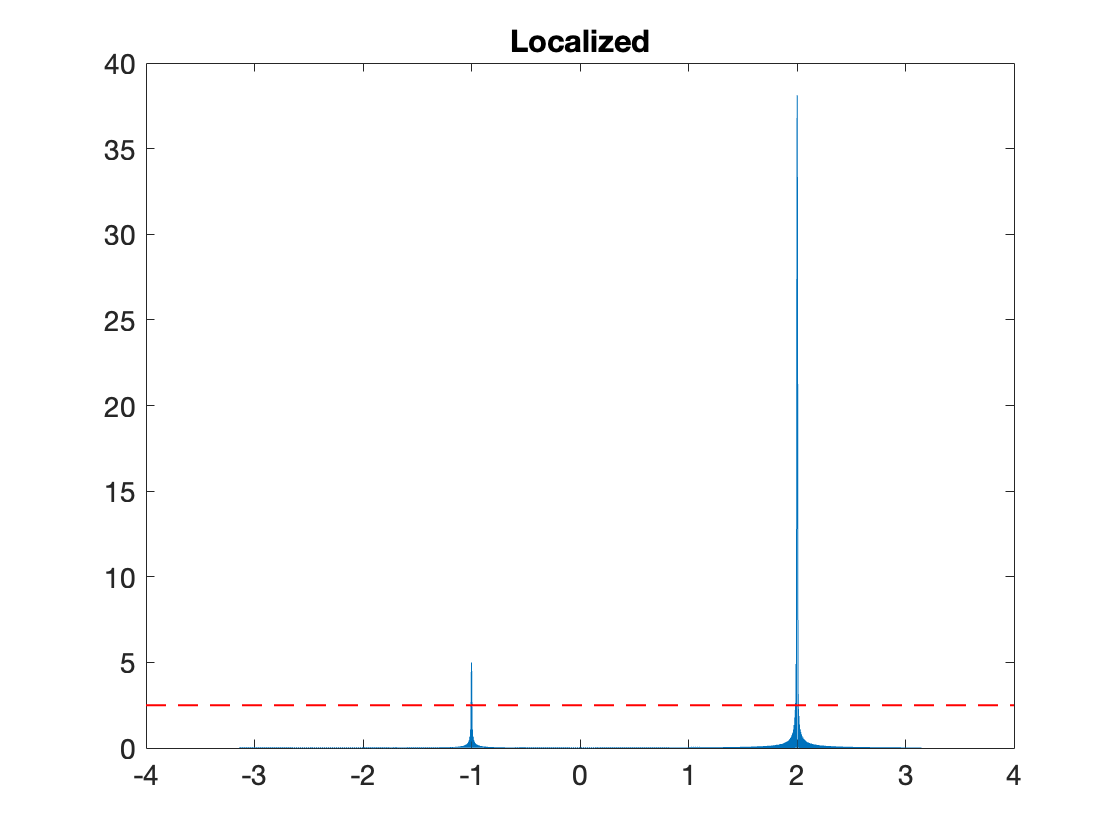}
}
\end{center}
\end{minipage}
\begin{minipage}{0.32\textwidth}
\begin{center}
\subfloat[]{
\includegraphics[width=0.95\textwidth]{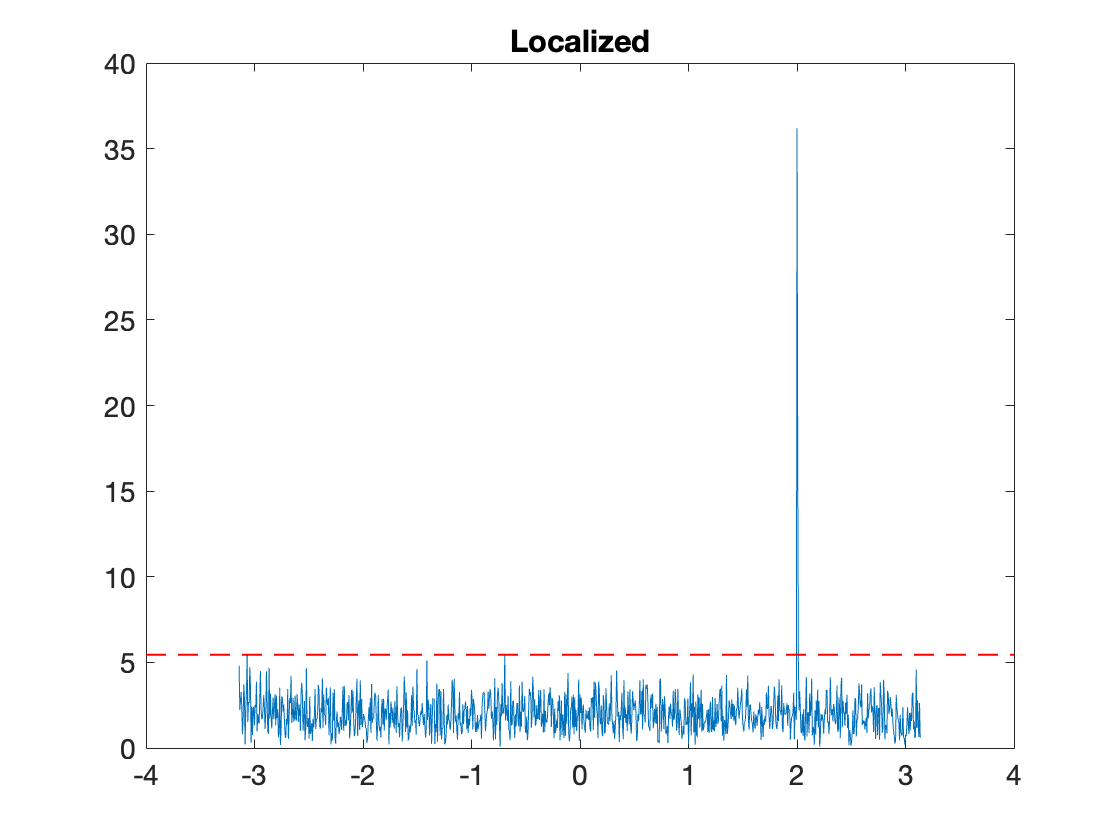}
}
\end{center}
\end{minipage}
\begin{minipage}{0.32\textwidth}
\begin{center}
\subfloat[]{
\includegraphics[width=0.95\textwidth]{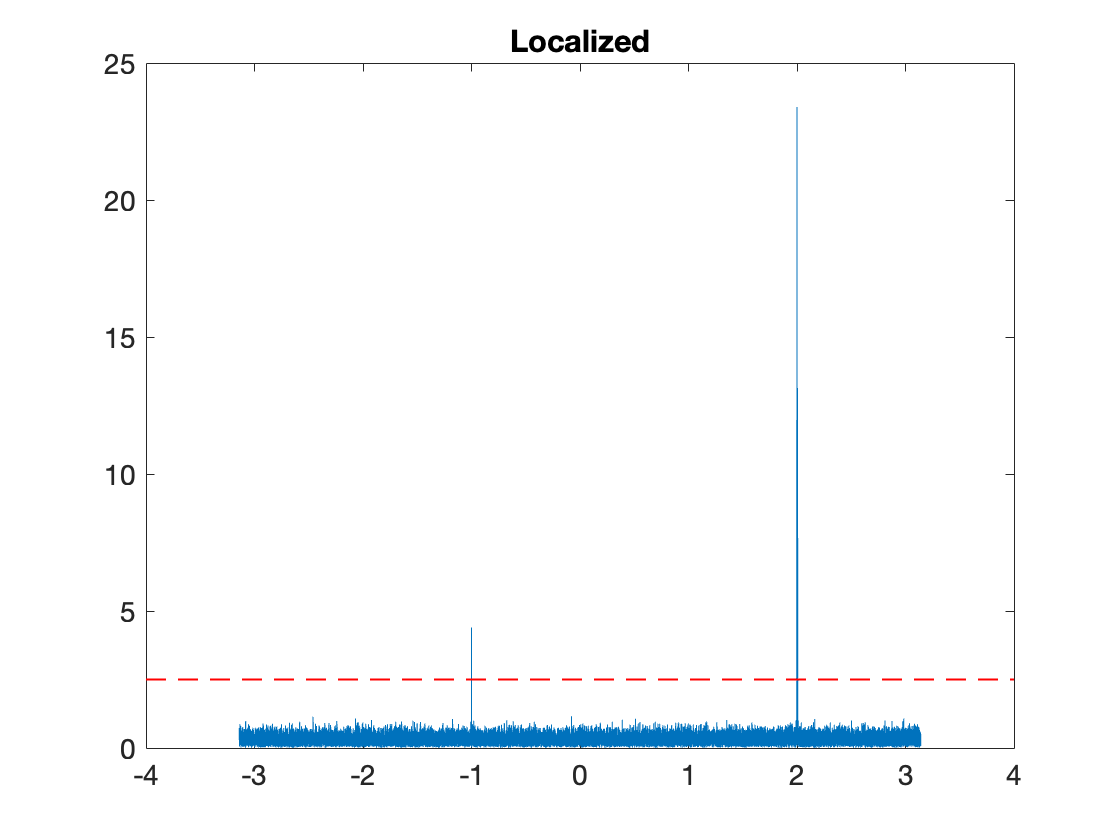}
}
\end{center}
\end{minipage}
\caption{(a) $|\sigma_{n}(x)|$ for $n=512$ without noise with threshold at the red line. All frequencies are detected with $\tau$ as shown in the red line. If $\tau=10$, then the smallest amplitude signal will not be detected. (b) $|\sigma_{n}(x)|$ for $n=1024$ with SNR -5 dB and threshold at the red line. (c) $|\sigma_{n}(x)|$ for $n=16384$ with SNR -5 dB and threshold at the red line.}
\label{fig:fail_case}
\end{center}
\end{figure}

Another challenge is when there are multiple $\lambda$'s that are too close (see Figure \ref{fig:fail_close_case}, $\lambda_2$ and $\lambda_3$ in this example). Our algorithm  needs more samples $n$ as indicated in \eqref{eq:pf2eqn3} and a proper minimum separation $\eta$ as discussed Remark \ref{rem:eta_and_m} to separate between close $\lambda$'s.

\begin{figure}[H]
\begin{center}
\begin{minipage}{0.32\textwidth}
\begin{center}
\subfloat[]{
\includegraphics[width=0.95\textwidth]{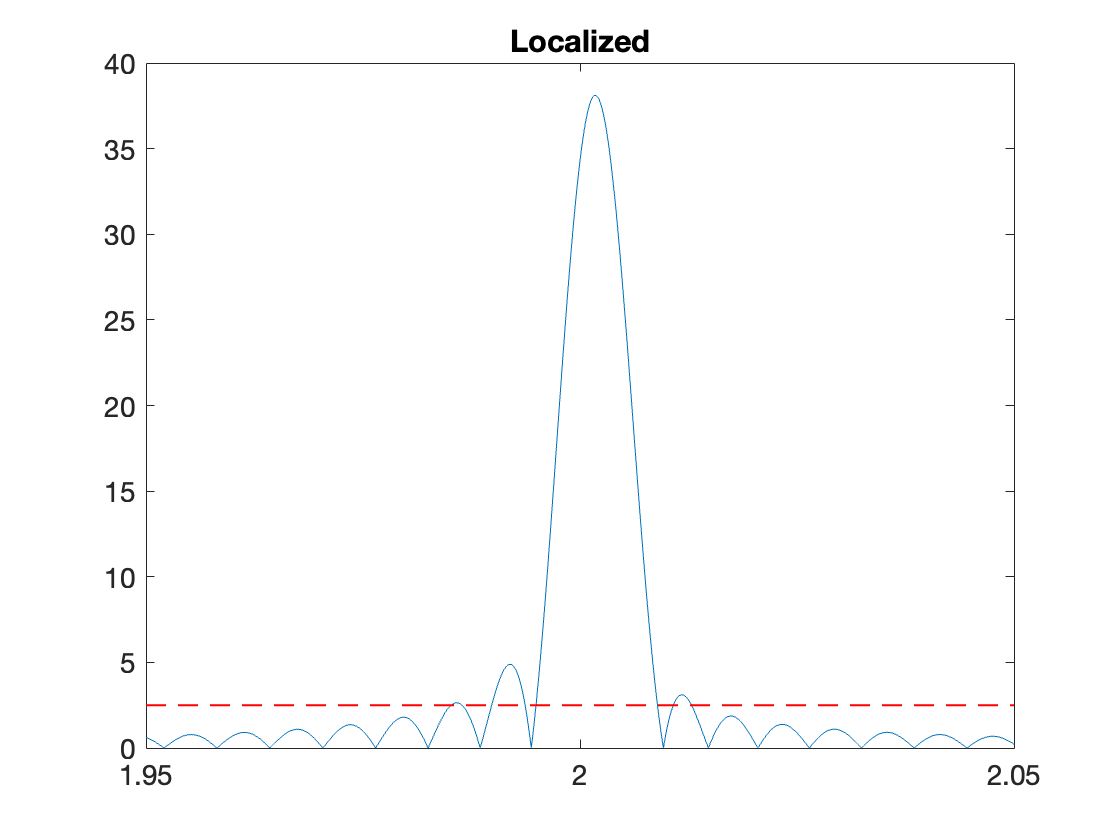}
}
\end{center}
\end{minipage}
\begin{minipage}{0.32\textwidth}
\begin{center}
\subfloat[]{
\includegraphics[width=0.95\textwidth]{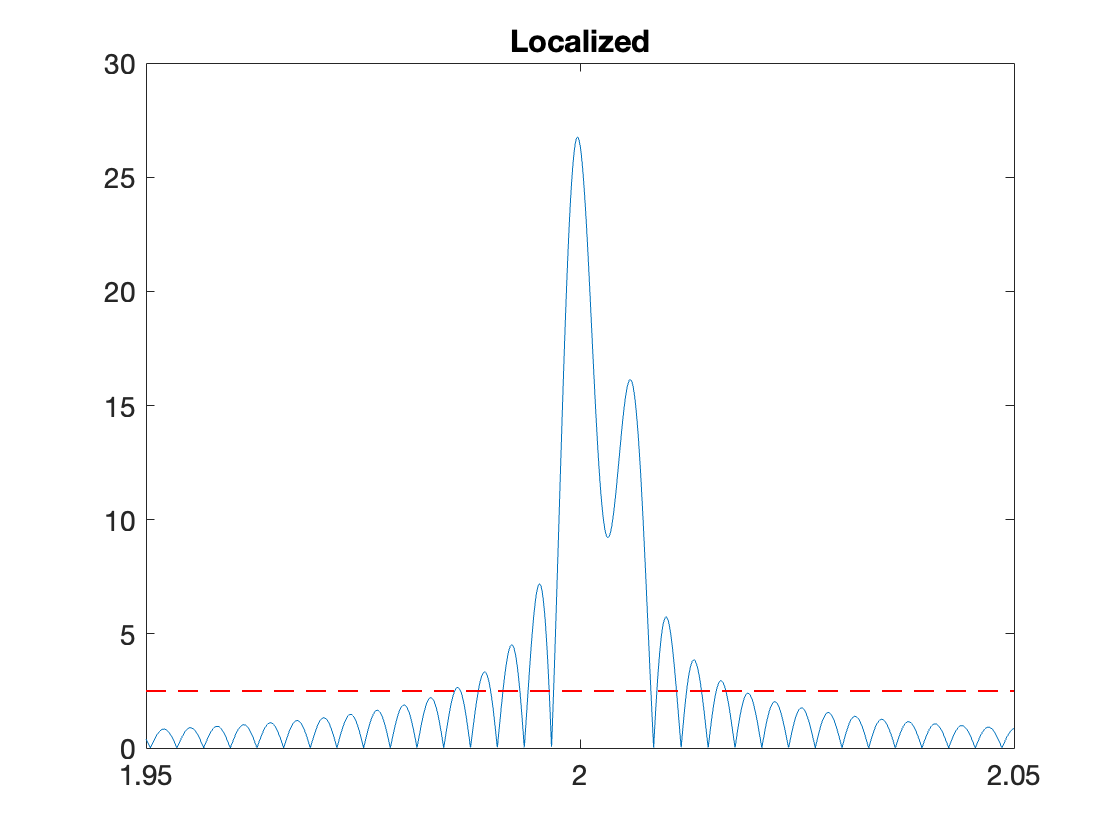}
}
\end{center}
\end{minipage}
\begin{minipage}{0.32\textwidth}
\begin{center}
\subfloat[]{
\includegraphics[width=0.95\textwidth]{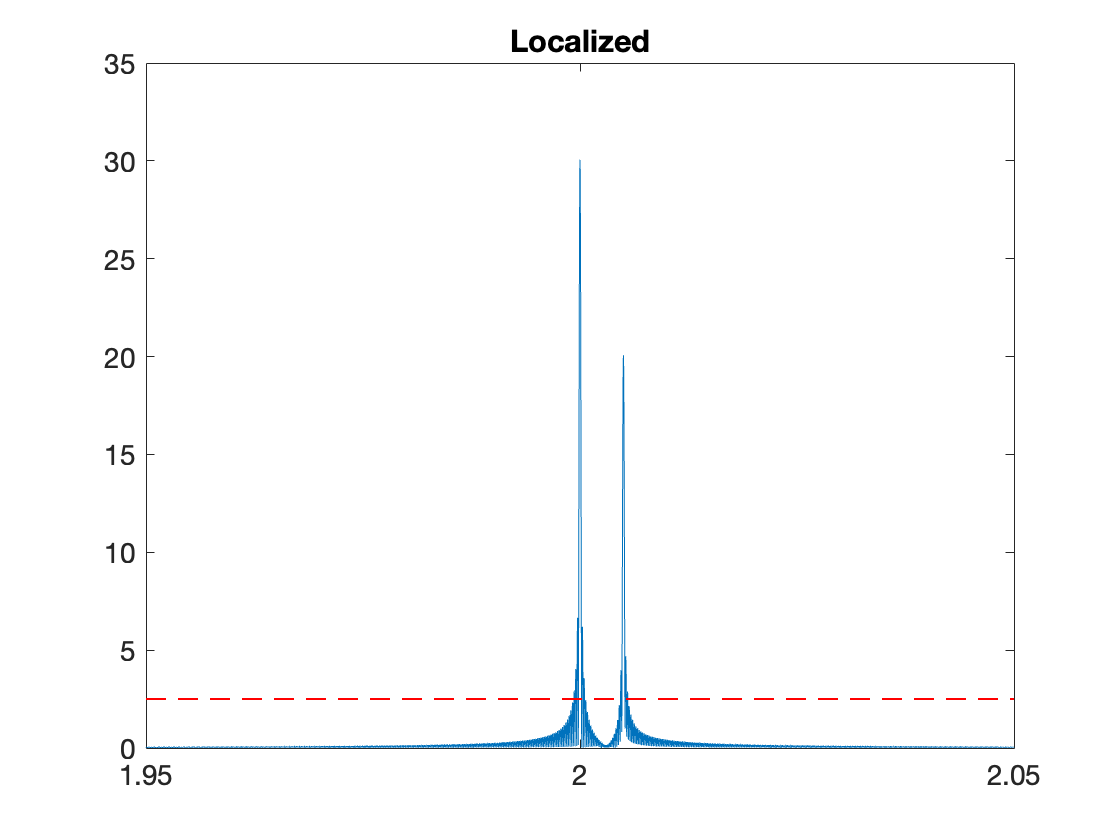}
}
\end{center}
\end{minipage}
\caption{(a) $|\sigma_{n}(x)|$ at $x$ near $2$ for $n=512$ without noise with threshold at the red line.  (b) $|\sigma_{n}(x)|$ at $x$ near $2$ for $n=1024$ without noise with threshold at the red line. (c) $|\sigma_{n}(x)|$ at $x$ near $2$ for $n=16384$ without noise with threshold at the red line.}
\label{fig:fail_close_case}
\end{center}
\end{figure}

\bhag{Proof of Theorem~\ref{theo:main}}\label{bhag:proof}
The following inequality, known as the \emph{Bernstein concentration inequality}, plays an important role in our proofs.
We recall that a trigonometric polynomial of order $<n$ is a function of the form $x\mapsto \sum_{|\ell|<n} b_\ell\exp(i\ell x)$.
\begin{prop}\label{prop:bernineq}
Let $n\ge 1$ be an integer, $T$ be any trigonometric polynomial of order $<n$. 
Then the derivative $T'$ of $T$ satisfies
\begin{equation}\label{eq:bernineq}
\max_{x\in\TT}|T'(x)|\le n\max_{x\in\TT}|T(x)|.
\end{equation}
In particular, if $N\ge 4\pi n$,  then
\begin{equation}\label{eq:meshnorm}
\frac{1}{2}\max_{x\in\TT}|T(x)|\le \max_{0\le k\le N}\left|T\left(\frac{2\pi k}{N}\right)\right|\le \max_{x\in\TT}|T(x)|.
\end{equation}
\end{prop}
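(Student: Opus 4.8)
The plan is to handle the two assertions separately. The first inequality is the classical Bernstein inequality for trigonometric polynomials, and it carries essentially all of the content; the sampling estimate in the second part then follows from it by a one-line mean value argument together with the hypothesis $N\ge 4\pi n$.

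For the first inequality, I note that a polynomial of order $<n$ has degree at most $n-1$, so it suffices to prove the sharp bound $\max_{x\in\TT}|T'(x)|\le m\max_{x\in\TT}|T(x)|$ for a polynomial of degree $m\le n-1$, which is even a little stronger than what is asserted. The approach I would take is M.~Riesz's interpolation formula, which expresses the derivative as a finite combination of translates of $T$ itself:
\[
T'(x)=\sum_{\nu=1}^{2m}c_\nu\,T\!\left(x+\tau_\nu\right),\qquad \tau_\nu=\frac{(2\nu-1)\pi}{2m},\quad c_\nu=\frac{(-1)^{\nu+1}}{4m\sin^2(\tau_\nu/2)}.
\]
The decisive arithmetic identity is $\sum_{\nu=1}^{2m}|c_\nu|=\frac{1}{4m}\sum_{\nu=1}^{2m}\csc^2\!\big(\tfrac{(2\nu-1)\pi}{4m}\big)=m$, which rests on the cosecant-sum evaluation $\sum_{\nu=1}^{M}\csc^2\!\big(\tfrac{(2\nu-1)\pi}{2M}\big)=M^2$ applied with $M=2m$. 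Once this is in hand, taking absolute values in the interpolation identity gives $|T'(x)|\le m\max_{\TT}|T|$ for every $x$. If a self-contained derivation is not wanted, the inequality may simply be quoted as classical.

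For the second part, the right-hand inequality is immediate, since the maximum over the finite grid $\{2\pi k/N\}$ cannot exceed the maximum over all of $\TT$. For the left-hand inequality, let $x^\ast$ be a point at which $|T|$ attains its maximum on $\TT$, and let $x_k=2\pi k/N$ be the nearest grid point, so that $|x^\ast-x_k|\le \pi/N$, half the grid spacing. By the fundamental theorem of calculus and the first part,
\[
|T(x^\ast)-T(x_k)|\le |x^\ast-x_k|\max_{x\in\TT}|T'(x)|\le \frac{\pi n}{N}\max_{x\in\TT}|T(x)|.
\]
Since $N\ge 4\pi n$ forces $\pi n/N\le 1/4$, I obtain $|T(x_k)|\ge \big(1-\tfrac14\big)\max_{\TT}|T|\ge \tfrac12\max_{\TT}|T|$, which is the stated lower bound (in fact with the better constant $3/4$, so there is room to spare in the hypothesis on $N$).

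\textbf{Main obstacle.} The only genuinely nontrivial ingredient is Bernstein's inequality in the first part; the second part is a routine grid estimate once it is available. If full self-containment is required, the delicate point is establishing the Riesz interpolation identity and, in particular, verifying the cosecant sum $\sum_{\nu}|c_\nu|=m$, which is where the care is needed.
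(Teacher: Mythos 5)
Your proof is correct. The paper itself offers no proof of this proposition: it is quoted as the classical Bernstein inequality together with its standard mesh-norm consequence, so there is no argument in the text to compare against. Your derivation of the first part via the M.~Riesz interpolation formula, with the cosecant-sum identity giving $\sum_\nu |c_\nu|=m\le n-1\le n$, is the standard classical proof and is sound; and your second part correctly uses the fundamental theorem of calculus (rather than the mean value theorem, which would be problematic for complex-valued $T$) to get $|T(x_k)|\ge \tfrac34\max_{\TT}|T|\ge\tfrac12\max_{\TT}|T|$ for the grid point nearest a maximizer. The only caveat is presentational: since the paper treats the proposition as a known result, a citation would suffice, but your self-contained argument fills that gap correctly.
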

For the proof of Theorem~\ref{theo:main}, we first estimate $E_n(x)$.
\begin{lemma}\label{lemma:noiselemma}
Let $\delta\in (0,1)$. 
There exist positive constants $C_1, C_2, C_3$, depending only on $H$ such that for $n\ge C_1(\ge 1)$, we have (cf. \eqref{eq:noisespectrum})
\begin{equation}\label{eq:noiseest}
\mathsf{Prob}\left(\max_{x\in\TT}|E_n(x)| \ge C_2V\sqrt{\frac{\log (C_3n/\delta)}{n}}\right)\le \delta.
\end{equation}
\end{lemma}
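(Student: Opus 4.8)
The plan is to reduce the uniform bound over $\TT$ to a pointwise sub-Gaussian estimate, pay a union-bound price over a grid of $O(n)$ points, and then invert the resulting tail to read off the threshold at which the failure probability drops to $\delta$. The two enabling facts are that $E_n$ is a trigonometric polynomial of order $<n$ in $x$ (so the sampling inequality \eqref{eq:meshnorm} applies), and that the coefficients $\hbar_n H(|\ell|/n)$ have a controlled $\ell^2$-norm via Theorem~\ref{theo:fundatheorem}(a).

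First I would fix $x\in\TT$ and write $E_n(x)=\sum_{|\ell|<n}a_\ell\epsilon_\ell$ with $a_\ell=\hbar_n H(|\ell|/n)e^{i\ell x}$. Because the $a_\ell$ are complex while \eqref{eq:subgaussian_sum_tail} is stated for \emph{real} coefficient vectors, I would split into real and imaginary parts: writing $a_\ell=\alpha_\ell+i\beta_\ell$ and $\epsilon_\ell=u_\ell+iv_\ell$ with $u_\ell,v_\ell$ i.i.d.\ in $\mathcal{G}(V)$, both $\Re E_n(x)=\sum_\ell(\alpha_\ell u_\ell-\beta_\ell v_\ell)$ and $\Im E_n(x)$ are real linear combinations of the $u_\ell,v_\ell$ whose coefficient vectors each have squared length $\sum_\ell|a_\ell|^2$. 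Hence each part lies in $\mathcal{G}\!\big((\sum_\ell|a_\ell|^2)^{1/2}V\big)$, and the complex tail bound \eqref{eq:subgaussiantail} gives, pointwise,
\[
\mathsf{Prob}(|E_n(x)|>t)\le 4\exp\!\left(-\frac{t^2}{4\big(\sum_\ell|a_\ell|^2\big)V^2}\right).
\]
To make this explicit I would estimate $\sum_{|\ell|<n}|a_\ell|^2=\hbar_n^2\sum_{|\ell|<n}H(|\ell|/n)^2$: the upper bound $n\hbar_n\le 4/(3\|H\|_1)$ from \eqref{eq:upper_bound} together with the $p=2$ case of \eqref{eq:hsums} (which gives $\sum_{|\ell|<n}H(|\ell|/n)^2\lesssim n$ once $n$ exceeds the threshold \eqref{eq:ncond}) yields $\sum_\ell|a_\ell|^2\le c/n$ for a constant $c$ depending only on $H$. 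The pointwise bound then reads $\mathsf{Prob}(|E_n(x)|>t)\le 4\exp(-nt^2/(cV^2))$.

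To pass to the maximum, I would use that $E_n$ is a trigonometric polynomial of order $<n$ and apply \eqref{eq:meshnorm} with $N=\lceil 4\pi n\rceil$ equally spaced nodes $x_k=2\pi k/N$, so that $\max_{\TT}|E_n|\le 2\max_{0\le k\le N}|E_n(x_k)|$. A union bound over the $N+1=O(n)$ nodes then gives $\mathsf{Prob}(\max_\TT|E_n|>2t)\le 4(N+1)\exp(-nt^2/(cV^2))$. Setting the right-hand side equal to $\delta$ and solving for $t$ produces $t\sim V\sqrt{\log(c'n/\delta)/n}$; rewriting $2t$ in the form of \eqref{eq:noiseest} identifies the constants $C_2,C_3$, while $C_1$ merely absorbs the lower bound on $n$ required for \eqref{eq:ncond} with $p=2$.

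I expect the only genuine subtlety to be the complex-coefficient bookkeeping in the pointwise step, since the cited summation tail \eqref{eq:subgaussian_sum_tail} applies only to real coefficient vectors and must be routed through the real/imaginary decomposition above. Everything else — the $\ell^2$ estimate from Theorem~\ref{theo:fundatheorem}(a), the factor-$2$ loss in discretization, and the logarithmic inversion — is routine constant-tracking, so I would not dwell on it.
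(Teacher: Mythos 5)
Your proposal is correct and follows essentially the same route as the paper's proof: a pointwise sub-Gaussian tail bound with $|\mathbf{a}|_n^2\sim 1/n$ from Theorem~\ref{theo:fundatheorem}(a), discretization via the Bernstein/mesh-norm inequality \eqref{eq:meshnorm} over an $O(n)$ grid, a union bound, and inversion of the tail to identify $C_2, C_3$. Your explicit real/imaginary decomposition to justify applying the real-coefficient bound \eqref{eq:subgaussian_sum_tail} to the complex weights $a_\ell$ is a point the paper glosses over, and is a welcome bit of extra care rather than a deviation.
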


\begin{proof}
Let $x\in\TT$. 
We will use \eqref{eq:subgaussian_sum_tail} with  $\mathbf{a}=(a_\ell)_{\ell=-n+1}^{n-1}$, where $a_\ell=\hbar_n H(|\ell|/n)\exp(i\ell x)$, $|\ell|<n$.
From Theorem \ref{theo:fundatheorem}, it is not difficult to show that for $n\ge C_1$,
\begin{equation}\label{eq:pf1eqn0}
n\int_{-1}^1 H(t)^2dt\sim \sum_{|\ell|<n}H\left(\frac{|\ell|}{n}\right)^2, \qquad \hbar_n^{-1}\sim  n;
\end{equation}
i.e.,
\begin{equation}\label{eq:pf1eqn1}
|\mathbf{a}|_n^2\sim 1/n.
\end{equation}
Hence, \eqref{eq:subgaussian_sum_tail} shows that
\begin{equation}\label{eq:pf1eqn2}
\mathsf{Prob}\left(|E_n(x)|>t\right)\le  4\exp\left(-c\frac{nt^2}{V^2}\right).
\end{equation}
Applying this inequality for each $x=k/2n$, $k=0,\cdots,\lceil 4\pi n\rceil-1$, we see that
\begin{equation}\label{eq:pf1eqn3}
\mathsf{Prob}\left(\max_{0\le k\le 4n-1}|E_n(2\pi/(4\pi n))|>t\right)\le  c_1n\exp\left(-c\frac{nt^2}{V^2}\right).
\end{equation}
We observe that $E_n$ is a trigonometric polynomial of order $<n$. 
Hence, \eqref{eq:meshnorm} shows that
\begin{equation}\label{eq:pf1eqn4}
\mathsf{Prob}\left(\max_{x\in\TT}|E_n(x)|>2t\right)\le c_1n\exp\left(-c\frac{nt^2}{V^2}\right).
\end{equation}
We set the right hand side of the estimate \eqref{eq:pf1eqn4} equal to $\delta$, and solve for $t$ to obtain
\be
2t=C_2V\sqrt{\frac{\log(C_3n/\delta)}{n}}.
\ee
This leads to \eqref{eq:noiseest}.
\end{proof}

We are now in a position to prove Theorem~\ref{theo:main}.

\vspace{2ex}

\noindent\textsc{Proof of Theorem~\ref{theo:main}}

\vspace{2ex}

In this proof, we will denote 
\be
\varepsilon_n=\max_{x\in\TT}|E_n(x)|.
\ee
We choose $n$ so that Lemma~\ref{lemma:noiselemma} is applicable and yields with probability exceeding $1-\delta$:
\begin{equation}\label{eq:pf2eqn1}
 \varepsilon_n \le C_2V\sqrt{\frac{\log (C_3n/\delta)}{n}}\le \frac{\mathfrak{m}}{16}.
\end{equation}
All the statements in the rest of the proof assume a realization of the $\epsilon_j$'s so that \eqref{eq:pf2eqn1} holds; i.e., they all hold with probability exceeding $1-\delta$.

We observe next that if $J\subseteq \{1,\cdots, K\}$, $d\ge c_1n$, $x\in \TT$, and $|x-\lambda_\ell|\ge d$ for all $\ell\not\in J$, then
\begin{equation}\label{eq:pf2eqn2}
\left|\sigma_n(x)-\sum_{\ell\in J}A_\ell\Phi_n(x-\lambda_\ell)\right|\le \frac{ML}{(nd)^S}+\varepsilon_n \le \frac{ML}{(nd)^S}+\frac{\mathfrak{m}}{16}.
\end{equation}
We now choose $C$ as in \eqref{eq:thresholdCdef} and assume that
\begin{equation}\label{eq:pf2eqn3}
n\ge \max(4C/\eta, C_1), \mbox{ and } C_2V\sqrt{\frac{\log (C_3n/\delta)}{n}}\le \frac{\mathfrak{m}}{16}.
\end{equation}
Then \eqref{eq:pf2eqn2} implies that for every $x\in\TT$ for which $|x-\lambda_\ell|\ge C/n$ for all $\ell\not\in J$, we have
\begin{equation}\label{eq:pf2eqn6}
\left|\sigma_n(x)-\sum_{\ell\in J}A_\ell\Phi_n(x-\lambda_\ell)\right|\le \frac{ML}{(nd)^S}+\varepsilon_n \le \frac{\mathfrak{m}}{8}.
\end{equation}
Hence, if $|x-\lambda_\ell|\ge C/n$ for all $\lambda_\ell$, $\ell=1,\cdots, K$, \eqref{eq:pf2eqn2} applied with $J=\emptyset$ implies that
\begin{equation}\label{eq:pf2eqn4}
|\sigma_n(x)|\le \frac{\mathfrak{m} }{8}.
\end{equation}
Consequently, if $x\in\mathbb{G}$, then there is some $\lambda_\ell$ such that $|x-\lambda_\ell|<C/n\le \eta/4$.
Necessarily, there is only one $\lambda_\ell$ with this property.
We now define for $\ell=1,\cdots, K$,
\begin{equation}\label{eq:pf2eqn5}
\mathbb{G}_\ell=\{x\in\mathbb{G}: |x-\lambda_\ell|<C/n\}.
\end{equation}
Obviously, $\mathbb{G}=\bigcup_{\ell=1}^K \mathbb{G}_\ell$, and $\mathbb{G}_\ell$'s are all mutually disjoint.
This proves the disjoint union condition.
The diameter condition as well as the separation condition are obviously satisfied.

We prove next the interval inclusion property, which implies in particular, that none of the sets 
$\mathbb{G}_\ell$ is empty. 
In order to prove this, we observe that $1 =\max_{x\in\TT}|\Phi_n(x)|$.
Hence, for $x\in I_\ell$, the estimate \eqref{eq:pf2eqn6} applied with $J=\{\ell\}$ implies that
\begin{equation}\label{eq:pf2eqn7}
|\sigma_n(x)-A_\ell\Phi_n(x-\lambda_\ell)|\le \frac{\mathfrak{m} }{8}.
\end{equation}
Since $\Phi_n$ is a trigonometric polynomial of order $<n$, the Bernstein inequality (together with \eqref{eq:phimax}) implies that for $x\in I_\ell$ (i.e., $|x-\lambda_\ell|<1(4n)$),
\be
|\Phi_n(x-\lambda_\ell)-1 | \le n|x-\lambda_\ell|  \le (1/4) .
\ee
So, \eqref{eq:pf2eqn7} leads to
\be
|\sigma_n(x)|\ge (3/4)|A_\ell| -\frac{\mathfrak{m} }{8}\ge (5/8)\mathfrak{m} , \qquad x\in I_\ell.
\ee
This proves that $I_\ell \subseteq \mathbb{G}_\ell$ for all $\ell=1,\cdots, K$.
The estimate \eqref{eq:lambdaerr} is clear from the diameter condition and the interval inclusion property.
\qed


\chapter{Robust and Tractable Multidimensional Exponential Analysis}\label{1paper}

\noindent
The content presented in this chapter is adapted from our paper titled \textit{``Robust and Tractable Multidimensional Exponential Analysis''}~\cite{mhaskar2024robust}.

\bhag{Introduction}\label{section:intro}
In continuation of the point source separation problem discussed in Section \ref{section:theosect}, this chapter addresses its multidimensional extension, which appears in various applications such as tomographic imaging (including Computerized Tomography (CT), magnetic resonance imaging (MRI), radar and sonar imaging), wireless communication, antenna array processing, sensor networks, and automotive radar, among others. 
Mathematically, the problem can be formulated as follows.
Given a multidimensional signal of the form
\begin{equation}\label{eq:multi_exp_problem}
f(\mathbf{x})=\sum_{k=1}^K a_k\exp(-i\langle \mathbf{x}, \w_k\rangle), \quad \mathbf{x}, \w_1,\cdots,\w_k\in\mathbb{R}^q,
\end{equation}
find the number $K$ of components, and the parameters $a_k$ and $\w_k$'s \yadi{$\w_k$}{Multivariate frequencies/points} \yadi{$q$}{Dimension of the observations}. 
Of course, this is a problem of inverse Fourier transform if we could observe the function $f$ at \textbf{all} values of $\x$. In practice, however, one can observe (after some sampling and renaming of the variables) the values of $f$ at only \textbf{finitely many} multi-integer values of $\x$.
In this case, it is not possible to distinguish values of $\w_k$ which are equal modulo $2\pi$ in all variables. So, this is a special case of the ancient trigonometric moment problem \cite{shohat1950problem}, except that we do not have \textbf{all} the trigonometric moments (i.e., the samples $f(\boldsymbol\ell)$) for all values of $\boldsymbol\ell\in \ZZ^q$. Thus, the problem is the ill-posed problem known often as the super-resolution problem: knowing the information in a finite domain of the frequency space, we need to extend it to the entire frequency space. The important problem in this connection is to determine the relationship between the number of samples  $f(\boldsymbol\ell)$  needed to recuperate the desired quantities up to a given accuracy.

In the univariate case, there are many methods to solve the problem of parameter estimation in exponential sums, we refer to \cite{plonka2018numerical, diederichs2018sparse} for a good introduction.
 
Under the rubric of target estimation or localization, which is one of the fundamental problems in radar signal processing with many civilian and military applications including landmine detection and geolocations (cf. \cite{zhu2016super}), a broad set of techniques has emerged for solving exponential analysis problems. Most of these methods are categorized broadly as subspace methods \cite{krim}, and are based on statistical considerations, rather than the nature of the signal itself. In fact, quite a few papers, e.g., \cite{venkatasubramanian2022toward, raghavan2020generalized}, are interested in testing a statistical hypothesis on whether or not there exists a signal at all. There is a theoretical limit on how much SNR can be tolerated, depending upon the number of antenna elements and number of observations \cite{venkatasubramanian2022toward}, in spite of a huge computational cost. On the other hand, beamforming methods \cite{krim} take into account the nature of the signal but focus again on noise, and try to maximize the SNR. 

Contrary to the remarks in \cite{krim}, methods based on filtered inverse Fourier transform were developed  in \cite{loctrigwave, singdet, bspaper}, and shown to work very well and faster than subspace methods.

Over the past several decades, several approaches to multivariate exponential analysis have been investigated. 
Many of these are extensions of the classical Prony method, and variations of MUSIC and ESPRIT, which are designed to stabilize this classical method.
For example, \cite{quinquis2004some} discusses multivariate extensions of MUSIC and ESPRIT algorithms.
The number of samples required to recuperate the $\w_k$'s up to an accuracy of $\mathcal{O}(1/n)$ is typically on the order of $\O(n^q)$ \cite{diederichs2018sparse, potts2013parameter, sahnoun2017multidimensional, kunis2016multivariate, peter2015prony}.
Under some extra assumptions, it is shown in \cite{diederichs2023many} that this can be improved to $\O(n\log^{q-1}n)$, and to $(q+1)n^2\log^{2q-2}n$ \cite{sauer2018prony}.
In \cite{bspaper}, we have discussed an an analogue of the beam-forming methods in multivariate setting.
However, this requires $\O(n^q)$ samples.

In \cite{cuyt2018multivariate, cuyt2020sparse}, the authors proposed a method to solve the problem using a combination of Prony's method and a based method, so that number of samples required is $\O(qn)$. 
Methods based on Pad\'e approximation and orthogonal polynomials on the unit circle are explored, especially in the univariate case \cite{singdet,derevianko2022exact}. 
The methods described in the paper \cite{cuyt2018multivariate} are also connected with Pad\'e approximation, and the paper \cite{briani2017vexpa} develops this method further to obtain accurate solution to the multivariate problem in the presence of a moderate Gaussian noise with a sub-Nyquist sampling rate.

In this chapter, we propose a novel method based on localized trigonometric polynomial kernels developed in \cite{loctrigwave}.
Our method utilizes $\O(qn)$ samples as well, but is faster and far more robust under noise. 
In contrast to the subspace based methods, our method takes into account the nature of the signal, resulting in a significant noise reduction with only a small number of observations per signal, and yields accurate results with theoretical guarantees.

The rest of this chapter is organized as follows. Section~\ref{section:sysmodel} introduces a system model for tomographic imaging which illustrates how the problem of multidimensional exponential analysis arises in signal processing. The algorithms to implement these theorems are given in Section~\ref{section:algsect}, and demonstrated in the case of the three examples explored in \cite{cuyt2018multivariate}.

\bhag{System model}\label{section:sysmodel}
As mentioned in Section \ref{section:intro}, the multi-dimensional exponential model in \eqref{eq:multi_exp_problem} arises in many applications in science and engineering. 
In this section, we illustrate the details of one such application, namely,  tomographic imaging.

In tomographic imaging,  an object of interest being imaged is probed by a sequence of monochromatic tones swept through a  frequency range $[\Omega_{init}, \Omega_{fin}] \in \RR$ (in units of Hertz).
The sensor transmits a signal onto a scene with respect to various angles $\{\bs\theta_m=(\theta_m, \phi_m)\}_m$, where $\theta_m \in [\Theta_{init}, \Theta_{fin}] \subseteq [0, 2\pi]$ and $\phi_m \in [\Phi_{init}, \Phi_{fin}] \subseteq [0, \pi]$, are the azimuth and elevation angles, with respect to the sensor (such as a radar \cite{jakowatz1996spotlight}), respectively. 
The scene reflectivity is a complex-valued function over the spatial coordinates, $\w  \in  \RR^3$ in 3D imaging (units of meters).
In this chapter, this is modeled as a distribution, $\mu=\sum_{k=1}^K a_k\delta_{\w_k}$, where $\delta$ denotes the Dirac delta.
For the sake of simplicity, we will use $\delta$ regardless of whether it is applied to vectors in different dimensions or scalars.
With rescaling and shifting, we may assume that the domain of $\mu$ is a subset of $[-\pi,\pi]^3$.

We define the center reference point (CRP) to be the center of mass of the scene to be reconstructed, and the line of sight (\textsf{los}) as the unit vector, $\mathbf{i}_{\mbox{los}}$, that points from the transmitter to the CRP of the scene. 
The distance along the \textsf{los} from the transmitter to the voxel location (placing the origin at the transmitter),  $\mathbf{r}_0$, of interest is called the `downrange' $r_0 =|\mathbf{r}_0\cdot \mathbf{i}_{\mbox{los}}|$ (where $\mathbf{r}_0$ is the position vector of a scatterer in the scene, and $\cdot$ denotes the inner product operation). Given this, it can be shown that the backscattered signal at downrange $r_0$ from the sensor, when viewed at angle $\bs\theta$, is given by \textcolor{black}{\cite{jakowatz1996spotlight, raj2016hierarchical, idrisstaes21}}
\begin{equation}
\Upsilon(t)|_{\bs{\theta},\mathbf{r}_0} = [R_{\bs{\theta}}\left\{\mu\right\}(r_0)] \chi\left(t-\frac{2r_{0}}{\nu_{p}}\right)+\check{\epsilon}(t),
\end{equation}
where $2r_{0}/\nu_{p}$ denotes the two-way time delay, $\nu_{p}$ represents the speed of wave propagation, $\chi$ is the transmitted waveform, $\check{\epsilon}(t)$ is the measurement noise, and $R_{\bs\theta}$ is the Radon transform of the scene, $\mu$, with respect to angle $\bs\theta$, and evaluated at the \textcolor{black}{downrange location $\mathbf{r}_0$}.
This corresponds to the integral across sensor returns from all points along a hyperplane perpendicular to the downrange location $\mathbf{r}_0$,  commonly referred to as an `iso-range contour'.

Therefore the complete response at time $t$ from all ranges, along the line formed by intersecting the scene at $\bs\theta$, is rewritten as
\begin{equation}
   \Upsilon_{\bs\theta}(t)=\int [R_{\mathbf{\bs\theta}}\left\{\mu\right\}(r)]\chi\left(t-\frac{2r}{\nu_{p}}\right)\,dr+\check{\epsilon}(t).
\end{equation}
This can be reformulated in convolution form (after a choice of units, without loss of generality, so that $\nu_p=2$) as:
\begin{equation}\label{eq:sys1}
    \Upsilon_{\bs\theta}(t)=\left(R_{\bs\theta}(\mu)\ast \chi\right)(t)+\check{\epsilon}(t),
\end{equation}
where  $\ast$ denotes the convolution operation. 
Equation \eqref{eq:sys1} can be interpreted as the response to a linear time invariant (LTI) system \cite{oppenheim1996signals} with an input signal $\chi(t)$ and an distributional impulse response $\mu_{\bs\theta}(t)$ which characterizes the interaction between the transmit waveform and the scene with respect to sensing angle $\bs\theta=(\theta,\phi)$ (where $\theta$ and $\phi$ are the azimuth and elevation angles respectively).
The received signal $\Upsilon_{\bs\theta}(t)$, is the output of this LTI system and is subsequently sampled at the receiver.
Given $\Upsilon_{\bs\theta}(t)$ for a finite grid $\bs\theta \in \{\bs\theta_m\}$ as described earlier, our problem is to estimate the underlying scene reflectivity  $\mu$ i.e. to form the image.

Taking the Fourier transform, we obtain
\begin{equation}\label{eq:sys3}
\frac{\widehat{\Upsilon_{\bs\theta}}(u)}{\widehat{\chi}(u)}=\widehat{R_{\bs\theta}\{\mu\}}(u)+\epsilon(u),
\end{equation}
where $\epsilon$ is derived from the Fourier transform of $\check{\epsilon}$ in an obvious way.
In this chapter, we will treat $\epsilon$ itself as the noise in the obervations.
The division on the left hand side of \eqref{eq:sys3} might amplify the noise in the original observations, but we will relate the noise level and the number of observations etc. in Theorem~\ref{theo:main}.

When $\bs\theta=(\theta,\phi)$, the Fourier projection slice theorem implies that the one dimensional Fourier transform  $\widehat{R_{\bs\theta}(\mu)}(u)$ is given by $\mathfrak{F}({\mu})(u\cos(\theta)\cos(\phi), u\sin(\theta)\cos(\phi), u\sin(\phi))$, where $\mathfrak{F}$ denotes three dimensional Fourier transform.
Writing \be\x=(u\cos(\theta)\cos(\phi), u\sin(\theta)\cos(\phi), u\sin(\phi)),\ee equation \eqref{eq:sys3} becomes
\begin{equation}\label{eq:sys4}
\frac{\widehat{\Upsilon_{\bs\theta}}(u)}{\widehat{\chi}(u)}=\mathfrak{F}({\mu})(\x)+\epsilon(\x).
\end{equation}
Although the vector $\x$ is in spherical coordinates here, one interpolates the actual data to obtain (a noisy version of) $\mathfrak{F}({\mu})(\x)$ at $\x$ in a Cartesian grid, so as to facilitate the use of fast Fourier transform \textcolor{black}{\cite{jakowatz1996spotlight}}. In particular, the problem of finding the reflexivity function $\mu$ reduces to the problem mentioned in Section~\ref{section:intro}.

 In this chapter we present a novel approach for solving the multidimensional analysis problem by resampling the Fourier space $\mathfrak{F}$ in a computationally efficient manner, while showing substantial improvements over state-of-the-art techniques including the MUSIC and ESPRIT approaches.

It is important to note that this chapter offers a general efficient tool for solving multidimensional exponential problems with applications beyond tomographic imaging such as signal source separation and direction-of-arrival estimation in multi-channel radar systems.

\bhag{Multivariate illustrations}\label{section:algsect}

We have described our univariate Theorem~\ref{theo:main} in Section~\ref{bhag:theorem} and implemented Algorithm~\ref{alg:univariate} for the univariate case in Section~\ref{bhag:1d}.
In this section, we will extend the algorithms to higher dimensional cases, following the ideas in \cite{cuyt2018multivariate, cuyt2020sparse}.
As in this chapter, we will use Algorithm~\ref{alg:univariate} on a set of univariate problems to components of $\w_k$ along different lines, defined by a basis $(\Delta_k)$ of $\RR^q$.
Ideally, the choice of this basis should be such that the minimal separation among $\langle \Delta_j, \w_k\rangle$ should be maximal for each $j$.
At this point, we don't know how to ensure this without knowing the ground truth.
We will use the same bases as in \cite{cuyt2018multivariate}.

There are two main issues to solve here, registration and assessment.

\vspace{2ex}

\noindent\textbf{Registration}:

\vspace{2ex}

The problem of registration is to figure out which solution based on the data on one line corresponds to which solution based on the data on another line.
Our solution is to use data of the form $f(\Delta_1+\ell\Delta_2)$ (cf. \eqref{eq:multi_exp_problem}) to obtain an accurate estimate on $\langle \w_k, \Delta_1\rangle$. 
The corresponding amplitude is then $A_k\exp(-i\langle \w_k,\Delta_1\rangle)$, which yields an approximation to $\langle \w_k,\Delta_1\rangle$. 
By reversing the roles of $\Delta_1, \Delta_2$, we obtain an accurate estimate on $\langle \w_k,\Delta_1\rangle$ and an approximation to $\langle \w_k,\Delta_2\rangle$.
The nearest neighbor search gives an accurate value for both the components. 
This procedure is described in Algorithm~\ref{alg:estimation}.

We note that \cite{cuyt2018multivariate} has a different approach for this problem.
In that paper, one finds first the amplitudes by solving a system of linear equations, and the registration is done by keeping track of the solutions.
In the case of noisy signal, they use more samples of the form $\ell\langle \w_k, \Delta_j\rangle +\ell' \langle \w_k, \Delta_{j'}\rangle$ for different values of $j$, $j'$.

\vspace{2ex}

\noindent\textbf{Performance assessment:}

\vspace{2ex}

After finding the estimate $\hat{\w}_k$ for $\w_k$ for each $k$, the next challenge is to determine how many of these estimates represent the actual points.
For this reason, we fix a radius $r$ and declare $\hat{\w}_k$ to be an accurate estimator of $\w_k$ if $|\w_k-\hat{\w}_k|<r$. 
We then count how many points were estimated accurately within this error margin.

Throughout the experiments in Table \ref{tab:compare_table} and Table \ref{tab:result_table}, we evaluated our approach using 16 trials. For each trial, the RMSE was calculated by the formula
\be\mbox{RMSE} = \sqrt{\frac{1}{K} \sum_{k=1}^K |\w_k-\hat{\w}_k|^2}.\ee
We then reported the number of points reconstructed, run-time, memory, accuracy, and RMSE by taking average over 16 trials. Finally, we computed the standard deviation of the RMSE.

In Section~\ref{section:algortihms}, we will describe our algorithm in the multivariate case.
In Section~\ref{section:twod12pts}, we illustrate the various steps in the case of a two dimensional dataset, and discuss the results. 
Section~\ref{section:algsect2} discusses the adaptation of our algorithm in the three and higher dimensional case. 
The corresponding results are discussed in Section~\ref{section:3dresults}.

\subsection{Multivariate algorithms}\label{section:algortihms}

We extend Algorithm~\ref{alg:univariate} to a two dimensional problem by using $\tilde{\mu}(\Delta_2+\ell \Delta_1)$, resulting in an accurate estimation of one component of the $\w_k$'s and an approximate estimation of the other component.

Algorithm~\ref{alg:estimation} uses Algorithm~\ref{alg:univariate} successively with pairs of components to obtain the final accurate estimation of all the components of $\w_k$'s.

\begin{algorithm}[ht]
\begin{algorithmic}[1]
\item[{\rm a)}] \textbf{Input:} Basis $\{\Delta_d\}_{d=1}^q$ of $\RR^q$, $\tau$, $\eta$, and signal $\tilde{\mu}(\Delta_{d_2}+\ell \Delta_{d_1})$, $1\le d_1<d_2\le q$ (a total of $(2n-1)q$ samples).
\item[{\rm b)}] \textbf{Output:} Estimation of $\hat{A}_k$ and $\hat{\w}_k$ for $k = 1, \ldots, K$.
\FOR{ $d_1=1$ to $q-1$}
\FOR{ $d_2=d_1+1$ to $q$}
\STATE Run Algorithm \ref{alg:univariate} with parameters $\tau$, $\eta$, and signal $\tilde{\mu}(\Delta_{d_2}+\ell \Delta_{d_1})$
\item[] \textbf{Note:} From the above step,  we will obtain $A_k$ and highly accurate result of $\langle \Delta_{d_1}, \hat{\w}_k\rangle$ together with corresponding less accurate result of $\langle \Delta_{d_2}, \hat{\w}_k\rangle$ in the output parameter for phase.
\STATE Run Algorithm \ref{alg:univariate} with parameters $\tau$, $\eta$, and signal $\tilde{\mu}(\Delta_{d_1}+\ell \Delta_{d_2})$
\item[] \textbf{Note:} From the above step,  we will obtain $A_k$ and highly accurate result of $\langle \Delta_{d_2}, \hat{\w}_k\rangle$ together with corresponding less accurate result of $\langle \Delta_{d_1}, \hat{\w}_k\rangle$ in the output parameter for phase.
\STATE Use nearest neighbor algorithm to obtain the highly accurate pair for both $\langle \Delta_{d_1}, \hat{\w}_k\rangle$ and $\langle \Delta_{d_2}, \hat{\w}_k\rangle$
\ENDFOR
\ENDFOR
\STATE We can write the result as $ \Delta \hat{\w}$, where \\
$\Delta =[\Delta_1,\ldots,\Delta_q]^T$ and $\hat{\w} = [\hat{\w}_1, \ldots, \hat{\w}_k ]$.
\STATE \textbf{Return: } We then obtain $\hat{\w}_k$ by computing \\
$\hat{\w} =  \Delta^{-1} \Delta \hat{\w}$.
\STATE \textbf{Return: } $\hat{A}_k$ and $\hat{\w}_k$ for $k = 1,\ldots, K$.
\caption{Parameter estimation in a multidimensional signal.}
\label{alg:estimation}
\end{algorithmic}
\end{algorithm}

\vspace{2ex}

\subsection{Illustration in a two dimensional case}\label{section:twod12pts}
We will illustrate our algorithm by using an example of 2-d image which we obtain from \cite{cuyt2020sparse}, as shown in Figure~\ref{fig:12ptexample}. Following this chapter, we take $\Delta_1 = (1.38, 4.14)$, $\Delta_2 = (-7.56, 5.67)$.
\yadi{$\Delta_j$}{Independent vectors in $\RR^q$}
\begin{figure}[ht]
\begin{minipage}{0.49\textwidth}
\begin{center}
\subfloat[]{
\begin{tabular}{ |c|c|c| } 
 \hline
 $k$ & $\omega_k$& $a_k$ \\ 
 \hline
 1 & $(-1.2566,0.6283)$ & 50 \\
 2 & $(-0.7540,0.3142)$ & 50 \\
 3 & $(-0.2513,1.2566)$ & 50 \\
 4 & $(-0.2513,0.6283)$ & 50 \\
 5 & $(-0.2513,0)$ & 50 \\
 6 & $(0,-0.6283)$ & 50 \\
 7 & $(0,-1.2566)$ & 50 \\
 8 & $(0.2513,1.2566)$ & 50 \\
 9 & $(0.2513,0.6283)$ & 50 \\
 10 & $(0.2513,0)$ & 50 \\
 11 & $(0.7540,0.3142)$ & 50 \\
 12 & $(1.2566,0.6283)$ & 50 \\
 \hline
\end{tabular}
\label{fig_table}}
\end{center}
\end{minipage}
\begin{minipage}{0.49\textwidth}
\begin{center}
\subfloat[]{
\includegraphics[width=.8\textwidth]{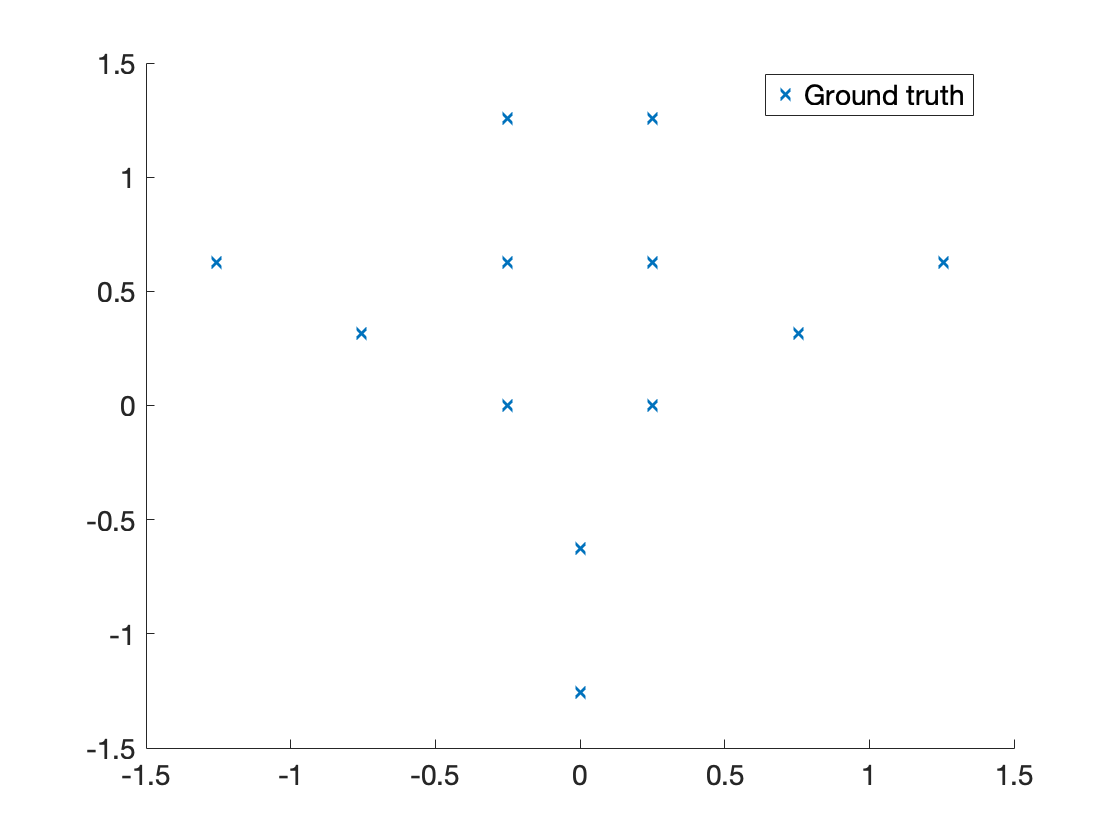}
\label{fig_pic}}
\end{center}
\end{minipage}
\caption{The two dimensional data comprising 12 points \cite{cuyt2020sparse}. 
(a) The actual points and amplitudes. (b) A graphic representation.}
\label{fig:12ptexample}
\end{figure}

Let $\w_1,\ldots,\w_{12} \in \mathbb{R}^2$ and $\{\Delta_1,\Delta_2\}$ be a basis for $\mathbb{R}^2$. Here, we have for $\ell\in\ZZ$, $|\ell|<n$,
\bea
\hat{\mu}(\Delta_2+\ell\Delta_1)&=\sum_{k=1}^{12} A_k \exp(-i\langle \Delta_2, \w_k\rangle) \exp(-i\ell\langle \Delta_1, \w_k\rangle) \\
    \hat{\mu}(\Delta_1+\ell\Delta_2)&=\sum_{k=1}^{12} A_k \exp(-i\langle \Delta_1, \w_k\rangle) \exp(-i\ell\langle \Delta_2, \w_k\rangle)
\eea
    The number of samples required is $4n-2$, where $n$ is the degree of the localized kernel. We then apply our low pass filter and obtain
\bea
\sigma_{n,1}(x) &=\hbar_n\sum_{k=1}^{12} A_k \exp(-i\langle \Delta_2, \w_k\rangle)\Phi_n(x-\langle \Delta_1, \w_k\rangle) \\
\sigma_{n,2}(x) &=\hbar_n\sum_{k=1}^{12} A_k \exp(-i\langle \Delta_1, \w_k\rangle)\Phi_n(x-\langle \Delta_2, \w_k\rangle)
\eea
From the Theorem \ref{theo:main}, $\langle \Delta_1, \w_k\rangle$ will be $x$ where the peaks occurs in $|\sigma_n(x)|$,  $\langle \Delta_2, \w_k\rangle \approx Phase \left( \sigma_n(x)\right)$,  and $A_k \approx |\sigma_n(x)|$. Now, we can obtain the accurate estimation of $\langle \Delta_1, \w_k\rangle$ corresponding to less accurate estimation of $\langle \Delta_2, \w_k\rangle$ (Figure~\ref{fig:12ptsalg_step1}).
\begin{figure*}[ht]
\begin{center}
\subfloat[]{\includegraphics[width=0.3\textwidth]{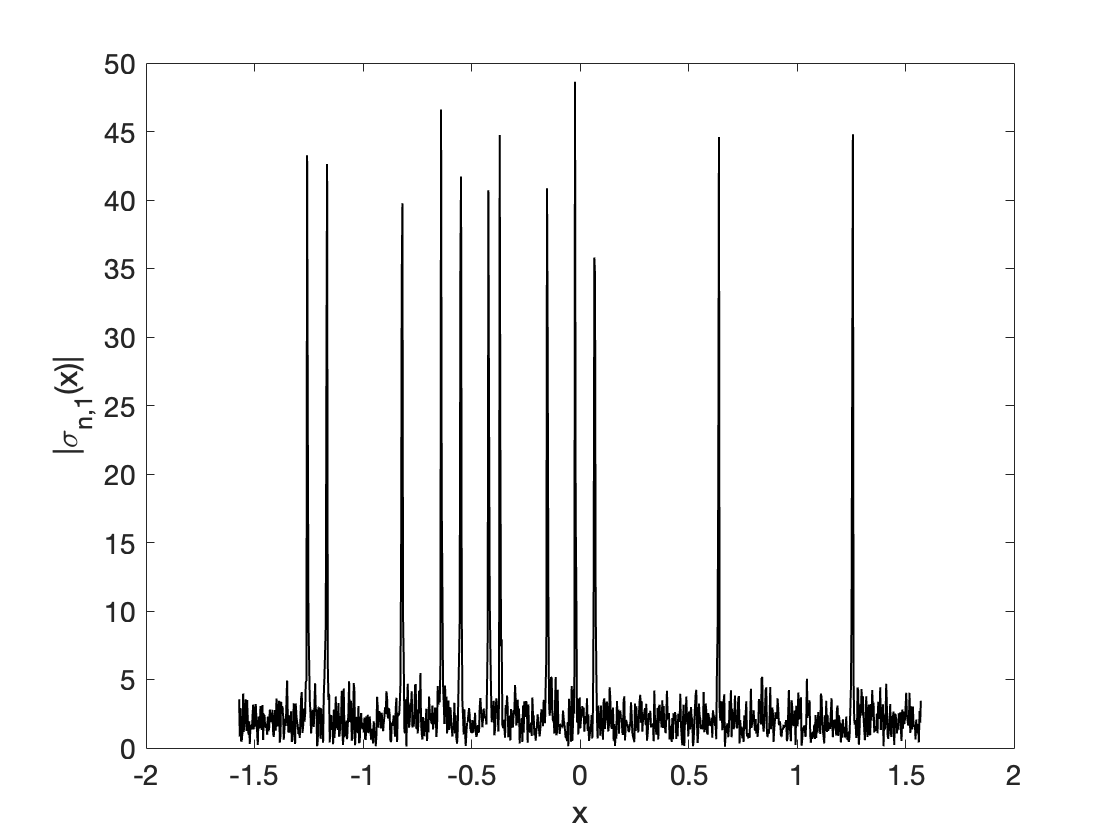}
\label{fig_first_case}}
\hfil
\subfloat[]{\includegraphics[width=0.3\textwidth]{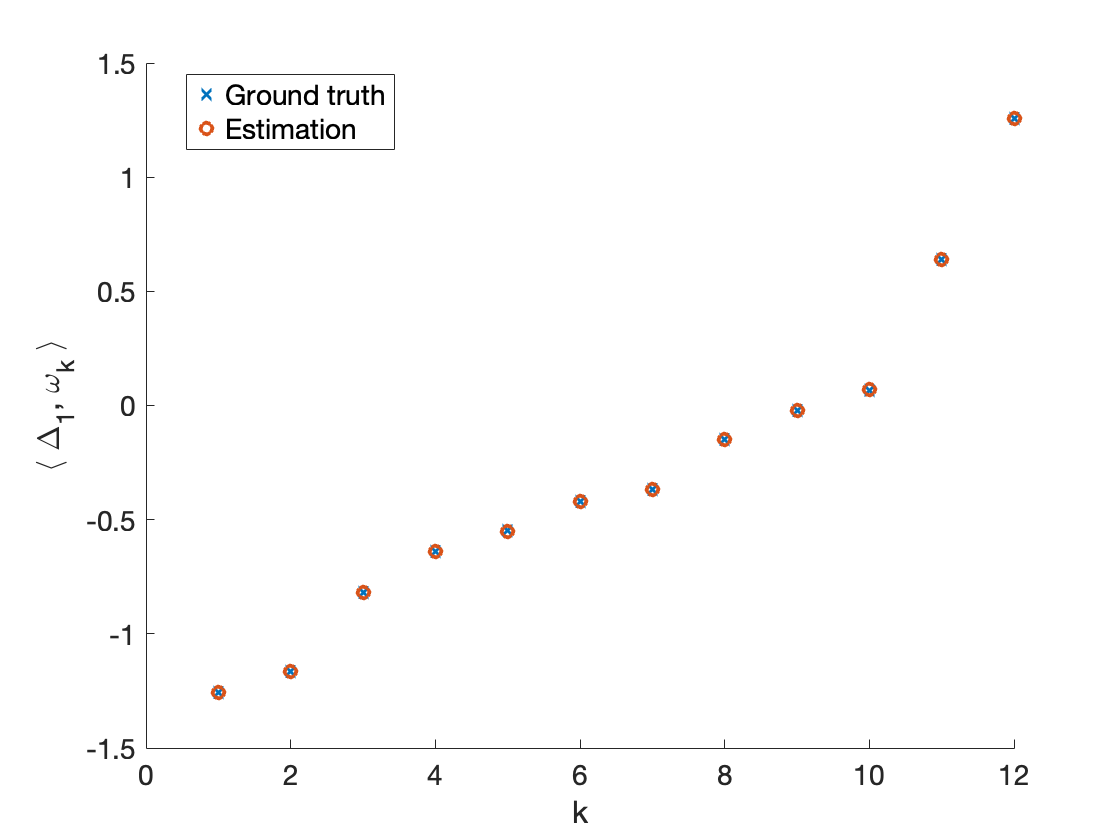}
\label{fig_second_case}}
\hfil
\subfloat[]{\includegraphics[width=0.3\textwidth]{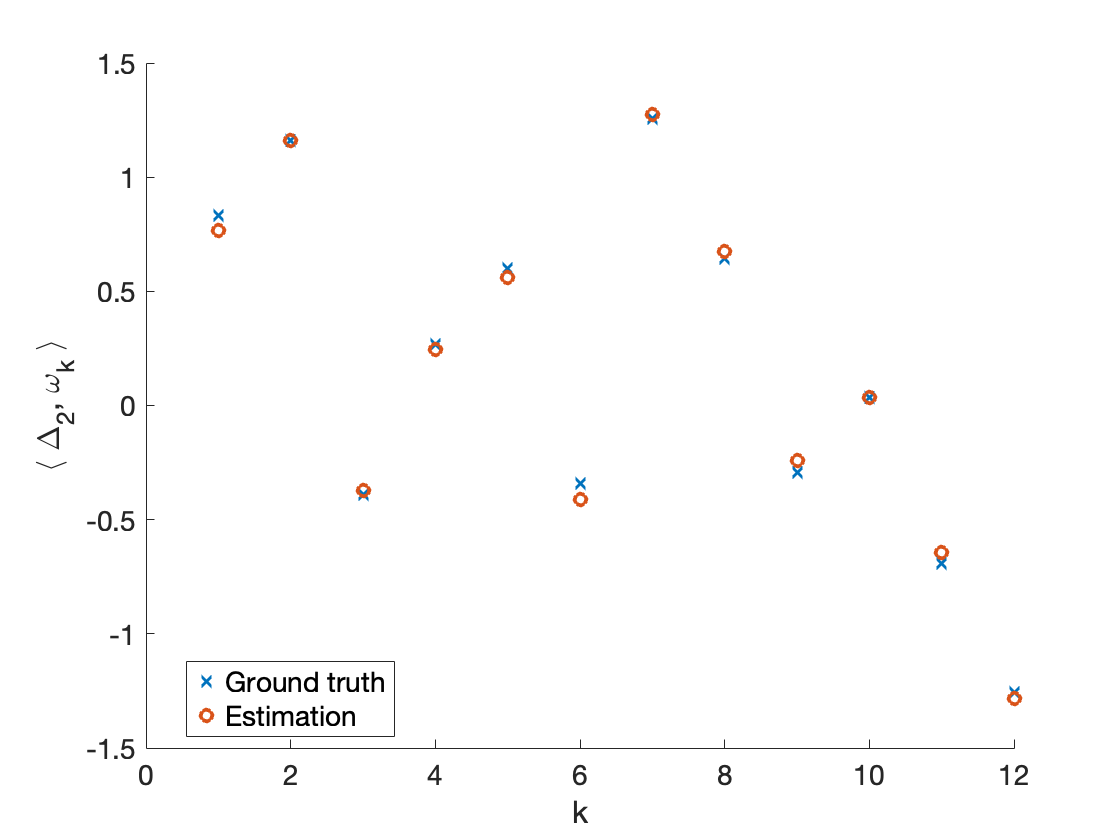}
\label{fig_third_case}}
\caption{(a) $|\sigma_{n,1}(x)|$. (b) Accurate determination of $\langle \Delta_1, \w_k\rangle$. (c) Approximate estimation of $\langle \Delta_2, \w_k\rangle$.}
\label{fig:12ptsalg_step1}
\end{center}
\end{figure*}
Then, we apply the same method in $\Delta_2$ direction to obtain the accurate estimation of $\langle \Delta_2, \w_k\rangle$ corresponding to less accurate estimation of $\langle \Delta_1, \w_k\rangle$ (Figure~\ref{fig:12ptsfinalstep} left and middle).

Finally, we can use nearest neighbor to obtain accurate estimation for both $\langle \Delta_1, \w_k\rangle$ corresponding to $\langle \Delta_2, \w_k\rangle$ and compute $A_k$. The final result is showed as the rightmost part of Figure~\ref{fig:12ptsfinalstep}.

\begin{figure*}[ht]
\subfloat[]{
\includegraphics[width=.3\textwidth]{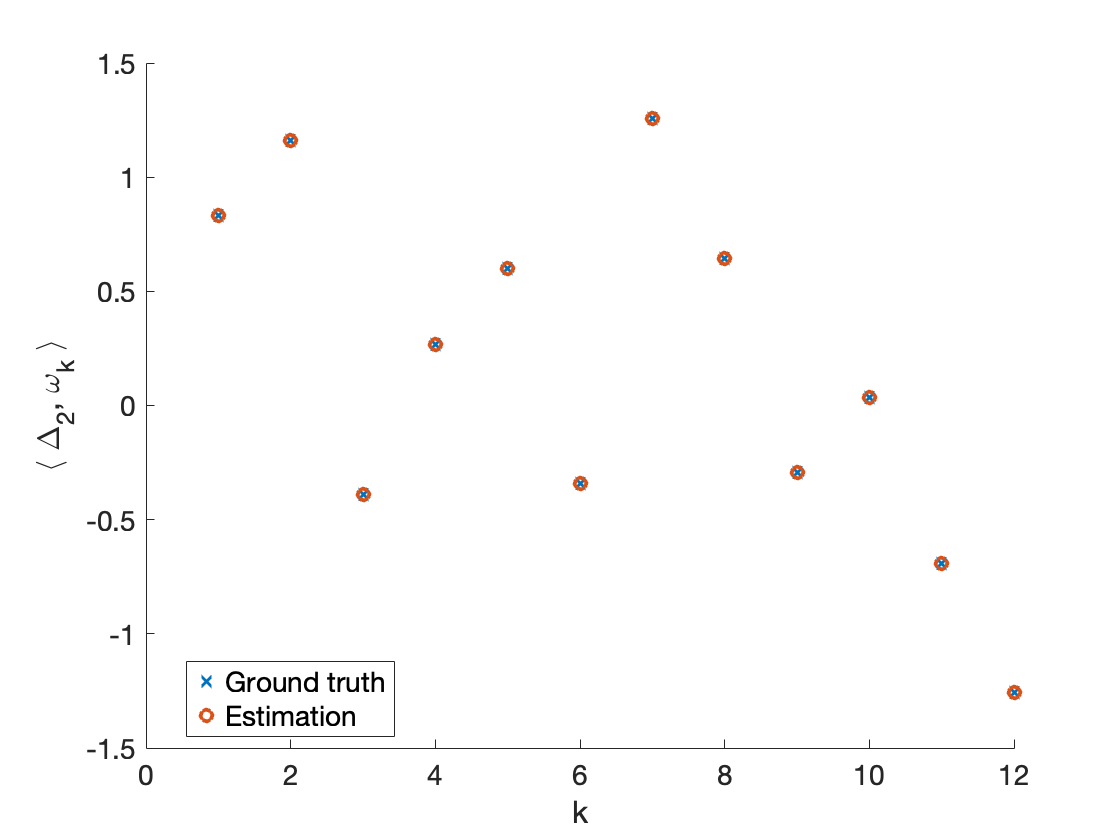}
\label{fig_first}}
\hfil
\subfloat[]{
\includegraphics[width=.3\textwidth]{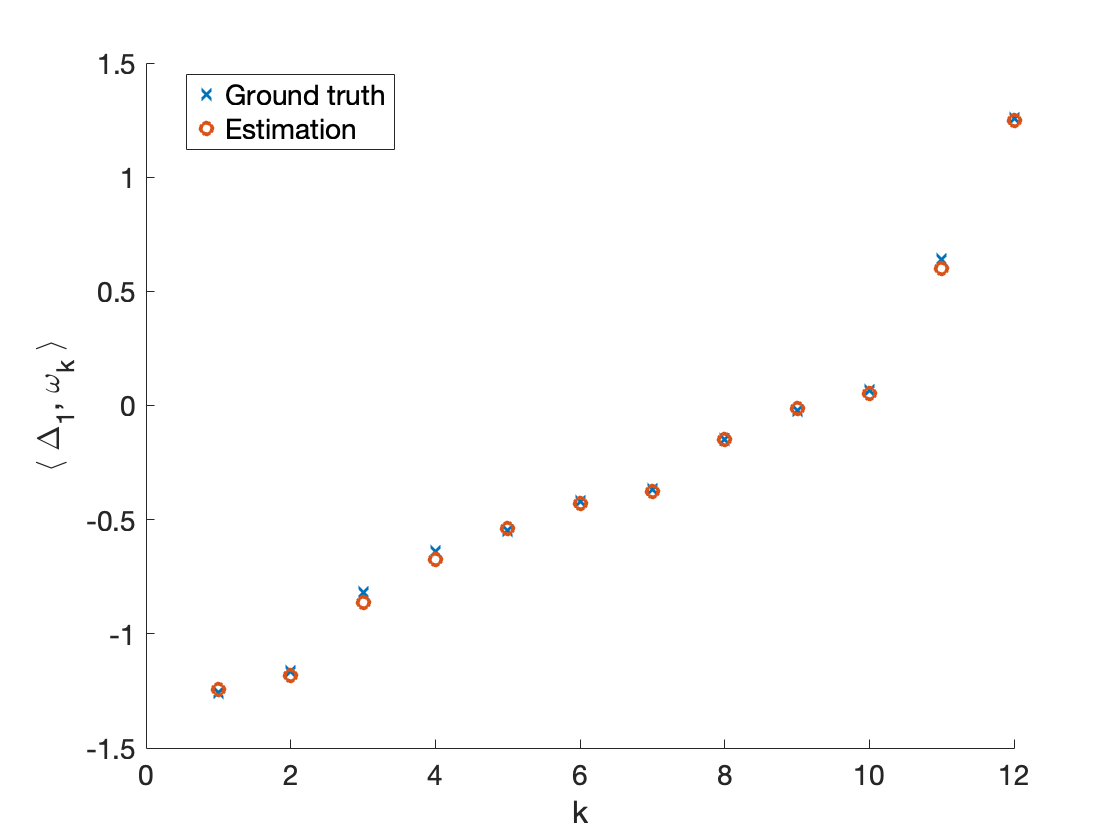}
\label{fig_second}}
\hfil
\subfloat[]{
\includegraphics[width=.3\textwidth]{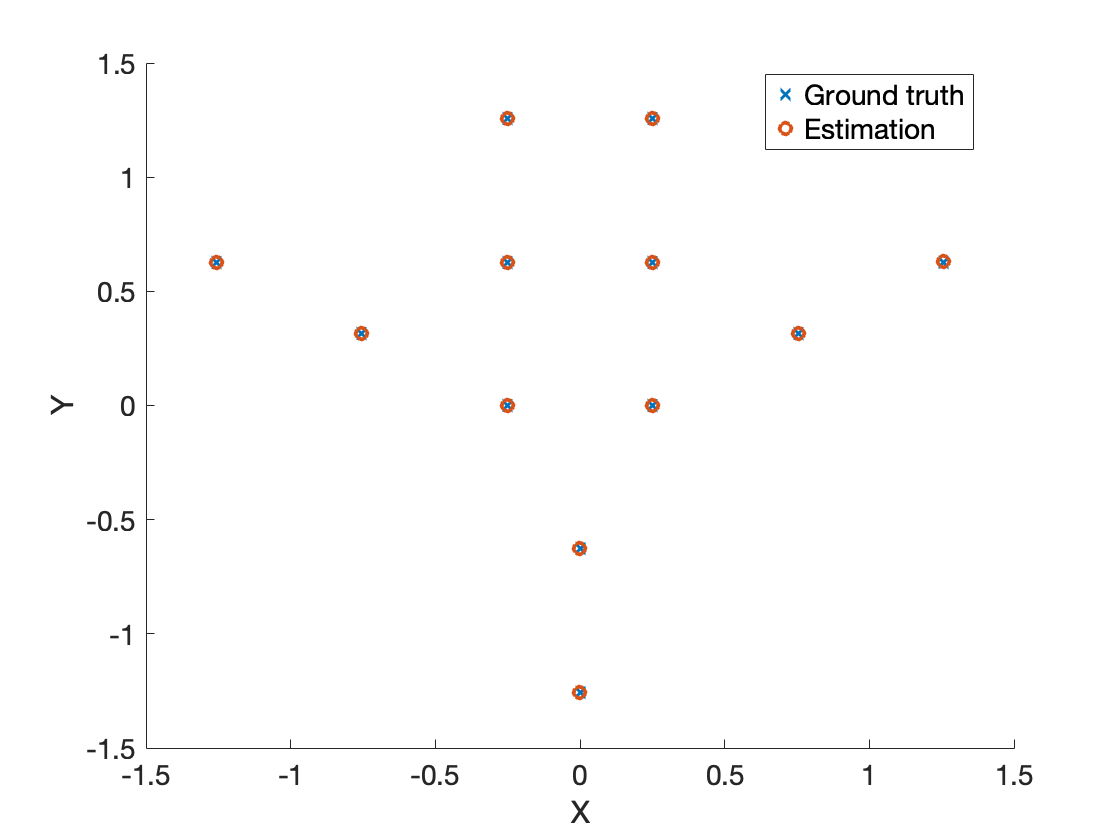}
\label{fig_third}}
\caption{(a) Accurate estimation of $\langle \Delta_2, \w_k\rangle.$ (b) Approximate estimation of $\langle \Delta_1, \w_k\rangle.$ (c) Final reconstruction.}
\label{fig:12ptsfinalstep}
\end{figure*}

In comparison, we show the results obtained by MUSIC and ESPRIT algorithms in a graphic manner in Figure~\ref{fig:music_esprit}, and as a table in Table~\ref{tab:compare_table}. 
It is obvious that both of MUSIC and ESPRIT algorithms fail at -10 dB SNR.

\begin{figure*}[ht]
	\begin{center}
	\begin{minipage}{0.44\textwidth}
	\subfloat[]{
	\includegraphics[width=0.95\textwidth]{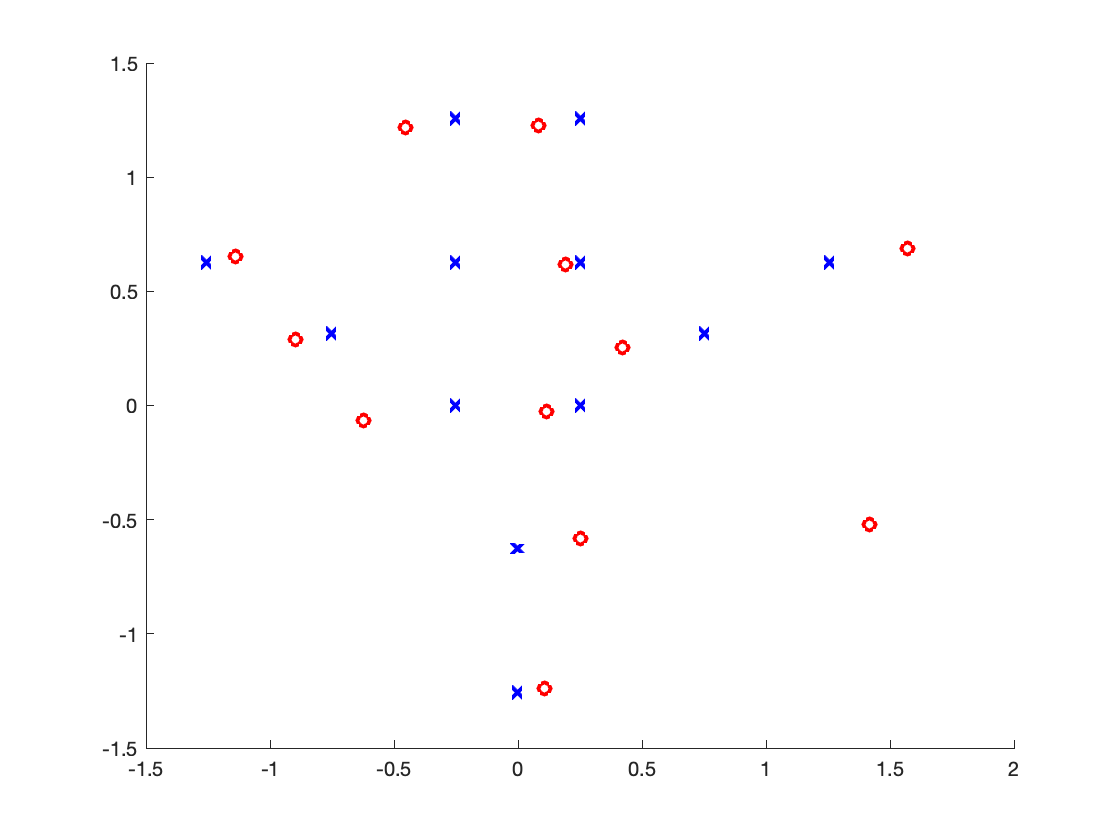}
	}
	\end{minipage}
	\begin{minipage}{0.44\textwidth}
	\hfill
	\subfloat[]{
	\includegraphics[width=0.95\textwidth]{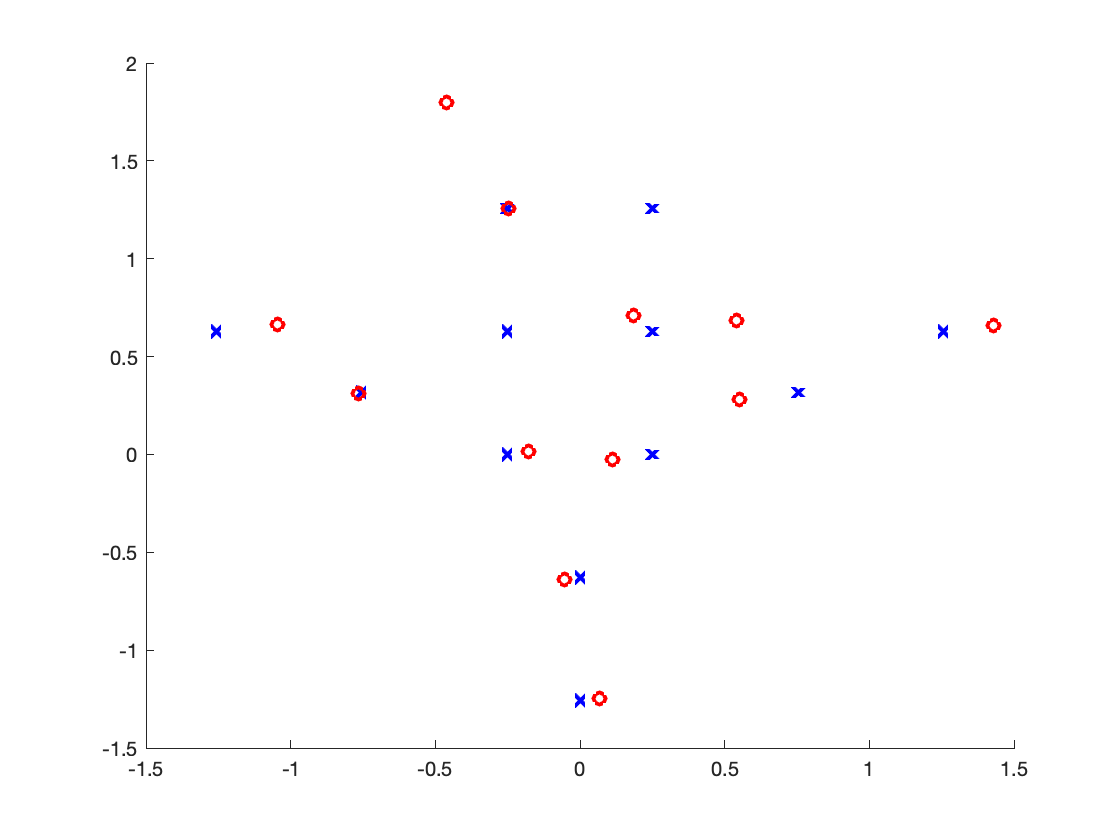}
	}
	\end{minipage}
		
	\end{center}
 	\caption{The performance of (a) MUSIC and (b) ESPRIT algorithms on the 12 point 2 dimensional data set at SNR level of -10 dB with 2048 samples.} \label{fig:music_esprit}
\end{figure*}

\begin{table*}
\begin{center}
\resizebox{\textwidth}{!}{%
\begin{tabular}{|c|c|c|c|c|c|c|c|c|}
\hline
SNR & Method & Length of & Total & Recuperated & Run-time & Memory &  RMSE & Standard \\
(dB) & & samples & points & points & (seconds) & (MB) & & Deviation\\
\hline
-10 & ESPRIT & 2048 & 12 & 12 & 9.09e-01 & 0.9769 & 2.46e-01 & 2.87e-02\\
-10 & MUSIC & 2048 & 12 & 12 & 9.56e-01 & 0.9765 & 2.69e-01 & 2.72e-02\\
-10 & Localized & 2048 & 12 & 12 & 2.14e-03 & 0.4305 & 6.23e-04 & 0\\
-5 & ESPRIT & 2048 & 12 & 12 & 8.84e-01 & 0.9769 & 2.01e-04 & 6.95e-05\\
-5 & MUSIC & 2048 & 12 & 12 & 9.16e-01 & 0.9765 & 9.64e-04 & 0\\
-5 & Localized & 2048 & 12 & 12 & 2.06e-04 & 0.4305 & 6.23e-04 & 0\\
0 & ESPRIT & 2048 & 12 & 12 & 8.93e-01 & 0.9769 & 1.11e-04 & 3.26e-05\\
0 & MUSIC & 2048 & 12 & 12 & 9.53e-01 & 0.9765 & 9.64e-04 & 0\\
0 & Localized & 2048 & 12 & 12 & 2.18e-03 & 0.4305 & 6.23e-04 & 0\\
5 & ESPRIT & 2048 & 12 & 12 & 9.17e-01 & 0.9769 & 6.26e-05 & 1.17e-05\\
5 & MUSIC & 2048 & 12 & 12 & 9.84e-01 & 0.9765 & 9.64e-04 & 0\\
5 & Localized & 2048 & 12 & 12 & 2.43e-03 & 0.4305 & 6.23e-04 & 0\\
\hline
\end{tabular}
}
\end{center}
 	\caption{The table above compares results between our algorithm, MUSIC, and ESPRIT on the 12 point 2 dimensional data set with 2048 samples.
 	Clearly, our method is faster and gives more accurate results in the presence of a low SNR.} \label{tab:compare_table}
\end{table*}

\subsection{Algorithm adaptation in the three dimensional case}\label{section:algsect2}

The details of our algorithm for the three dimensional case are as follows.

Let $\w_1,\ldots,\w_{K} \in \mathbb{R}^3$ and $\{\Delta_1,\Delta_2,\Delta_3\}$ be a basis for $\mathbb{R}^3$. Here, we have for $\ell\in\ZZ$, $|\ell|<n$, a total of $6n-3$ samples given by
\bea
\hat{\mu}(\Delta_2+\ell\Delta_1)&=\sum_{k=1}^{K} A_k \exp(-i\langle \Delta_2, \w_k\rangle) \exp(-i\ell\langle \Delta_1, \w_k\rangle) \\
    \hat{\mu}(\Delta_1+\ell\Delta_2)&=\sum_{k=1}^{K} A_k \exp(-i\langle \Delta_1, \w_k\rangle) \exp(-i\ell\langle \Delta_2, \w_k\rangle) \\
    \hat{\mu}(\Delta_1+\ell\Delta_3)&=\sum_{k=1}^{K} A_k \exp(-i\langle \Delta_1, \w_k\rangle) \exp(-i\ell\langle \Delta_3, \w_k\rangle).
\eea
By applying the low pass filter to the signals, we will get
\bea
\sigma_{n,1}(x) &= \hbar_n\sum_{k=1}^{K} A_k \exp(-i\langle \Delta_2, \w_k\rangle)\Phi_n(x-\langle \Delta_1, \w_k\rangle) \\
\sigma_{n,2}(x) &= \hbar_n\sum_{k=1}^{K} A_k \exp(-i\langle \Delta_1, \w_k\rangle)\Phi_n(x-\langle \Delta_2, \w_k\rangle) \\
\sigma_{n,3}(x) &= \hbar_n\sum_{k=1}^{K} A_k \exp(-i\langle \Delta_1, \w_k\rangle)\Phi_n(x-\langle \Delta_3, \w_k\rangle)
\eea

From the Theorem \ref{theo:main}, $\langle \Delta_1, \w_k\rangle$ will be $x$ where the peaks occurs in $|\sigma_n(x)|$,  $\langle \Delta_2, \w_k\rangle \approx Phase \left( \sigma_n(x)\right)$,  and $A_k \approx |\sigma_n(x)|$. Now, we can obtain the accurate estimation of $\langle \Delta_1, \w_k\rangle$ corresponding to less accurate estimation of $\langle \Delta_2, \w_k\rangle$.

Then, we apply the same method in $\Delta_2$ and $\Delta_3$ directions to obtain the accurate estimation of $\langle \Delta_2, \w_k\rangle$ and $\langle \Delta_3, \w_k\rangle$ corresponding to less accurate estimation of $\langle \Delta_1, \w_k\rangle$ respectively.

Finally, we can use nearest neighbor to obtain accurate estimation for all $\langle \Delta_1, \w_k\rangle$, $\langle \Delta_2, \w_k\rangle$, $\langle \Delta_3, \w_k\rangle$ and compute $A_k=|\sigma_n(x)|$.

\subsection{Results in three dimensional experiments}\label{section:3dresults}
We tested our method described in Section~\ref{section:algsect2} on two three dimensional data sets as described in \cite{cuyt2020sparse}, one comprising 29 points and the other comprising 1000 points. 
Figure~\ref{fig:3drecuperation} shows the results in a special case in a graphical manner.

\begin{figure*}[ht]
\begin{center}
\begin{minipage}{0.45\textwidth}
\subfloat[]{
\includegraphics[width=.95\textwidth]{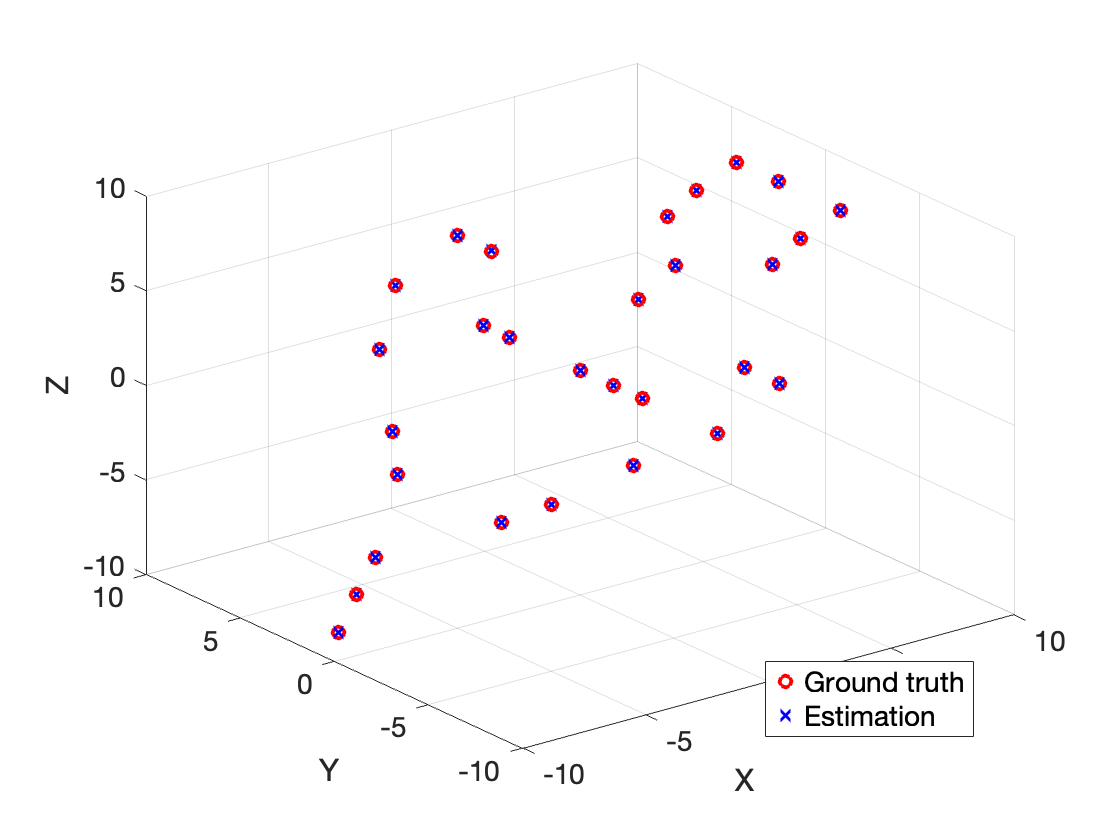}
}
\end{minipage}
\begin{minipage}{0.45\textwidth}
\hfill
\subfloat[]{
\includegraphics[width=.95\textwidth]{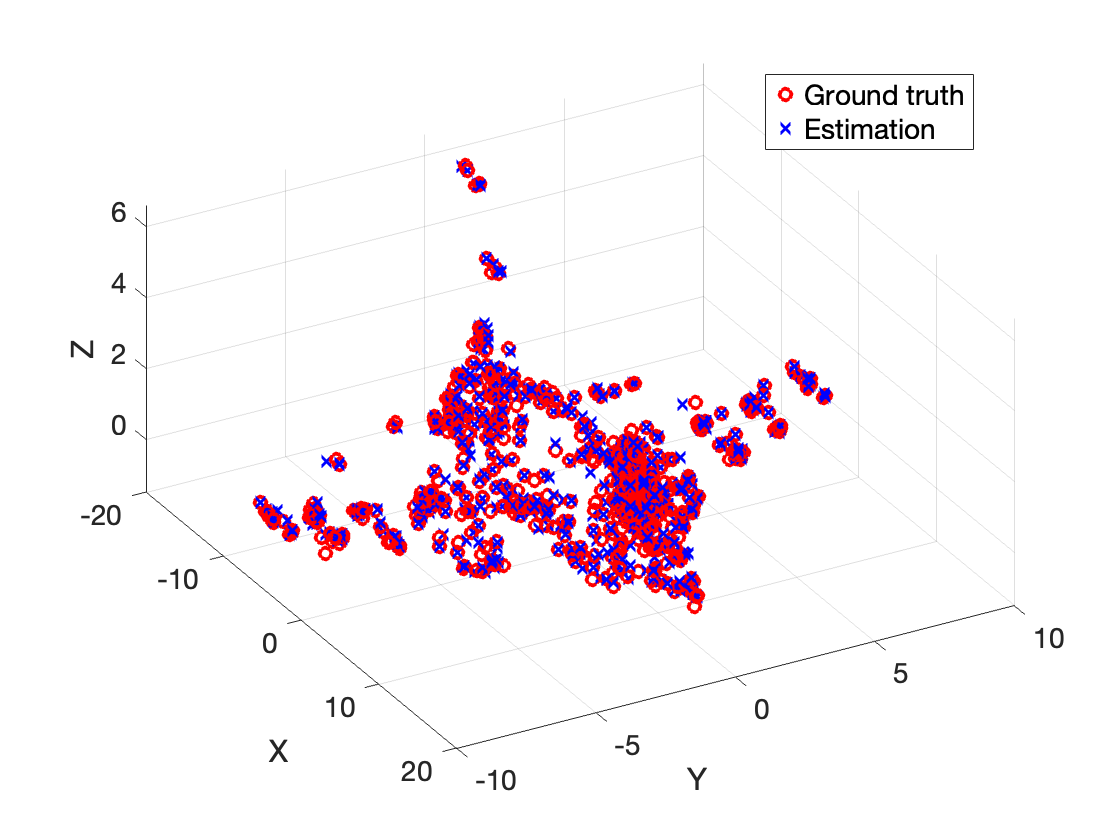}
}
\end{minipage}
\end{center}
\caption{(a) Recuperation of 29 points with 65536 samples at -10dB noise level. (b) Recuperation of 1000 points with 65536 samples at 5dB noise level.}
\label{fig:3drecuperation}
\end{figure*}

Our algorithm is able to reconstruct all 29 points of 3-d tomographic image accurately as shown in figure \ref{fig:3drecuperation}(a). We use $\Delta_1=(-0.73, -0.16, -0.66),$ $\Delta_2=(0.11, -0.98, 0.11),$ and $\Delta_3=(-2.10,1.20,3.29)$.
Our method  uses 2500 samples with 10 dB noise added to the data. 
As comparison to the original paper \cite{cuyt2020sparse}, the author required 2565 samples to reconstruct all 29 points with noise levels varying from 40 dB SNR to 5 dB SNR.

In the 1000 points of 3-d fighter jet image, our algorithm cannot separate the signals that are really close together due to high density data points on a cluster. We can reconstruct 934 data points of out 1000 points with $90000$ samples with an RMS accuracy of 15.58 cms at SNR level of 20 dB.
 Compared to our baseline results from \cite{cuyt2020sparse} at the same SNR level of 20 dB, the authors experimented with 72000, 90000, and 180000 samples, achieving the following reconstruction accuracies: for 72000 samples, 71\% of the scatterers were reconstructed within an error of at most 10 cm and 93\% within 30 cm; for 90000 samples, 81\% within 10 cm and 95\% within 30 cm; and for 180000 samples, 94\% within 10 cm and 98\% within 30 cm.
 
Figure~\ref{fig:1000pts} shows the dependence on the SNR of the accuracy and the number of points recuperated out of the 1000 points using our method.

\begin{figure*}[ht]
\begin{center}
\begin{minipage}{0.45\textwidth}
\subfloat[]{
\includegraphics[width=.95\textwidth]{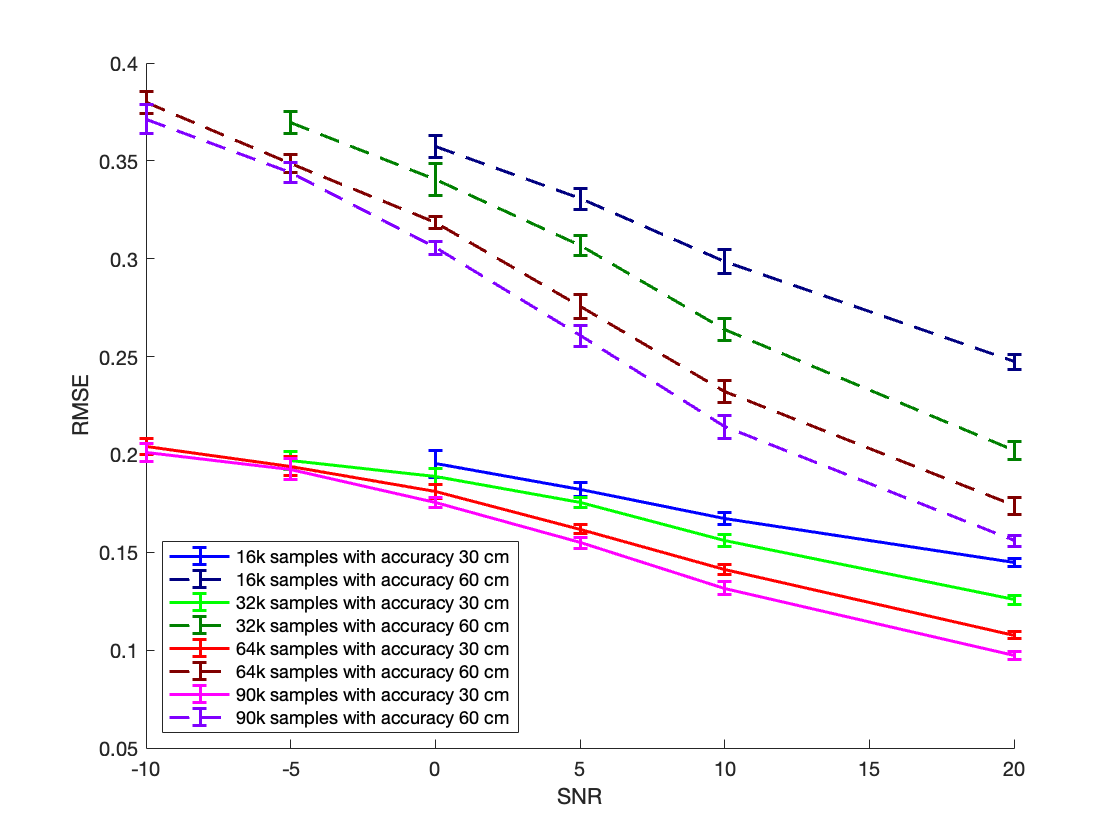}
}
\end{minipage}
\begin{minipage}{0.45\textwidth}
\hfill
\subfloat[]{
\includegraphics[width=.95\textwidth]{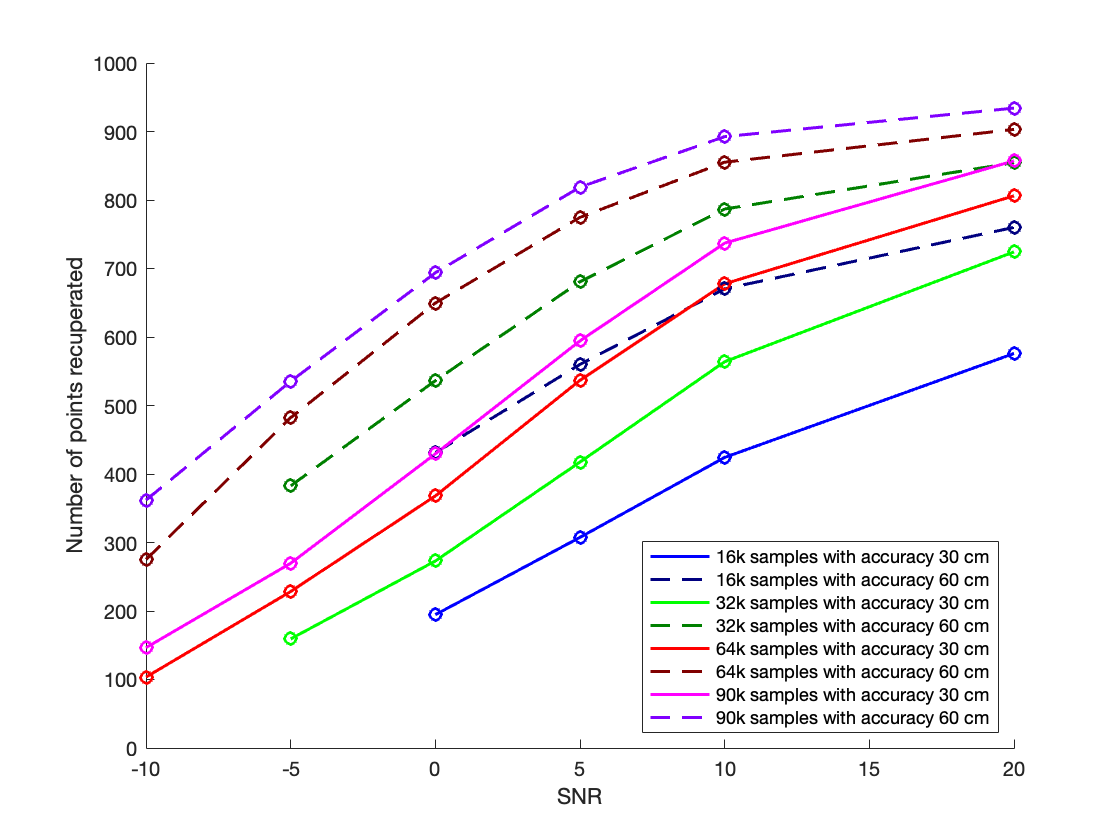}
}
\end{minipage}
\end{center}
\caption{For the 1000 point data set, dependence on SNR for (a) accuracy and (b) number of points recuperated.}
\label{fig:1000pts}

\end{figure*}

Finally, the details of all the results are summarized in Table~\ref{tab:result_table}.

\begin{table*}[ht]
\begin{center}
\resizebox{!}{0.475\textheight}{%
\begin{tabular}{ |c|c|c|c|c|c|c| } 
 \hline
 SNR & Number of & Total & Number of points & Accuracy & RMSE & Standard \\
 (dB) & samples & Points & reconstructed & (meters) & (meters) & Deviation \\
 \hline
 10 & 2500 & 29 & 29 & 0.03 & 0.0209 & 0.0062\\
 -10 & 65536 & 29 & 29 & 0.3 & 0.0007 & $0.0003$ \\
 20 & 16384 & 1000 & 575 & 0.3 & 0.1445 & $0.0020$ \\
 10 & 16384 & 1000 & 424 & 0.3 & 0.1673 & $0.0029$ \\
 5 & 16384 & 1000 & 307 & 0.3 & 0.1819 & $0.0035$ \\
 0 & 16384 & 1000 & 194 & 0.3 & 0.1952 & $0.0068$ \\
 20 & 16384 & 1000 & 760 & 0.6 & 0.2473 & $0.0037$ \\
 10 & 16384 & 1000 & 670 & 0.6 & 0.2984 & $0.0060$ \\
 5 & 16384 & 1000 & 559 & 0.6 & 0.3305 & $0.0054$ \\
 0 & 16384 & 1000 & 431 & 0.6 & 0.3572 & $0.0055$ \\
 20 & 32768 & 1000 & 724 & 0.3 & 0.1256 & $0.0023$ \\
 10 & 32768 & 1000 & 563 & 0.3 & 0.1561 & $0.0030$ \\
 5 & 32768 & 1000 & 417 & 0.3 & 0.1752 & $0.0026$ \\
 0 & 32768 & 1000 & 273 & 0.3 & 0.1886 & $0.0039$ \\
 -5 & 32768 & 1000 & 159 & 0.3 & 0.1969 & $0.0045$ \\
 20 & 32768 & 1000 & 854 & 0.6 & 0.2020 & $0.0045$ \\
 10 & 32768 & 1000 & 786 & 0.6 & 0.2638 & $0.0057$ \\
 5 & 32768 & 1000 & 680 & 0.6 & 0.3067 & $0.0052$ \\
 0 & 32768 & 1000 & 536 & 0.6 & 0.3404 & $0.0080$ \\
 -5 & 32768 & 1000 & 382 & 0.6 & 0.3695 & $0.0057$ \\
 20 & 65536 & 1000 & 806 & 0.3 & 0.1077 & $0.0018$ \\
 10 & 65536 & 1000 & 677 & 0.3 & 0.1413 & $0.0025$ \\
 5 & 65536 & 1000 & 536 & 0.3 & 0.1618 & $0.0025$ \\
 0 & 65536 & 1000 & 367 & 0.3 & 0.1810 & $0.0037$ \\
 -5 & 65536 & 1000 & 228 & 0.3 & 0.1939 & $0.0048$ \\
 -10 & 65536 & 1000 & 103 & 0.3 & 0.2040 & $0.0040$ \\
 20 & 65536 & 1000 & 903 & 0.6 & 0.1736 & $0.0043$ \\
 10 & 65536 & 1000 & 855 & 0.6 & 0.2320 & $0.0057$ \\
 5 & 65536 & 1000 & 774 & 0.6 & 0.2755 & $0.0060$ \\
 0 & 65536 & 1000 & 649 & 0.6 & 0.3185 & $0.0030$ \\
 -5 & 65536 & 1000 & 482 & 0.6 & 0.3485 & $0.0047$ \\
 -10 & 65536 & 1000 & 275 & 0.6 & 0.3798 & $0.0058$ \\
 20 & 90000 & 1000 & 857 & 0.3 & 0.0971 & 0.0019\\
10 & 90000 & 1000 & 737 & 0.3 & 0.1316 & 0.0032\\
5 & 90000 & 1000 & 594 & 0.3 & 0.1547 & 0.0026\\
0 & 90000 & 1000 & 429 & 0.3 & 0.1753 & 0.0025\\
-5 & 90000 & 1000 & 269 & 0.3 & 0.1925 & 0.0052\\
-10 & 90000 & 1000 & 146 & 0.3 & 0.2008 & 0.0047\\
20 & 90000 & 1000 & 934 & 0.6 & 0.1558 & 0.0027\\
10 & 90000 & 1000 & 892 & 0.6 & 0.2141 & 0.0058\\
5 & 90000 & 1000 & 818 & 0.6 & 0.2605 & 0.0054\\
0 & 90000 & 1000 & 693 & 0.6 & 0.3055 & 0.0033\\
-5 & 90000 & 1000 & 535 & 0.6 & 0.3441 & 0.0050\\
-10 & 90000 & 1000 & 361 & 0.6 & 0.3712 & 0.0074\\
 \hline
\end{tabular}
}
\end{center}
 	\caption{The table above shows full performances of our algorithm.} \label{tab:result_table}
\end{table*}

\bhag{Conclusions}
The problem of multidimensional exponential analysis arises in many areas of applications including tomographic imaging, ISAR imaging, antenna array processing, etc. 
The problem is essentially to develop an efficient algorithm to obtain the inverse Fourier transform of a multidimensional signal, based on finitely many equidistant samples of the signal.
We have given a very simple algorithm based on localized trigonometric kernel, which reduces the problem to a series of one dimensional problems. 
Our algorithm works with a tractable number of samples, gives high accuracy, and is very robust even in the presence of noise as high as -10dB.
We have proved theoretical guarantees under the assumption that the noise is sub-Gaussian, a significantly weaker assumption than that of white noise common in the literature.


\chapter{Localized Kernel Method for Separation of Linear Chirps}\label{2paper}

\noindent
The content presented in this chapter is adapted from our paper titled \textit{``Localized Kernel Method for Separation of Linear Chirps''}~\cite{mason2025localized}.

\bhag{Introduction}\label{bhag:intro}

The task of separating a superposition of signals into its individual components is a common challenge encountered in various signal processing applications, especially in domains such as audio \cite{makino2018,Nugraha2016,yuan2025} and radar \cite{Joshil2020,Filice2013,Dutt2019,Wang2022,Nuhoglu2023,Yanchao2018}. Due to the extensive set of applications and the frequent occurrence of signal source separation challenges, methodologies are often tailored to the specific field in question, as well as the level of prior knowledge of both the component signals and the mixing mechanism.  
We focus on a specific example that occurs in radar signal processing, specifically in the context of electronic warfare (EW) and electronic intelligence (ELINT).  

In these applications, the RF system makes measurements of the electromagnetic spectrum with a large instantaneous bandwidth, typically above $1$ GHz and continuing to increase with continual improvements in hardware capabilities.
The resulting data may consist of tens to hundreds of signals mixed together after just a few seconds to minutes of recording, many overlapping in time and/or frequency and measured at low signal-to-noise (SNR).
Most radars transmit a signal that consists of a sequence of linear frequency modulated (LFM) pulses, commonly referred to as Chirp signals.

A linear chirp is a signal (pulse) of the form $A(t)\exp(\hati\phi(t))$, where $\hati=\sqrt{-1}$ and $\phi$ is a quadratic polynomial, so that the instantaneous frequency $\phi'(t)$ is a linear function.
If $\gamma$ is the starting time of the pulse, $d$ is its duration, $\theta$ is the initial frequency, $B$ is the bandwidth, then the phase of the pulse is given by
\be\label{eq:pulsephasedef}
\phi(t)=\phi(\gamma,  \theta, B, d; t)=\theta(t-\gamma)+(B/d)(t-\gamma)^2.
\ee
\yadi{$\theta$, $\gamma$, $B$, $d$}{Chirp parameters} The pulse is assumed to repeat periodically in $m$ bursts with a pulse repetition interval PRI being the time between the end of one burst and the beginning of the next.
Thus, for a signal $s$ starting at $t=t_0^*$, its $n$-th burst is given by
\be\label{eq:pulsebursts}
\resizebox{0.9\hsize}{!}{$
s_n(t)=s_n(A,  \theta, B, t_0^*, d, m, \mbox{PRI}; t)= \begin{cases}
\begin{aligned}
&\disp A(t)\exp\left(\hati (\theta(t-n*\mbox{PRI})+(B/d)(t-n*\mbox{PRI})^2)\right),&\\
&\mbox{ if $t_0^*+n*\mbox{PRI}\le t\le t_0^*+n*\mbox{PRI}+d$, \ $n=0,\cdots, m-1$,}\\
&0, &\mbox{otherwise}.
\end{aligned}
\end{cases}
$}
\ee
\yadi{PRI}{pulse repetition interval}
\yadi{$\hati$}{$\sqrt{-1}$}
 Naturally, the chirp signal has the form $s=\sum_n s_n$.

The first step in processing the electromagnetic spectrum is the detection and parameterization step, where the time and frequency locations of individual pulses are determined, and their parameters estimated. 
The second step in the EW receiver is deinterleaving, where groups of pulses are associated with each other to obtain the different radar signals, from which further insights are extracted.  
Here, we focus on the first problem of the detection, separation, and parameter estimation of individual radar pulses.

Thus, the problem which we intend to solve is the following. 
We observe a signal of the form
\be\label{eq:observations}
F(t)=f(t)+\epsilon(t),
\ee
where $\epsilon$ is a sub-Gaussian random noise, and $f$ is the ground truth signal
\be\label{eq:groundtruth}
f(t)=\sum_{j=1}^K s(A_j, \theta_j, B_j, t_{0,j}^*, d_j, m_j, \mbox{PRI}_j;t), \qquad t\in [0,T].
\ee
From equidistant observations $F(\ell\delta)$, $\ell=0,\cdots, N$, we wish to determine the number of pulses at any time $t$ as well as the parameters $A_j, \theta_j, B_j, t_{0,j}^*, d_j, m_j, \mbox{PRI}_j$ of the pulses.
In this chapter, we will assume that all $A_j$'s are $\equiv 1$. 

The instantaneous frequency of the chirp $s_n$ in \eqref{eq:pulsebursts} at time $t$ is defined by $\theta+(2B/d)(t-n*\mbox{PRI})$.
Our approach is to assume (following \cite{ingrid2011, bspaper}) that on a sufficiently small interval of the form $[t-\Delta,t+\Delta]$, this frequency is approximated by a constant frequency at the center of the interval. 
We will then use specially constructed localized kernels to solve the problem for each of this interval as described in \cite{loctrigwave, bspaper}.
A more elaborate subdivision is necessary when the instantaneous frequencies of different pulses cross each other.
We will demonstrate experimentally that our method is very stable, capable of working with a very small SNR if the sampling rate is sufficiently high.
Thus, the only limitation is the hardware capability, not a theoretical limitation.
We will analyze theoretically the connection between $\Delta$ and the various parameters of the pulses and the accuracy.

The main contributions of this chapter are the following.
\begin{enumerate}
\item This chapter is the first work of its kind that illustrates the use of SSO in the presence of a very high noise, crossover frequencies, and various components of the signal having different start and end points.
\item In particular, we describe how the SSO algorithm needs to be modified to detect the start and stop points of the various signal components and crossover frequencies.
\item We examine the theory behind the SSO to highlight the connection between the noise level, the sampling frequency, and the length of the snippet required for SSO to work properly.
\item Our algorithms are fully automated, assuming only the  minimal separation among the instantaneous frequencies. The minimal separation can be treated as a tunable parameter, but its choice can affect the performance of the algorithm (cf. \cite{mhaskar2024robust}).
\item Our algorithm works based on FFT, and hence, is very fast.
\end{enumerate}

We will describe some related work in Section~\ref{bhag:related}. 
The mathematical foundation behind our work is explained in Section~\ref{bhag:foundations}, including the intuition and definition of the Signal Separation Operator (SSO) introduced in \cite{bspaper}. 
The algorithms and methodology to use SSO for the chirp separation problem are described in Section~\ref{bhag:algorithm} with step-by-step illustrations on a few examples.
In particular, the examples demonstrate the connection between the sampling rate, noise level, minimal separation among the instantaneous frequencies.
A full report on the entire dataset which we worked on is presented in Section~\ref{bhag:numerical}.

\bhag{Related works}\label{bhag:related}

There are several general ways to approach the problem of radar signal detection and separation.  

In the case when the frequencies are constant, the problem is known as parameter estimation in exponential sums as we introduced in Chapter \ref{section:mainresult}.
If the radar pulses have sufficiently high SNR, then they can be detected using an energy-based envelope detector operating on the complex valued time-series data, known as In-phase and Quadrature (IQ) data \cite{lehtomaki2005,Juliano2020,Mello2018}.
It is actually more common for modern solid state radars to transmit low probability of intercept (LPI) waveforms that utilize LFM pulses at low and negative SNR \cite{Xie2025, Pace2008}; in this case, the envelope detector will no longer be suitable.
In the low power setting, there are two broad classes of methods, the first operates on the IQ data directly, while the second utilizes time-frequency transforms approaching detection and separation as a 2D signal processing problem. 
In the time domain, there are analytic methods that operate on the IQ data directly, such as variations of independent components algorithm (ICA) \cite{Joshil2020,Lin2020,Jiang2015,Shao2020}, with extensions to nonlinear chirps \cite{PENG2024,zhou2018}, and incorporation of simple neural networks \cite{pang2022}.
A more common approach is to utilize time-frequency transforms and process the result as an image instead of operating on the IQ data directly \cite{Jin2022,Swiercz2020,Hou2020,Gongming2019,Liu2015,Su2022}.

A general framework that includes both the problems of chirp separation and parameter estimation in exponential sums is the following.
One observes a signal of the form
 \be\label{eq:nonstationary}
f(t)=\sum_{j=1}^K f_j(t)=\sum_{j=1}^K A_j(t)\exp(\hati \phi_j(t)), \qquad 0\le t\le T,
\ee
where the instantaneous amplitudes (IA) $A_j$'s are complex valued, and $\phi_j$'s are differentiable functions of time.
\yadi{$A_j(t)$, $\phi_j(t)$}{Instantaneous complex amplitudes and phases}

The quantity $\phi_j'(t)$ is known as the instantaneous frequency (IF) of the component $f_j$.
The problem is to estimate the IFs and the IAs.
This is also a very important and old problem, starting with a paper by a Nobel Laureate, Gabor \cite{gabor1946}.
One of the first approaches to solve the problem is the Hilbert-transform based empirical mode decomposition (EMD) by Huang \cite{huang1998}. 

\paragraph{EMD.} The following introduction of the EMD method is provided in~\cite{bspaper}. To accommodate multiple frequency components as in the constant parameter problem with \(K > 1\), Huang et al.~\cite{huang1998, huang2008} introduced a technique called \emph{empirical mode decomposition} (EMD). This method decomposes a signal \(G(t)\) into a finite sum of \emph{intrinsic mode functions} (IMFs), along with a slowly varying residual trend. Each IMF is then transformed into an AM–FM signal using the Hilbert transform, allowing for a time-frequency representation of \(G(t)\).

In our setting, we consider the empirical mode decomposition of \(G(t)\) as follows:
\begin{equation}
    G(t) = f(t) + T(t), \label{eq:emd_form}
\end{equation}
where \(f(t)\) denotes the sum of IMFs and \(T(t)\) is the residual term. More specifically, we write:
\begin{equation}
\left\{
\begin{aligned}
    f(t) &= \sum_{j=1}^{K} f_j(t), \\
    f_j(t) &= A_j(t) \cos\left(2\pi \phi_j(t)\right), \quad j = 1, \dots, K,
\end{aligned}
\right. \label{eq:imf_decomp}
\end{equation}
where each \(f_j(t)\) represents an AM–FM component with instantaneous amplitude \(A_j(t)\) and instantaneous phase \(\phi_j(t)\). 

The EMD scheme relies on the presence of sufficiently many local extrema in the given real-valued signal \( G(t) \). However, in practical settings where the trend \( T(t) \) is rapidly varying, it is common to encounter segments of the signal where local maxima or minima are sparse or entirely absent. To handle such cases, the standard EMD procedure must be refined. This leads to various improved EMD strategies proposed in \cite{huang1998, huang2008}.

A typical modified EMD scheme proceeds as follows. Let the initial signal be denoted by \( g_{1,0} := G \). The algorithm constructs two cubic spline interpolants—one through the local maxima and the other through the local minima of \( g_{1,0} \). These interpolants define the upper and lower envelopes, respectively. Their average is denoted by \( m_{1,1} \), and the first intermediate signal is defined as
\be
g_{1,1} := g_{1,0} - m_{1,1}.
\ee

This procedure is repeated iteratively: for each \( \ell \geq 1 \),
\be
g_{1,\ell} := g_{1,\ell-1} - m_{1,\ell},
\ee
where \( m_{1,\ell} \) is the mean of the upper and lower envelopes of \( g_{1,\ell-1} \). This iterative refinement is called the \emph{sifting process}, and it terminates when the resulting function satisfies two empirical criteria: (i) the number of zero crossings and local extrema differ by at most one, and (ii) the envelopes are approximately symmetric with respect to the time axis.

The output of this procedure is the first intrinsic mode function (IMF), denoted \( f_1 \). The residue \( g_{2,0} := g_{1,0} - f_1 \) is then subjected to the same sifting process to extract \( f_2 \), and so on. Formally, the \( j \)-th IMF is computed from
\be
g_{j,0} := g_{1,0} - f_1 - \cdots - f_{j-1},
\ee
followed by iterative sifting until convergence.

This iterative process yields a decomposition of the signal \( G(t) \) into \( K \) IMFs and a residual trend term \( T_K(t) \), i.e.,
\begin{equation}
    G(t) = f_1(t) + \cdots + f_K(t) + T_K(t), \label{eq:emd_decomp}
\end{equation}
where \( T_K(t) \) is a slowly varying function.

\vspace{1em}

However, there is no mathematical theory behind this approach.
A mathematically rigorous approach, synchrosqueezing transform (SST), was given by Daubeschies and her collaborators \cite{ingrid2011, daubechies1996nonlinear}.
These seminal papers have given rise to a lot of variations and improvements.
We refer to \cite{li2022chirplet} for  more references on this topic.
In this chapter, we will use the algorithm given in \cite{thakur2013synchrosqueezing} for comparing our method with SST.

\paragraph{SST.} To formalize the types of signals SST can handle, Daubechies et al.~\cite{ingrid2011} define a class of functions called intrinsic mode type (IMT) components.

\begin{definition}[Intrinsic Mode Type (IMT) Function] \label{SST1}
A function \be f(t) = A(t) e^{i \phi(t)}\ee is said to be an IMT function with accuracy $\epsilon > 0$ if:
\begin{itemize}
    \item $A \in C^1(\mathbb{R}) \cap L^\infty(\mathbb{R})$, and $\phi \in C^2(\mathbb{R})$,
    \item $\inf_{t \in \mathbb{R}} \phi'(t) > 0$ and $\sup_{t \in \mathbb{R}} \phi'(t) < \infty$,
    \item $\left| A'(t) \right| \leq \epsilon \phi'(t)$ and $\left| \phi''(t) \right| \leq \epsilon \phi'(t)$ for all $t \in \mathbb{R}$.
\end{itemize}
\end{definition}

\begin{definition}[Well-Separated Superposition of IMT Components] \label{SST2}
A function $f(t)$ is a superposition of $K$ well-separated IMT components up to accuracy $\epsilon$ and separation $d > 0$ if:
\be
f(t) = \sum_{k=1}^{K} f_k(t) = \sum_{k=1}^{K} A_k(t) e^{i \phi_k(t)},
\ee
where each $f_k$ is an IMT function, and their respective instantaneous frequencies $\phi_k'(t)$ satisfy
\be
\phi_k'(t) > \phi_{k-1}'(t), \quad \text{and} \quad \frac{\left| \phi_k'(t) - \phi_{k-1}'(t) \right|}{\phi_k'(t) + \phi_{k-1}'(t)} \geq d, \quad \forall t \in \mathbb{R}.
\ee
\end{definition}

We denote the set of such functions by $\mathcal{A}_{\epsilon, d}$.

The SST algorithm first computes the continuous wavelet transform $W_f(a, b)$ of a signal $f \in \mathcal{A}_{\epsilon, d}$, and then sharpens the representation by reallocating the time-scale coefficients to the time-frequency plane using an estimated instantaneous frequency.

The central result justifying the method is the following:

\begin{theorem}[Daubechies et al.~{\cite[Theorem 3.3]{ingrid2011}}]
Let $f \in \mathcal{A}_{\epsilon, d}$ and define $\tilde{\epsilon} := \epsilon^{1/3}$. Let $\psi$ be a Schwartz-class wavelet whose Fourier transform $\hat{\psi}$ is supported in $[1 - \Delta, 1 + \Delta]$, where $\Delta < \frac{d}{1 + d}$.

Define the synchrosqueezed transform $S^{\delta}_{f, \tilde{\epsilon}}(b, \omega)$ by
\be
S^{\delta}_{f, \tilde{\epsilon}}(b, \omega) := \int_{A_{\tilde{\epsilon}, f}(b)} W_f(a, b) \cdot \frac{1}{\delta} h\left( \frac{\omega - \omega_f(a, b)}{\delta} \right) a^{-3/2} \, da,
\ee
where $A_{\tilde{\epsilon}, f}(b) = \{ a > 0 : |W_f(a, b)| > \tilde{\epsilon} \}$, and $\omega_f(a, b)$ is the instantaneous frequency estimate.

Then, for sufficiently small $\epsilon$, the following holds:
\begin{enumerate}
    \item $|W_f(a, b)| > \tilde{\epsilon}$ only when $(a, b)$ satisfies $|a \phi_k'(b) - 1| < \Delta$.
    \item For such $(a, b)$, the estimated frequency satisfies $|\omega_f(a, b) - \phi_k'(b)| \leq \tilde{\epsilon}$.
    \item For each $k \in \{1, \ldots, K\}$ and $b \in \mathbb{R}$,
    \be
    \left| \int_{|\omega - \phi_k'(b)| < \tilde{\epsilon}} S^{\delta}_{f, \tilde{\epsilon}}(b, \omega) \, d\omega - A_k(b) e^{i \phi_k(b)} \right| \leq C \tilde{\epsilon},
    \ee
    for some constant $C$.
\end{enumerate}
\end{theorem}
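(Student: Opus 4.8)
The plan is to follow the architecture of \cite{ingrid2011}, reducing the whole statement to a single approximation lemma for the continuous wavelet transform (CWT) of one intrinsic-mode-type component. First I would rewrite the CWT in the Fourier domain via Plancherel,
\be
W_f(a,b)=\frac{1}{2\pi}\int_0^\infty \hat f(\xi)\, a^{1/2}\,\overline{\hat\psi(a\xi)}\, e^{ib\xi}\,d\xi,
\ee
so that the support hypothesis $\supp\hat\psi\subseteq[1-\Delta,1+\Delta]$ can be exploited directly: the integrand is nonzero only when $a\xi$ lies near $1$. The baseline computation is the pure harmonic $f(t)=Ae^{i\omega_0 t}$, for which the formula collapses to $W_f(a,b)=A\,a^{1/2}\,\overline{\hat\psi(a\omega_0)}\,e^{ib\omega_0}$; this is the exact object against which the general estimate will be measured, and it already exhibits the two features we need, localization in $a$ near $1/\omega_0$ and the oscillation $e^{ib\omega_0}$ that encodes the frequency.

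The heart of the argument is the key approximation lemma: for a single IMT component $f_k(t)=A_k(t)e^{i\phi_k(t)}$, one freezes the amplitude and linearizes the phase at $t=b$, writing $f_k(t)\approx A_k(b)e^{i\phi_k(b)}e^{i\phi_k'(b)(t-b)}$, and proves
\be
\Bigl| W_{f_k}(a,b)-A_k(b)e^{i\phi_k(b)}\,a^{1/2}\,\overline{\hat\psi(a\phi_k'(b))}\Bigr|\le C\,\epsilon\, a^{1/2}
\ee
uniformly on the admissible range of $a$. The error is generated by the remainders $A_k(t)-A_k(b)$ and $\phi_k(t)-\phi_k(b)-\phi_k'(b)(t-b)$, controlled respectively by the structural bounds $|A_k'|\le\epsilon\phi_k'$ and $|\phi_k''|\le\epsilon\phi_k'$ from Definition~\ref{SST1}, together with the rapid decay and moment properties of the Schwartz wavelet $\psi$ (the effective scale of the integration variable $t-b$ is $a\sim 1/\phi_k'(b)$). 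Summing over $k$ and invoking the well-separation hypothesis of Definition~\ref{SST2} with $\Delta<d/(1+d)$ shows that the bands $\{a:|a\phi_k'(b)-1|<\Delta\}$ are pairwise disjoint, so at most one component is active at any $(a,b)$; this yields statement~(1), once one checks that off the bands the main term vanishes and only the $O(\epsilon)$ error survives, which is negligible against the threshold $\tilde\epsilon=\epsilon^{1/3}$.

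Statement~(2) follows by differentiating the representation in $b$: on the main term $\partial_b$ brings down a factor $i\phi_k'(b)$, so the instantaneous-frequency estimate $\omega_f(a,b)=\partial_b W_f(a,b)/(i\,W_f(a,b))$ reproduces $\phi_k'(b)$ up to an error equal to the ratio of the $O(\epsilon)$ remainder in $\partial_b W_f$ to $W_f$; on the set $A_{\tilde\epsilon,f}(b)$ where $|W_f|>\tilde\epsilon$ this ratio is $O(\epsilon/\tilde\epsilon)$, and the exponent $1/3$ is chosen precisely so that this relative error, and the subsequent reconstruction error, are $O(\tilde\epsilon)$. Finally, statement~(3) is the reconstruction step: starting from the analytic-signal inversion formula $f_k(b)=R_\psi^{-1}\int_0^\infty W_{f_k}(a,b)\,a^{-3/2}\,da$ for an admissibility constant $R_\psi$, and using that synchrosqueezing merely transports the mass of $W_f(a,b)$ to the location $\omega_f(a,b)\approx\phi_k'(b)$, I would integrate $S^\delta_{f,\tilde\epsilon}(b,\omega)$ over the band $|\omega-\phi_k'(b)|<\tilde\epsilon$ and let $\delta\to0$, recovering $A_k(b)e^{i\phi_k(b)}$ with error $O(\tilde\epsilon)$.

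I expect the main obstacle to be the approximation lemma of the second paragraph. Making the linearization error genuinely uniform requires carefully tracking how the wavelet scale $a$ interacts with the local frequency $\phi_k'(b)$, splitting the phase remainder into a quadratic Taylor term and a higher-order tail, and estimating each against the wavelet's decay; it is in balancing these contributions against the detection threshold that the precise choice $\tilde\epsilon=\epsilon^{1/3}$ is pinned down, and everything else is routine once this estimate is in hand.
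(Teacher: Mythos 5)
This theorem is not proved in the dissertation at all: it is quoted verbatim as background, with attribution to \cite[Theorem 3.3]{ingrid2011}, in the related-work discussion of SST, so there is no in-paper proof to compare your proposal against. Judged against the argument in the cited source, your outline is essentially the standard one and is correct in structure: the Plancherel representation of the CWT, the key approximation lemma obtained by freezing $A_k$ and linearizing $\phi_k$ at $t=b$ with errors controlled by $|A_k'|\le\epsilon\phi_k'$ and $|\phi_k''|\le\epsilon\phi_k'$, disjointness of the scale bands $\{a:|a\phi_k'(b)-1|<\Delta\}$ from the separation hypothesis, the quotient formula $\omega_f=\partial_bW_f/(iW_f)$ with relative error $O(\epsilon/\tilde\epsilon)=O(\tilde\epsilon^2)$ justifying the choice $\tilde\epsilon=\epsilon^{1/3}$, and the admissibility-constant reconstruction with the $a^{-3/2}\,da$ measure for part (3). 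Two places where your sketch is lighter than the real argument and would need care if written out: the error in the approximation lemma is not uniformly $C\epsilon a^{1/2}$ over all $a$ but carries moment factors of $\psi$ and powers of $a\phi_k'(b)$, which is harmless only after you restrict to the band $a\sim1/\phi_k'(b)$ and use $\inf\phi'>0$, $\sup\phi'<\infty$; and the band-disjointness computation should be done explicitly from $\Delta<d/(1+d)$ and the normalized separation $|\phi_k'-\phi_{k-1}'|/(\phi_k'+\phi_{k-1}')\ge d$ rather than asserted. With those details filled in, your proposal reproduces the proof of the original reference.
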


This theorem shows that SST achieves a sharp concentration in the time-frequency plane and allows for the accurate recovery of each component’s instantaneous frequency and amplitude. Consequently, SST serves as a rigorous and adaptive alternative to empirical mode decomposition (EMD), especially for signals composed of well-separated AM-FM components.

\vspace{1em}

In \cite{bspaper}, we have applied the technique based on localized trigonometric kernels to define a signal separation operator (SSO) to solve the constant parameter problem under conditions weaker than those in \cite{ingrid2011}.
Our conditions do assume that all the components are present in the interval $[0,T]$; i.e., the number of signals $f_j$ is the same throughout the interval, and there is no discontinuity in any of these.
Our method finds the IFs by looking at the peaks of a power spectrum. The IAs are estimated by a simple substitution compared to the limiting process required in SST. 
Finally, our method can be implemented very fast using FFT.

The paper \cite{li2022chirplet} deals specifically with the separation of chirplets, based on a modification of a filter described in \cite{bspaper}.
It is not clear when the various conditions in this chapter are satisfied.
Also, the method requires a three dimensional search.
In this chapter, we will demonstrate how the methods developed in \cite{bspaper} can be adapted for the chirp separation problem, including the case of crossover frequencies.

\bhag{Mathematical foundations}\label{bhag:foundations}

In Section~\ref{bhag:reduction}, we explain  how the general problem of blind source signal separation can be reduced to the constant parameter case at each point under certain assumptions as in \cite{ingrid2011, bspaper}. 
A theoretical analysis of the  performance of SSO in the presence of noise is given in Section~\ref{bhag:sso}.

\subsection{Reduction to the constant parameter case}\label{bhag:reduction}

In this section, we describe certain conditions under which the data of the form \eqref{eq:nonstationary} can be reduced to the constant parameter data as in Chapter~\ref{section:mainresult}.
Thus, we are interested in finding the instantaneous frequencies $\phi_j'(t^*)$ and amplitudes $A_j(t^*)$ in the signal
\be\label{eq:nonstationarybis}
f(t)=\sum_{j=1}^K f_j(t)=\sum_{j=1}^K A_j(t)\exp(\hati\phi_j(t)),
\ee
for some point $t^*\in\RR$, where $A_j$'s are continuous and $\phi_j$'s are continuously differentiable functions on an interval of the form $[t^*-\Delta, t^*+\Delta]$.
We introduce the notation (abusing the notation introduced in \eqref{eq:notation})
\be\label{eq:notation2}
\mathbf{M}(t^*)=\sum_{j=1}^K |A_j(t^*)|, \qquad B(t^*)=\max_{1\le j\le K}|\phi_j'(t^*)|.
\ee
The following theorem is a reformulation of \cite[Theorem~4.2]{bspaper}. 
We will reproduce the proof for the sake of completeness.
\yadi{$\Delta$}{half-length of time interval for snippet}
\yadi{$t_k$}{center of the time interval for snippet}
\yadi{$I_k$}{time interval for snippet $[t_k-\Delta,t_k+\Delta]$}
\yadi{$R$}{sampling frequency}

Note that our Theorem \ref{theo:reduction} does not require the conditions in Definition \ref{SST1} and Definition \ref{SST2}, unlike the SST method.

\begin{theorem}\label{theo:reduction}
Let $t^*\in\RR$, $\alpha, \Delta>0$, and for $|u|\le \Delta$,
\be\label{eq:slowvary}
|A_j(t^*+u)-A_j(t^*)| \le \alpha |u||A_j(t^*)|, \qquad |\phi_j'(t^*+u)-\phi_j'(t^*)| \le \alpha |u||\phi_j'(t^*)|.
\ee 
Then we have
\be\label{eq:reduction}
\left|f(t^*+u)-\sum_{j=1}^K f_j(t^*)\exp\left(\hati \phi_j'(t^*)u\right)\right| \le \alpha \mathbf{M}(t^*)(B(t^*)\Delta +1)\Delta, \qquad |u|\le |\Delta|.
\ee
\end{theorem}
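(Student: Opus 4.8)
The plan is to prove the estimate one component at a time and then sum. Observe first that the approximant splits as $\sum_{j=1}^K f_j(t^*)\exp(\hati\phi_j'(t^*)u)$, and that each summand is exactly the \emph{frozen-coefficient} version of $f_j(t^*+u)=A_j(t^*+u)\exp(\hati\phi_j(t^*+u))$, since
$f_j(t^*)\exp(\hati\phi_j'(t^*)u)=A_j(t^*)\exp\!\left(\hati[\phi_j(t^*)+\phi_j'(t^*)u]\right)$ is the component with its amplitude frozen at $t^*$ and its phase replaced by the first-order Taylor polynomial about $t^*$. By the triangle inequality it therefore suffices to bound $|f_j(t^*+u)-f_j(t^*)\exp(\hati\phi_j'(t^*)u)|$ for each $j$; the factor $|A_j(t^*)|$ appearing in each such bound will assemble into $\mathbf{M}(t^*)=\sum_j|A_j(t^*)|$.

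For a single component I would split the error through the intermediate quantity $A_j(t^*)\exp(\hati\phi_j(t^*+u))$, writing
\be
f_j(t^*+u)-f_j(t^*)e^{\hati\phi_j'(t^*)u}=\underbrace{[A_j(t^*+u)-A_j(t^*)]e^{\hati\phi_j(t^*+u)}}_{\text{amplitude error}}+\underbrace{A_j(t^*)\left[e^{\hati\phi_j(t^*+u)}-e^{\hati(\phi_j(t^*)+\phi_j'(t^*)u)}\right]}_{\text{phase error}}.
\ee
Since the complex exponential has modulus one, the amplitude error is bounded by $|A_j(t^*+u)-A_j(t^*)|\le\alpha|u||A_j(t^*)|$, which is precisely the first hypothesis in \eqref{eq:slowvary}. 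For the phase error I would invoke the elementary inequality $|e^{\hati a}-e^{\hati b}|\le|a-b|$, reducing it to $|A_j(t^*)|\cdot|\phi_j(t^*+u)-\phi_j(t^*)-\phi_j'(t^*)u|$.

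The remaining phase discrepancy is handled by the fundamental theorem of calculus: writing $\phi_j(t^*+u)-\phi_j(t^*)=\int_0^u\phi_j'(t^*+v)\,dv$ gives
\be
\phi_j(t^*+u)-\phi_j(t^*)-\phi_j'(t^*)u=\int_0^u\left[\phi_j'(t^*+v)-\phi_j'(t^*)\right]\,dv,
\ee
and the second hypothesis in \eqref{eq:slowvary} bounds the integrand in absolute value by $\alpha|v||\phi_j'(t^*)|\le\alpha\Delta|\phi_j'(t^*)|$ for $|v|\le|u|\le\Delta$. Hence the phase error of component $j$ is at most $\alpha\Delta^2|A_j(t^*)||\phi_j'(t^*)|\le\alpha\Delta^2 B(t^*)|A_j(t^*)|$, using $|\phi_j'(t^*)|\le B(t^*)$. (Keeping the $|v|$ factor inside the integral would sharpen the constant to $\tfrac12$, but the cruder bound already yields the stated estimate.)

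Summing over $j$ then finishes the argument: the amplitude errors contribute $\sum_j\alpha|u||A_j(t^*)|\le\alpha\Delta\,\mathbf{M}(t^*)$ and the phase errors contribute $\sum_j\alpha\Delta^2 B(t^*)|A_j(t^*)|=\alpha\Delta^2 B(t^*)\mathbf{M}(t^*)$, and their sum is exactly $\alpha\mathbf{M}(t^*)(B(t^*)\Delta+1)\Delta$. There is no genuine obstacle here, as this is a careful first-order Taylor estimate; the only points deserving attention are selecting the correct intermediate term for the triangle-inequality split and orienting the integral correctly for both signs of $u$, which is absorbed by writing $\int_0^u$ and bounding by $|u|\le\Delta$.
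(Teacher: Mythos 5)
Your proof is correct and follows essentially the same route as the paper: the same triangle-inequality split through the intermediate term $A_j(t^*)\exp(\hati\phi_j(t^*+u))$, the same elementary bound $|e^{\hati a}-e^{\hati b}|\le |a-b|$ for the phase error, and the same fundamental-theorem-of-calculus estimate on $\phi_j(t^*+u)-\phi_j(t^*)-u\phi_j'(t^*)$. The only cosmetic difference is that you organize the argument component-by-component before summing, whereas the paper carries the sums throughout; your remark that keeping the $|v|$ factor inside the integral sharpens the constant by $\tfrac12$ is also consistent with the paper's computation.
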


\begin{proof}\ 
In this proof, we write $\mathbf{M}=\mathbf{M}(t^*)$ and $B=B(t^*)$. Let $|u|\le \Delta$. 
We observe that
\be\label{eq:pf1eqn1_navy}
\begin{aligned}
\left|\sum_{j=1}^K A_j(t^*+u)\right.&\left.\exp(\hati \phi_j(t^*+u))-\sum_{j=1}^K A_j(t^*)\exp(\hati \phi_j(t^*))\exp(\hati u \phi_j'(t^*))\right|\\
&\le \left|\sum_{j=1}^K \left(A_j(t^*+u)-A_j(t^*)\right)\exp(\hati \phi_j(t^*+u))\right|\\
&\qquad +\left|\sum_{j=1}^K A_j(t^*)\left(\exp(\hati \phi_j(t^*+u))-\exp(\hati \phi_j(t^*))\exp(\hati u \phi_j'(t^*))\right)\right|
\end{aligned}
\ee
In view of the fact that
\be
|e^{ix}-e^{iy}|=2|\sin((x-y)/2)| \le |x-y|,
\ee
and \eqref{eq:slowvary}, we deduce that
 for each $k$,
\be\label{eq:pf1eqn2_navy}
\begin{aligned}
\left|\exp(\hati \phi_j(t^*+u))\right.&\left.-\exp(\hati \phi_j(t^*))\exp(\hati u \phi_j'(t^*))\right|\le \left|\phi_j(t^*+u)-\phi_j(t^*)-u \phi_j'(t^*)\right|\\
& \le \left|\int_0^u \left(\phi_j'(t^*+v)-\phi_j'(t^*)\right)dv\right|\le \alpha B\left|\int_0^u |v|dv\right|\le \alpha B|u|^2\le \alpha B\Delta^2.
\end{aligned}
\ee
Using \eqref{eq:slowvary} again, we see that
\be
\left|\sum_{j=1}^K \left(A_j(t^*+u)-A_j(t^*)\right)\exp(\hati \phi_j(t^*+u))\right|\le \mathbf{M}\Delta.
\ee
Together with \eqref{eq:pf1eqn2_navy} and \eqref{eq:pf1eqn1_navy}, this leads to \eqref{eq:reduction}.
\end{proof}

\subsection{Signal separation operator}\label{bhag:sso}
In this section, we fix $t^*\in\RR$.
The quantities denoted below by $\alpha$, $\Delta$, $R$ are independent of $t^*$, and so are the constants involved in $\ls$, $\gs$, and $\sim$, but the other quantities will depend upon $t^*$ without its mention in the notation.

Let $R >0$ denote the sampling frequency.  We write
\be\label{eq:reducedfreq}
\lambda_j=\phi_j'(t^*)/R, \quad n=\lfloor R\Delta\rfloor, \quad \mu=\sum_{j=1}^K f_j(t^*)\delta_{\lambda_j}, \quad \hat{\mu}(\ell)=\sum_{j=1}^K f_j(t^*)\exp(-\hati \ell\lambda_j).
\ee
Then using Theorem~\ref{theo:reduction}, it is easy to deduce that (cf. \eqref{eq:powerspectrum})
\be\label{eq:ssointro}
\left|\hbar_n\sum_{|\ell|<n} H\left(\frac{|\ell|}{n}\right)f(t^*-\ell/R)e^{\hati \ell x}-\sigma_n(\hat{\mu})(x)\right|\le \alpha \mathbf{M}(t^*)(B(t^*)\Delta +1)\Delta.
\ee
This leads to the following definition (cf. \cite[Definition~2.3]{bspaper})
\begin{definition}\label{def:ssodef}
Let $F :\RR\to \CC$, $t^*\in\RR$, $R>0$.
 We define the \textbf{\textit{Signal Separation Operator}} SSO by
\be\label{eq:ssodef}
\mathcal{T}_{n,R}(F)(t^*;x)=\hbar_n\sum_{|\ell|<n} H\left(\frac{|\ell|}{n}\right)f(t-\ell/R)e^{\hati \ell x}, \qquad x\in\RR.
\ee
\yadi{$\mathcal{T}_{n,R}$}{Signal separation operator}
\end{definition}
We will use the operator to separate the components $f_j(t^*)$ and the instantaneous frequencies $\phi_j'(t^*)$ based on the noisy observations
\be\label{eq:nonstationary_noisy}
F(t)=\sum_{j=1}^K f_j(t) +\epsilon(t)=\sum_{j=1}^K A_j(t)\exp(i\phi_j(t))+\epsilon(t),
\ee
where for each $t$, $\epsilon(t)$ is a realization of a complex sub-Gaussian random variable in $\mathcal{G}(V)$.
Theorem~\ref{theo:main} can be translated into Theorem~\ref{theo:main_navy} below, where we abuse the notation to make the comparison easier. 
We note that  Theorem~\ref{theo:main_navy} itself is a restatement of \cite[Theorems~2.4,4.1]{bspaper} in a somewhat modified form taking the noise into account.
In order to state this theorem, we write (in addition to \eqref{eq:notation2})
\begin{equation}\label{eq:notation3}
 \ \mathfrak{m}(t^*)=\min_{1\le \ell\le K}|f_\ell(t^*)|, \ \ \eta(t^*)=\min_{\ell\not=k} |\phi_\ell'(t^*)-\phi_k(t^*)|, \ \eta^*=\eta(t^*)/R.
\end{equation}
For brevity, we will omit the mention of $t^*$ from our notation, unless we feel that this might cause some confusion.
\begin{theorem}\label{theo:main_navy}
Let $0<\delta<1$, $t^*\in\RR$, $R, \Delta>0$, and the assumptions of Theorem~\ref{theo:reduction} be satisfied. 
We assume that  $n=\lfloor R\Delta\rfloor$ satisfies the condition \eqref{eq:noiseest},  and in addition, $n\ge 4C/\eta^*$. 
We assume further that
\be\label{eq:Deltacond}
\alpha \mathbf{M}(B\Delta +1)\Delta\le \mathfrak{m}/4.
\ee
We define
\be\label{eq:levelsetbis}
\mathbb{G}(t^*)=\{x\in\TT : |\mathcal{T}_{n,R}(F)(t^*;x)|\ge 3\mathfrak{m}/4\}.
\ee
Then
each of the following statements holds with probability exceeding $1-\delta$.
\begin{itemize}
\item (\textbf{Disjoint union condition}) \\
The set $\mathbb{G}(t^*)$ is a disjoint union of exactly $K$ subsets $\mathbb{G}_\ell(t^*)$,
\item (\textbf{Diameter condition}) \\
For each $\ell=1,\cdots, K$,   $\mathsf{diam}(\mathbb{G}_\ell(t^*)) \le 2C/n$,
\item (\textbf{Separation}) \\
$\mathsf{dist}(\mathbb{G}_\ell(t^*), \mathbb{G}_j(t^*)) \ge \eta/2$ for $\ell \neq k$,
\item (\textbf{Interval inclusion}) \\
For each $\ell=1,\cdots, K$,  $I_\ell=\{x\in\TT: |x-\lambda_\ell|\le 1/(4n)\}\subseteq \mathbb{G}_\ell(t^*)$.
\end{itemize}
Moreover, if
\begin{equation}\label{eq:lambda_estimator_nonstationary}
\hat{\lambda}_\ell =\arg\max_{x\in \mathbb{G}_\ell(t^*)}|\mathcal{T}_{n,R}(F)(t^*;x)|,\quad \widehat{\phi_\ell'(t^*)}=R\hat{\lambda}(\ell),
\end{equation}
then
\begin{equation}\label{eq:lambdaerr_nonstationary}
|\hat{\lambda}_\ell-\lambda_\ell| \le 2C/n.
\end{equation}
\end{theorem}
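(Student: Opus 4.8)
The plan is to reduce Theorem~\ref{theo:main_navy} to Theorem~\ref{theo:main} by exhibiting $\mathcal{T}_{n,R}(F)(t^*;\cdot)$ as a perturbation of the idealized constant-parameter spectrum. Using the linearity of the operator in Definition~\ref{def:ssodef} and the reduced-frequency notation \eqref{eq:reducedfreq}, I would write
\[
\mathcal{T}_{n,R}(F)(t^*;x)=\sigma_n(\hat\mu)(x)+\widetilde E_n(x),\qquad \sigma_n(\hat\mu)(x)=\sum_{j=1}^K f_j(t^*)\Phi_n(x-\lambda_j),
\]
where the perturbation $\widetilde E_n(x)=\big[\mathcal{T}_{n,R}(f)(t^*;x)-\sigma_n(\hat\mu)(x)\big]+\mathcal{T}_{n,R}(\epsilon)(t^*;x)$ collects a deterministic linearization (model-mismatch) error and a stochastic noise term. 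The entire argument then mirrors the proof of Theorem~\ref{theo:main} almost verbatim, with $\widetilde E_n$ playing the role that $E_n$ played there; the only genuinely new work is to show that $\max_{x}|\widetilde E_n(x)|$ stays within the budget dictated by the threshold in \eqref{eq:levelsetbis} on an event of probability exceeding $1-\delta$.

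For the deterministic part, the bound \eqref{eq:ssointro} --- itself a consequence of Theorem~\ref{theo:reduction} --- gives $|\mathcal{T}_{n,R}(f)(t^*;x)-\sigma_n(\hat\mu)(x)|\le \alpha\mathbf{M}(B\Delta+1)\Delta$, which is at most $\mathfrak{m}/4$ precisely by hypothesis \eqref{eq:Deltacond}. For the stochastic part I would observe that $\mathcal{T}_{n,R}(\epsilon)(t^*;x)=\hbar_n\sum_{|\ell|<n}H(|\ell|/n)\,\epsilon(t^*-\ell/R)\,e^{\hati\ell x}$ has exactly the form of $E_n$ in Lemma~\ref{lemma:noiselemma} once we set $\epsilon_\ell:=\epsilon(t^*-\ell/R)$; since these samples are i.i.d.\ in $\mathcal{G}(V)$ and $n=\lfloor R\Delta\rfloor$ satisfies \eqref{eq:noiseest}, the lemma yields $\max_x|\mathcal{T}_{n,R}(\epsilon)(t^*;x)|\le \mathfrak{m}/16$ on an event of probability at least $1-\delta$, exactly as in \eqref{eq:pf2eqn1}. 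All subsequent statements are made on this event.

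With the perturbation controlled, I would repeat the geometric argument of Theorem~\ref{theo:main}: the localization estimate \eqref{eq:locest} bounds the kernel tails from distant sources by $ML/C^S\le\mathfrak{m}/16$ (by the choice of $C$), so for $x$ at distance $\ge C/n$ from every $\lambda_j$ the value $|\mathcal{T}_{n,R}(F)(t^*;x)|$ falls below threshold, while the hypothesis $n\ge 4C/\eta^*$ guarantees that near any single $\lambda_\ell$ only one source contributes and the Bernstein inequality (Proposition~\ref{prop:bernineq}) together with $\Phi_n(0)=1$ forces $|\mathcal{T}_{n,R}(F)(t^*;x)|$ above threshold on $I_\ell$ (once the constants in the error budget are checked to leave the required margin). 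This produces the disjoint union into exactly $K$ clusters $\mathbb{G}_\ell(t^*)$ centered at the $\lambda_\ell$, the diameter bound $2C/n$, the separation $\eta/2$, and the inclusion $I_\ell\subseteq\mathbb{G}_\ell(t^*)$; the frequency error \eqref{eq:lambdaerr_nonstationary} is then immediate, since $\hat\lambda_\ell$ and $\lambda_\ell$ both lie in the single cluster $\mathbb{G}_\ell(t^*)$ of diameter at most $2C/n$.

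The main obstacle is the constant bookkeeping created by the coupling $n=\lfloor R\Delta\rfloor$. Shrinking $\Delta$ is what makes the linearization error small in \eqref{eq:Deltacond}, but it simultaneously shrinks $n$, which degrades both the noise averaging (the bound in Lemma~\ref{lemma:noiselemma} scales like $1/\sqrt n$) and the kernel resolution $C/n$, and can violate $n\ge 4C/\eta^*$. Reconciling these forces one to take the sampling rate $R$ large enough that $\Delta$ can be kept small while $n=\lfloor R\Delta\rfloor$ remains large enough to satisfy both \eqref{eq:noiseest} and the separation requirement; this is the quantitative content of the claim that a high sampling rate compensates for a very low SNR. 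A secondary point to verify carefully is that the combined perturbation budget (model mismatch plus noise) is strictly smaller than the gap the threshold must bridge between the near-source and away-from-source regimes, so that the cluster structure is genuinely determined by the $\lambda_\ell$ and not by spurious level crossings of $|\mathcal{T}_{n,R}(F)(t^*;\cdot)|$.
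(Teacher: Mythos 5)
Your proposal is correct and follows exactly the route the paper intends: the paper gives no separate proof of Theorem~\ref{theo:main_navy}, stating only that it is a ``translation'' of Theorem~\ref{theo:main} via Theorem~\ref{theo:reduction} and \eqref{eq:ssointro}, and your decomposition of $\mathcal{T}_{n,R}(F)(t^*;\cdot)$ into $\sigma_n(\hat\mu)$ plus a perturbation combining the linearization error (bounded by \eqref{eq:Deltacond}) and the noise term (bounded by Lemma~\ref{lemma:noiselemma}) is precisely that translation made explicit. The constant bookkeeping you defer is the one genuinely delicate point --- with the threshold $3\mathfrak{m}/4$ and a combined perturbation budget of $\mathfrak{m}/4+\mathfrak{m}/8$, the guaranteed value on $I_\ell$ is only $(3/4)\mathfrak{m}-3\mathfrak{m}/8=3\mathfrak{m}/8$, so the margin must be recovered by tightening \eqref{eq:Deltacond} or the threshold --- but this is an issue with the constants in the paper's own statement rather than with your approach.
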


\begin{remark}\label{rem:ncond}
{\rm
For the convenience of the reader, we summarize the conditions on the sampling frequency $R$ below.
We assume  that for every $t^*$ of interest,
\be\label{eq:rcond1}
R\Delta \ge \max\left(C_1, 4C/\eta^*\right)
\ee
and $R$ is large enough so that with probability $>1-\delta$,
\be\label{eq:recond2}
|E_n(x)|\le C_2V\sqrt{\frac{(\log(C_2R\Delta)/\delta)}{R\Delta}}
\ee
We note that the condition \eqref{eq:rcond1} puts a lower bound on $\Delta$, while the condition \eqref{eq:Deltacond} is an upper bound on the Lipschitz constant $\alpha$. 
In particular, the choice of $\Delta$ is a theoretically delicate one.
In our experiments, this was done by experimentation.
\qed}
\end{remark}
\begin{remark}\label{rem:chirps}
{\rm
In the case of a chirp
\be
\phi'(t)=\begin{cases}
\omega+\frac{B}{d}(t-\gamma), &\mbox{ if $t_0\le t\le t_0+d$}\\
0, &\mbox{otherwise},
\end{cases}
\ee
 we have for $t^*\in [t_0, t_0+d]$
\be
|\phi'(t^*+u)-\phi'(t^*)|=\left|\frac{B}{d}\right||u|, \qquad |\phi'(t^*)|=\left|\omega+\frac{B}{d}(t^*-\gamma)\right|
\ee
So, we may take 
\be
\alpha=\frac{|B/d|}{\min_{t\in [t_0, t_0+d]}|\omega+(B/d)(t-\gamma)|}.
\ee

\qed}
\end{remark}

\bhag{Algorithms}\label{bhag:algorithm}
In this section, we describe how the SSO can be used for signals of the form \eqref{eq:groundtruth}.
The basic idea behind our algorithms is to apply Theorem~\ref{theo:main_navy} to different snippets of the signal. 
Thus, we choose a small $\Delta$ and divide the signal duration into $D$ overlapping time  intervals $I_k=[t_k-\Delta, t_k+\Delta]$, $k=1,\cdots, D$.
The $t_k$'s are chosen such that $I_k \cap I_{k+1} \neq \emptyset$. This is to ensure that all the sampling data are considered in this experiment.
Theorem~\ref{theo:main_navy} gives guidelines on how to select $\Delta$. 
In practice, one has to treat this as a tunable parameter.
In our experiments reported here, we chose $\Delta$ by experimentation so that it is large enough to eliminate the noise effect, but small enough to give an accurate estimation. 
A large value of $\Delta$ implies that there are more data points to average out the noise, but this has higher chance that the signal will start or end within the interval which leads to inaccurate estimation.

  The signal restricted to $I_k$ has the form (cf. \eqref{eq:groundtruth}):
\be\label{eq:ground_truth_reformed}
F_k(t)=\sum_{j=1}^{J_k}f_{j,k}(t)+\epsilon(t), \qquad f_{j,k}(t)=A_{j,k}(t)\exp(\hati\phi_{j,k}(t)), 
\ee
 where $\epsilon(t)$ represents the noise and
 \be\label{eq:phase_bis}
  \phi_{j,k}(t)=\omega_{j,k}(t-\gamma_{j,k})+a_{j,k}(t-\gamma_{j,k})^2/2.
\ee
\yadi{$\omega_{j,k}$, $\gamma_{j,k}$, $a_{j,k}$}{Chirp parameters for chirp $j$ in $I_k$}
\yadi{$\Lambda_{j,k}$}{estimated IF for chirp $j$ at $t_k$}
\yadi{$\tau$}{threshold for SSO, estimate for $3\mathfrak{m}/4$}
The SSO will be used with each $F_k$ to obtain the parameter values for each $f_{j,k}$. 
In our experiments, we have focused on IFs, assuming all the amplitudes to be $1$.
Taken together, the set $(t_k,\widehat{\phi_{j,k}'(t_k)})$ is called the \textbf{SSO diagram}.
Once we determine the IFs, the amplitudes can be determined by simply substituting the estimated IFs in SSO and more accurately by solving an appropriate least square problem. 
We postpone a more detailed investigation of this question to a future work.

 Our methodology requires a prior knowledge of 
\begin{enumerate}
\item a minimum separation $\eta$ among all the instantaneous frequencies
\item the receiver bandwidth 
\be
B_{\mbox{\scriptsize{rec}}} \geq \max_{j,k,t}\phi_{j,k}'(t)-\min_{j,k,t}\phi_{j,k}'(t).
\ee
\end{enumerate}
Out of these, $B_{\mbox{\scriptsize{rec}}}$ is a system parameter and is always known a priori.  
The condition implies that all the signals are fully captured in the measurements. 
\yadi{$B_{\mbox{\scriptsize{rec}}}$}{Receiver bandwidth}
The minimal separation $\eta$ can be found experimentally as well, and its choice will affect the performance of our algorithms.
We have demonstrated this in detail in Chapter \ref{section:mainresult}.

The clusters defined in Theorem~\ref{theo:main_navy} are found, using a clustering algorithm.  For example, in this chapter we used DBSCAN \cite{ester1996density}.
The notation we use in our algorithm is 
\be\label{eq:dbscabdef}
\resizebox{0.9\hsize}{!}{$
([\mbox{scatter data}], [\mbox{label}]) \leftarrow\mbox{DBSCAN}([\mbox{scatter data}], [\mbox{neighborhood radius}], [\mbox{minimum number of neighbors}]).
$}
\ee
Here, $[\mbox{scatter data}]$ is a data matrix,  and the algorithm attaches a label $1$ for the cluster of interest (and $-1$ otherwise), based on the number of points neighboring the so called core points in a ball of radius denoted above by $[\mbox{neighborhood radius}]$.

Our algorithm needs a few more tunable parameters as summarized in Table \ref{tab:tunable_parameters}.

\begin{table}[H]
\begin{center}
\resizebox{0.9\hsize}{!}{
\begin{tabular}{ |c|l| } 
 \hline
 Parameters & Description \\
 \hline
 $\Delta$ & Time interval width centered at $t_k$, which is chosen based on Theorem \ref{theo:main_navy}\\
 $D$ & Number of intervals centered at $t_k$, which is chosen so that interval $I_k$'s are overlapping\\
 $D_1$ & Minimum number of neighbors, which can be tuned based on \\
 & the number of points in the SSO diagram in the radius given by $B_{\mbox{\scriptsize{rec}}}$.\\
 $D_2$ & Minimum number of neighbors, which can be tuned based on \\
 & the number of points in the smallest signal cluster\\
$M$ & Number of partitions when signal crossover detected, which can be tuned so that \\
& some partitions contain sufficiently long linear chirps \\
 \hline
\end{tabular}
}
\end{center}
 	\caption{The table above shows a list of tunable parameters.} \label{tab:tunable_parameters}
\end{table}

Algorithm~\ref{alg:univariate} is designed to find the IFs. 
A challenge here is to determine the correct threshold related to the unknown minimal amplitude $\mathfrak{m}$ in the presence of noise.
A value which is too small would indicate superfluous frequencies, a value which is too large would miss some frequencies. This effect is discussed in Figure \ref{fig:fail_case}.
There are other details such as detecting the clusters and find the right peaks which are described below.
Part 2 will estimate the parameters of the instantaneous frequencies obtained by part 1 using least squares.
The challenge here is to figure out where the different signal components begin and end, and which part of the signal to use in order to set up the least square problem in order to obtain reliable solutions.
SSO does not work directly when there are crossover frequencies.
In fact, the set up itself does not make sense in general in such cases; e.g., whether the crossover of the form $X$ represents two linear frequencies crossing or two $V$-shaped frequencies touching each other.
 Part 3 is the refinement process where we use the facts that SSO is inherently unable to resolve crossovers and that we are dealing with linear chirps
  to detect the presence of such frequencies.
 If sufficiently long linear chirps are present on both sides of the crossover, then we can detect those using the previous two algorithms with a smaller value of $\Delta$ and interpolate to determine where the crossover happens and which chirps cross each other.

 Figure~\ref{fig:flowchart1} and  \ref{fig:flowchart2} give flowcharts of how the algorithms are related to each other.
 The whole process will be illustrated step-by-step with one example in this section, and analyzed further in Section~\ref{bhag:numerical} with a few other examples.

\begin{table}[ht]
\begin{center}
\begin{tabular}{ |c|c|c|c|c|c|c|c| } 
 \hline
 $j$ & $A_j$ & $\omega_j$ & $B_j$ & $d_j$ & $t^*_{0,j}$ & $\mbox{PRI}_j$ & Number of pulses \\
 \hline
 1 & 1 & 1080000000 & 15000000 & 3.0e-05 & 1.0e-05 & 5.0e-05 & 2 \\
 2 & 1 & 1360000000 & 5000000 & 1.0e-05 & 1.0e-05 & 1.5e-05 & 5 \\
 3 & 1 & 1540000000 & -20000000 & 3.0e-05 & 1.0e-05 & 0 & 1 \\
 4 & 1 & 1510000000 & 50000000 & 7.0e-05 & 1.5e-05 & 0 & 1 \\
 5 & 1 & 1480000000 & -15000000 & 5.0e-05 & 3.0e-05 & 0 & 1 \\
 6 & 1 & 1040000000 & -15000000 & 3.0e-05 & 1.5e-05 & 4.0e-05 & 2 \\
 \hline
\end{tabular}
\end{center}
 	\caption{The table above shows an example of signal ground truth parameters.} \label{tab:signal_table}
\end{table}

We illustrate our method using the signal with parameters as described in Table~\ref{tab:signal_table}. 
In this example, we use $\epsilon(t)$ to be a Gaussian noise at -10 dB SNR.
 The visualizations of  the raw signal $f(t)$ and its instantaneous frequency $\phi'(t)$ corresponding to the parameters in Table~\ref{tab:signal_table} are shown in Figure ~\ref{fig:signal_ground_truth}.

\begin{figure}[ht]
\begin{center}
\includegraphics[scale=.135]{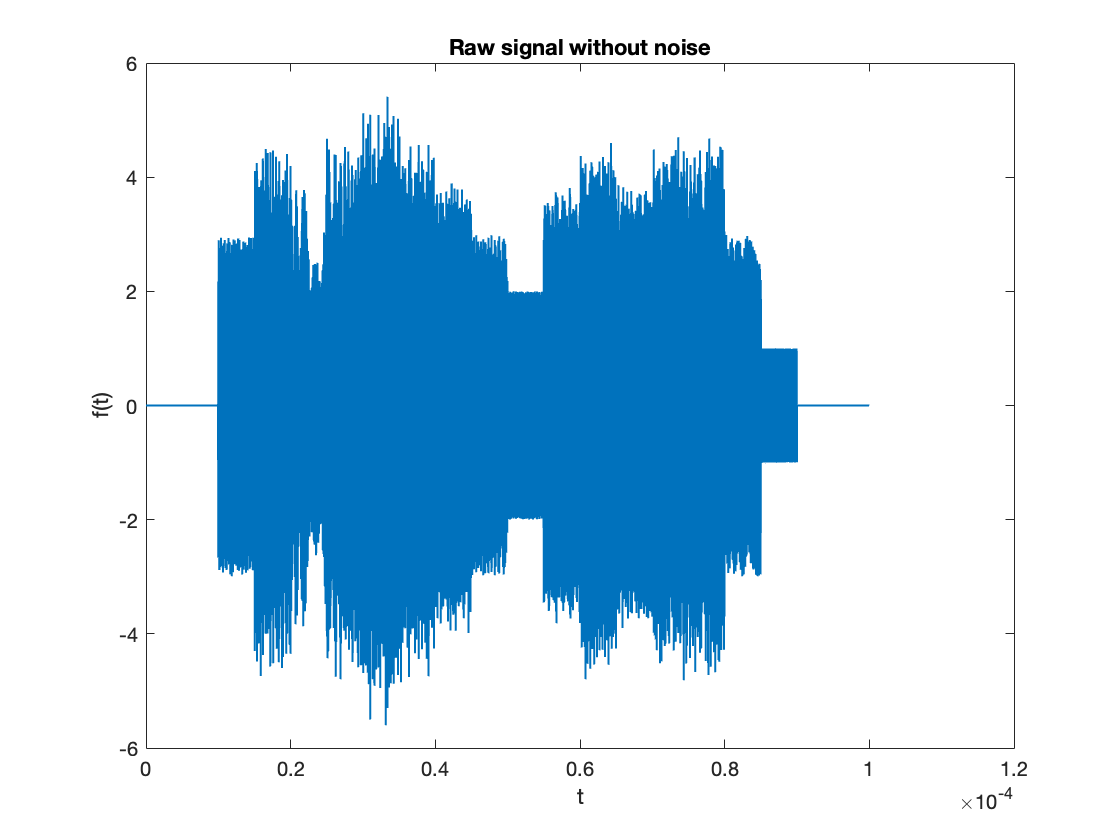}
\includegraphics[scale=.135]{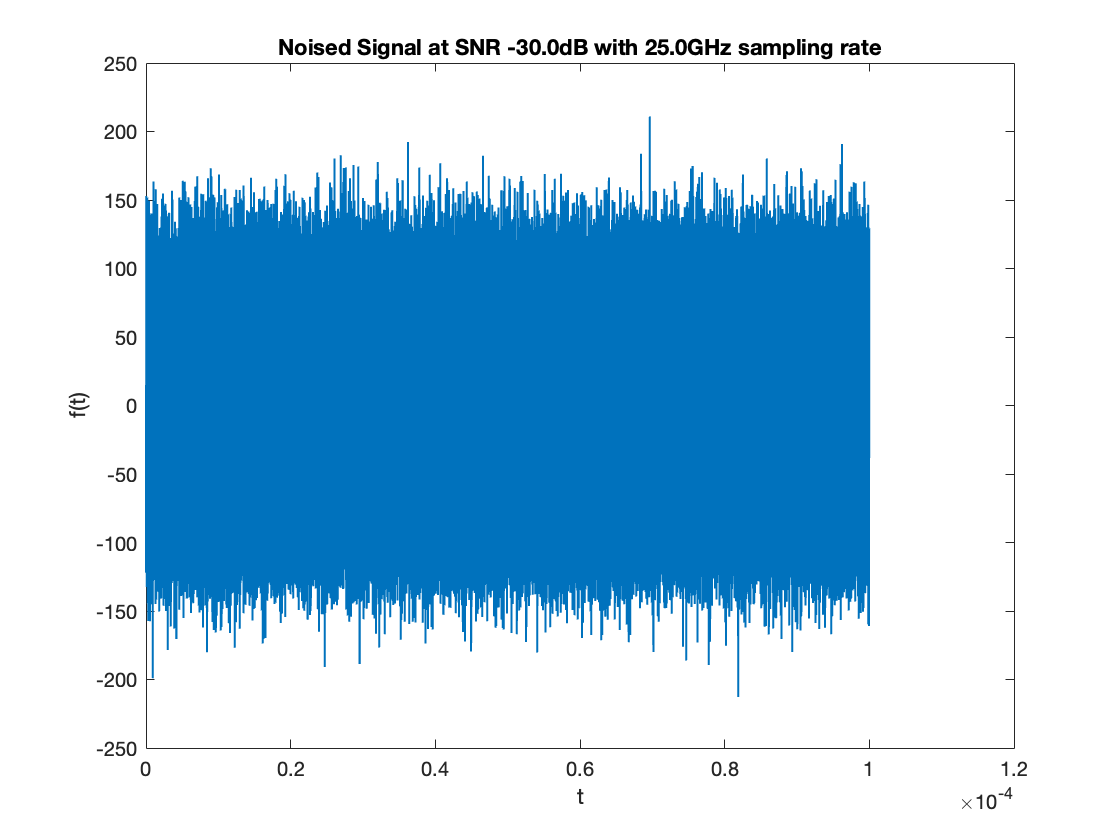}
\includegraphics[scale=.135]{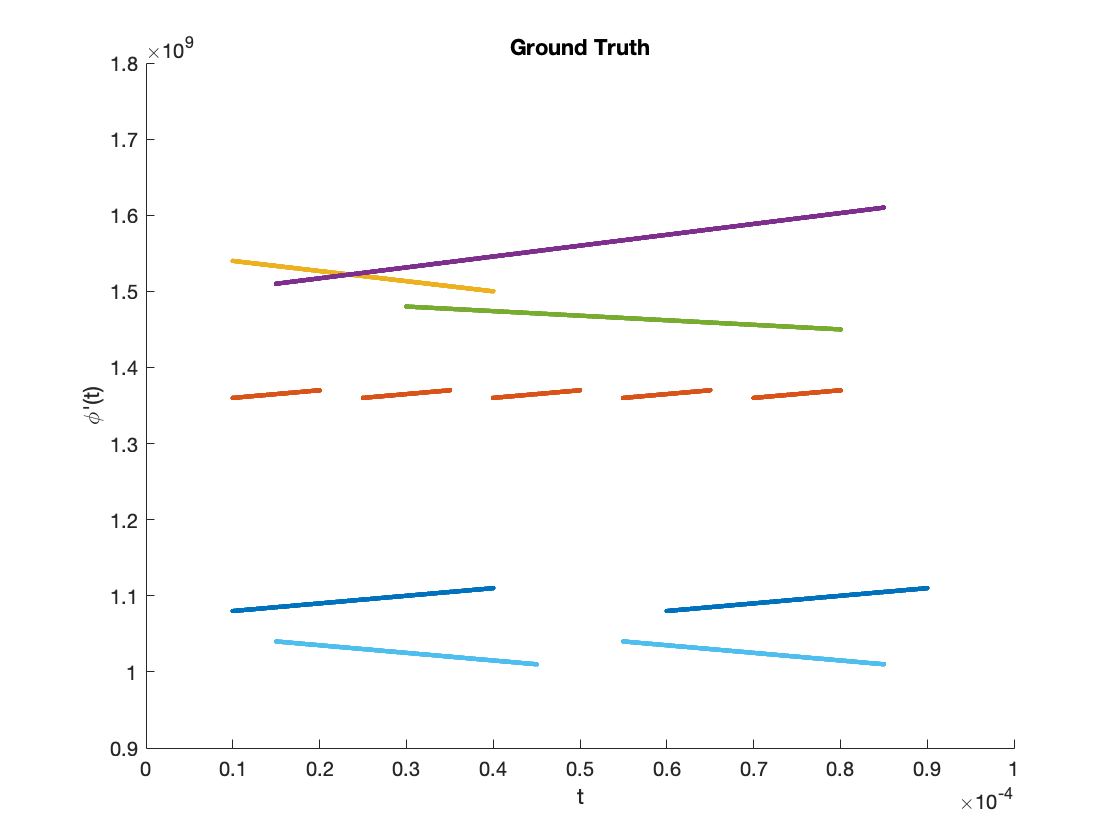}
\end{center}
\caption{(Left) The raw signal $f(t)$ as given in \eqref{eq:nonstationary} at sampling rate $0.5$ GHz. (Middle) The noised signal $F(t)=f(t)+\epsilon(t)$ at -10 dB SNR. (Right) The ground truth of $\phi'(t)$ from \eqref{eq:pulsephasedef}.}
\label{fig:signal_ground_truth}
\end{figure}

\begin{figure}[H]
\begin{center}
\includegraphics[width=.5\textwidth]{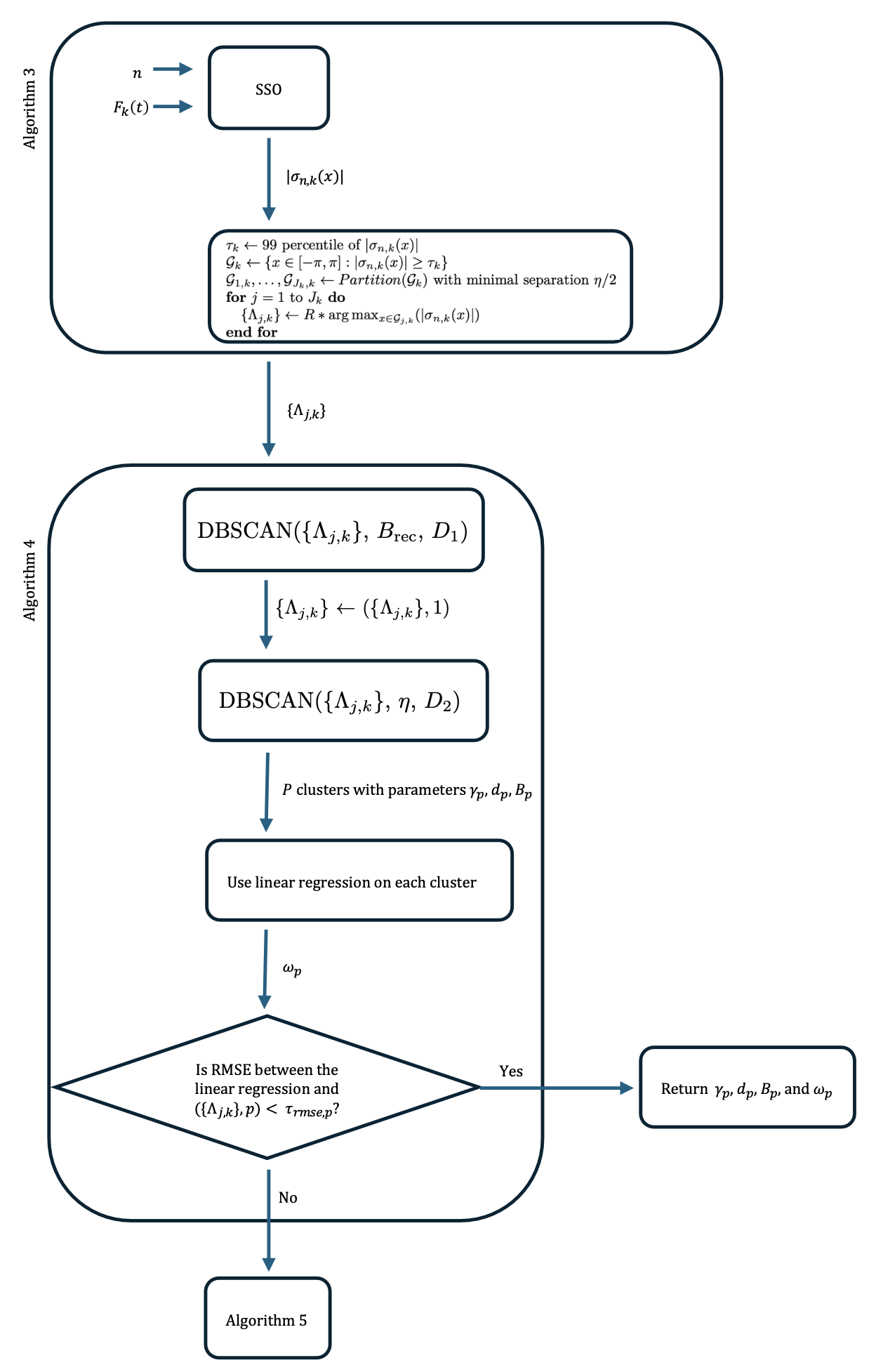}
\end{center}
\caption{The flowchart diagram for our Algorithms \ref{alg:univariate_navy} and \ref{alg:estimation_navy}.}
\label{fig:flowchart1}
\end{figure}

\begin{figure}[H]
\begin{center}
\includegraphics[width=.5\textwidth]{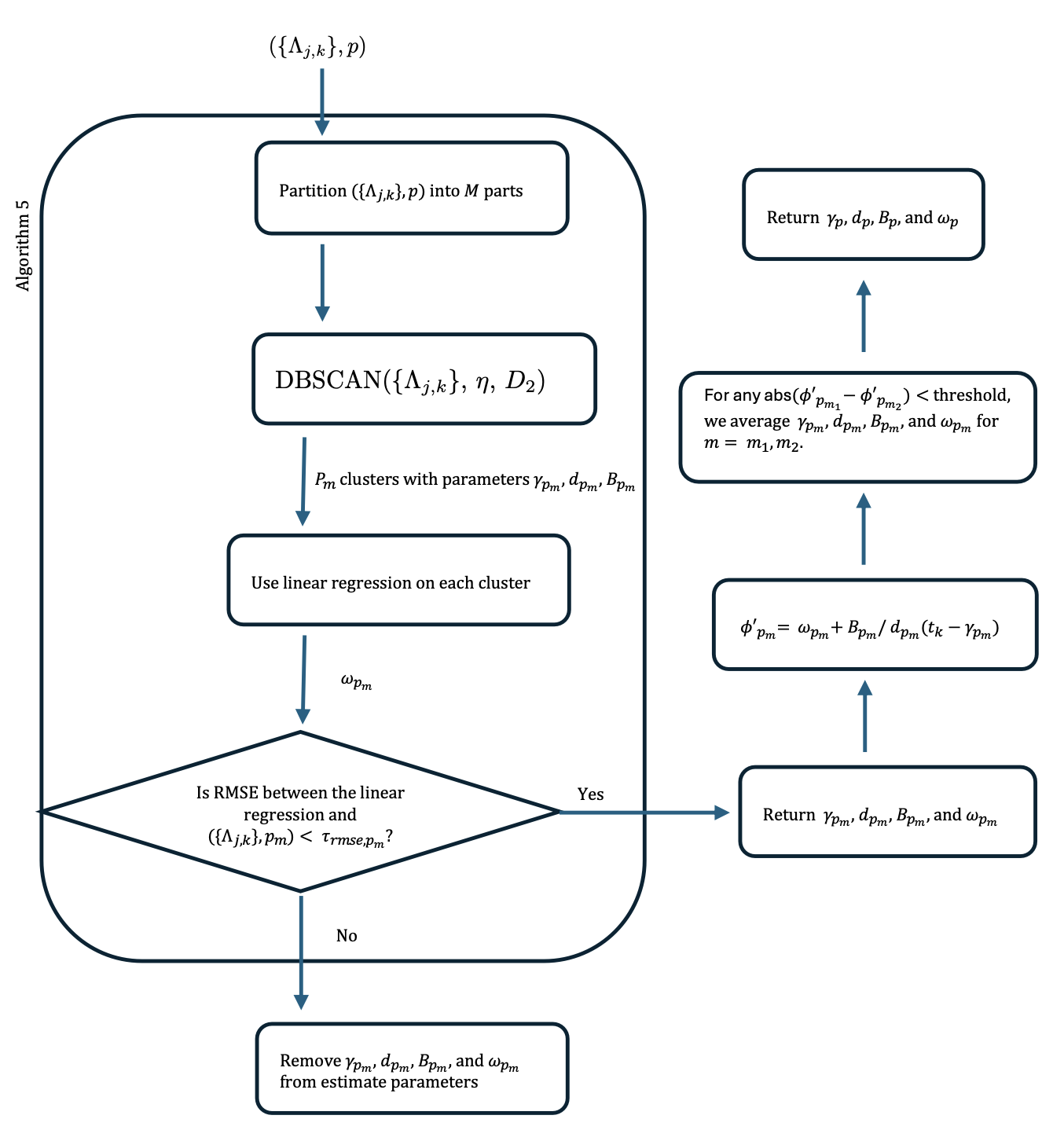}
\end{center}
\caption{The flowchart diagram for our Algorithm \ref{alg:crossover}.}
\label{fig:flowchart2}
\end{figure}

\begin{figure}[H]
\begin{center}
\includegraphics[scale=.2]{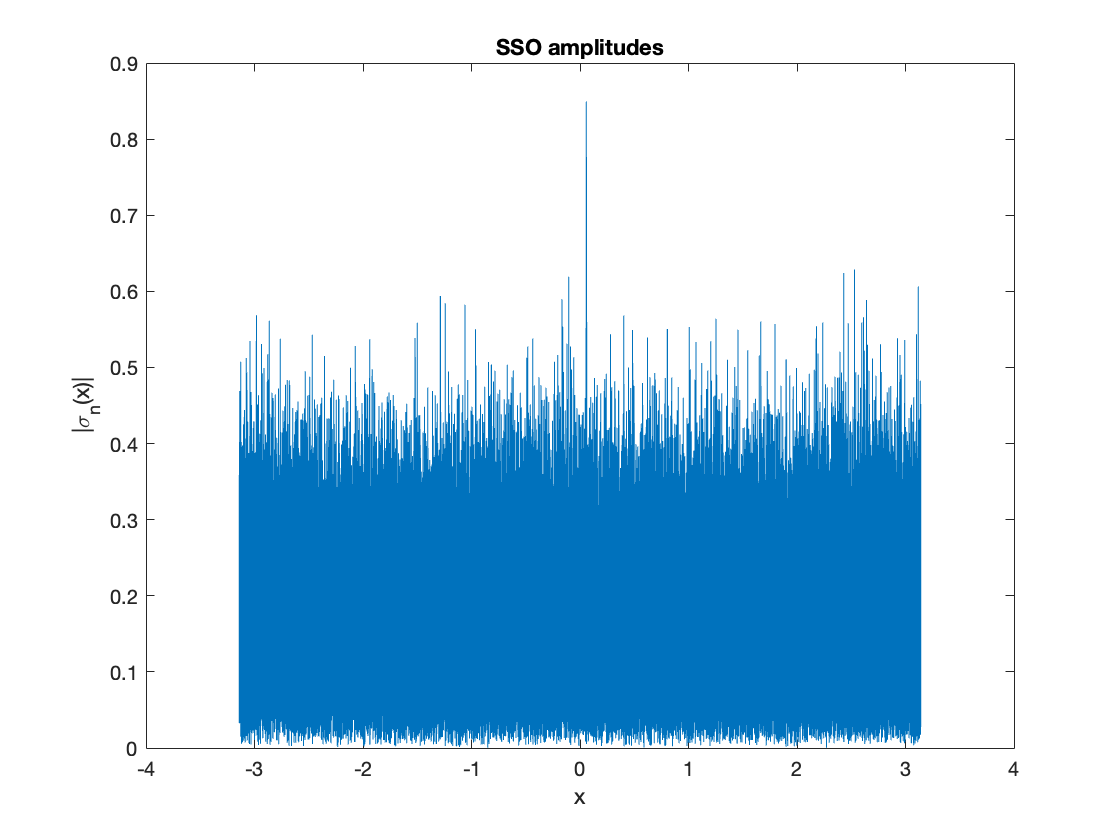}
\includegraphics[scale=.2]{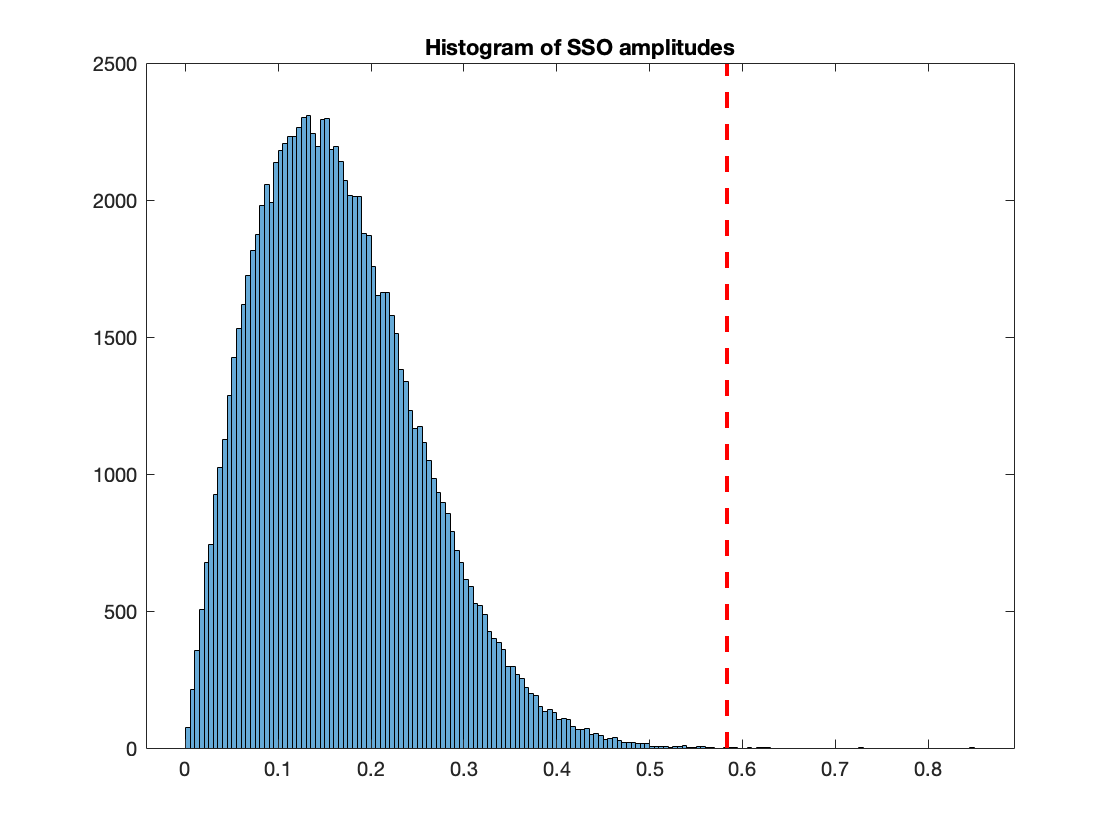}
\end{center}
\caption{(Left) The raw SSO result of $F_k(t)$ at $t=5 \times 10^{-5}$ with -10 dB SNR on 0.5 GHz sampling rate. (Right) The histogram of the SSO. The red dot line indicates the threshold $\tau_k = $ 99 percentile of the SSO result.}
\label{fig:noise_sso}
\end{figure}

\vspace*{-0.5cm}
\subsection{Finding instantaneous frequencies}\label{bhag:alg1}

In this section, we describe the first stage of the algorithm illustrated in Figure \ref{fig:flowchart1}, where we obtain the piecewise constant approximation to the the instantaneous frequencies in each snippet $F_k(t)$  for $k=1,\ldots,D$ (cf. \eqref{eq:ground_truth_reformed}) using the SSO for each such snippet.
The main challenge here is to determine the threshold in the definition \eqref{eq:levelsetbis}.
Since the value of the minimum amplitude $\mathfrak{m}$ is unknown, we run SSO on $F_k(t)$ and set the threshold $\tau_k$ to be $99$ percentile of the power spectrum (see Figure \ref{fig:noise_sso}). 
The percentile is the same for every interval $I_k$, but the actual value of the threshold $\tau_k$ will be different depending on the sampling frequency and the SNR.
 \begin{algorithm}[H]
 \begin{algorithmic}[1]
 \item[{\rm a)}] \textbf{Input:} Minimal separation $\eta$ and signal $F_k(t)$.
 \item[{\rm b)}] \textbf{Output:} $\{\Lambda_{j,k}\}$
 \STATE $\hbar_n \gets \left\{\sum_{|\ell|<n}H\left(\frac{|\ell|}{n}\right)\right\}^{-1}$
 \STATE $\sigma_{n,k}(x) \gets \hbar_n \sum_{|\ell|<n} H\left(\frac{|\ell|}{n}\right)F_k(t_{k}-\ell/R)e^{i\ell x}$
 \STATE $\tau_k \gets 99 \mbox{ percentile of } |\sigma_{n,k}(x)|$
 \STATE $\mathcal{G}_k \gets \{x\in [-\pi,\pi] : |\sigma_{n,k}(x)|\ge \tau_k\}$
 \STATE $\mathcal{G}_{1,k}, \ldots, \mathcal{G}_{J_k,k} \gets Partition(\mathcal{G}_k) \mbox{ with minimal separation } \eta/2$
  \FOR{$j=1$ to $J_k$}
 \STATE $\{\Lambda_{j,k}\} \gets R * \arg\max_{x\in \mathcal{G}_{j,k}} (|\sigma_{n,k}(x)|)$
 \ENDFOR
 \STATE \textbf{Return: } $\{\Lambda_{j,k}\}$ for $j=1,\ldots,J_k$
 \caption{Signal separation operator (SSO)}
 \label{alg:univariate_navy}
 \end{algorithmic}
 \end{algorithm}

\begin{figure}[H]
\begin{center}
\includegraphics[scale=.2]{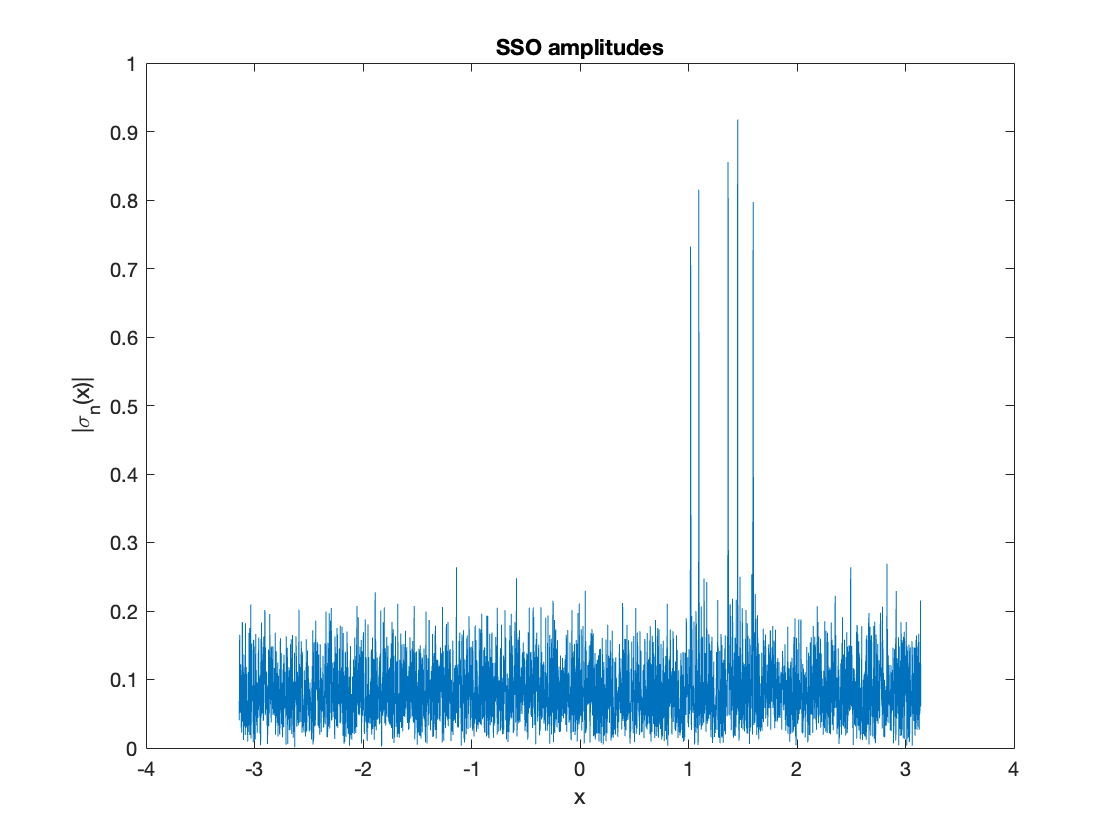}
\includegraphics[scale=.2]{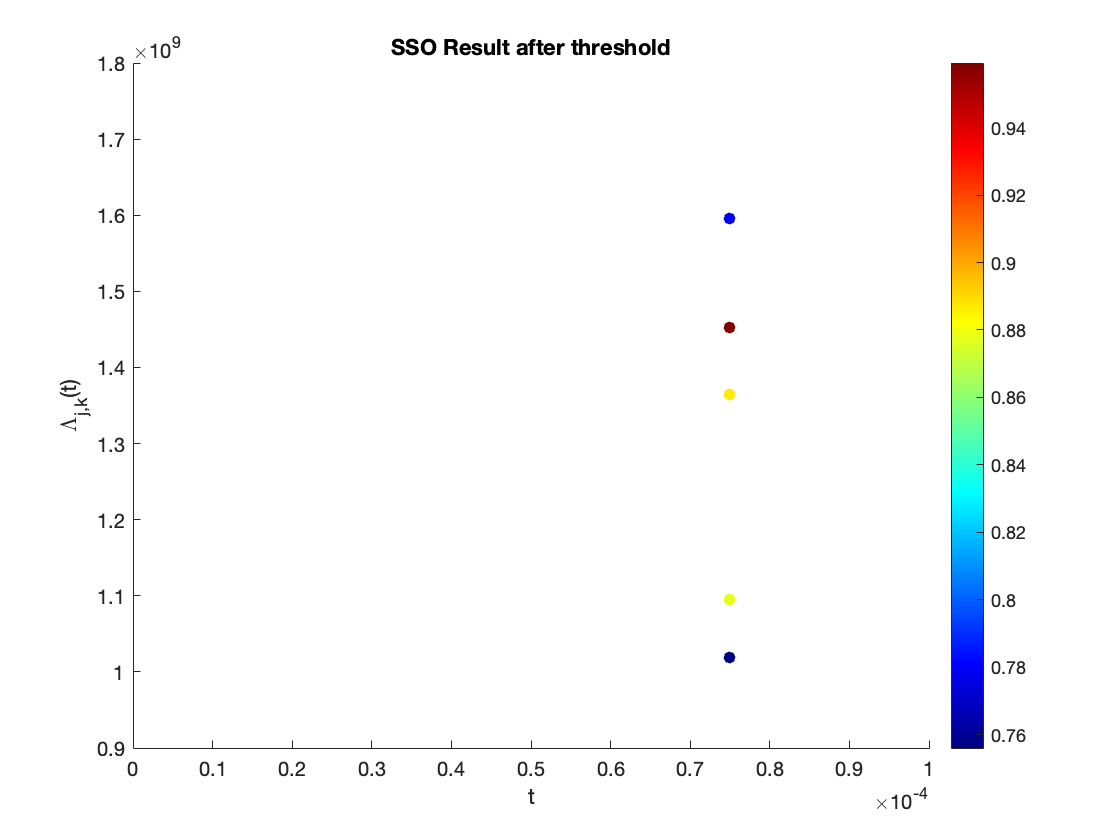}\\
\includegraphics[scale=.2]{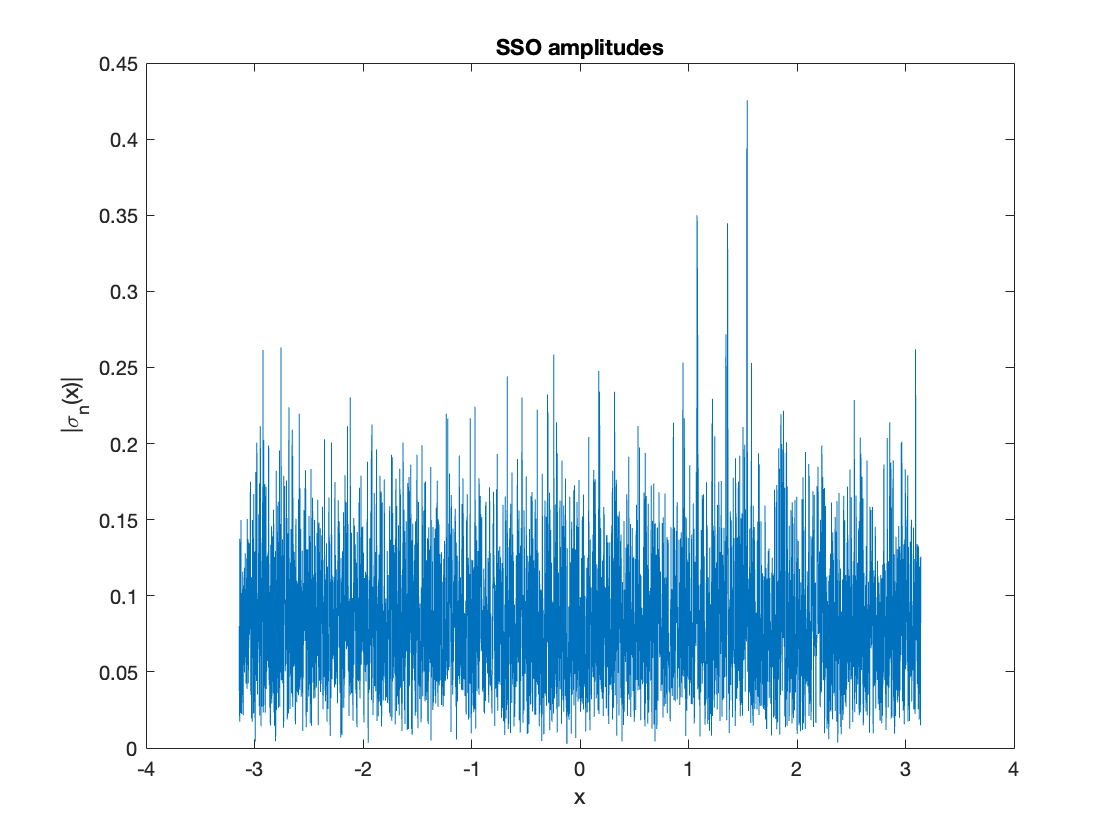}
\includegraphics[scale=.2]{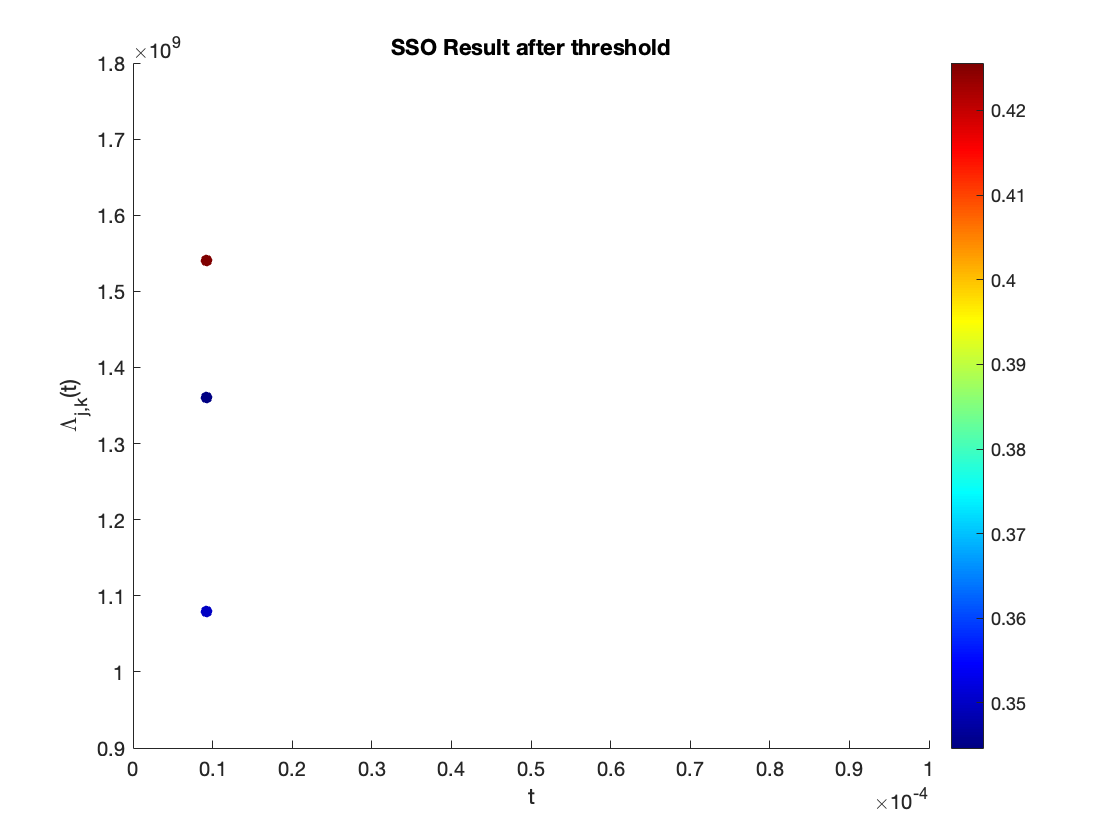}\\
\includegraphics[scale=.2]{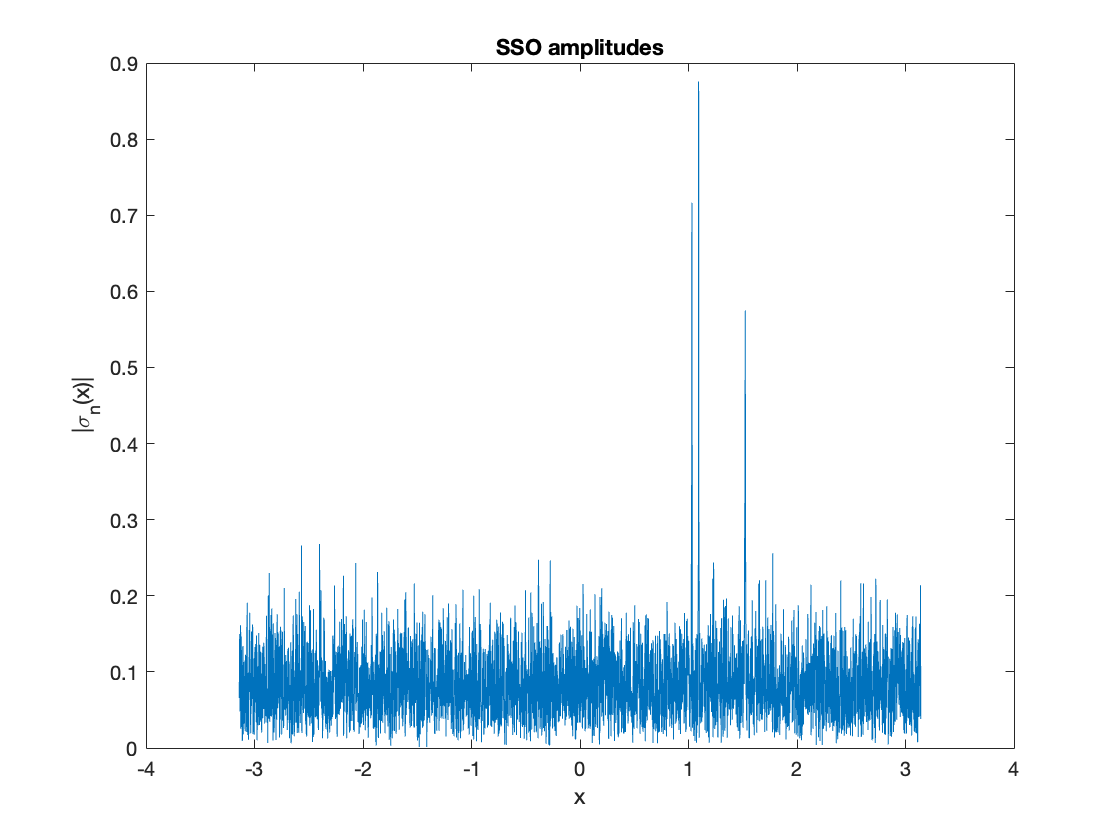}
\includegraphics[scale=.2]{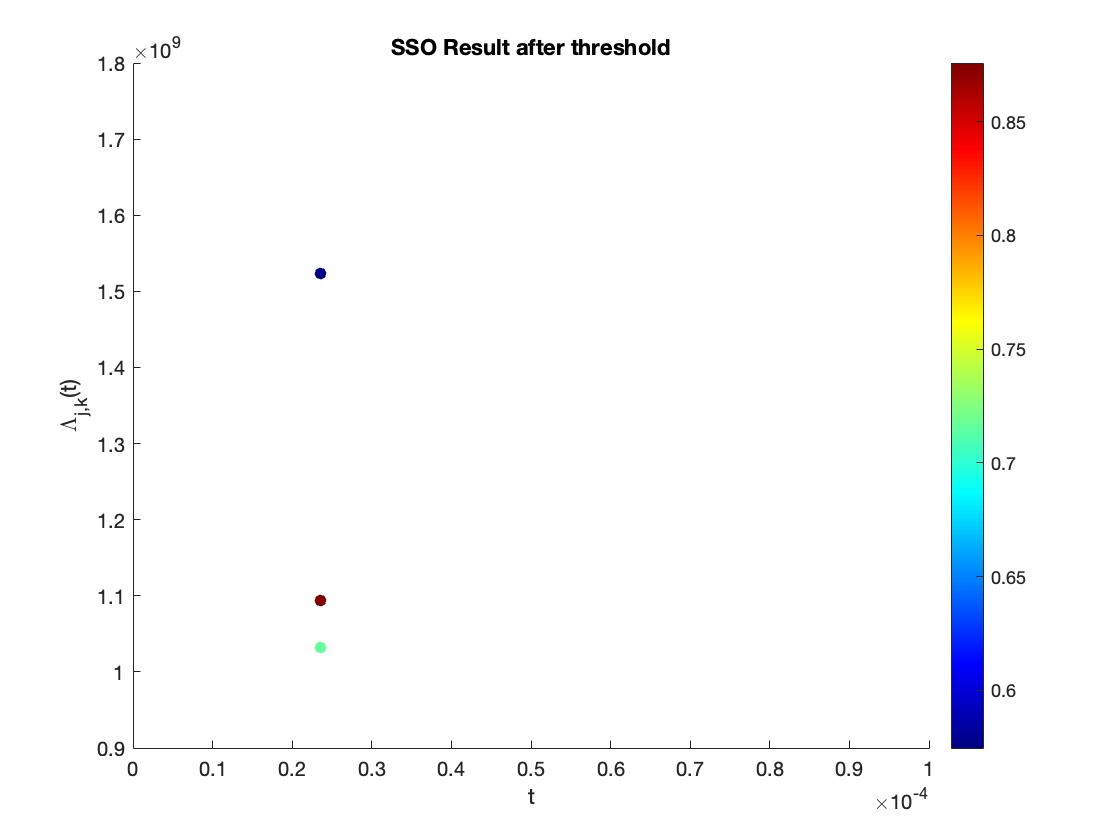}
\end{center}

\caption{(Left) $|\sigma_{n,k}(x)|$ for the interval $I_k$. (Right) SSO results at $t_k$.
(Top) The plots for $t_k =7.5 \times 10^{-5}$ where the signals pass from the beginning through the end of the interval $I_k$. (Middle) The plots for $t_k =1 \times 10^{-5}$ where the signal starts within the interval $I_k$. (Bottom) The plots for $t_k =2.35 \times 10^{-5}$ where there are 2 signals crossover within the interval $I_k$. }
\label{fig:freqattk}
\end{figure}

The output of this step is the values of the instantaneous frequencies at $t_k$:
\be
\{\Lambda_{j,k}\}=\omega_{j,k}+a_{j,k}(t_k-\gamma_{j,k}), \qquad j=1,\cdots, J_k.
\ee
This is illustrated in Figure~\ref{fig:freqattk}.
We see that the frequencies are determined quite accurately when there is no discontinuity in the signals, and the minimal separation is large, but there is a problem when a signal starts/stops somewhere within $I_k$, resulting in a discontinuity or when there is a cross-over frequency, resulting in a small minimal separation.

\subsection{Parameter estimation}\label{bhag:paramest}
The starting point of this step is the results of all the instantaneous frequencies in all the snippets.  The resulting diagram is called  the raw SSO diagram as shown in Figure~\ref{fig:rawsso} (Left). 
Knowing the receiver bandwidth $B_{\mbox{\scriptsize{rec}}}$, Algorithm~\ref{alg:estimation_navy} then concentrates on the relevant frequency part of the SSO diagram, which shows a piecewise constant approximation to the frequencies in the signal (see Figure \ref{fig:rawsso} (Right)).

 \begin{algorithm}[ht]
\begin{algorithmic}[1]
\item[{\rm a)}] \textbf{Input:} the receiver bandwidth $B_{\mbox{\scriptsize{rec}}}$, the minimum separation $\eta, \{\Lambda_{j,k}\}$ for $j=1,\ldots,J_k$, $k=1,\ldots,D$, and minimum number of neighbors $D_1, D_2$.
\item[{\rm b)}] \textbf{Output:} Estimation of $\omega_p, B_p, d_{p}$ and $\gamma_p$ for $p = 1, \ldots, P$.
\STATE $(\{\Lambda_{j,k}\},-1)$, $(\{\Lambda_{j,k}\},1)$ $\gets$ DBSCAN($\{\Lambda_{j,k}\}$, $B_{\mbox{\scriptsize{rec}}}$, $D_1$) to find the part of the $\{\Lambda_{j,k}\}$ relevant for our signal.
\STATE $\{\Lambda_{j,k}\}$ $\gets$ $(\{\Lambda_{j,k}\},1)$
\STATE $(\{\Lambda_{j,k}\}, -1)$, $(\{\Lambda_{j,k}\}, 1)$, $\ldots$, $(\{\Lambda_{j,k}\}, P)$ $\gets$ DBSCAN($\{\Lambda_{j,k}\}$, $\eta$, $D_2$) to separate the $\{\Lambda_{j,k}\}$ into $P$ clusters.
\FOR{$p=1$ to $P$}
\STATE $\Gamma_p \gets [(t_k, y_{j,k}) \mbox{ for } y_{j,k} \in (\{\Lambda_{j,k}\}, p)]$
\STATE $\gamma_p \gets$ $\min(t_k)$ for $(t_k, y_{j,k}) \in \Gamma_p$.
\STATE $d_p \gets$ $\max(t_k) - \min(t_k)$ for $(t_k, y_{j,k}) \in \Gamma_p$
\STATE $B_p \gets$ $(\max(y_{j,k}) - \min(y_{j,k}))/2$ for $(t_k, y_{j,k}) \in \Gamma_p$
\STATE Use linear regression on 50\% of $\Gamma_p$ that has highest $|\sigma_{n,k}(x)|$ to obtain the parameter estimation for $\omega_p$.
\STATE $\tau_{\mbox{\scriptsize{rmse}},p}$ $\gets$ $1\%$ of $\mbox{mean}(y_{j,k})$ for $(t_k, y_{j,k}) \in \Gamma_p$.
\IF{$\mbox{RMSE}_p$ defined in \eqref{eq:resresidue} $>$ $\tau_{\mbox{\scriptsize{rmse}},p}$}
\STATE Perform Algorithm \ref{alg:crossover}
\ENDIF
\ENDFOR
\STATE \textbf{Return: } $\omega_p, B_p, d_p$ and $\gamma_p$ for $j = 1, \ldots, P$.
 \caption{Parameter estimation}
 \label{alg:estimation_navy}
 \end{algorithmic}
 \end{algorithm}

Given the minimum separation, we can separate these approximations for different signal components using DBSCAN.
Algorithm~\ref{alg:estimation_navy} then focuses on this approximation in each component of the signal, and finds the corresponding parameters using linear regression.
The result of this operation is shown in Figure~\ref{fig:ssofound}.

\begin{figure}[H]
\begin{center}
\includegraphics[scale=.2]{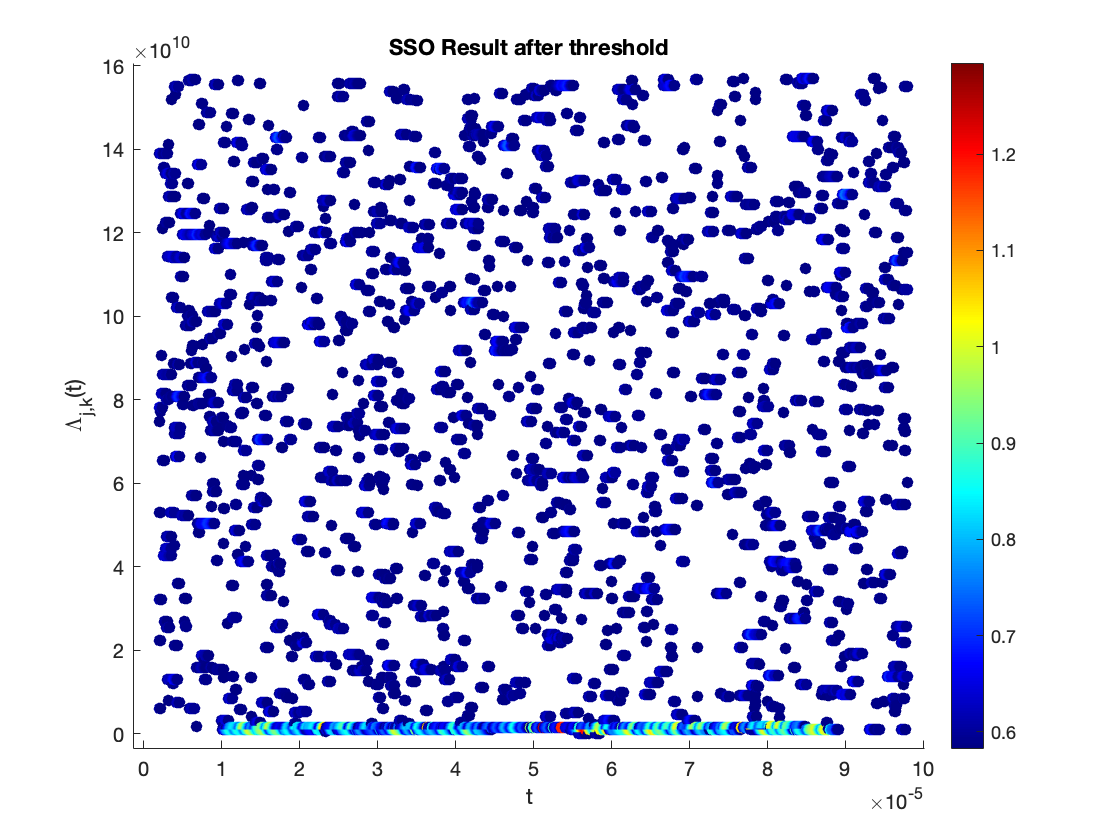}
\includegraphics[scale=.2]{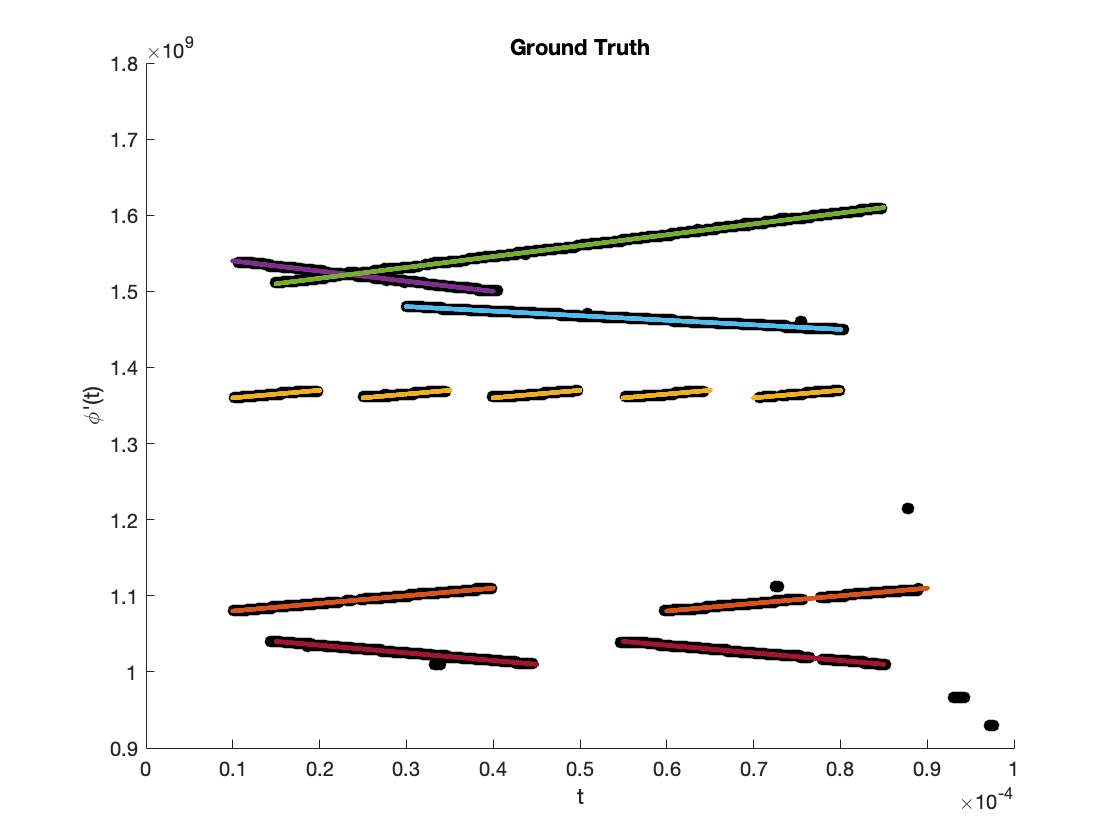}
\end{center}

\caption{(Left) The plot of threshold SSO result $\{\Lambda_{j,k}\}$ for $j=1,\ldots,J_k$, $k=1,\ldots,D$ by choosing $\Delta = 2\times 10^{-6}, D=2500, D_1=D/2, D_2=D/100,$ and $t_k$ are equidistant samples from 0 to $1\times 10^{-4}$. (Right) The plot of threshold SSO result after step 1 of Algorithm \ref{alg:estimation_navy}.}
\label{fig:rawsso}
\end{figure}

\subsection{Refinements}\label{bhag:refine}

Algorithm~\ref{alg:crossover}  focuses on working on the clusters in the SSO diagram where the signals crossover; i.e. the distance between two or more signals is less than the minimum separation $\eta$.

\begin{figure}[h]
\begin{center}
\includegraphics[scale=.2]{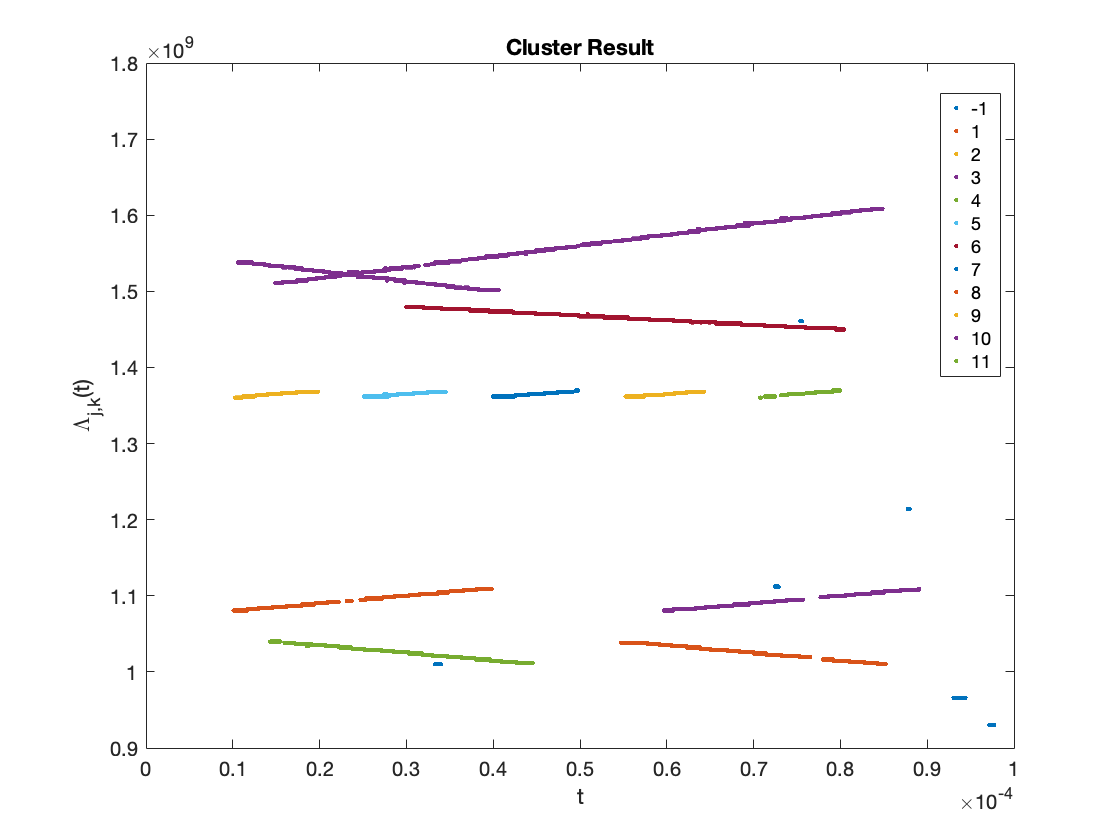}
\includegraphics[scale=.2]{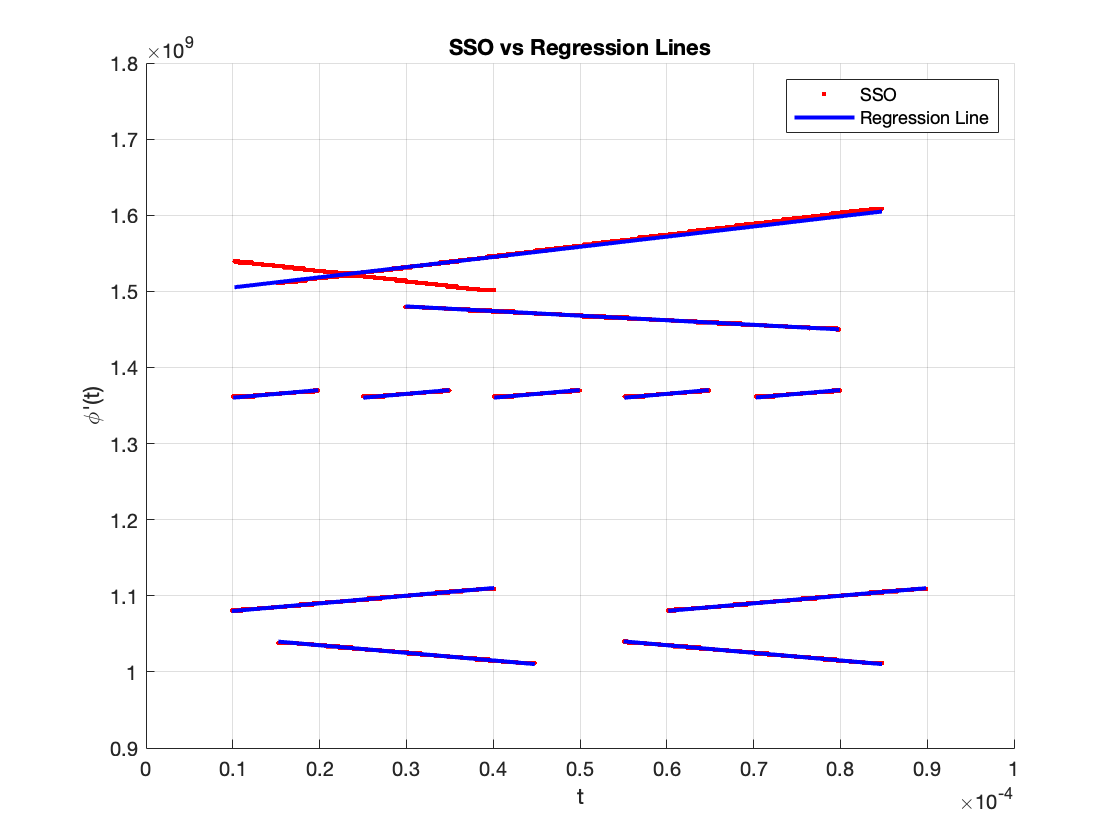}
\end{center}
\caption{(Left) The plot of signal components after using DBSCAN. (Right) The comparison plot between the SSO result vs the regression lines of $((\{\Lambda_{j,k}\},1), p)$. This is the result of Algorithm \ref{alg:estimation_navy} without performing Algorithm \ref{alg:crossover}; i.e. skipping steps 8 - 10 of Algorithm \ref{alg:estimation_navy}.}
\label{fig:ssofound}
\end{figure}

\begin{algorithm}[ht]
\begin{algorithmic}[1]
\item[{\rm a)}] \textbf{Input:} the minimum separation $\eta, (\{\Lambda_{j,k}\}, p)$ for $p$ when signals crossover, and minimum number of neighbors $D_2$, partition number $M$.
\item[{\rm b)}] \textbf{Output:} Estimation of $\omega_p, B_p, d_p$ and $\gamma_p$ for $p$ when signals crossover.
\STATE $\{\Lambda_{j,k}\}$ $\gets$ $(\{\Lambda_{j,k}\},p)$
\STATE partitions $\gets$ partition $\{\Lambda_{j,k}\}$ into $M$ parts.
\FOR{partition $m$ in  partitions}
\STATE $(\{\Lambda_{j,k}\}, -1)$, $(\{\Lambda_{j,k}\}, 1)$, $\ldots$, $(\{\Lambda_{j,k}\}, P_m)$ $\gets$ DBSCAN($\{\Lambda_{j,k}\}$, $\eta$, $D_2$) to separate the $\{\Lambda_{j,k}\}$ into $P_m$ clusters.
\FOR{$p_m=1$ to $P_m$}
\STATE $\Gamma_{p_m} \gets [(t_k, y_{j,k}) \mbox{ for } y_{j,k} \in (\{\Lambda_{j,k}\}, {p_m})]$
\STATE $\gamma_{p_m} \gets$ $\min(t_k)$ for $(t_k, y_{j,k}) \in \Gamma_{p_m}$.
\STATE $d_{p_m} \gets$ $\max(t_k) - \min(t_k)$ for $(t_k, y_{j,k}) \in \Gamma_{p_m}$
\STATE $B_{p_m} \gets$ $(\max(y_{j,k}) - \min(y_{j,k}))/2$ for $(t_k, y_{j,k}) \in \Gamma_{p_m}$
\STATE Use linear regression on 50\% of $\Gamma_{p_m}$ that has highest $|\sigma_{n,k}(x)|$ to obtain the parameter estimation for $\omega_{p_m}$.
\STATE $\tau_{\mbox{\scriptsize{rmse}},{p_m}}$ $\gets$ $1\%$ of $\mbox{mean}(y_{j,k})$ for $(t_k, y_{j,k}) \in \Gamma_{p_m}$.
\IF{$\mbox{RMSE}_{p_m}$ defined in \eqref{eq:resresidue} $>$ $\tau_{\mbox{\scriptsize{rmse}},{p_m}}$}
\STATE Remove $\omega_{p_m}, B_{p_m}, d_{p_m}$, and $\gamma_{p_m}$ from estimate parameters and return fail to detect this part of the signal.
\ENDIF
\ENDFOR
\ENDFOR
\STATE Compute $\phi'_{p_m}$ for each partition. If any two or more of the $\phi'_{p_m}$'s are within $10\%$ each other,  then average them and update these parameters to $\omega_{p}, B_{p}, d_{p}$ and $t_{p}$.
\STATE \textbf{Return: } $\omega_{p}, B_{p}, d_{p}$, and $\gamma_{p}$ for $p$ when signals crossover.


 \caption{Handling signals crossover}
 \label{alg:crossover}
 \end{algorithmic}
 \end{algorithm}

The part where signals crossover is detected by computing the RMSE between $y_{j,k}$ for $(t_k, y_{j,k}) \in \Gamma_{p}$ and the regression line using the formula
\bea
\mbox{residue}_{j,k,p} &=& |y_{j,k} - (\omega_{p}+B_{p}/d_{p}(t_k-\gamma_{p}))|, \label{eq:resresidue} \\
\mbox{RMSE}_p &=& \sqrt{\frac{1}{D}\sum_{k=1}^D\sum_{j=1}^{J_k} \mbox{residue}^2_{j,k,p}}. \label{eq:rmse}
\eea
If the $\mbox{RMSE}_p$ is larger that a small threshold, we then proceed to the refinement algorithm, Algorithm \ref{alg:crossover}.

To refine this part of the signal, we partition the signal into small intervals as shown in Figure \ref{fig:algo3}. Finally, we set a threshold on the $\phi_{p_m}'(t)$ and connect them if the differences between two or more $\phi_{p_m}'(t)$'s are within the threshold. The final result is shown in Figure \ref{fig:final}.

\begin{figure}[H]
\begin{center}
\includegraphics[scale=.2]{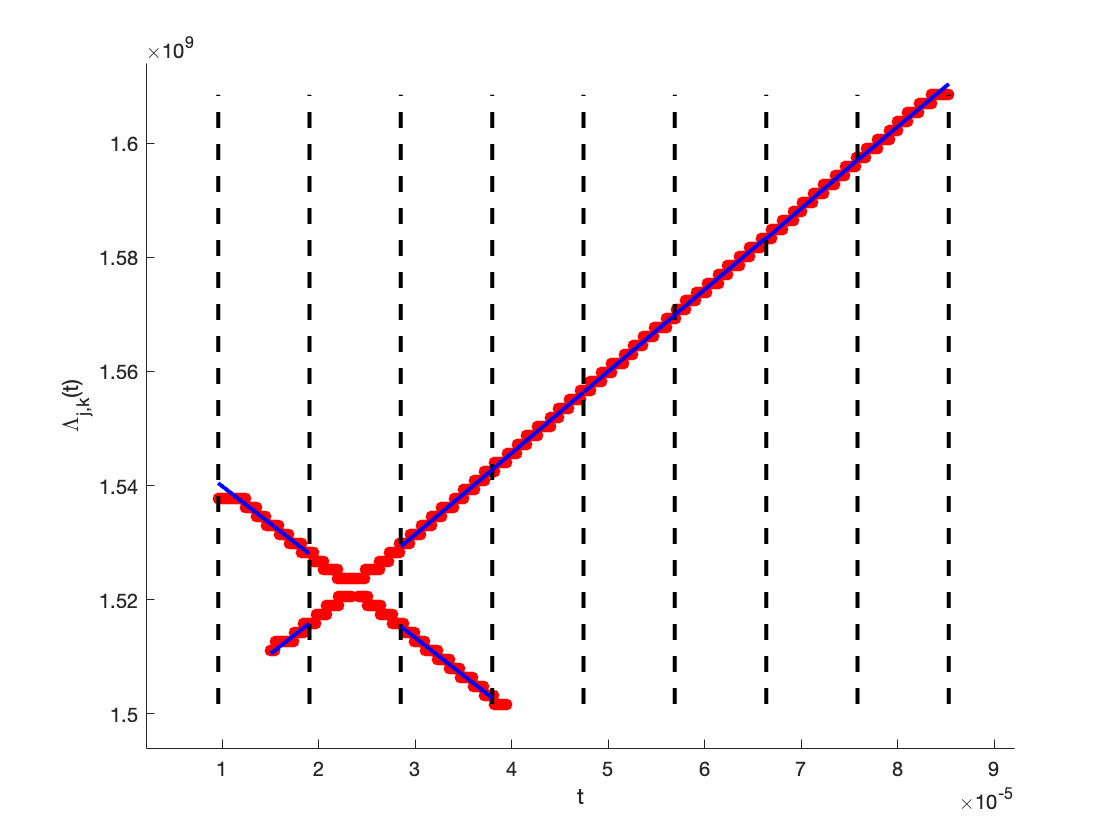}
\includegraphics[scale=.2]{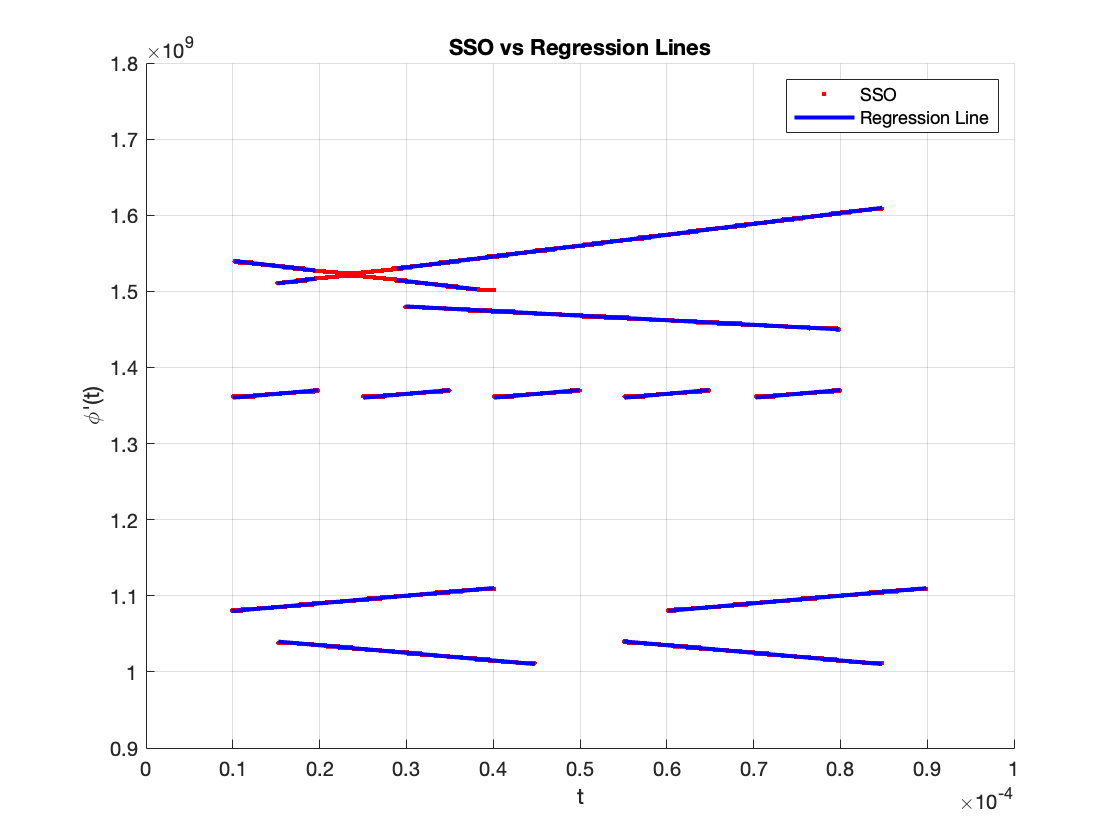}
\end{center}

\vspace*{-1cm}
\caption{(Left) The plot of crossing signals after performing step 1 - 13 of the Algorithm \ref{alg:crossover} by choosing $M = 8$ partitions. (Right) The comparison plot between the SSO result vs the regression lines of $\{\Lambda_{j,k}\}$ after performing step 1 - 13 of the Algorithm \ref{alg:crossover}.}
\label{fig:algo3}
\end{figure}

\begin{figure}[H]
\begin{center}
\includegraphics[scale=.2]{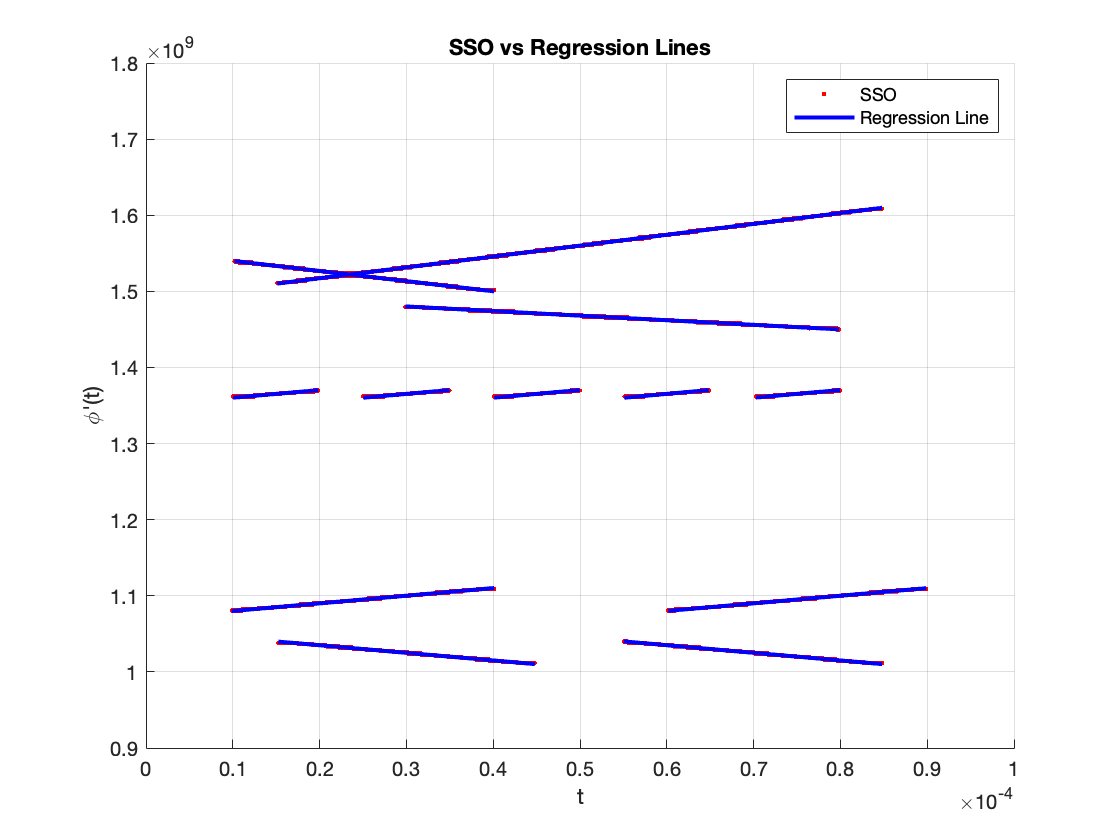}
\includegraphics[scale=.2]{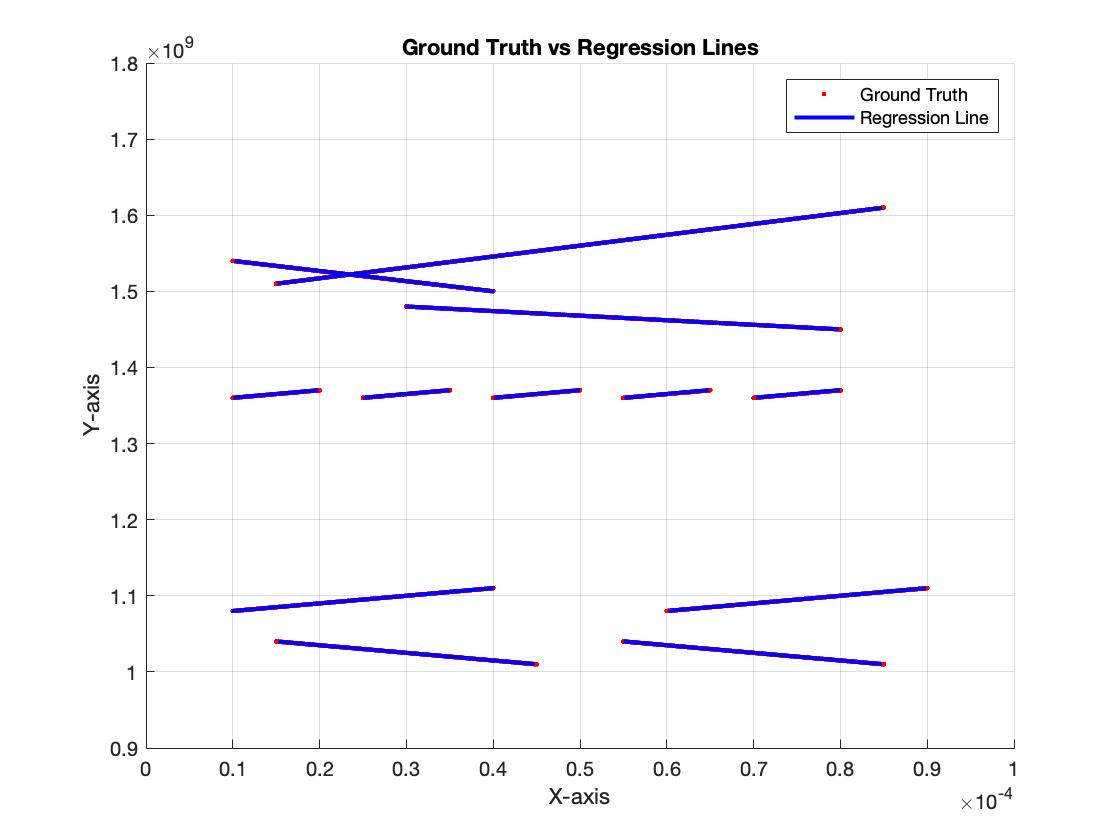}
\end{center}
\caption{The final comparison plot between the SSO result vs the regression lines of $\{\Lambda_{j,k}\}$ (left) and the ground truth vs the regression lines (right).}
\label{fig:final}
\end{figure}

\vspace*{-0.5cm}
\bhag{Numerical experiments}\label{bhag:numerical}

\subsection{Numerical results}\label{bhag:numerial_results}
To assess the robustness and performance of the proposed Signal Separation Operator (SSO) method, we conducted a series of controlled numerical experiments across a variety of scenarios. These experiments were designed to evaluate the impact of key signal characteristics, including minimal frequency separation, SNR, sampling rate, and the presence of frequency crossovers.

The details for all the experiments are listed in Tables~\ref{tab:result_table_1} and \ref{tab:result_table_2}.

\subsubsection{Minimal separation}\label{bhag:minsep}
The algorithm demonstrates strong performance when signal components are sufficiently separated in frequency. As illustrated in Figure \ref{fig:4results}, the method accurately recovers the instantaneous frequencies and associated parameters in both non-intersecting and cleanly intersecting cases. However, performance degrades in scenarios where the minimal separation between components becomes too small. In such cases, the spectral peaks associated with different components may become indistinct, leading to inaccurate clustering and parameter estimation. This limitation is primarily due to the inherent resolution bounds imposed by the kernel size and the sampling interval.

\begin{figure}[H]
\begin{center}
\includegraphics[scale=.20]{0_new_fig/1_result}
\includegraphics[scale=.20]{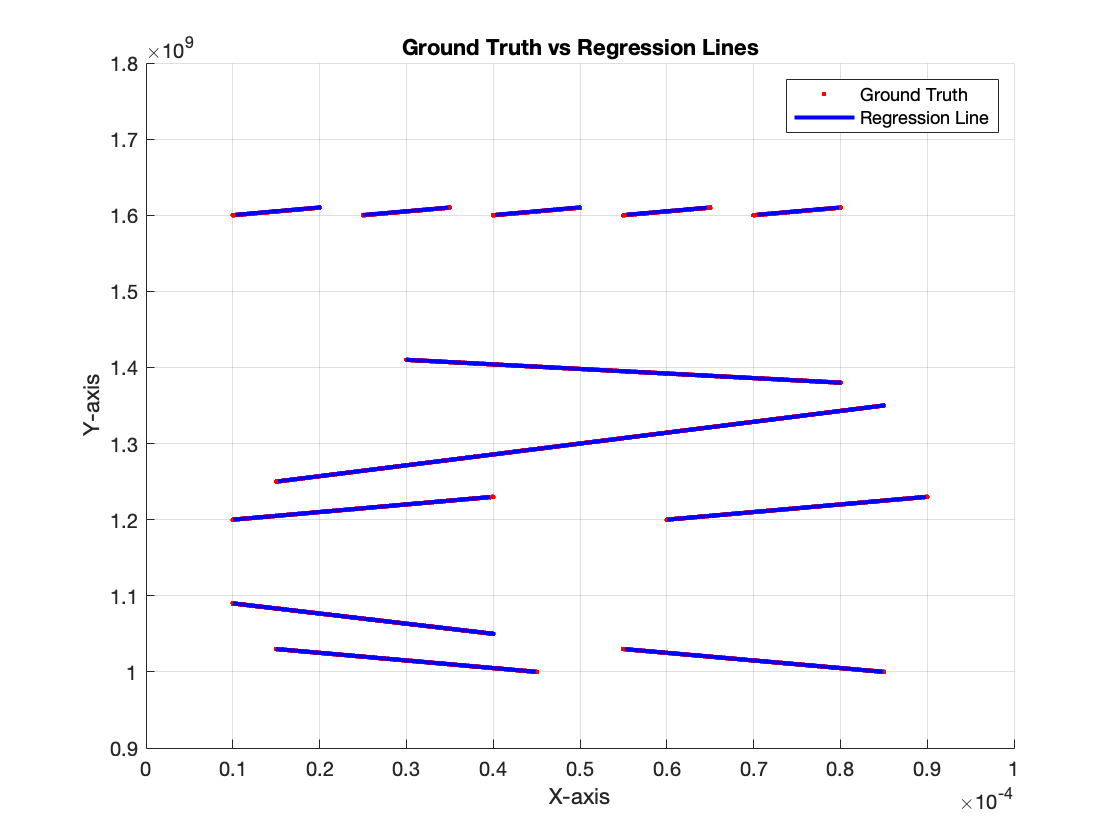}\\
\includegraphics[scale=.20]{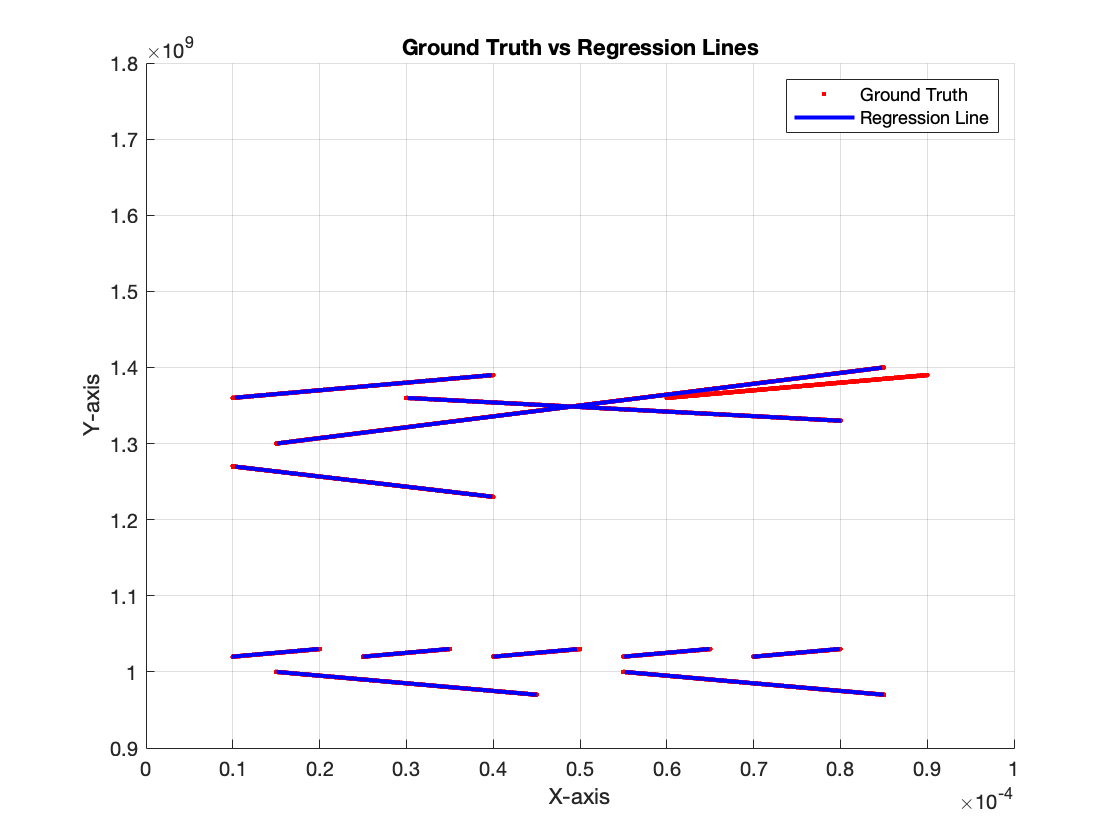}
\includegraphics[scale=.20]{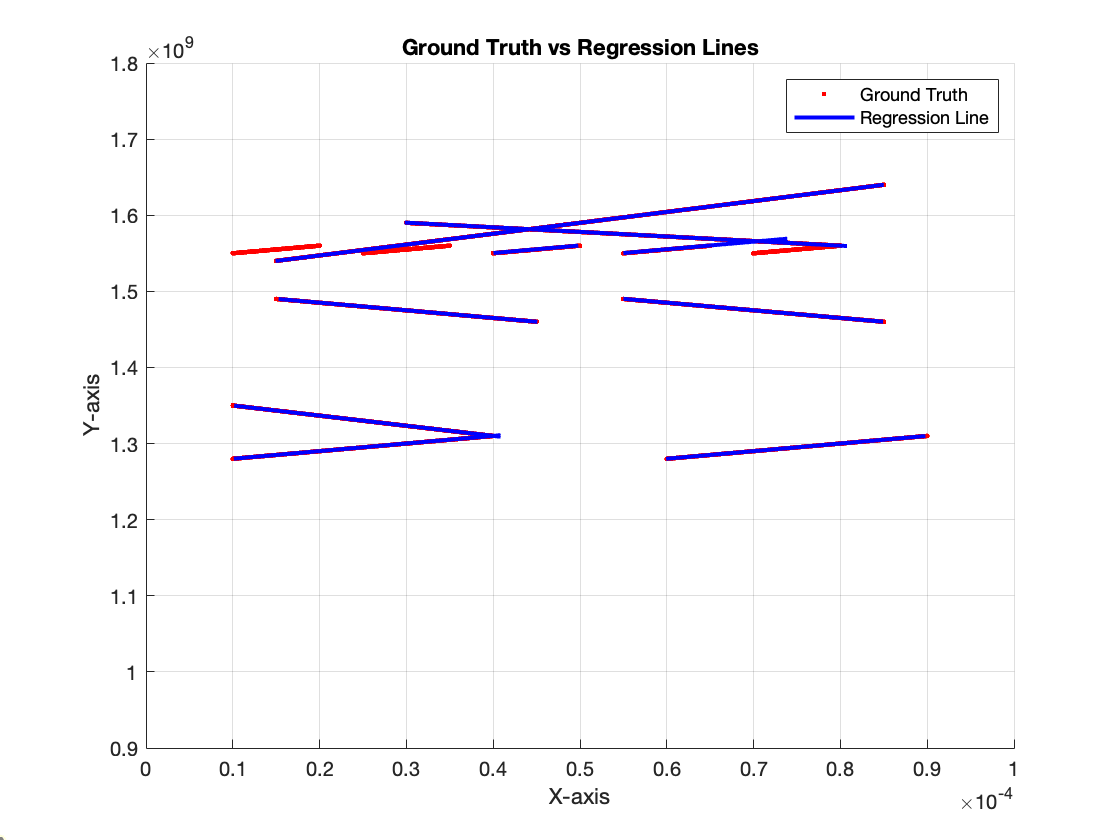}
\end{center}
\caption{Ground truth vs regression result plots for example 1 (top left), 2 (top right), 3 (bottom left), and 4 (bottom right) with SNR -10 dB and sampling rate 0.5 GHz. Our algorithm works well in case of no crossover and clear crossover signals, but does not work so well when the minimum separation among signals is low.}
\label{fig:4results}
\end{figure}

\vspace*{-0.5cm}
\subsubsection{Robustness to noise}\label{bhag:robust}
The method remains effective even under high noise conditions. 
To illustrate, we compute the the root mean square error (RMSE) in each experiment by
\be\label{eq:exptrmse}
\mbox{RMSE}=\mbox{mean}\left(\sqrt{\frac{1}{D}\sum_{k=1}^D\sum_{j=1}^{J_k}\left(\frac{\phi_{j,k}'(t_k)-\widehat{\phi_{j,k}'(t_k)}}{\phi_{j,k}'(t_k)}\right)^2}\right).
\ee
In our experiments, we took the mean over 16 trials for each choice of the signal, the sampling rate, and SNR.
As shown in Figure \ref{fig:4rmse}, the root mean square error (RMSE) remains low for SNR levels ranging from $10$ dB to $-30$ dB. 
The algorithm maintains RMSE values within acceptable bounds, provided that the sampling rate is sufficient. 
The combination of localized kernel averaging and peak detection in the frequency domain allows the method to suppress noise effectively.

\begin{figure}[H]
\begin{center}
\includegraphics[scale=.20]{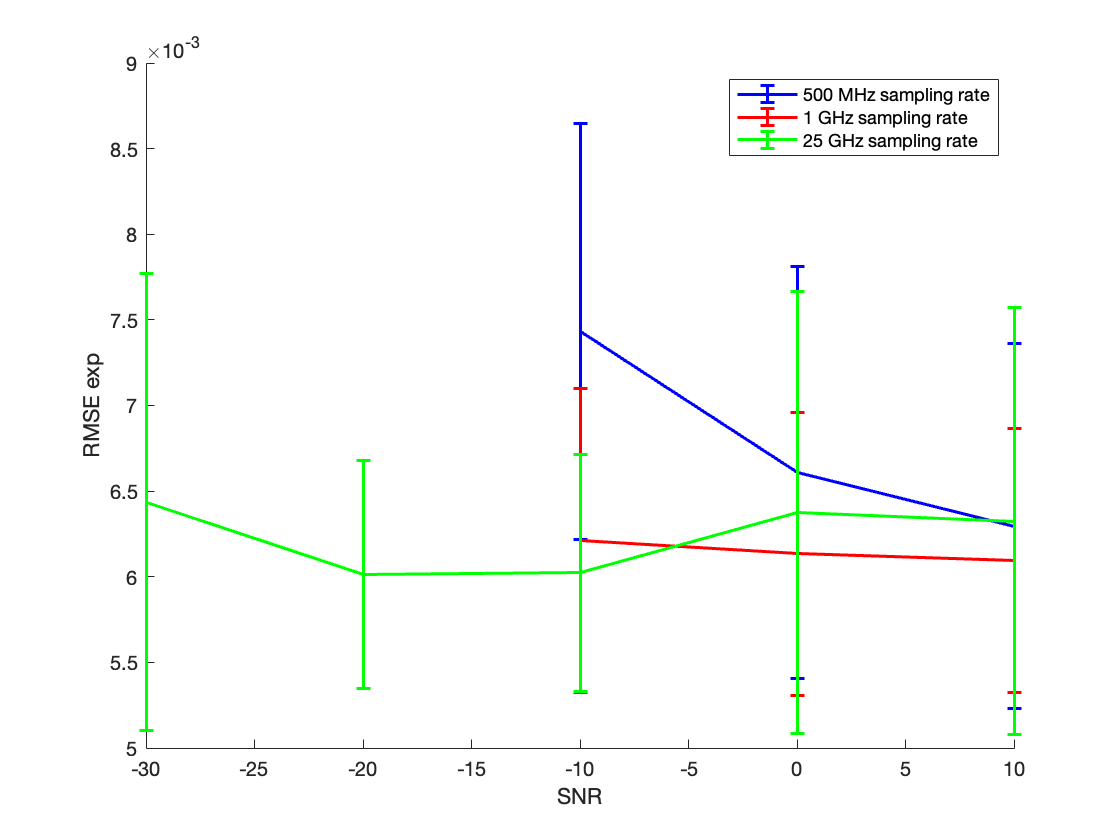}
\includegraphics[scale=.20]{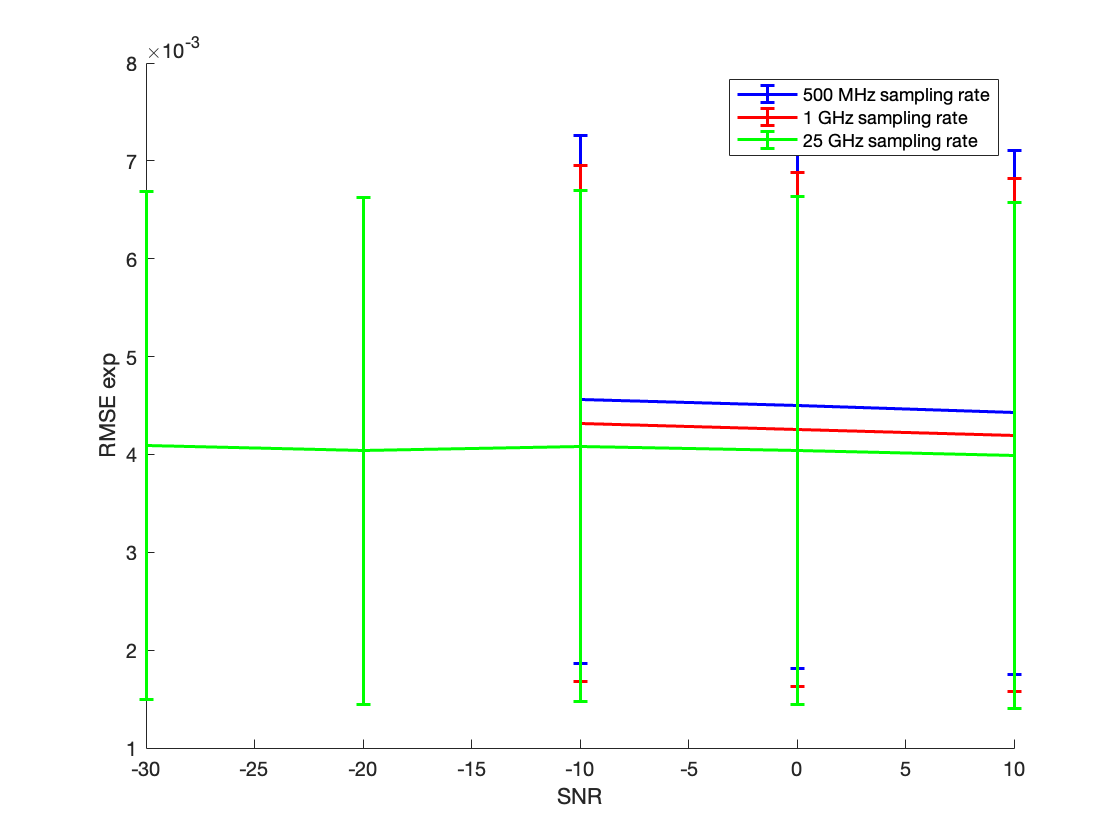}\\
\includegraphics[scale=.20]{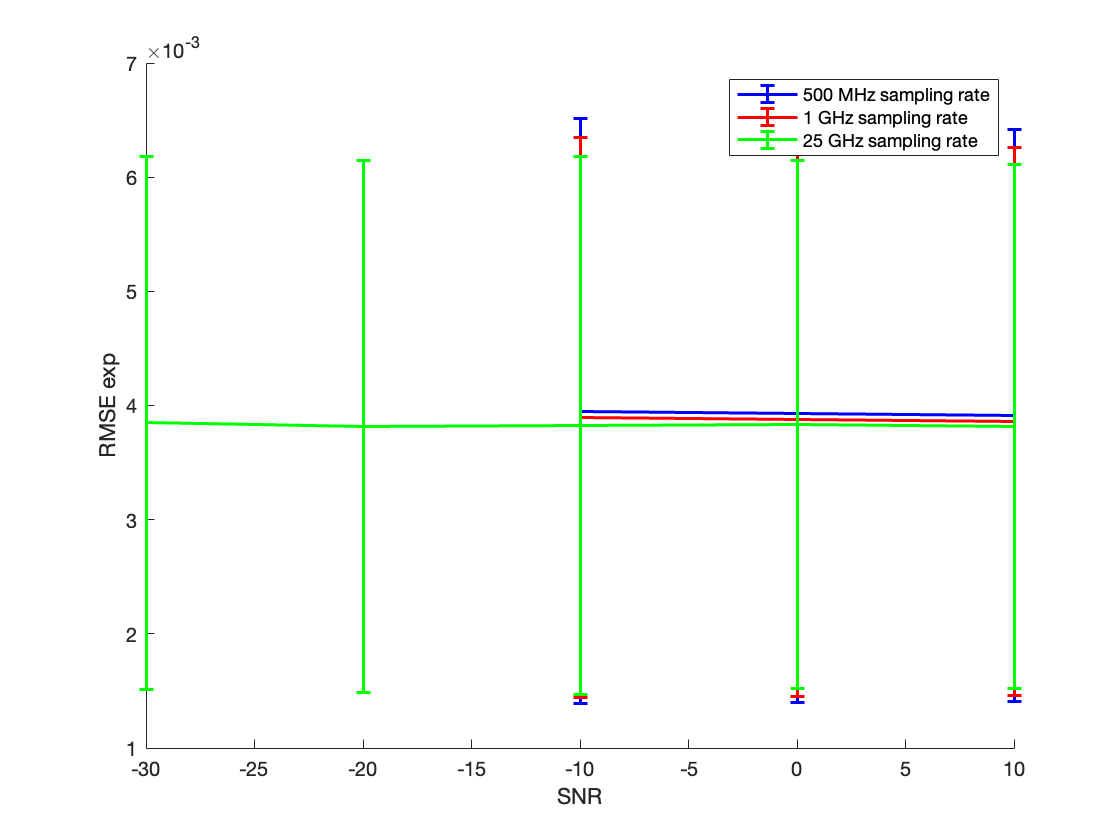}
\includegraphics[scale=.20]{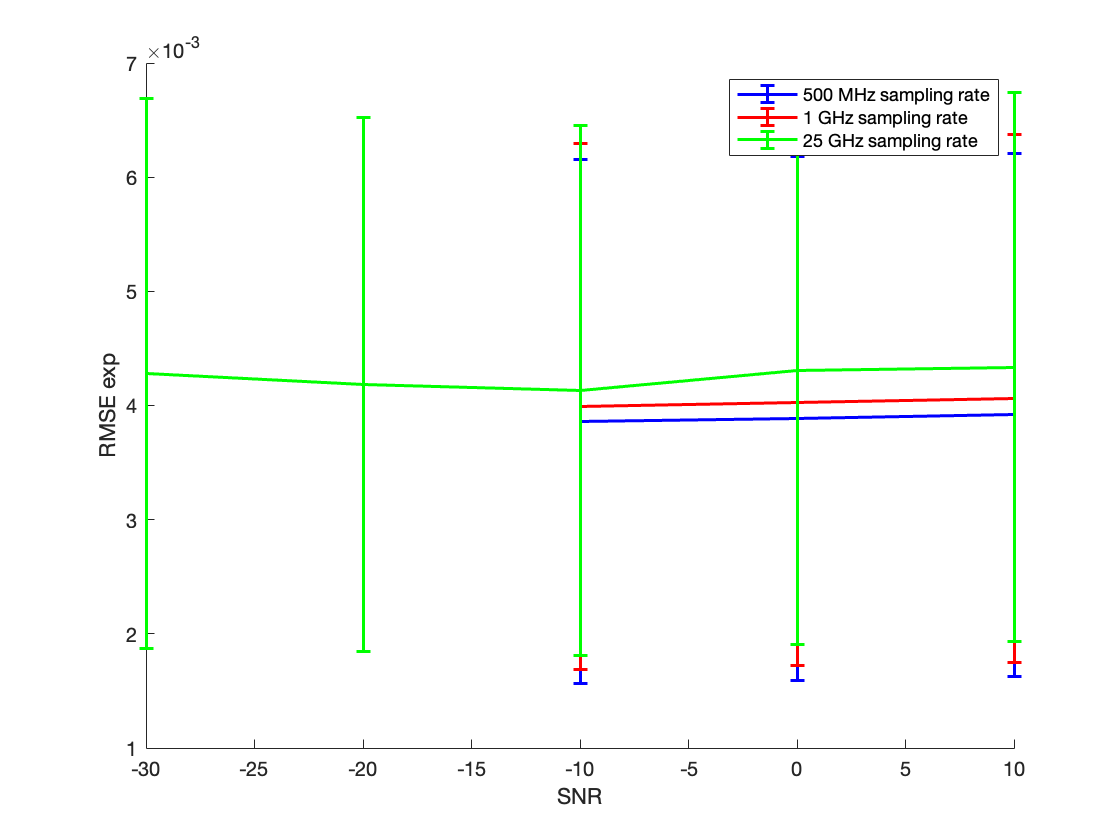}
\end{center}
\caption{The sampling rate performance plots of vary SNR vs RMSE for example 1 (top left), 2 (top right), 3 (bottom left), and 4 (bottom right). The RMSE was comparing between the ground truth and the estimation parameters of $\phi(t)$ in \eqref{eq:pulsephasedef}. }
\label{fig:4rmse}
\end{figure}

\vspace*{-0.5cm}
\subsubsection{Sampling Rate}\label{bhag:samplingrate}
Sampling rate plays a critical role in frequency resolution and noise resilience. As shown in Figure \ref{fig:4heatmap}, increasing the sampling rate from 0.5 GHz to 25 GHz significantly improves the accuracy of instantaneous frequency estimation, particularly in cases with minimal separation or closely spaced crossovers. In this figure, a heat map is used to visualize the estimation error computed in equation $\eqref{eq:resresidue}$, where the color indicates the magnitude of error. This color-coded representation helps to clearly highlight the improvement in performance with higher sampling rates. Higher sampling rates yield a finer time grid, which enhances the operator's ability to resolve subtle frequency variations and distinguish between overlapping components.

\begin{figure}[H]
\begin{center}
\includegraphics[scale=.14]{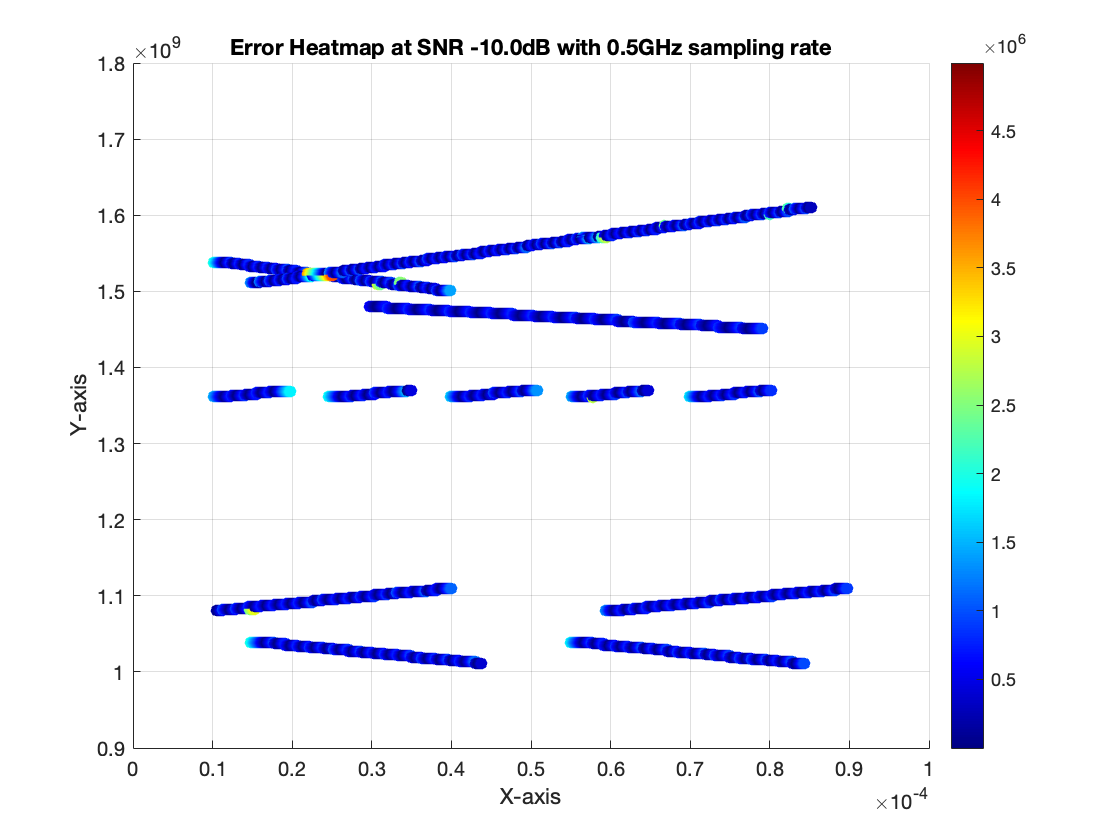}
\includegraphics[scale=.14]{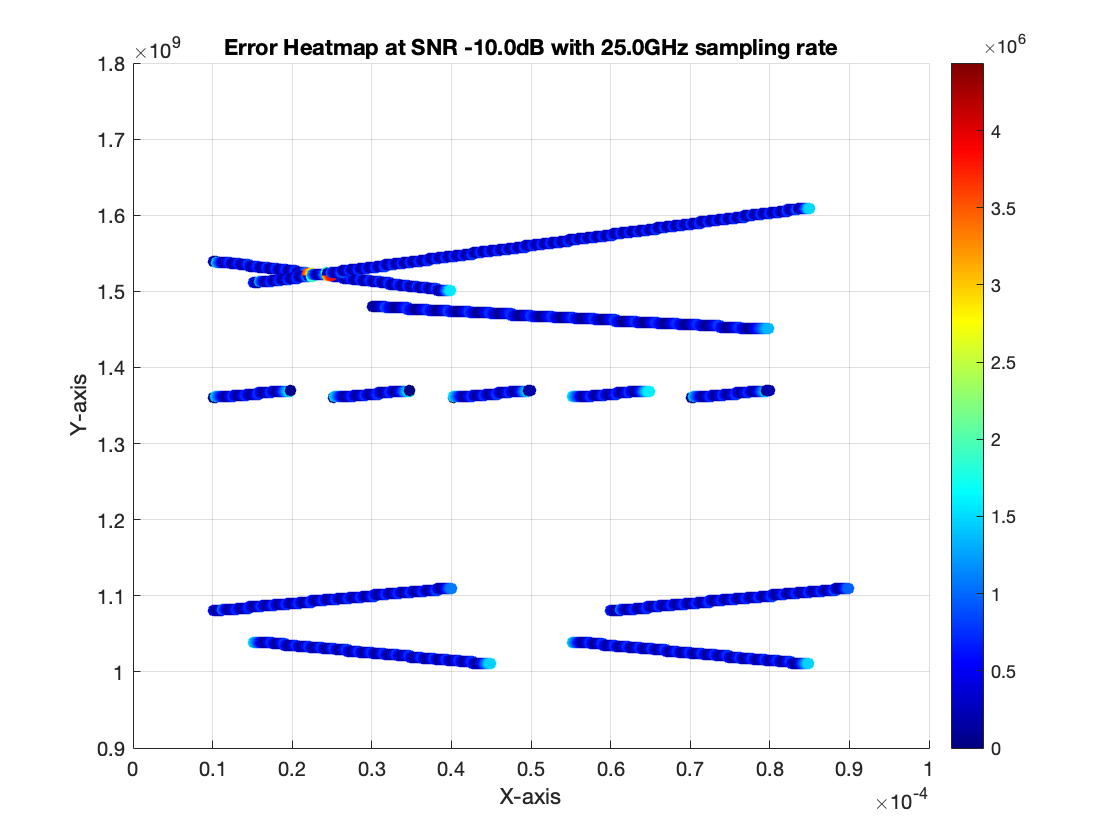}\\
\includegraphics[scale=.14]{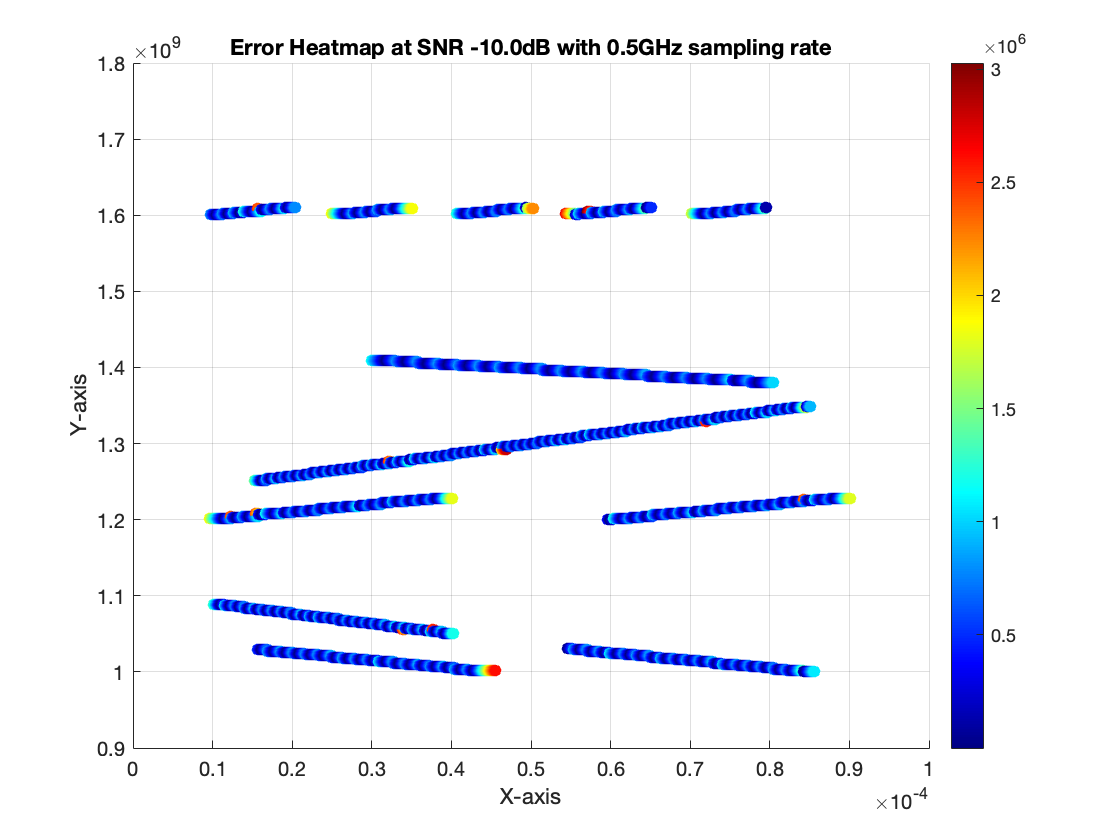}
\includegraphics[scale=.14]{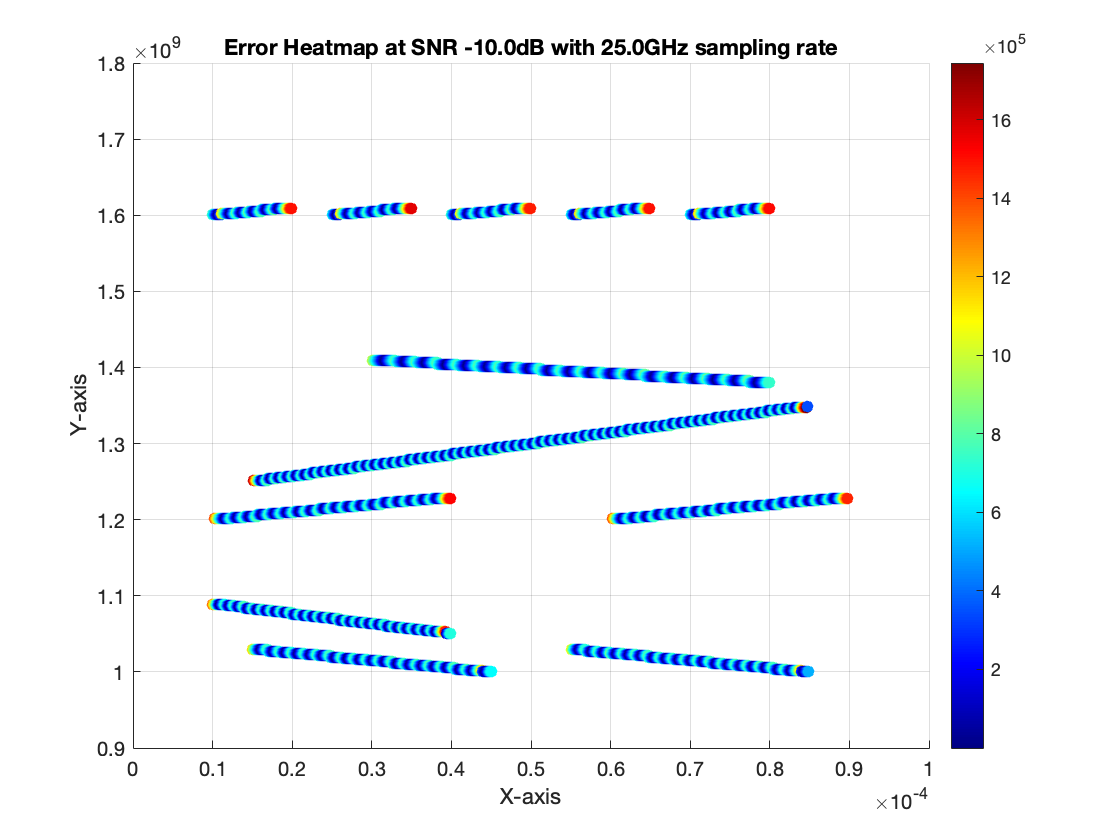}\\
\includegraphics[scale=.14]{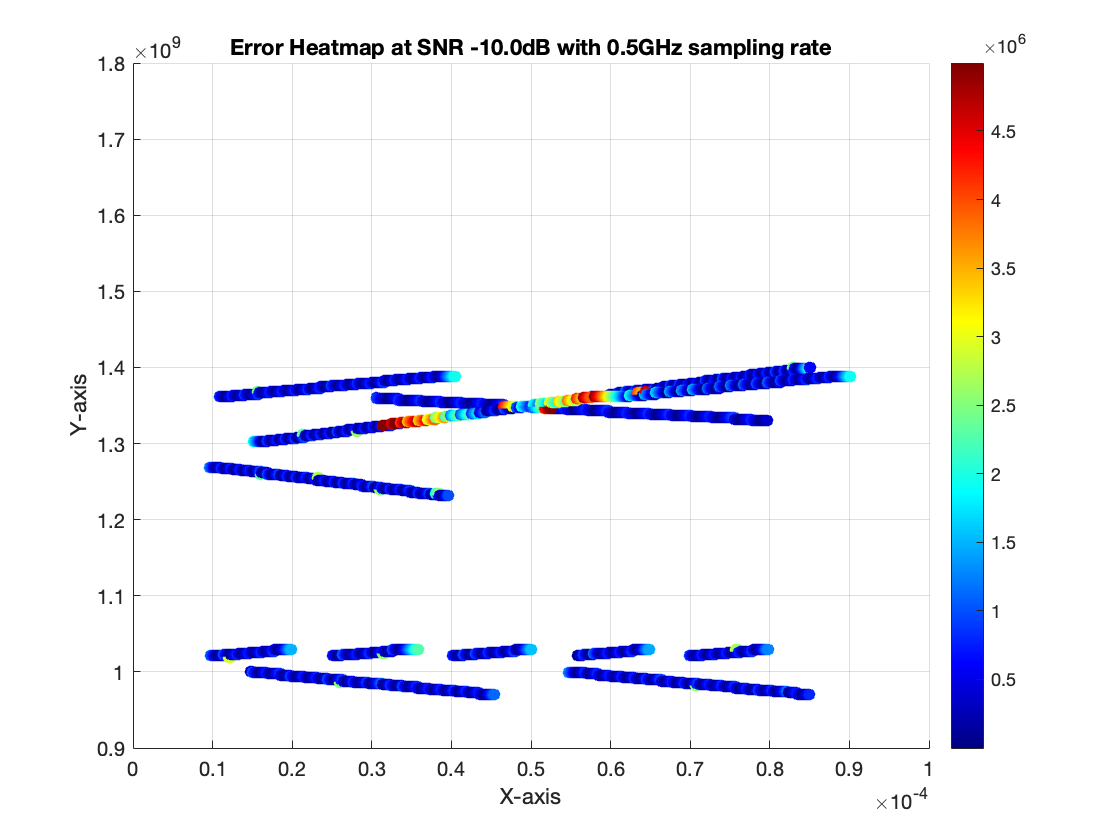}
\includegraphics[scale=.14]{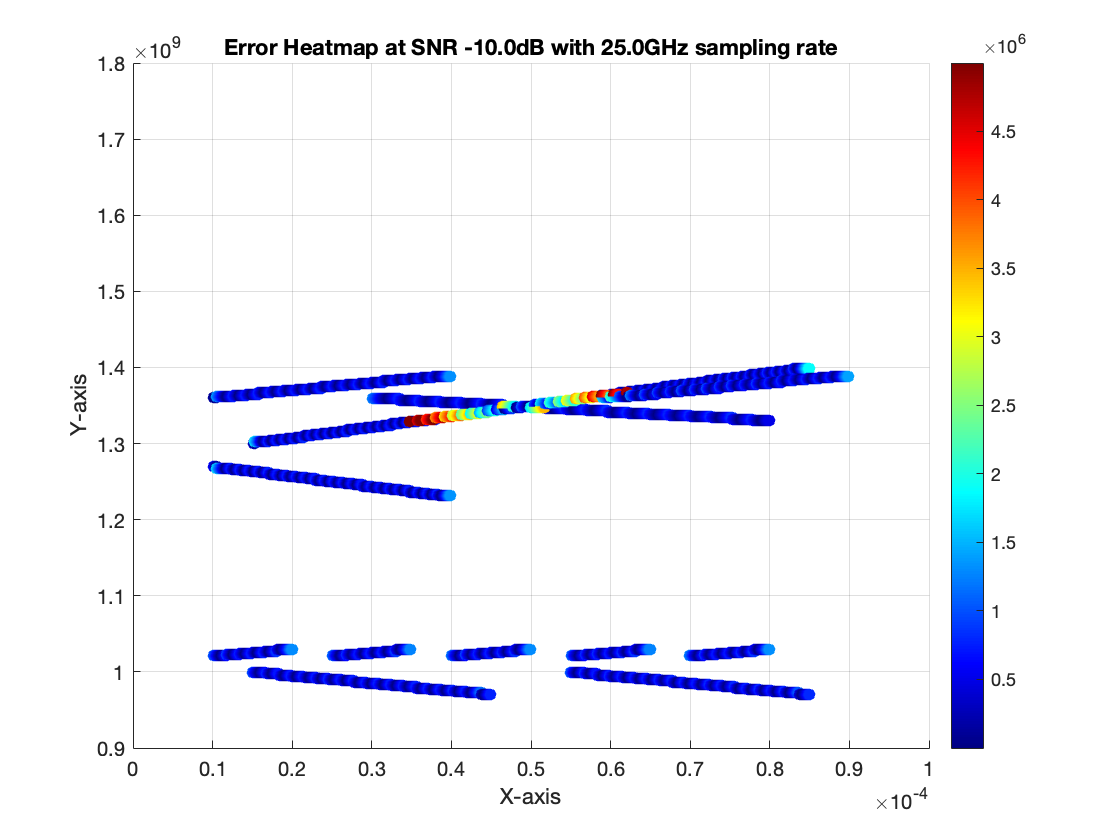}\\
\includegraphics[scale=.14]{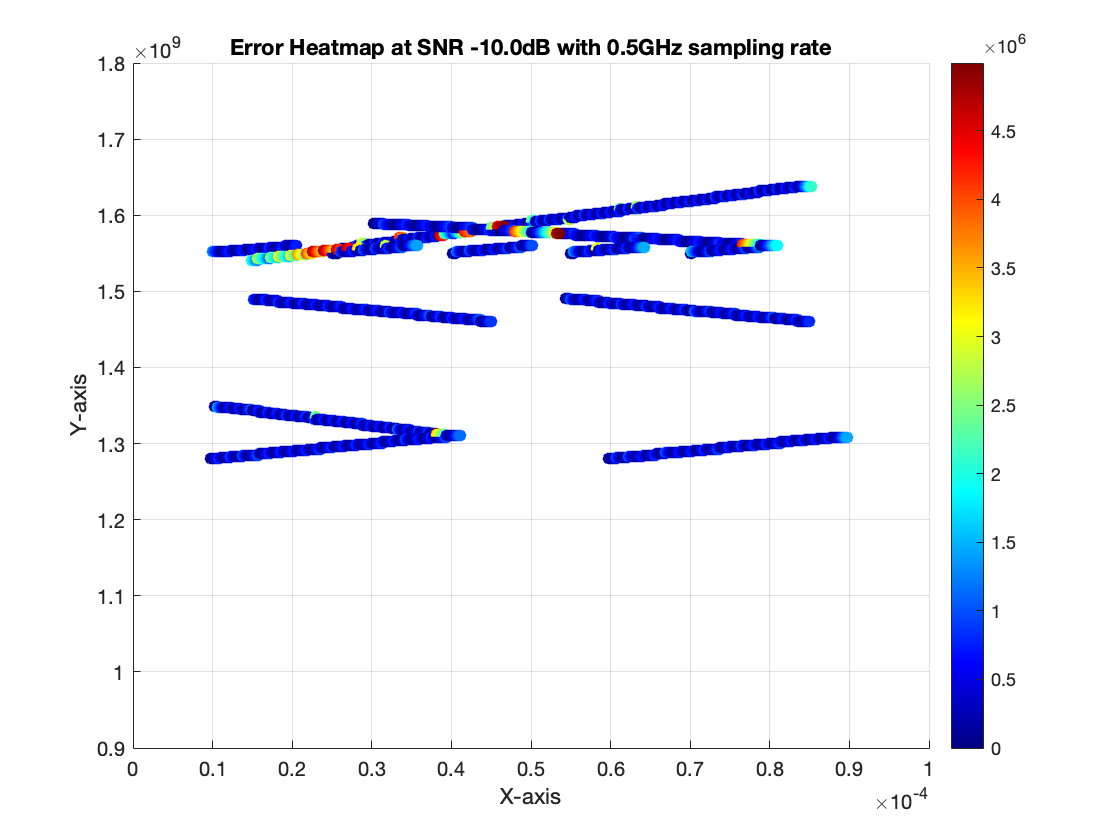}
\includegraphics[scale=.14]{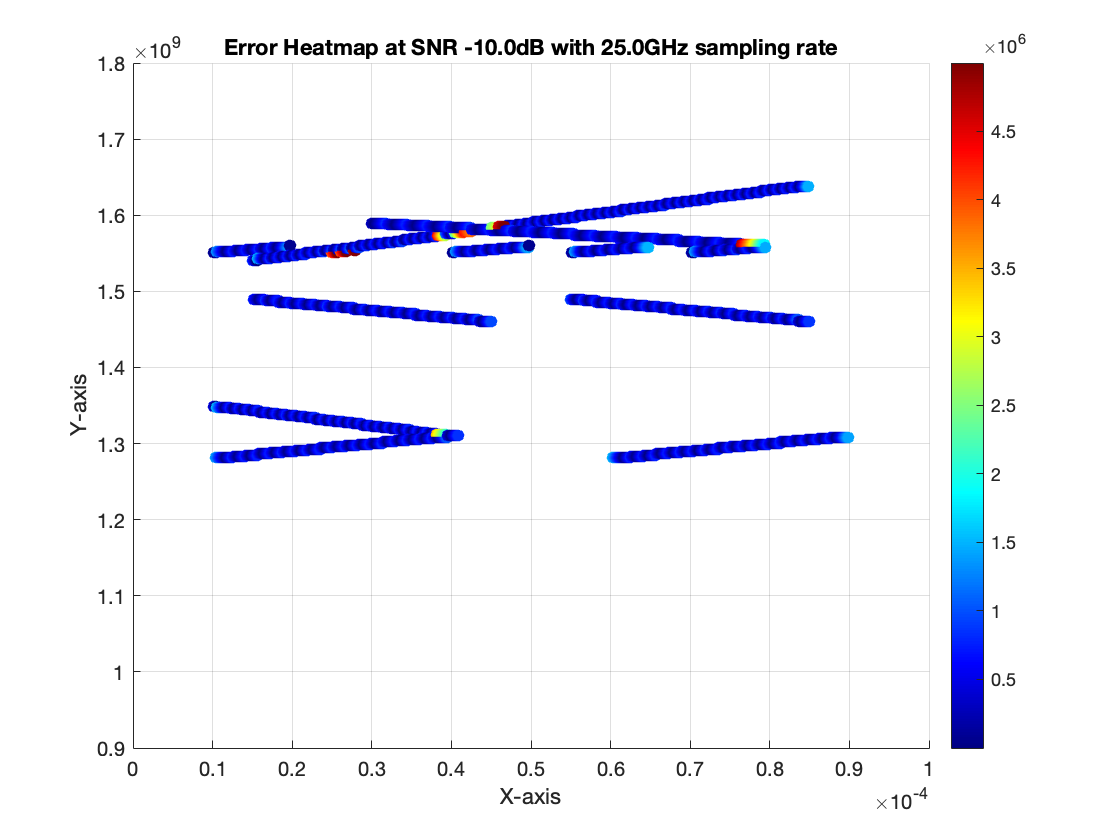}
\end{center}
\caption{Heat map of the errors for example 1, 2, 3, and 4 with SNR -10 dB at sampling rate 0.5 GHz (left) vs sampling rate 25 GHz (right). Ones may notice that the higher sampling rate examples can deal with the smaller minimal separation among signals better.
}
\label{fig:4heatmap}
\end{figure}

\subsubsection{Comparison with SST}\label{bhag:sst}
We recall that the most often used method for finding IF's is EMD, which is purely heuristic.
The method known as synchrosqueezing transform method (SST) is based on a solid mathematical foundation, and works better than EMD.
In this section we compare our results with those obtained by using SST.
As mentioned earlier, there are many versions of SST. 
We use the implementation of the SST algorithm as described in Section III of \cite{thakur2013synchrosqueezing}.
Figure \ref{fig:base_sst} shows that our implementation of this algorithm works for certain signals $f(t) = e^{2\pi i(15\times 10^7 t-5\times10^{12}t^2)} + e^{2\pi i(5\times 10^7t+5\times10^{12}t^2)}$ at high SNR levels and deteriorates at lower levels.
We observed that the maximum frequency that the SST method can take is $0.2$ GHz at sampling rate $0.5$ GHz while our dataset has maximum frequency at $1.6$ GHz.
Figure \ref{fig:4sst} illustrates that the SST fails to work in the regime in which we are interested in this chapter.

\begin{figure}[H]
\begin{center}
\includegraphics[scale=.12]{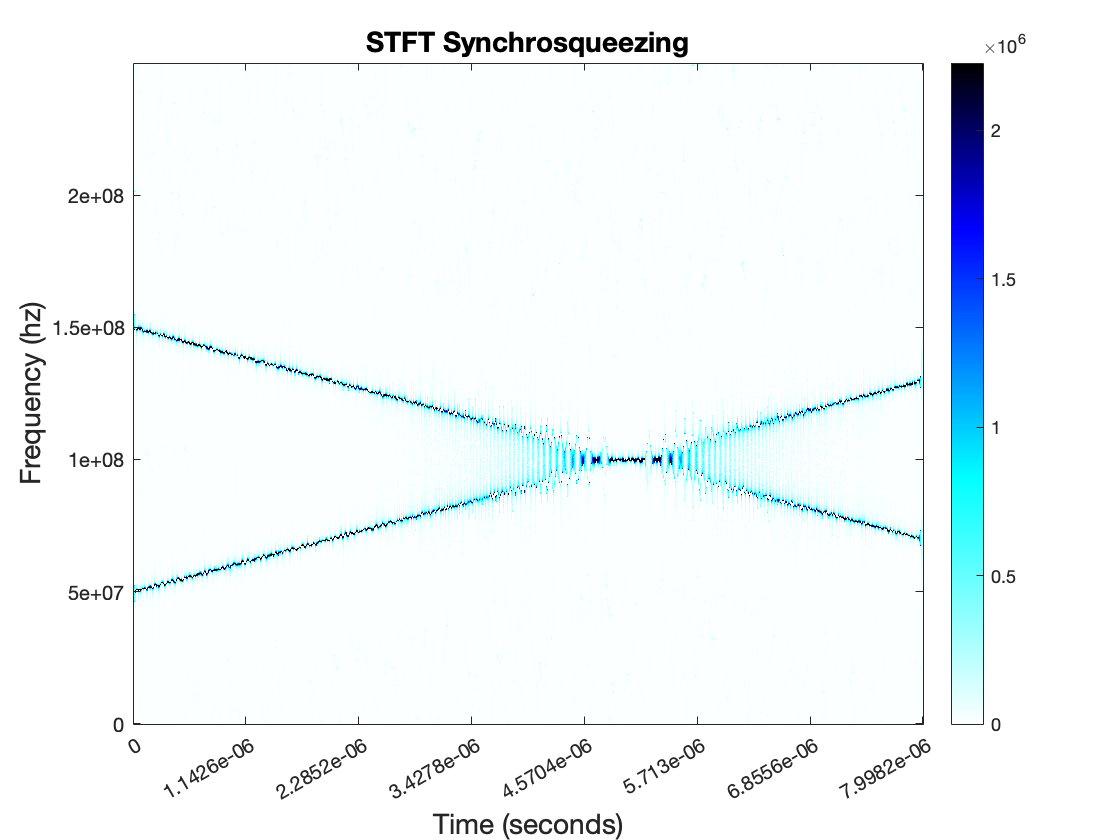}
\includegraphics[scale=.12]{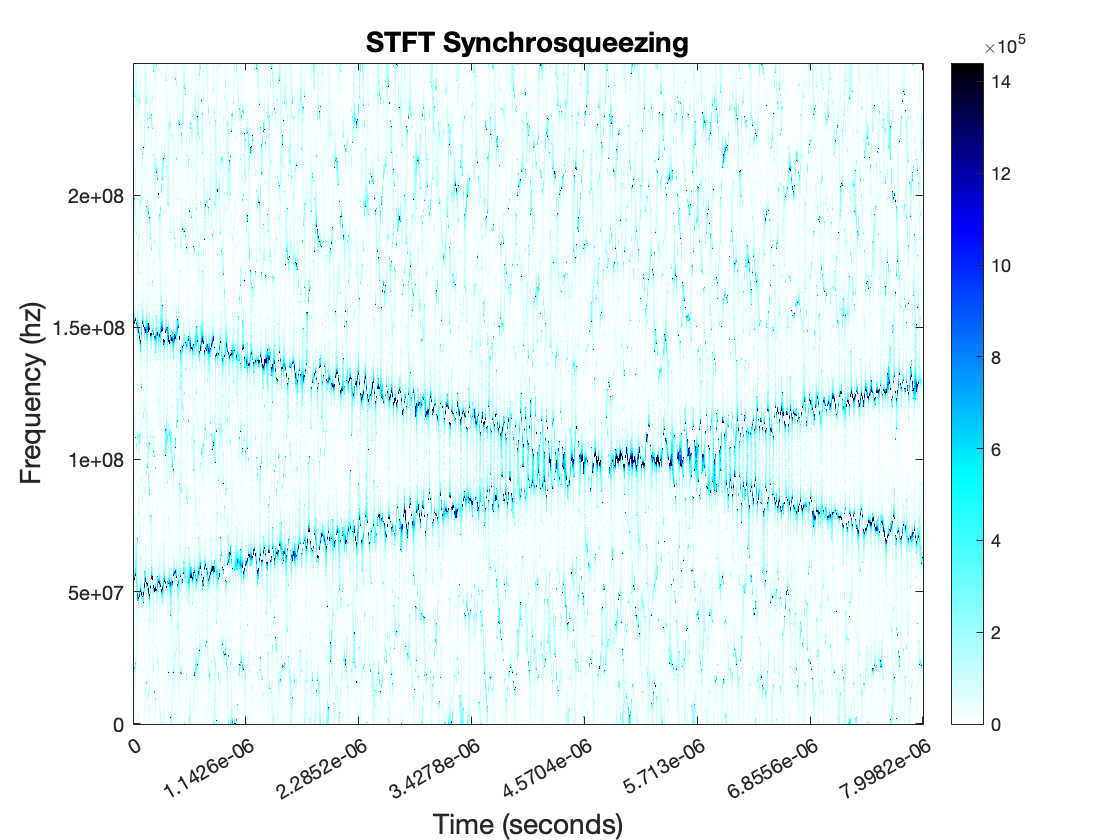}
\includegraphics[scale=.12]{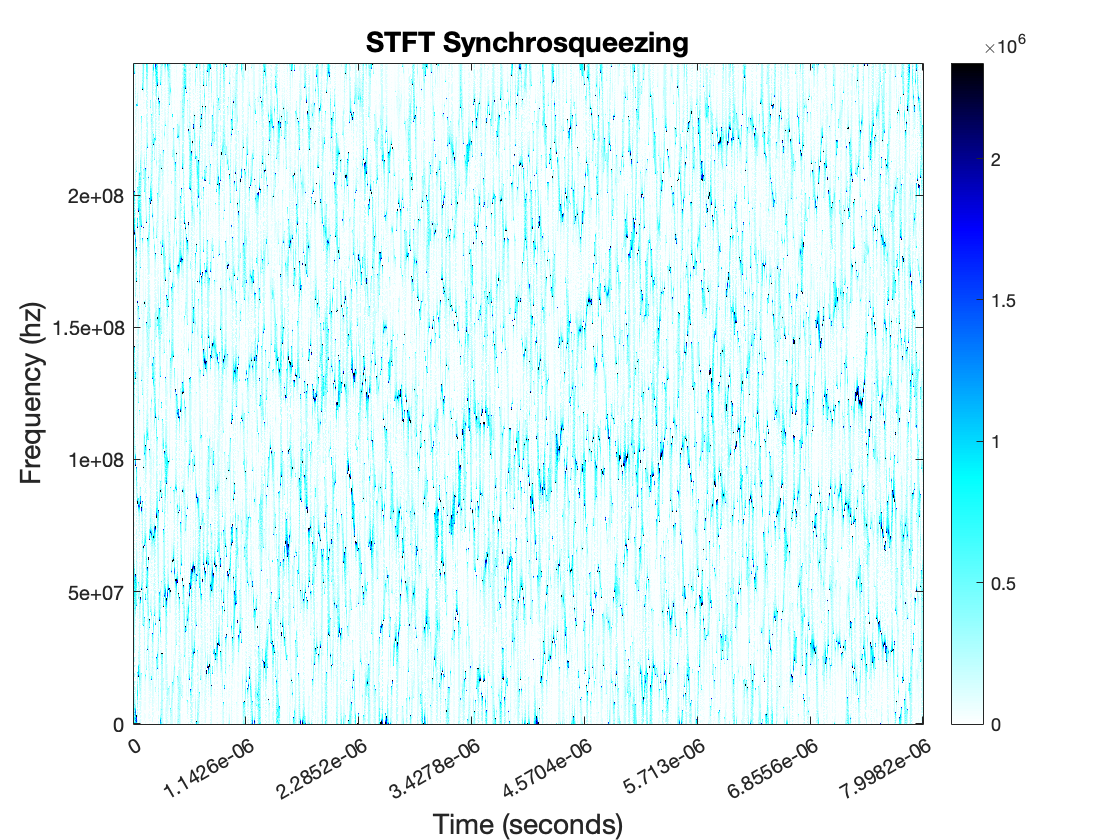}
\end{center}
\caption{The examples of SST plot with sampling rate 0.5 GHz at SNR 20 dB (left), SNR 0 dB (middle), and SNR -10 dB (right) for signals $f(t)$.}
\label{fig:base_sst}
\end{figure}

\begin{figure}[H]
\begin{center}
\includegraphics[scale=.10]{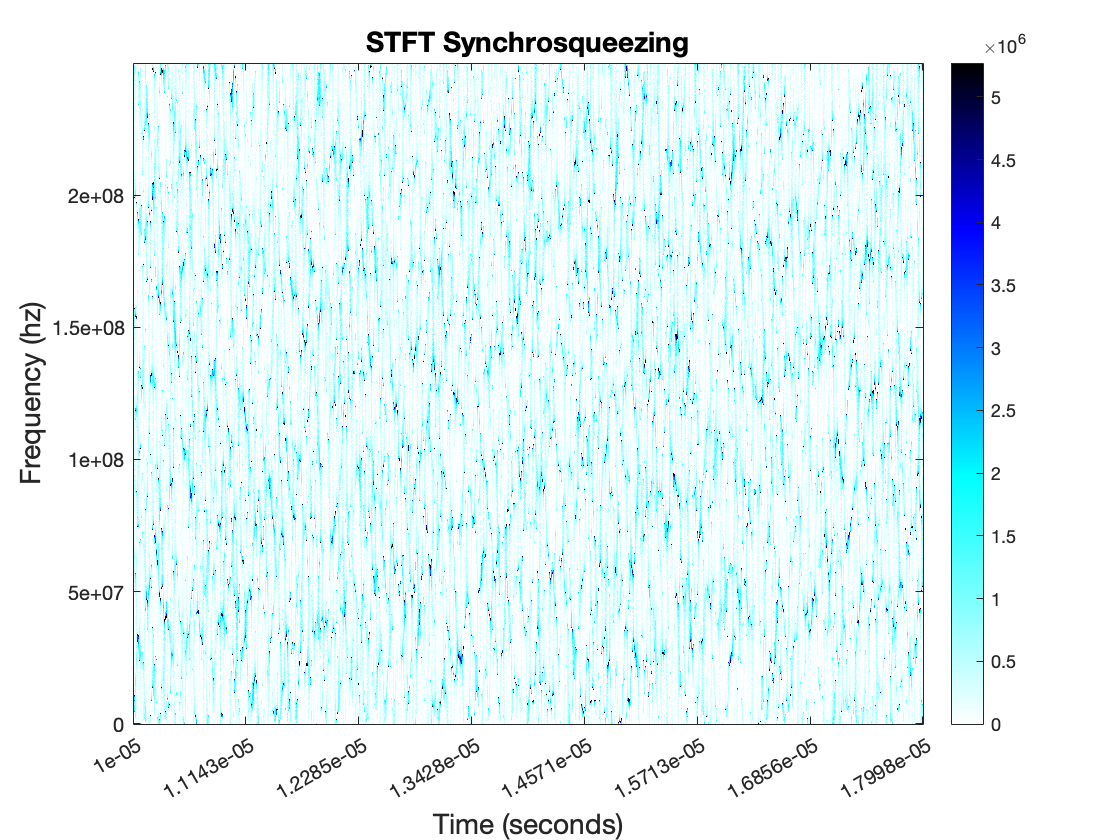}
\includegraphics[scale=.10]{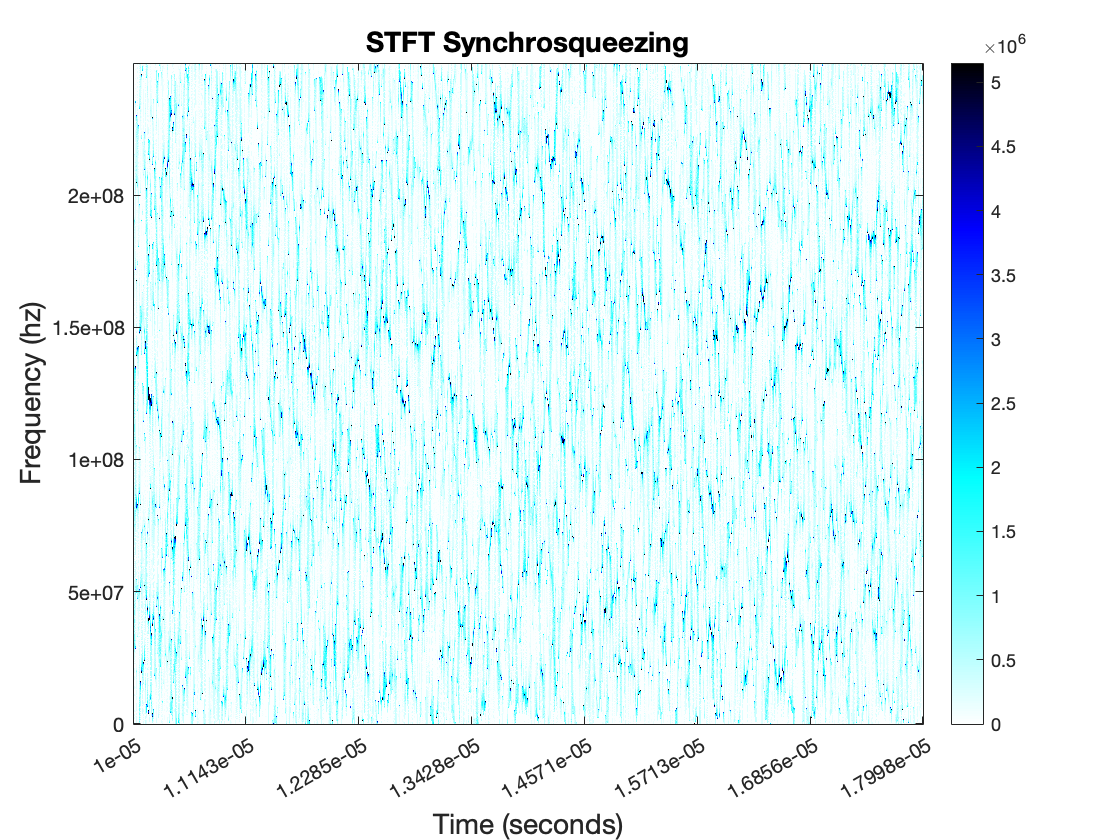}
\includegraphics[scale=.10]{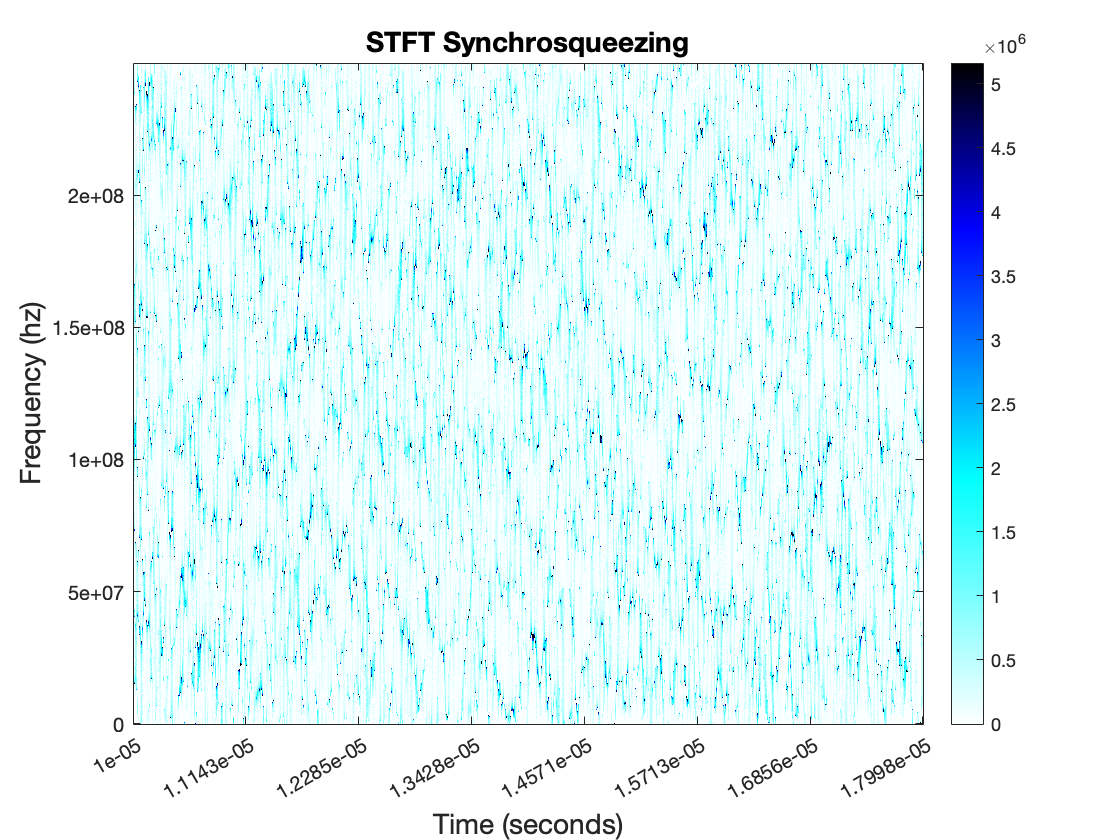}
\includegraphics[scale=.10]{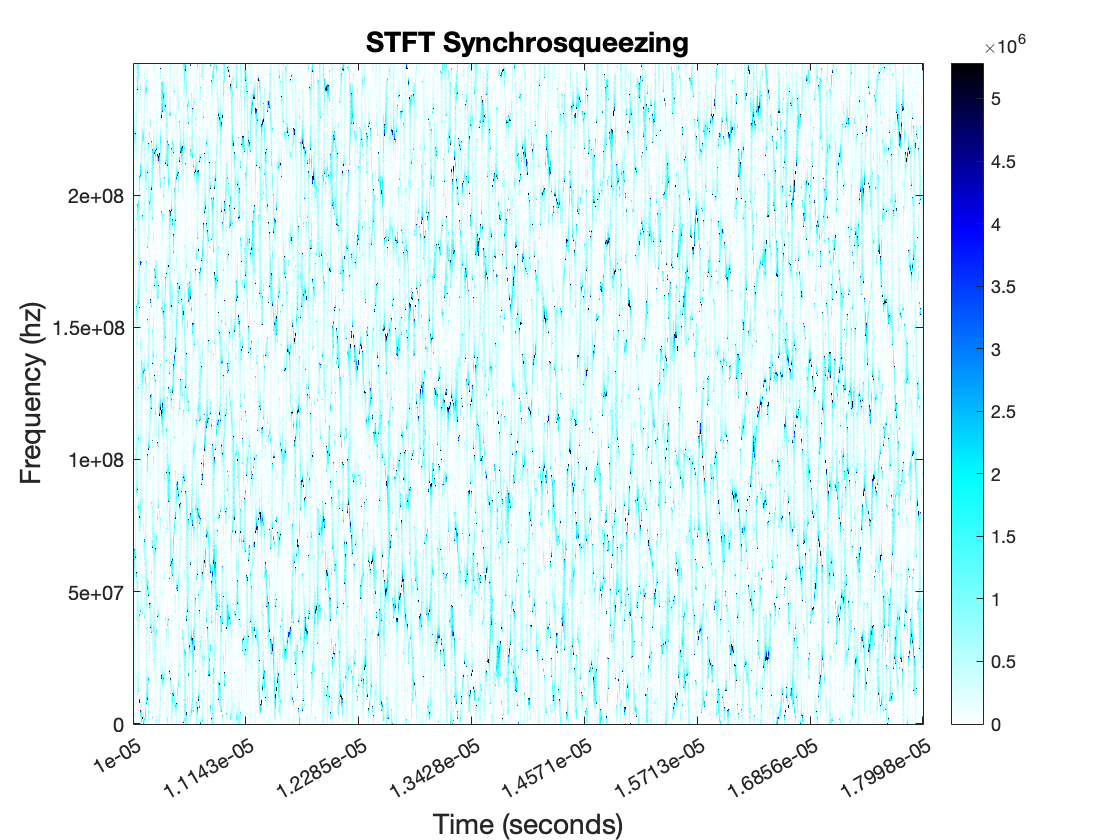}
\end{center}
\caption{The SST plots for example 1, 2, 3, and 4 with SNR 0 dB and sampling rate 0.5 GHz.}
\label{fig:4sst}
\end{figure}

\vspace*{-0.5cm}
\bhag{Conclusions}\label{bhag:conclusion}
\vspace*{-0.5cm}
Separation of a superposition of blind source signals is an important problem in audio and radar signal processing with  many applications, including electronic warfare and electronic intelligence.
In this chapter, we have modified a method proposed in \cite{bspaper} for solving this problem in the case when the instantaneous frequencies of the components are linear.
The method is based on a filtered FFT, and hence, is very fast.
While the paper \cite{bspaper} assumes the frequencies to be slowly and continuously varying, our modification includes the case when the signals are discontinuous and when there are crossover frequencies present. 
We analyze theoretically the behavior of SSO in the  presence of noise, giving insight into the relationship between the sampling frequency, desired  accuracy, and noise.
We have illustrated the effectiveness our method with a few well chosen examples, and tested it on a small database of 7 signals, demonstrating a consistent behavior. 
Our method outperforms the SST algorithm given in \cite{thakur2013synchrosqueezing}.

\bhag{Appendix}\label{bhag:appendix}
We report the statistics for 16 trials for each of the SNR and sampling rates for each signal, where the RMSE is computed using \eqref{eq:exptrmse}.
\begin{table}[H]
\begin{center}
\resizebox{0.455\textwidth}{!}{%
\begin{tabular}{ |c|c|c|c|c|c|c|c|c| } 
 \hline
 SNR & Example & Sampling & Total & Detected  & RMSE & Standard \\
 (dB) & label & rate (GHz) & signals & signals & & Deviation \\
 \hline
10 & 1 & 0.500000 & 12 & 12 & 0.006295 & 0.001066 \\
10 & 1 & 1.000000 & 12 & 12 & 0.006095 & 0.000770 \\
10 & 1 & 25.000000 & 12 & 12 & 0.006325 & 0.001249 \\
0 & 1 & 0.500000 & 12 & 12 & 0.006610 & 0.001203 \\
0 & 1 & 1.000000 & 12 & 12 & 0.006133 & 0.000828 \\
0 & 1 & 25.000000 & 12 & 12 & 0.006375 & 0.001291 \\
-10 & 1 & 0.500000 & 12 & 12 & 0.007433 & 0.001214 \\
-10 & 1 & 1.000000 & 12 & 12 & 0.006211 & 0.000888 \\
-10 & 1 & 25.000000 & 12 & 12 & 0.006022 & 0.000690 \\
-20 & 1 & 25.000000 & 12 & 12 & 0.006012 & 0.000666 \\
-30 & 1 & 25.000000 & 12 & 12 & 0.006436 & 0.001336 \\
10 & 2 & 0.500000 & 12 & 12 & 0.004428 & 0.002676 \\
10 & 2 & 1.000000 & 12 & 12 & 0.004197 & 0.002619 \\
10 & 2 & 25.000000 & 12 & 12 & 0.003990 & 0.002588 \\
0 & 2 & 0.500000 & 12 & 12 & 0.004496 & 0.002685 \\
0 & 2 & 1.000000 & 12 & 12 & 0.004255 & 0.002624 \\
0 & 2 & 25.000000 & 12 & 12 & 0.004038 & 0.002593 \\
-10 & 2 & 0.500000 & 12 & 12 & 0.004565 & 0.002698 \\
-10 & 2 & 1.000000 & 12 & 12 & 0.004315 & 0.002633 \\
-10 & 2 & 25.000000 & 12 & 12 & 0.004086 & 0.002607 \\
-20 & 2 & 25.000000 & 12 & 12 & 0.004038 & 0.002590 \\
-30 & 2 & 25.000000 & 12 & 12 & 0.004088 & 0.002597 \\
10 & 3 & 0.500000 & 12 & 11 & 0.003915 & 0.002504 \\
10 & 3 & 1.000000 & 12 & 11 & 0.003859 & 0.002402 \\
10 & 3 & 25.000000 & 12 & 11 & 0.003818 & 0.002293 \\
0 & 3 & 0.500000 & 12 & 12 & 0.003934 & 0.002532 \\
0 & 3 & 1.000000 & 12 & 11 & 0.003877 & 0.002427 \\
0 & 3 & 25.000000 & 12 & 11 & 0.003834 & 0.002314 \\
-10 & 3 & 0.500000 & 12 & 11 & 0.003950 & 0.002561 \\
-10 & 3 & 1.000000 & 12 & 11 & 0.003895 & 0.002452 \\
-10 & 3 & 25.000000 & 12 & 11 & 0.003825 & 0.002355 \\
-20 & 3 & 25.000000 & 12 & 11 & 0.003814 & 0.002332 \\
-30 & 3 & 25.000000 & 12 & 11 & 0.003850 & 0.002335 \\
10 & 4 & 0.500000 & 12 & 9 & 0.003918 & 0.002293 \\
10 & 4 & 1.000000 & 12 & 9 & 0.004061 & 0.002310 \\
10 & 4 & 25.000000 & 12 & 10 & 0.004335 & 0.002404 \\
0 & 4 & 0.500000 & 12 & 9 & 0.003886 & 0.002292 \\
0 & 4 & 1.000000 & 12 & 9 & 0.004025 & 0.002306 \\
0 & 4 & 25.000000 & 12 & 10 & 0.004310 & 0.002406 \\
-10 & 4 & 0.500000 & 12 & 9 & 0.003861 & 0.002296 \\
-10 & 4 & 1.000000 & 12 & 9 & 0.003994 & 0.002305 \\
-10 & 4 & 25.000000 & 12 & 10 & 0.004132 & 0.002319 \\
-20 & 4 & 25.000000 & 12 & 10 & 0.004184 & 0.002339 \\
-30 & 4 & 25.000000 & 12 & 10 & 0.004283 & 0.002408 \\
 \hline
\end{tabular}
}
\end{center}
 	\caption{The table above shows full performances of our algorithm.} \label{tab:result_table_1}
\end{table}

\begin{table}[H]
\begin{center}
\resizebox{0.45\textwidth}{!}{%
\begin{tabular}{ |c|c|c|c|c|c|c|c|c| } 
 \hline
 SNR & Example & Sampling & Total & Detected  & RMSE & Standard \\
 (dB) & label & rate (GHz) & signals & signals & & Deviation \\
 \hline
10 & 5 & 0.500000 & 12 & 12 & 0.005752 & 0.002110 \\
10 & 5 & 1.000000 & 12 & 12 & 0.005199 & 0.002565 \\
10 & 5 & 25.000000 & 12 & 12 & 0.004693 & 0.002723 \\
0 & 5 & 0.500000 & 12 & 12 & 0.005926 & 0.001830 \\
0 & 5 & 1.000000 & 12 & 12 & 0.005325 & 0.002474 \\
0 & 5 & 25.000000 & 12 & 12 & 0.004781 & 0.002699 \\
-10 & 5 & 0.500000 & 12 & 10 & 0.006114 & 0.001440 \\
-10 & 5 & 1.000000 & 12 & 12 & 0.005460 & 0.002364 \\
-10 & 5 & 25.000000 & 12 & 12 & 0.004971 & 0.002684 \\
-20 & 5 & 25.000000 & 12 & 12 & 0.004873 & 0.002693 \\
-30 & 5 & 25.000000 & 12 & 12 & 0.004874 & 0.002666 \\
10 & 6 & 0.500000 & 12 & 10 & 0.004385 & 0.002373 \\
10 & 6 & 1.000000 & 12 & 10 & 0.004349 & 0.002319 \\
10 & 6 & 25.000000 & 12 & 10 & 0.004416 & 0.002305 \\
0 & 6 & 0.500000 & 12 & 10 & 0.004383 & 0.002387 \\
0 & 6 & 1.000000 & 12 & 9 & 0.004365 & 0.002332 \\
0 & 6 & 25.000000 & 12 & 10 & 0.004406 & 0.002313 \\
-10 & 6 & 0.500000 & 12 & 10 & 0.004380 & 0.002400 \\
-10 & 6 & 1.000000 & 12 & 9 & 0.004380 & 0.002345 \\
-10 & 6 & 25.000000 & 12 & 10 & 0.004337 & 0.002298 \\
-20 & 6 & 25.000000 & 12 & 10 & 0.004344 & 0.002290 \\
-30 & 6 & 25.000000 & 12 & 8 & 0.004395 & 0.002322 \\
10 & 7 & 0.500000 & 12 & 10 & 0.004440 & 0.002269 \\
10 & 7 & 1.000000 & 12 & 10 & 0.004427 & 0.002231 \\
10 & 7 & 25.000000 & 12 & 11 & 0.004435 & 0.002193 \\
0 & 7 & 0.500000 & 12 & 9 & 0.004436 & 0.002280 \\
0 & 7 & 1.000000 & 12 & 10 & 0.004441 & 0.002241 \\
0 & 7 & 25.000000 & 12 & 11 & 0.004436 & 0.002204 \\
-10 & 7 & 0.500000 & 12 & 9 & 0.004435 & 0.002290 \\
-10 & 7 & 1.000000 & 12 & 9 & 0.004453 & 0.002252 \\
-10 & 7 & 25.000000 & 12 & 10 & 0.004409 & 0.002211 \\
-20 & 7 & 25.000000 & 12 & 11 & 0.004409 & 0.002202 \\
-30 & 7 & 25.000000 & 12 & 9 & 0.004441 & 0.002214 \\
 \hline
\end{tabular}
}
\end{center}
 	\caption{The table above shows full performances of our algorithm (continue).} \label{tab:result_table_2}
\end{table}


\chapter{Conclusion and Future Work}\label{summary}

This dissertation introduced two robust, tractable, and theoretically grounded methods for the decomposition and analysis of structured signals in both stationary and nonstationary settings. The central theme across both contributions is the use of localized trigonometric polynomial kernels to recover signal parameters efficiently and accurately under noisy condition.

\bhag{Summary of results}

This dissertation presents two algorithmic frameworks designed for signal recovery in distinct but related contexts: parameter estimation in multidimensional exponential sums, and separation of linear chirp components from short-time signal observations. Both approaches are grounded in localized trigonometric polynomial kernels and rely on efficient FFT-based computations.

In Chapter \ref{1paper}, we developed a multidimensional exponential analysis framework for estimating frequencies and amplitudes from a finite sum of complex exponentials sampled on a grid. Our method uses localized kernel reconstructions to directly estimate parameters while bypassing traditional subspace or eigenstructure methods. We established theoretical guarantees under sub-Gaussian noise assumptions and demonstrated the algorithm's superiority over classical methods like MUSIC and ESPRIT, particularly in high-dimensional and low-SNR regimes.

In Chapter \ref{2paper}, we extended the methodology based on localized kernels to nonstationary signals modeled as linear chirps. The proposed Signal Separation Operator (SSO) extracts instantaneous frequencies from short-time signal segments by filtering in the frequency domain. The method is automatic, noise-robust, and well-suited for handling crossover signals. Through synthetic radar scenarios, we demonstrated its accuracy in recovering complex chirp trajectories without prior information about the number of components.

Together, these techniques provide a unified and scalable framework for signal recovery in structured and noisy environments.

\bhag{Future research directions}

Several promising directions emerge from this work:

\begin{itemize}
  \item \textbf{Generalization to Nonlinear Chirps:} Extending the SSO method to handle higher-order chirps could broaden its application to more realistic radar and communication signal models.
  
  \item \textbf{Adaptive Kernel Design:} Future work could explore adaptive or data-driven kernel designs that optimize localization properties for specific signal models or noise conditions. For example, wider kernels may be beneficial in high-noise regimes to suppress spurious peaks, whereas narrower kernels could improve resolution in closely spaced frequency clusters. Designing such adaptive filters in a principled way could enhance both the robustness and resolution of the parameter recovery process.
  
  \item \textbf{Learning-Based Classification and Prediction:} By converting recovered signal parameters into feature representations, these methods can be integrated into pipelines for supervised classification or prediction tasks.
  
  \item \textbf{Application to Other Domains:} The underlying mathematical ideas may be applied to problems in biomedical signal processing, wireless sensor networks, and geophysical exploration, where sparse and structured signal recovery is essential.
\end{itemize}

By advancing tractable and robust techniques grounded in harmonic analysis, this dissertation contributes new tools to the broader effort of reliable signal inference in the presence of uncertainty and complexity.

\bibliographystyle{abbrv}
\bibliography{refs,hrushikesh,mason}

\end{document}